\documentclass{article}
\usepackage{amsmath,amssymb,amsfonts}

\textheight=9.0in \textwidth=5.8in \oddsidemargin .5cm
\evensidemargin -.3cm \topmargin= -0.8cm \headheight 8pt

\def\op#1{\mathop{{\it\fam0} #1}\limits}

\arraycolsep 1pt

\newcommand{\beq}{\begin{equation}}
\newcommand{\eeq}{\end{equation}}
\newcommand{\ben}{\begin{eqnarray}}
\newcommand{\een}{\end{eqnarray}}
\newcommand{\be}{\begin{eqnarray*}}
\newcommand{\ee}{\end{eqnarray*}}
\newcommand{\bea}{\begin{eqalph}}
\newcommand{\eea}{\end{eqalph}}

\newcommand{\Ker}{{\mathrm{Ker}\,}}
\newcommand{\im}{{\mathrm{Im}\, }}

\newcommand{\al}{\alpha}
\newcommand{\bt}{\beta}
\newcommand{\dl}{\delta}
\newcommand{\la}{\lambda}

\newcommand{\f}{\phi}
\newcommand{\vf}{\varphi}

\newcommand{\om}{\omega}

\newcommand{\m}{\mu}

\newcommand{\g}{\gamma}
\newcommand{\G}{\Gamma}

\newcommand{\thh}{\theta}

\newcommand{\cG}{{\mathfrak g}}
\newcommand{\ve}{\varepsilon}

\newcommand{\e}{\epsilon}
\newcommand{\ap}{\approx}

\newcommand{\nm}[1]{|{#1}|}

\newcommand{\id}{{\mathrm{Id}\,}}
\newcommand{\re}{{\mathrm{Re}\,}}

\newcommand{\lng}{\langle}
\newcommand{\rng}{\rangle}
\newcommand{\Is}{{\mathrm{Aut}\,}}

\newcommand{\si}{\sigma}
\newcommand{\Si}{\Sigma}

\newcommand{\cA}{{\mathcal A}}

\newcommand{\cU}{{\mathcal U}}

\newcommand{\cL}{{\mathcal L}}

\newcommand{\cI}{{\mathcal I}}

\newcommand{\cE}{{\mathcal E}}
\newcommand{\cH}{{\mathcal H}}

\newcommand{\cK}{{\mathcal K}}

\newcommand{\cN}{{\mathcal N}}
\newcommand{\ccG}{{\mathfrak g}}

\newcommand{\bb}{{\mathbf 1}}

\newcommand{\wt}{\widetilde}
\newcommand{\wh}{\widehat}
\newcommand{\ol}{\overline}

\newcommand{\dr}{\partial}
\newcommand{\ar}{\op\longrightarrow}
\newcommand{\ot}{\otimes}

\let\ssection=\section
\renewcommand{\section}{\setcounter{equation}{0}\ssection}

\newenvironment{eqalph}{\stepcounter{equation}
\setcounter{equationa}{\value{equation}} \setcounter{equation}{0}

\begin{eqnarray}}{\end{eqnarray}\setcounter{equation}{\value{equationa}}}

\newcounter{equationa}[section]

\newcounter{remark}[section]
\newcounter{example}[section]
\newcounter{theorem}[section]
\newcounter{condition}[section]
\newcounter{lemma}[section]
\newcounter{proposition}[section]
\newcounter{corollary}[section]
\newcounter{definition}[section]
\setcounter{remark}{0} \setcounter{example}{0}
\setcounter{theorem}{0} \setcounter{condition}{0}
\setcounter{lemma}{0} \setcounter{corollary}{0}
\setcounter{definition}{0} \setcounter{proposition}{0}

\def\theremark{\arabic{section}.\arabic{remark}}

\def\thetheorem{\arabic{section}.\arabic{theorem}}

\def\thedefinition{\arabic{section}.\arabic{theorem}}

\newenvironment{proof}{{\it Proof.}}{\hfill $\diamondsuit$ \medskip}
\newenvironment{remark}{\refstepcounter{remark} \medskip \noindent {\bf Remark
\theremark.} }{ \hfill $\diamondsuit$ \medskip }
\newenvironment{example}{\refstepcounter{remark} \medskip \noindent {\bf Example
\theremark.} }{ \hfill $\Box$ \medskip}
\newenvironment{theorem}{\refstepcounter{theorem} \medskip \noindent {\bf
Theorem \thetheorem.}}{\hfill $\Box$ \medskip }

\newenvironment{lemma}{\refstepcounter{theorem} \medskip \noindent {\bf  Lemma
\thetheorem.}}{\hfill $\Box$ \medskip }

\newenvironment{corollary}{\refstepcounter{theorem} \medskip \noindent {\bf
Corollary \thetheorem.} }{\hfill $\Box$ \medskip  }

\newcommand{\mar}[1]{}

\hyphenation{ma-ni-fold La-gran-gi-ans di-men-si-o-nal
-di-men-si-o-nal La-gran-gi-an Ha-mil-to-ni-an multi-symplec-tic}

\begin{document}
\hbox{}

\begin{center}

{\large\bf Inequivalent Vacuum States in Algebraic Quantum Theory}
\bigskip
\bigskip

G. Sardanashvily

\medskip

Department of Theoretical Physics, Moscow State University,
Moscow, Russia

\end{center}
\bigskip
\bigskip

\begin{abstract} The Gelfand--Naimark--Sigal representation
construction is considered in a general case of topological
involutive algebras of quantum systems, including quantum fields,
and inequivalent state spaces of these systems are characterized.
We aim to show that, from the physical viewpoint, they can be
treated as classical fields by analogy with a Higgs vacuum field.
\end{abstract}

\bigskip

\tableofcontents

\bigskip

\section{Introduction}

No long ago, one thought of vacuum in quantum field theory (QFT)
as possessing no physical characteristics, and thus being
invariant under any symmetry transformation. It is exemplified by
a particleless Fock vacuum (Section 8). Contemporary gauge models
of fundamental interactions however have arrived at a concept of
the Higgs vacuum (HV). By contrast to the Fock one, HV is equipped
with nonzero characteristics, and consequently is non-invariant
under transformations.

For instance, HV in Standard Model of particle physics is
represented by a constant background classical Higgs field, in
fact, inserted by hand into a field Lagrangian, whereas its true
physical nature still remains unclear. In particular, somebody
treats it as a {\it sui generis} condensate by analogy with the
Cooper one, and its appearance is regarded as a phase transition
characterized by some symmetry breakdown \cite{banks,bend,kawati}.

Thus, we come to a concept of different inequivalent and, in
particular, non-invariant vacua \cite{sard08a}. Here, we consider
some models of these vacua in the framework of algebraic quantum
theory (AQT). We aim to show that, from the physical viewpoint,
their characteristics are classical just as we observe in a case
of the above-mentioned Higgs vacuum.

In AQT, a quantum system is characterized by a topological
involutive algebra $A$ and a family of continuous positive forms
on $A$. Elements of $A$ are treated as quantum objects, and we
call $A$ the quantum algebra. In this framework, values of
positive forms on $A$ are regarded as numerical averages of
elements of $A$. In the spirit of Copenhagen interpretation of quantum theory, one
can think of positive forms on $A$ as being classical objects.

A corner stone of AQT is the following Gelfand--Naimark--Segal (GNS)
representation theorem \cite{book05,hor,schmu}.

\begin{theorem} \label{gns} \mar{gns}
Let $A$ be a unital topological involutive algebra and $f$ a
positive continuous form on $A$ such that $f(\bb)=1$ (i.e., $f$ is
a state). There exists a strongly cyclic Hermitian representation
$(\pi_f,\thh_f)$ of $A$ in a Hilbert space $E_f$ with a cyclic
vector $\thh_f$ such that
\mar{gns2}\beq
f(a)=\lng \pi(a)\thh_\f|\thh_\f\rng, \qquad  a\in A. \label{gns2}
\eeq
\end{theorem}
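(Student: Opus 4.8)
\medskip
\noindent
{\it Proof proposal.}
The plan is to run the classical Gelfand--Naimark--Segal construction, realizing $E_f$ as the completion of a quotient of the algebra $A$ itself. First I would attach to $f$ a pre-Hilbert structure on $A$. Since $f$ is positive (and hence Hermitian, $f(x^*)=\ol{f(x)}$, as $A$ is unital), the sesquilinear form $\lng a|b\rng_f:=f(b^*a)$ on $A$ is positive semi-definite and Hermitian, so the Cauchy--Schwarz inequality $|f(b^*a)|^2\le f(a^*a)\,f(b^*b)$ holds. Put $N_f:=\{a\in A:f(a^*a)=0\}$. Cauchy--Schwarz shows immediately that $N_f$ is a linear subspace, and, in the form $|f(a^*(c^*ca))|^2\le f(a^*a)\,f((c^*ca)^*(c^*ca))$, that $f((ca)^*(ca))=0$ whenever $a\in N_f$ and $c\in A$; hence $N_f$ is a left ideal of $A$. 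Therefore the quotient $D_f:=A/N_f$ carries a genuine inner product $\lng\cdot|\cdot\rng_f$, and I let $E_f$ be its Hilbert space completion.

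Next I would define the representation by left multiplication, $\pi_f(a)(b+N_f):=ab+N_f$. This is well defined precisely because $N_f$ is a left ideal, one has $\pi_f(\bb)=\id$, and $a\mapsto\pi_f(a)$ is an algebra homomorphism into the linear operators on the common dense domain $D_f$. The identity
\[
\lng\pi_f(a)(b+N_f)\,|\,c+N_f\rng_f=f(c^*ab)=\lng b+N_f\,|\,\pi_f(a^*)(c+N_f)\rng_f
\]
exhibits $\pi_f(a^*)$ as the formal adjoint of $\pi_f(a)$ on $D_f$; in particular every $\pi_f(a)$ is closable and $(\pi_f,D_f)$ is a Hermitian representation. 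The cyclic vector is $\thh_f:=\bb+N_f$. The involution fixes the unit, so $\bb^*=\bb$, and unitality gives $\lng\pi_f(a)\thh_f\,|\,\thh_f\rng_f=f(\bb^*a\bb)=f(a)$, which is (\ref{gns2}). Moreover $\pi_f(A)\thh_f=\{a+N_f:a\in A\}=D_f$ is dense in $E_f$ by construction, so $\thh_f$ is a strongly cyclic vector.

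The step I expect to be the genuine obstacle is upgrading this purely algebraic skeleton to a representation with the analytic and topological properties required in AQT. For a $C^*$-algebra one has the bound $\|\pi_f(a)(b+N_f)\|_f^2=f(b^*a^*ab)\le\|a^*a\|\,f(b^*b)=\|a\|^2\,\|b+N_f\|_f^2$, coming from positivity of $f$ on the element $\|a^*a\|\bb-a^*a\ge0$, so each $\pi_f(a)$ extends to a bounded operator on all of $E_f$. For a general topological involutive algebra no such norm estimate is available: the operators $\pi_f(a)$ are genuinely unbounded on $D_f$, and one must use the continuity of $f$ together with the topology of $A$ both to give precise meaning to the word ``representation'' in this setting and to verify that $\thh_f$ is strongly cyclic in the appropriate (graph-topology) sense. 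By contrast, the algebraic part above is routine.
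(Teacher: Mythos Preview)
Your proposal is correct and matches the paper's approach: the paper does not give a self-contained proof of Theorem~\ref{gns} but cites \cite{hor,schmu}, while describing exactly the construction you carry out---the Hilbert space $E_f$ is the completion of $A/N_f$ with $N_f=\{a:f(a^*a)=0\}$, the cyclic vector $\thh_f$ is the image of $\bb$, and $\pi_f$ acts by left multiplication (see the paragraph following Theorem~\ref{gns} and the statement of Theorem~\ref{spr471}). Your closing remark about the graph-topology meaning of ``strongly cyclic'' and the unboundedness of the $\pi_f(a)$ in the non-$C^*$ case is precisely the content the paper supplies in Section~6 before invoking the cited references.
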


It should be emphasized that a Hilbert space $E_f$ in Theorem
\ref{gns} is a completion of the quotient of an algebra $A$ with
respect to an ideal  generated by elements $a\in A$ such that
$f(aa^*)=0$, and the cyclic vector $\thh_f$ is the image of the
identity $\bb\in A$ in this quotient. Thus, a carrier space of a
representation of $A$ and its cyclic vector in Theorem \ref{gns}
comes from a quantum algebra $A$, and they also can be treated as
quantum objects.

Since $\thh_f$ (\ref{gns2}) is a cyclic vector, we can think of it
as being a vacuum vector and, accordingly, a state $f$ as being a
vacuum of an algebra $A$. Let us note that a vacuum vector
$\thh_f$ is a quantum object, whereas a vacuum $f$ of $A$ is the
classical one. In particular, a quantum algebra $A$ acts on
quantum vectors, but not on its vacua (states).

Since vacua are classical objects, they are parameterized by
classical characteristics. A problem is that different vacua of
$A$ define inequivalent cyclic representations of a quantum
algebra $A$ in general. In this case, they are called
inequivalent.

We say that a quantum algebra $A$ performs a transition between
its vacua $f$ and $f'$ if there exist elements $b,b'\in A$ such
that $f'(a)=f(b^+ab)$ and $f(a)=f'(b'^+ab')$ for all $a\in A$. In
this case cyclic representations $\pi_f$ and $\pi_{f'}$ and,
accordingly, vacua $f$ and $f'$ are equivalent (Theorem
\ref{spr474'}). A problem thus is to characterize inequivalent
vacua of a quantum system.

One can say something in the following three variants.

(i) If a quantum algebra $A$ is a unital $C^*$-algebra, Theorem
\ref{gns} comes to well-known GNS (Theorem \ref{spr471}), and we
have a cyclic representation of $A$ by bounded operators in a
Hilbert space (Section 2). This is a case of  quantum mechanics.

(ii) A quantum algebra $A$ is a nuclear involutive algebra
(Theorem \ref{gns36}). In particular, this is just the case of
quantum field theory (Sections ?9 and 10).

(iii) Given a group $G$ of automorphisms of a quantum algebra $A$,
its vacuum is $f$ invariant only under a proper subgroup of $G$.
This is the case of spontaneous symmetry breaking in a quantum
system (Sections 12 and 13).

If a quantum algebra $A$ is a unital $C^*$-algebra, one can show
that a set $F(A)$ of states of $A$ is a weakly$^*$-closed convex
hull of a set $P(A)$ of pure states of $A$, and it is weakly*
compact (Theorem \ref{gns17}). A set $P(A)$ of pure states of $A$,
in turn, is a topological bundle over the spectrum $\wh A$ of $A$
whose fibres are projective Hilbert space. The spectrum $\wh A$ of
$A$ is a set of its nonequivalent irreducible representations
provided with the inverse image of the Jacobson topology. It is
quasi-compact.

In accordance with Theorem \ref{spr474}  a unital $C^*$-algebra
$A$ of a quantum system performs invertible transitions between
different vacua iff they are equivalent. At the same time, one can
enlarge an algebra $A$ to some algebra $B(E_F)$ so that all states
of $A$ become equivalent states of $B(E_F)$ (Theorem \ref{gns23}).
Moreover, this algebra contains the superselection operator $T$
(\ref{gns24}) which belongs to the commutant of $A$ and whose
distinct eigenvalues characterize different vacua of $A$.

In Section 4, an infinite qubit system modelled on an arbitrary
set $S$ is studied. Its quantum $C^*$-algebra $A_S$ possesses pure
states whose set is a set of maps $\si$ (\ref{gns91}) of a set $S$
to the unit sphere in $\mathbb C^2$. They are equivalent iff the
relation (\ref{gns67}) is satisfied and, in particular, if maps
$\si$ and $\si'$ differ from each other on a finite subset of $S$.
By analogy with a Higgs vacuum, one can treat the maps $\si$
(\ref{gns91}) as classical vacuum fields.

In Section 5, we consider an example of a locally compact group
$G$ and its group algebra $L^1_{\mathbb C}(G)$ of equivalence
classes of complex integrable functions on $G$. This is a Banach
involutive algebra with an approximate identity. There is
one-to-one correspondence between the representations of this
algebra and the strongly continuous unitary representations of a
group $G$ (Theorem \ref{gns61}). Continuous positive forms on
$L^1_{\mathbb C}(G)$ and, accordingly, its cyclic representations
are parameterized by continuous positive-definite functions $\psi$
on $G$ as classical vacuum fields (Theorem \ref{qm539}). If $\psi$
is square-integrable, the corresponding cyclic representation of
$L^1_{\mathbb C}(G)$ is contained in the regular representation
(\ref{qm344}). In this case, distinct square integrable continuous
positive-definite functions $\psi$ and $\psi'$ on $G$ define
inequivalent irreducible representations if they obey the
relations (\ref{qm345}).

However, this is not the case of unnormed topological
$*$-algebras. In order to say something, we restrict our
consideration to nuclear algebras (Section 6 and 7).

This technique is applied to the analysis of inequivalent
representations of infinite canonical commutative relations
(Section 8) and, in particular, free quantum fields, whose states
characterized by different masses are inequivalent.

Section 10 addresses the true functional integral formulation of
Euclidean quantum theory (Section 10). These integrals fail to be
translationally invariant that enables one to model a Higgs vacuum
a translationall inequivalent state (Section 11).

Sections 12 and 13 are devoted to the phenomenon of spontaneous
symmetry breaking when a state of a quantum algebra $A$ fails to
be stationary only with respect to some some proper subgroup $H$
of a group $G$ of automorphisms of $A$. Then a set of inequivalent
states of these algebra generated by these automorphisms is a
subset of the quotient $G/H$.

In particular, just this fact motivates us to describe classical
Higgs fields as sections of a fibre bundle with a typical fibre
$G/H$ \cite{higgs,sard08a,tmp}.

\section{GNS construction. Bounded operators}

We start with a GNS representation of a topological involutive
algebra $A$ by bounded operators in a Hilbert space. This is the
case of Banach involutive algebras with an approximate identity
(Theorem \ref{spr471}). Without a loss of generality, we however
restrict our consideration to GNS representations of
$C^*$-algebras because any involutive Banach algebra $A$ with an
approximate identity defines the enveloping $C^*$-algebra
$A^\dagger$ such that there is one-to-one correspondence between
the representations of $A$ and those of $A^\dagger$ (Remark
\ref{env}).

Let us recall the standard terminology \cite{dixm,book05}. A
complex associative algebra $A$ is called involutive (a
$^*$-algebra) if it is provided with an involution $*$ such that
\be
(a^*)^*=a, \quad (a +\la b)^*=a^* + \ol\la b^*, \quad
(ab)^*=b^*a^*, \quad  a,b\in A, \quad \la\in \mathbb C.
\ee
An element $a\in A$ is normal if $aa^*=a^*a$, and it is Hermitian
or self-adjoint if $a^*=a$. If $A$ is a unital algebra, a normal
element such that $aa^*=a^*a=\bb$ is called the unitary one.

A $^*$-algebra $A$ is called the normed algebra (resp. the Banach
algebra) if it is a normed (resp. complete normed) vector space
whose norm $\|.\|$ obeys the multiplicative conditions
\be
\| ab\| \leq\| a\|\| b\|, \qquad \| a^*\|=\| a\|, \qquad a,b\in A.
\ee
A Banach $^*$-algebra $A$ is said to be a $C^*$-algebra if
$\|a\|^2= \| a^* a\|$ for all $a\in A$. If $A$ is a unital
$C^*$-algebra, then $\|\bb\|=1$. A $C^*$-algebra is provided with
a normed topology, i.e., it is a topological $^*$-algebra.

\begin{remark}
It should be emphasized that by a morphism of normed algebras is
meant a morphism of the underlying $^*$-algebras, without any
condition on the norms and continuity. At the same time, an
isomorphism of normed algebras means always an isometric morphism.
Any morphism $\f$ of $C^*$-algebras is automatically continuous
due to the property
\mar{spr410}\beq
\|\f(a)\|\leq \| a\|, \qquad  a\in A. \label{spr410}
\eeq
\end{remark}

Any $^*$-algebra $A$ can be extended to a unital algebra $\wt
A=\mathbb C\oplus A$  by the adjunction of the identity $\bb$ to
$A$. The unital extension of $A$ also is a $^*$-algebra with
respect to the operation
\be
(\la\bb+a)^*=(\ol\la\bb+a^*), \qquad \la\in\mathbb C, \quad a\in
A.
\ee
If $A$ is a $C^*$-algebra, a norm on $A$ is uniquely prolonged to
the norm
\be
\|\la\bb +a\|= \op\sup_{\|a'\|\leq 1}\|\la a'+aa'\|
\ee
on $\wt A$ which makes $\wt A$ a $C^*$-algebra.

One says that a Banach algebra $A$ admits an approximate identity
if there is a family $\{u_\iota\}_{\iota\in I}$ of elements of
$A$, indexed by a directed set $I$, which possesses the following
properties:

$\bullet$ $\| u_\iota\|< 1$ for all $\iota\in I$,

$\bullet$ $\| u_\iota a-a\|\to 0$ and $\| au_\iota-a\|\to 0$ for
every $a\in A$.

\noindent It should be noted that the existence of an approximate
identity is an essential condition for many results (see, e.g.,
Theorems \ref{gns5} and \ref{spr471}).

For instance, a $C^*$-algebra has an approximate identity.
Conversely, any Banach $^*$-algebra $A$ with an approximate
identity admits the enveloping $C^*$-algebra $A^\dagger$ (Remark
\ref{env}) \cite{dixm,book05}.

An important example of $C^*$-algebras is an algebra $B(E)$ of
bounded (and, equivalently, continuous) operators in a Hilbert
space $E$ (Section 14.2). Every closed $^*$-subalgebra of $B(E)$
is a $C^*$-algebra and, conversely, every $C^*$-algebra is
isomorphic to a $C^*$-algebra of this type (Theorem \ref{spr424}).

An algebra $B(E)$ is endowed with the operator norm
\mar{spr427}\beq
\|a\|= \op\sup_{\|e\|_E=1} \|ae\|_E, \qquad a\in B(E).
\label{spr427}
\eeq
This norm brings the $^*$-algebra $B(E)$ of bounded operators in a
Hilbert space $E$ into a $C^*$-algebra. The corresponding topology
on $B(E)$ is called the normed operator topology.

One also provides $B(E)$ with the strong  and  weak operator
topologies,  defined by the families of seminorms
\be
&& \{p_e(a)=\|ae\|, \quad e\in E\}, \\
&& \{p_{e,e'}(a)=|\lng ae|e'\rng|, \quad e,e'\in E\},
\ee
respectively. The normed operator topology is finer than the
strong one which, in turn, is finer than the weak operator
topology. The strong and weak operator topologies on a subgroup
$U(E)\subset B(E)$ of unitary operators coincide with each other.

It should be emphasized that $B(E)$ fails to be a topological
algebra with respect to strong and weak operator topologies.
Nevertheless, the involution in $B(E)$ also is continuous with
respect to the weak operator topology, while the operations
\be
B(E)\ni a\to aa'\in B(E),\qquad B(E)\ni a\to a'a\in B(E),
\ee
where $a'$ is a fixed element of $B(E)$, are continuous with
respect to all the above mentioned operator topologies.

\begin{remark} \label{w410} \mar{w410}
Let $N$ be a subset of $B(E)$. The commutant  $N'$ of $N$ is a set
of elements of $B(E)$ which commute with all elements of $N$. It
is a subalgebra of $B(E)$. Let $N''=(N')'$ denote the bicommutant.
Clearly, $N\subset N''$. A $^*$-subalgebra $B$ of $B(E)$ is called
the von Neumann algebra if $B=B''$. This property holds iff $B$ is
strongly (or, equivalently, weakly) closed in $B(E)$ \cite{dixm}.
For instance, $B(E)$ is a von Neumann algebra. Since a strongly
(weakly) closed subalgebra of $B(E)$ also is closed with respect
to the normed operator topology on $B(E)$, any von Neumann algebra
is a $C^*$-algebra.
\end{remark}

\begin{remark} \label{w90} \mar{w90}
A bounded operator in a Hilbert space $E$ is called completely
continuous if it is compact,  i.e., it sends any bounded set into
a set whose closure is compact. An operator $a\in B(E)$ is
completely continuous iff it can be represented by the series
\mar{spr462}\beq
a(e)=\op\sum_{k=1}^\infty\la_k\lng e|e_k\rng e_k, \label{spr462}
\eeq
where $e_k$ are elements of a basis for $E$ and $\la_k$ are
positive numbers which tend to zero as $k\to\infty$. For instance,
every degenerate operator  (i.e., an operator of finite rank which
sends $E$ onto its finite-dimensional subspace) is completely
continuous. A completely continuous operator $a$ is called the
Hilbert--Schmidt operator if the series
\be
\|a\|^2_{\mathrm{HS}}= \op\sum_k \la^2_k
\ee
converges. Hilbert--Schmidt operators make up an involutive Banach
algebra with respect to this norm, and it is a two-sided ideal of
an algebra $B(E)$. A completely continuous operator $a$ in a
Hilbert space $E$ is called a nuclear operator if the series
\be
||a||_{\mathrm{Tr}}=\op\sum_k \la_k
\ee
converges. Nuclear operators make up an involutive Banach algebra
with respect to this norm, and it is a two-sided ideal of an
algebra $B(E)$. Any nuclear operator is the Hilbert--Schmidt one.
Moreover, the product of arbitrary two Hilbert--Schmidt operators
is a nuclear operator, and every nuclear operator is of this type.
\end{remark}

Let us consider representations of $^*$-algebras by bounded
operators in Hilbert spaces \cite{dixm,ped}. It is a morphism
$\pi$ of a $^*$-algebra $A$ to an algebra $B(E)$ of bounded
operators in a Hilbert space $E$, called the carrier space of
$\pi$. Representations throughout are assumed to be
non-degenerate, i.e., there is no element $e\neq 0$ of $E$ such
that $Ae=0$ or, equivalently, $AE$ is dense in $E$.

\begin{theorem} \label{spr424} \mar{spr424}
If $A$ is a $C^*$-algebra, there exists its exact (isomorphic)
representation.
\end{theorem}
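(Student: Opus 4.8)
The plan is to build the \emph{universal representation} of $A$ as the Hilbert-space direct sum of all its GNS representations and then to check that this representation is faithful; faithfulness of a $*$-homomorphism between $C^*$-algebras forces it to be isometric, and that yields the desired exact representation.

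First I would reduce to the unital case. If $A$ has no identity, pass to its unital extension $\wt A=\mathbb C\oplus A$, which is again a $C^*$-algebra, prove the statement for $\wt A$, and then restrict an exact representation of $\wt A$ to $A$; the restriction stays injective, and non-degeneracy is recovered because a $C^*$-algebra has an approximate identity. So assume $\bb\in A$. Next, for each state $f$ of $A$, Theorem \ref{gns} gives a cyclic Hermitian representation $(\pi_f,\thh_f)$ on a Hilbert space $E_f$ with $f(a)=\lng\pi_f(a)\thh_f|\thh_f\rng$, and by (\ref{spr410}) we have $\|\pi_f(a)\|\leq\|a\|$ for every $a$. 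Put $E=\bigoplus_f E_f$, the sum running over all states $f$, and $\pi(a)=\bigoplus_f\pi_f(a)$. Since $\sup_f\|\pi_f(a)\|\leq\|a\|<\infty$, each $\pi(a)$ is a well-defined bounded operator on $E$ with $\|\pi(a)\|\leq\|a\|$, and $\pi$ is a non-degenerate $*$-homomorphism $A\to B(E)$.

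The heart of the argument is injectivity of $\pi$. Suppose $\pi(a)=0$. Then $\pi_f(a^*a)=\pi_f(a)^*\pi_f(a)=0$, so $f(a^*a)=\lng\pi_f(a^*a)\thh_f|\thh_f\rng=0$ for every state $f$. It remains to see that this forces $a^*a=0$, i.e. that the states of $A$ detect every nonzero positive element. Given the self-adjoint $b=a^*a$, I would consider the commutative unital $C^*$-subalgebra $C^*(b,\bb)$ it generates, which by Gelfand theory is isometrically isomorphic to $C(X)$ with $X=\mathrm{spec}\,b\subset\mathbb R$; evaluation at a point of $X$ where $|b|$ attains its maximum is a character $f_0$ with $|f_0(b)|=\|b\|$, hence a state on $C^*(b,\bb)$. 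A positive form of norm $1$ on a unital $C^*$-subalgebra extends, via the Hahn--Banach theorem together with the characterization ``$\|f\|=f(\bb)\Rightarrow f$ positive'', to a state $f$ on all of $A$. Then $0=f(b)=|f_0(b)|=\|b\|$, so $a^*a=0$ and $\|a\|^2=\|a^*a\|=0$, i.e. $a=0$.

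Finally, an injective $*$-homomorphism between $C^*$-algebras is automatically isometric: for self-adjoint $c$, $\pi$ restricts to an injective morphism of $C^*(c,\bb)\cong C(\mathrm{spec}\,c)$, hence preserves the spectrum, so $\|\pi(c)\|=\|c\|$; for general $a$ then $\|\pi(a)\|^2=\|\pi(a)^*\pi(a)\|=\|\pi(a^*a)\|=\|a^*a\|=\|a\|^2$. Thus $\pi$ is an isometric isomorphism of $A$ onto the $C^*$-subalgebra $\pi(A)\subset B(E)$, which is an exact representation. The main obstacle I anticipate is the separation step — producing, for a prescribed nonzero positive element, a state not annihilating it: this is precisely where the $C^*$-structure is indispensable (Gelfand theory on the generated commutative subalgebra plus the positivity-preserving Hahn--Banach extension of states), whereas the rest is bookkeeping with direct sums and the identity $\|a\|^2=\|a^*a\|$.
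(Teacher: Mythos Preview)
Your argument is correct and follows the same route the paper uses: the paper does not attach a formal proof to Theorem \ref{spr424}, but in Section 3 it constructs the Hilbert sum $\pi_F=\bigoplus_{f\in F(A)}\pi_f$ over all states (equation (\ref{gns13})) and notes that, since for each $a\neq 0$ some state $f$ has $f(a)\neq 0$, this $\pi_F$ is injective and hence isometric---exactly your universal-representation strategy. Your write-up is more explicit than the paper's sketch, in particular your use of Gelfand theory on $C^*(a^*a,\bb)$ plus Hahn--Banach to produce a separating state, and your spelling out of why injectivity forces isometry; these are the standard justifications behind the paper's one-line assertion.
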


\begin{theorem} \label{gns4} \mar{gns4}
A representation $\pi$ of a $^*$-algebra $A$ is uniquely prolonged
to a representation $\wt \pi$ of the unital extension $\wt A$ of
$A$.
\end{theorem}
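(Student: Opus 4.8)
The plan is to extend $\pi$ by the only formula that can possibly work and then check that it is a representation, the one subtle point being that uniqueness is forced by non-degeneracy of $\pi$.

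First I would use that $\wt A=\mathbb C\oplus A$ as a vector space, so every element of $\wt A$ has a unique expression $\la\bb+a$ with $\la\in\mathbb C$, $a\in A$. Define $\wt\pi(\la\bb+a)=\la\,\id_E+\pi(a)$; this lies in $B(E)$ because $\id_E$ and $\pi(a)$ are bounded, and it is clearly linear in $(\la,a)$, hence well defined and linear on $\wt A$. Multiplicativity is the routine computation $\wt\pi\big((\la\bb+a)(\m\bb+b)\big)=\wt\pi(\la\m\bb+\la b+\m a+ab)=\la\m\,\id_E+\la\pi(b)+\m\pi(a)+\pi(a)\pi(b)=(\la\,\id_E+\pi(a))(\m\,\id_E+\pi(b))$, using that $\pi$ is multiplicative. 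Compatibility with the involution follows from $(\la\bb+a)^*=\ol\la\bb+a^*$ together with $\pi(a^*)=\pi(a)^*$: $\wt\pi((\la\bb+a)^*)=\ol\la\,\id_E+\pi(a)^*=(\la\,\id_E+\pi(a))^*$. Finally $\wt\pi(\bb)=\id_E$, so $\wt\pi(\wt A)E=E$ is dense, i.e.\ $\wt\pi$ is non-degenerate; thus $\wt\pi$ is a representation of $\wt A$ prolonging $\pi$.

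For uniqueness, let $\rho$ be any representation of $\wt A$ in $B(E)$ whose restriction to $A$ is $\pi$. Since $\bb$ is Hermitian and idempotent in $\wt A$, the operator $\rho(\bb)$ is a self-adjoint idempotent in $B(E)$, hence an orthogonal projection. For every $a\in A$ and $e\in E$ we have $\rho(\bb)\pi(a)e=\rho(\bb a)e=\pi(a)e$, so $\rho(\bb)$ acts as the identity on the linear span of $\pi(A)E$. This span is dense in $E$ because $\pi$ is non-degenerate, and $\rho(\bb)$ is bounded, hence continuous, so $\rho(\bb)=\id_E$. Consequently $\rho(\la\bb+a)=\la\,\rho(\bb)+\rho(a)=\la\,\id_E+\pi(a)=\wt\pi(\la\bb+a)$, that is, $\rho=\wt\pi$.

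There is no deep obstacle here; the only step that genuinely uses a hypothesis is the last density argument, which is exactly where the standing assumption that representations are non-degenerate enters — without it $\rho(\bb)$ would only be pinned down on $\ol{\pi(A)E}$ and the prolongation would fail to be unique. So I expect the write-up to be short, with the bulk of the care going into stating the non-degeneracy step cleanly.
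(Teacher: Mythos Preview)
Your argument is correct and is precisely the standard proof of this fact: define $\wt\pi(\la\bb+a)=\la\,\id_E+\pi(a)$, verify the $^*$-algebra morphism axioms by direct computation, and for uniqueness use that any prolongation $\rho$ sends $\bb$ to a bounded self-adjoint idempotent which, by the paper's standing non-degeneracy assumption (so that $\pi(A)E$ is dense in $E$), must be $\id_E$. The paper itself states Theorem~\ref{gns4} without proof, treating it as a known elementary fact (cf.\ the references to \cite{dixm,ped}), so there is no alternative argument to compare against; your write-up supplies exactly the routine details one would expect, and your explicit remark that non-degeneracy is the only genuine hypothesis in play matches the paper's convention that all representations are non-degenerate.
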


Let $\{\pi^\iota\}$, $\iota\in I$, be a family of representations
of a $^*$-algebra $A$ in Hilbert spaces $E^\iota$. If the set of
numbers $\|\pi^\iota(a)\|$ is bounded for each $a\in A$, one can
construct a bounded operator $\pi(a)$ in a Hilbert sum $\oplus
E^\iota$ whose restriction to each $E^\iota$ is $\pi^\iota(a)$.

\begin{theorem} \label{sum} \mar{sum}
This is the case of a $C^*$-algebra $A$ due to the property
(\ref{spr410}). Then $\pi$ is a representation of $A$ in $\oplus
E^\iota$, called the Hilbert sum
\mar{0315}\beq
\pi=\op\oplus_I \pi^\iota \label{0315}
\eeq
of representations $\pi^\iota$.
\end{theorem}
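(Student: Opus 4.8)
The plan is to read off the required uniform boundedness directly from (\ref{spr410}) and then verify the algebra and non-degeneracy axioms for $\pi$ summand by summand.

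First I would observe that each $\pi^\iota\colon A\to B(E^\iota)$ is a morphism of $C^*$-algebras, since $B(E^\iota)$ is itself a $C^*$-algebra; hence (\ref{spr410}) gives $\|\pi^\iota(a)\|\leq\|a\|$ for every $\iota\in I$ and every $a\in A$. Therefore $\sup_\iota\|\pi^\iota(a)\|\leq\|a\|<\infty$, which is precisely the boundedness hypothesis needed for the construction recalled just before the theorem. Consequently, for each $a\in A$ there is a unique bounded operator $\pi(a)$ on the Hilbert sum $E=\oplus_I E^\iota$ whose restriction to each $E^\iota$ is $\pi^\iota(a)$, and moreover $\|\pi(a)\|=\sup_\iota\|\pi^\iota(a)\|\leq\|a\|$, so $\pi$ is automatically norm-continuous.

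Next I would check that $a\mapsto\pi(a)$ is a morphism of $^*$-algebras. A block-diagonal operator on $\oplus_I E^\iota$ (one leaving every summand $E^\iota$ invariant) is completely determined by its restrictions to the summands, so it suffices to compare restrictions. For $a,b\in A$ and $\la\in\mathbb C$, both $\pi(a+\la b)$ and $\pi(a)+\la\pi(b)$ restrict to $\pi^\iota(a)+\la\pi^\iota(b)$ on $E^\iota$ because $\pi^\iota$ is linear; likewise $\pi(ab)$ and $\pi(a)\pi(b)$ restrict to $\pi^\iota(a)\pi^\iota(b)$, using multiplicativity of $\pi^\iota$ together with the fact that the product of two block-diagonal operators is block-diagonal with the products of the blocks. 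For the involution one uses that the adjoint of a block-diagonal operator is block-diagonal with the adjoints of the blocks, so $\pi(a^*)$ and $\pi(a)^*$ both restrict to $\pi^\iota(a^*)=\pi^\iota(a)^*$. Hence $\pi(a+\la b)=\pi(a)+\la\pi(b)$, $\pi(ab)=\pi(a)\pi(b)$ and $\pi(a^*)=\pi(a)^*$.

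Finally I would verify non-degeneracy, as required of representations throughout the paper. Each $\pi^\iota$ is non-degenerate, so $\pi^\iota(A)E^\iota$ is dense in $E^\iota$; writing $j_\iota\colon E^\iota\to E$ for the isometric inclusion onto the $\iota$-th summand, one has $\pi(a)j_\iota(e^\iota)=j_\iota(\pi^\iota(a)e^\iota)$, so the closed linear span of $\{\pi(a)e : a\in A,\ e\in E\}$ contains $j_\iota\big(\overline{\pi^\iota(A)E^\iota}\big)=j_\iota(E^\iota)$ for every $\iota$, and therefore contains the algebraic direct sum of the $E^\iota$, which is dense in $E$. Thus $\pi$ is a non-degenerate representation of $A$ in $\oplus_I E^\iota$, namely the Hilbert sum (\ref{0315}). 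I expect no genuine obstacle: the only substantive ingredient is the uniform bound $\|\pi^\iota(a)\|\leq\|a\|$, handed to us by (\ref{spr410}), and the rest is a routine summand-wise check, the sole point requiring mild care being that products and adjoints of block-diagonal operators on an infinite Hilbert sum are again block-diagonal.
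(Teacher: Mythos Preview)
Your proposal is correct and follows exactly the approach the paper indicates: the paper gives no detailed proof for this theorem, merely noting in the statement itself that the uniform bound follows from (\ref{spr410}), and you have supplied the routine verification (morphism axioms and non-degeneracy) that the paper leaves implicit.
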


Given a representation $\pi$ of a $^*$-algebra $A$ in a Hilbert
space $E$, an element $\thh\in E$ is said to be a cyclic vector
for $\pi$  if the closure of $\pi(A)\thh$ is equal to $E$.
Accordingly, $\pi$ (or a more strictly a pair $(\pi,\thh)$) is
called the cyclic representation.

\begin{theorem} \label{spr411} \mar{spr411}
Every representation of a $^*$-algebra $A$ is a Hilbert sum of
cyclic representations.
\end{theorem}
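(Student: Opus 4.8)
**Proof plan for Theorem \ref{spr411} (every representation of a $^*$-algebra is a Hilbert sum of cyclic representations).**

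The plan is to decompose the carrier space $E$ of a given representation $\pi$ into an orthogonal direct sum of $\pi(A)$-invariant closed subspaces, each of which, equipped with the restriction of $\pi$, is a cyclic representation. First I would observe that for any vector $\thh\in E$, the closure $E_\thh:=\ol{\pi(A)\thh}$ is a closed $\pi(A)$-invariant subspace (invariance is immediate since $\pi(A)\pi(A)\thh\subset\pi(A)\thh$ and $\pi(a)$ is continuous, so it preserves the closure), and moreover $\thh\in E_\thh$: this uses non-degeneracy of $\pi$, which guarantees $\ol{\pi(A)\thh}\ni\thh$ because $AE$ is dense in $E$ and hence $\thh$ lies in the closure of $\pi(A)\thh$ — concretely, the approximate-identity argument (or, for the unital case, just applying $\bb$) places $\thh$ in $E_\thh$. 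Consequently $(\pi|_{E_\thh},\thh)$ is a cyclic representation of $A$.

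Next I would run a Zorn's-lemma argument. Consider the family of all sets $\{E_\iota\}_{\iota\in I}$ of pairwise orthogonal, nonzero, closed $\pi(A)$-invariant subspaces of $E$ each of which is cyclic (i.e. each $E_\iota=\ol{\pi(A)\thh_\iota}$ for some $\thh_\iota\in E_\iota$), partially ordered by inclusion of families. This family is nonempty (take any single $E_\thh$ with $\thh\neq 0$), and the union of a chain is again such a family, so Zorn gives a maximal family $\{E_\iota\}_{\iota\in I}$. Let $E':=\ol{\bigoplus_\iota E_\iota}$, the closure of the algebraic direct sum. Then $E'$ is closed and $\pi(A)$-invariant, hence so is its orthogonal complement $E'^\perp$ (here one uses that $\pi$ is a $^*$-representation: if $F$ is $\pi(A)$-invariant and $e\in F^\perp$, then for $a\in A$ and $f\in F$, $\lng\pi(a)e|f\rng=\lng e|\pi(a^*)f\rng=0$ since $\pi(a^*)f\in F$, so $\pi(a)e\in F^\perp$).

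Finally I would show $E'^\perp=\{0\}$, which forces $E=\ol{\bigoplus_\iota E_\iota}$ and completes the proof, since each $(\pi|_{E_\iota},\thh_\iota)$ is cyclic by construction and the Hilbert sum of these restrictions is $\pi$. Suppose $E'^\perp\neq\{0\}$; pick $0\neq\thh\in E'^\perp$. As above, $E_\thh=\ol{\pi(A)\thh}$ is a nonzero closed $\pi(A)$-invariant cyclic subspace contained in $E'^\perp$, hence orthogonal to every $E_\iota$. Then $\{E_\iota\}_{\iota\in I}\cup\{E_\thh\}$ is a strictly larger family of the same kind, contradicting maximality. I expect the one genuinely delicate point to be the claim $\thh\in\ol{\pi(A)\thh}$ for $\thh\neq 0$ — i.e. that $E_\thh\neq\{0\}$ — in the non-unital setting; this is exactly where non-degeneracy of the representation is indispensable, and in the general topological $^*$-algebra case one invokes it in the form that $AE$ (equivalently $\pi(A)E$) is dense in $E$, together with the fact that each $E_\thh$ is a nonzero invariant subspace whenever $\pi(A)\thh\neq 0$, the latter holding for $\thh\neq 0$ again by non-degeneracy. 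Everything else is the standard orthogonal-complement bookkeeping that goes through verbatim for any $^*$-representation.
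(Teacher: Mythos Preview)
The paper states Theorem \ref{spr411} without proof, treating it as a standard result (implicitly referring to the cited texts, e.g.\ Dixmier). Your argument is precisely the classical Zorn's-lemma proof found in those references: build cyclic subspaces $\ol{\pi(A)\thh}$, use the $^*$-structure to show orthogonal complements of invariant subspaces are invariant, and apply maximality. So there is nothing to compare against, and your approach is the expected one.

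Your proof is correct. One remark on the point you flagged as delicate: in the paper's setting (Section 2, representations by bounded operators), the cleanest way to secure $\thh\in\ol{\pi(A)\thh}$ is via Theorem \ref{gns4}: the representation $\pi$ extends uniquely to the unital extension $\wt A$, and then $\thh=\wt\pi(\bb)\thh$. The cyclic subspace $\ol{\wt\pi(\wt A)\thh}=\ol{\mathbb C\thh+\pi(A)\thh}$ automatically contains $\thh$, and non-degeneracy of $\pi$ (in the form ``$\pi(A)e=0\Rightarrow e=0$'') then ensures this subspace is nonzero whenever $\thh\neq 0$. Alternatively, for the $C^*$-case one uses the approximate identity directly, as you indicated. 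Either way your identification of this as the only nontrivial step is accurate, and the rest of the orthogonal-complement bookkeeping is routine.
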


\begin{remark} \label{gns29} \mar{gns29}
It should be emphasized, that given a cyclic representation
$(\pi,\thh)$ of a $^*$-algebra $A$ in a Hilbert space $E$, a
different element $\thh'$ of $E$ is a cyclic for $\pi$ iff there
exist some elements $b, b'\in A$ such that $\thh'=\pi(b)\thh$ and
$\thh=\pi(b')\thh'$.
\end{remark}

Let $A$ be a $^*$-algebra, $\pi$ its representation in a Hilbert
space $E$, and $\thh$ an element of $E$. Then a map
\mar{spr470}\beq
\om_\thh: a\to \lng \pi(a)\thh|\thh\rng \label{spr470}
\eeq
is a positive form on $A$. It is called the vector form defined by
$\pi$ and $\thh$.

Therefore, let us consider positive forms on a $^*$-algebra $A$.
Given a positive form $f$, a Hermitian form
\mar{spr472}\beq
\lng a|b\rng = f(b^*a), \qquad a,b\in A, \label{spr472}
\eeq
makes $A$ a pre-Hilbert space. If $A$ is a normed $^*$-algebra,
continuous positive forms on $A$ are provided with a norm
\mar{0215}\beq
 \|f\|=\op\sup_{\|a\|=1}|f(a)|, \qquad a\in A. \label{0215}
\eeq

\begin{theorem} \label{gns6} \mar{gns6} Let $A$ be a unital Banach
$^*$-algebra such that $\|\bb\|=1$. Then any positive form on $A$
is continuous.
\end{theorem}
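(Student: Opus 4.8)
The plan is to show that $f$ is bounded by reducing, via the Cauchy--Schwarz inequality for positive forms, to an estimate of $f$ on the positive elements $a^*a$, and then to control those by the classical square-root trick, which is exactly where completeness of $A$ is used. Let $f$ be a positive form on $A$. First I would record that $f$ is Hermitian, $f(a^*)=\ol{f(a)}$: since $f(\bb)=f(\bb^*\bb)$ and $f(a^*a)$ are non-negative reals, expanding $f\big((\bb+ta)^*(\bb+ta)\big)\ge 0$ and $f\big((\bb+ita)^*(\bb+ita)\big)\ge 0$ for real $t$ forces the real and imaginary parts of $f(a^*)$ to coincide with those of $\ol{f(a)}$. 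Hence the sesquilinear form (\ref{spr472}), $\lng a|b\rng=f(b^*a)$, is positive semidefinite and Hermitian, so Cauchy--Schwarz gives $|f(a)|^2=|f(\bb^*a)|^2\le f(\bb)\,f(a^*a)$ for all $a\in A$. If $f(\bb)=0$ this already yields $f=0$, which is continuous; so assume henceforth $f(\bb)>0$, and it remains only to bound $f(a^*a)$.

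The key step is that every Hermitian $h\in A$ with $\|h\|\le 1$ can be written $\bb-h=c^2$ with $c\in A$ Hermitian. Indeed, since $\sum_{n\ge 0}\big|\binom{1/2}{n}\big|<\infty$ and $\|h^n\|\le\|h\|^n\le 1$, the series $c=\sum_{n\ge 0}\binom{1/2}{n}(-h)^n$ converges absolutely in the Banach algebra $A$; it is Hermitian because the coefficients $\binom{1/2}{n}$ are real and the involution is isometric, hence continuous; and computing the Cauchy product inside the closed commutative subalgebra generated by $\bb$ and $h$, the scalar identity $\big((1+t)^{1/2}\big)^2=1+t$ shows that $\sum_{i+j=k}\binom{1/2}{i}\binom{1/2}{j}$ equals $1$ for $k\in\{0,1\}$ and $0$ otherwise, so $c^2=\bb-h$. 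Then $f(\bb)-f(h)=f(c^2)=f(c^*c)\ge 0$, i.e. $f(h)\le f(\bb)$; applying the same to $-h$ gives $|f(h)|\le f(\bb)$ for every Hermitian $h$ with $\|h\|\le 1$, whence $|f(h)|\le f(\bb)\,\|h\|$ for every Hermitian $h$ by homogeneity.

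To conclude, take any $a\in A$: the element $a^*a$ is Hermitian with $\|a^*a\|\le\|a\|^2$, so $f(a^*a)\le f(\bb)\,\|a^*a\|\le f(\bb)\,\|a\|^2$, and combining this with the Cauchy--Schwarz estimate gives $|f(a)|^2\le f(\bb)^2\|a\|^2$, that is $|f(a)|\le f(\bb)\,\|a\|$. Thus $f$ is continuous with $\|f\|\le f(\bb)$, and since conversely $f(\bb)=|f(\bb)|\le\|f\|\,\|\bb\|=\|f\|$ by the hypothesis $\|\bb\|=1$, one even obtains $\|f\|=f(\bb)$. The only part that is not routine is the square-root construction: this is where the Banach (completeness) hypothesis is genuinely needed, through convergence of the binomial series, and the one point requiring care is that the Cauchy product of the series indeed collapses to $\bb-h$ — most transparently checked in the commutative closed subalgebra generated by $\bb$ and $h$, where the scalar power-series identity transfers verbatim.
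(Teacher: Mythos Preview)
Your proof is correct and is the standard argument (essentially the one in Dixmier, to which the paper refers for background): Hermiticity of $f$ via polarization, Cauchy--Schwarz from the positive sesquilinear form $f(b^*a)$, and the square-root trick $\bb-h=c^*c$ via the absolutely convergent binomial series to bound $f$ on Hermitian elements. The paper itself states this theorem without proof, so there is no argument to compare against; your write-up would serve perfectly well as the missing proof, and your concluding observation $\|f\|=f(\bb)$ is exactly what the paper invokes immediately afterwards.
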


In particular, positive forms on a $C^*$-algebra always are
continuous. Conversely, a continuous form $f$ on an unital
$C^*$-algebra is positive iff $f(\bb)=\|f\|$. It follows from this
equality that positive forms on a unital $C^*$-algebra $A$ obey a
relation
\mar{gns12}\beq
\|f_1+f_2\|=\|f_1\|+ \|f_2\|. \label{gns12}
\eeq

Let us note that a continuous positive form on a topological
$^*$-algebra $A$ admits different prolongations onto the unital
extension $\wt A$ of $A$. Such a prolongation is unique in the
following case \cite{dixm}.

\begin{theorem} \label{gns5} \mar{gns5}
Let $f$ be a positive form on a Banach $^*$-algebra $A$ with an
approximate identity. It is extended to a unique positive form
$\wt f$ on the unital extension $\wt A$ of $A$ such that $\wt
f(\bb)=\|f\|$.
\end{theorem}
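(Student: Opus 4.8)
The plan is to write down the only possible extension, reduce its positivity to a single Cauchy--Schwarz-type inequality, and then extract that inequality from the approximate identity.

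Since $\wt A=\mathbb C\bb\oplus A$ as a vector space, every linear form on $\wt A$ is determined by its restriction to $A$ together with its value at $\bb$. Consequently there is at most one form on $\wt A$ which extends $f$ and satisfies $\wt f(\bb)=\|f\|$, namely
\be
\wt f(\la\bb+a) &=& \la\|f\|+f(a),\qquad \la\in\mathbb C,\, a\in A.
\ee
The uniqueness assertion is therefore automatic, and the whole content of the theorem is that this $\wt f$ is a positive form. Before proving that, I would record the two standard facts about a positive form $f$ on a Banach $^*$-algebra with an approximate identity that the argument uses: $f$ is continuous, so $\|f\|=\sup_{\|a\|=1}|f(a)|$ is finite, and $f$ is Hermitian, $f(a^*)=\ol{f(a)}$ (see, e.g., \cite{dixm}).

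Next I would establish the inequality
\be
|f(a)|^2 &\le& \|f\|\,f(a^*a),\qquad a\in A.
\ee
Using the pre-Hilbert form $\lng b|c\rng=f(c^*b)$ of (\ref{spr472}) and the Cauchy--Schwarz inequality, for each index $\iota$ one has $|f(u_\iota a)|^2=|\lng a|u_\iota^*\rng|^2\le f(a^*a)\,f(u_\iota u_\iota^*)$, while $f(u_\iota u_\iota^*)\le\|f\|\,\|u_\iota u_\iota^*\|\le\|f\|\,\|u_\iota\|^2\le\|f\|$ by continuity of $f$, the norm axioms of a $^*$-algebra, and $\|u_\iota\|<1$. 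Passing to the limit along the directed set $I$, continuity of $f$ together with $\|u_\iota a-a\|\to0$ gives $f(u_\iota a)\to f(a)$, and the inequality follows. This is the one genuinely analytic step and the point where the hypothesis of an approximate identity is indispensable; I expect it to be the main obstacle, the rest being formal.

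Finally, to check positivity I would take an arbitrary $x=\la\bb+a\in\wt A$, expand $x^*x=|\la|^2\bb+\ol\la a+\la a^*+a^*a$, and use the Hermitian property to get
\be
\wt f(x^*x) &=& |\la|^2\,\|f\|+2\re(\ol\la f(a))+f(a^*a).
\ee
If $\|f\|=0$ then $f=0$ and the right-hand side vanishes. If $\|f\|>0$, the right-hand side is a real quadratic function of $\re\la$ and $\im\la$ whose minimum, attained at $\la=-f(a)/\|f\|$, equals $f(a^*a)-|f(a)|^2/\|f\|\ge0$ by the inequality above. Hence $\wt f(x^*x)\ge0$ for all $x\in\wt A$, so $\wt f$ is a positive form, as required. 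As a byproduct, the same discriminant computation shows that $\|f\|$ is the smallest number $c$ for which the assignment $\wt f(\bb)=c$ yields a positive extension of $f$, so the extension singled out by the statement is the minimal one (and, with the usual norm on $\wt A$, it is norm-preserving).
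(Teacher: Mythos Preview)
Your proof is correct and is essentially the classical argument found in Dixmier's book. The paper, however, does not give its own proof of this theorem: it merely states the result with a citation to \cite{dixm} and proceeds directly to the GNS construction. There is therefore nothing in the paper to compare your argument against beyond observing that you have supplied precisely the standard proof that the citation points to.
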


A key point is that any positive form on a $C^*$-algebra equals a
vector form defined by some cyclic representation of $A$ in
accordance with the following GNS representation construction
\cite{dixm,book05}.

\begin{theorem} \label{spr471} \mar{spr471}
Let $f$ be a positive form on a Banach $^*$-algebra $A$ with an
approximate identity and $\wt f$ its continuous positive
prolongation onto the unital extension $\wt A$ (Theorems
\ref{gns6} and \ref{gns5}). Let $N_f$ be a left ideal of $\wt A$
consisting of those elements $a\in A$ such that $\wt f(a^*a)=0$.
The quotient $\wt A/N_f$ is a Hausdorff pre-Hilbert space with
respect to the Hermitian form obtained from $\wt f(b^*a)$
(\ref{spr472}) by passage to the quotient. We abbreviate with
$E_f$ the completion of $\wt A/N_f$ and with $\thh_f$ the
canonical image of $\bb\in \wt A$ in $\wt A/N_f\subset E_f$. For
each $a\in \wt A$, let $\tau(a)$ be an operator in $\wt A/N_f$
obtained from the left multiplication by $a$ in $\wt A$ by passage
to the quotient. Then the following holds.

(i) Each $\tau(a)$ has a unique extension to a bounded operator
$\pi_f(a)$ in a Hilbert space $E_f$.

(ii) A map $a\to \pi_f(a)$ is a representation of $A$ in $E_f$.

(iii) A representation $\pi_f$ admits a cyclic vector $\thh_f$.

(iv) $f(a)=\lng \pi(a)\thh_f|\thh_f\rng$ for each $a\in A$.
\end{theorem}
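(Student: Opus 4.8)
The plan is to run the classical GNS construction, keeping track of the two places where the hypotheses are weaker than in the $C^*$-algebra case: $A$ is only a Banach $*$-algebra, so no $C^*$-identity is available, and $A$ need not have a unit.

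First I would record, for the continuous positive form $\wt f$ on $\wt A$ supplied by Theorems \ref{gns6} and \ref{gns5}, the elementary facts $\wt f(a^*)=\ol{\wt f(a)}$ and the Cauchy--Schwarz inequality $|\wt f(b^*a)|^2\le\wt f(a^*a)\,\wt f(b^*b)$ for the positive Hermitian form $\lng a|b\rng=\wt f(b^*a)$. It follows that $N_f=\{a\in\wt A:\wt f(a^*a)=0\}$ is exactly the kernel of $\lng\,\cdot\,|\,\cdot\,\rng$, hence a linear subspace, so the form descends to a positive-definite one on $\wt A/N_f$ --- this is the asserted Hausdorff pre-Hilbert structure, whose completion is $E_f$ --- and a second application of Cauchy--Schwarz (to $\wt f(a^*b^*ba)$) shows $N_f$ is a left ideal. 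Hence $\tau(a)\colon b+N_f\mapsto ab+N_f$ is well defined and linear on $\wt A/N_f$, with $\tau(ab)=\tau(a)\tau(b)$ and $\tau(\bb)=\id$; and, setting $\thh_f:=\bb+N_f$, the representation to be constructed will satisfy $\pi_f(a)\thh_f=\tau(a)\thh_f=a+N_f$, so that $\pi_f(\wt A)\thh_f=\wt A/N_f$ is dense in $E_f$.

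I expect the real obstacle to be the boundedness of $\tau(a)$ in (i), since for a mere Banach $*$-algebra one cannot invoke positivity of $\|a\|^2\bb-a^*a$. Instead, writing $\beta(x)=\wt f(x^*x)^{1/2}$, Cauchy--Schwarz gives $\beta(ax)^2=\lng a^*ax\,|\,x\rng\le\beta(a^*ax)\beta(x)$ and, with $h:=a^*a$ Hermitian, $\beta(h^kx)^2=\lng h^{2k}x\,|\,x\rng\le\beta(h^{2k}x)\beta(x)$ for $k\ge1$; iterating these yields $\beta(ax)^{2^n}\le\beta\big((a^*a)^{2^{n-1}}x\big)\,\beta(x)^{2^n-1}$. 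Bounding $\beta\big((a^*a)^{2^{n-1}}x\big)^2=\wt f\big(x^*(a^*a)^{2^n}x\big)\le\|\wt f\|\,\|x\|^2\,\|(a^*a)^{2^n}\|$ by continuity of $\wt f$, taking $2^n$-th roots, and letting $n\to\infty$ so that $\|(a^*a)^{2^n}\|^{1/2^n}$ tends to the spectral radius $\rho(a^*a)$, I obtain $\beta(ax)\le\rho(a^*a)^{1/2}\beta(x)\le\|a\|\,\beta(x)$ (the case $x\in N_f$ being immediate, as $N_f$ is a left ideal). Thus $\tau(a)$ is bounded with $\|\tau(a)\|\le\|a\|$ and extends uniquely to $\pi_f(a)\in B(E_f)$, proving (i). Extending by continuity the relations $\tau(ab)=\tau(a)\tau(b)$ and $\lng\tau(a)\xi\,|\,\eta\rng=\lng\xi\,|\,\tau(a^*)\eta\rng$ (the latter verified on $\wt A/N_f$ from $\wt f(c^*ab)=\wt f((a^*c)^*b)$) shows that $\pi_f$ is a Hermitian representation, giving (ii) once non-degeneracy is known.

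For (iii) I would upgrade the density of $\pi_f(\wt A)\thh_f=\wt A/N_f$ to cyclicity of $\thh_f$ over $A$ itself --- the one point where the absence of a unit in $A$ is felt --- by showing $\thh_f\in\ol{\pi_f(A)\thh_f}$: using the standard behaviour of an approximate identity against a positive form one gets $\|u_\iota+N_f-\thh_f\|^2=\wt f(u_\iota^*u_\iota)-2\,\re\wt f(u_\iota)+\|f\|\to0$, so $\thh_f$ is the limit of the vectors $u_\iota+N_f=\pi_f(u_\iota)\thh_f\in\pi_f(A)\thh_f$; hence $\ol{\pi_f(A)\thh_f}=E_f$, which also yields non-degeneracy. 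Finally (iv) is the one-line computation $\lng\pi_f(a)\thh_f\,|\,\thh_f\rng=\lng a+N_f\,|\,\bb+N_f\rng=\wt f(a)=f(a)$ for $a\in A$. In summary the argument is the familiar one, the two non-routine ingredients being the iterated Cauchy--Schwarz estimate in (i) and the approximate-identity limit in (iii).
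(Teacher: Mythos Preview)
The paper does not supply its own proof of Theorem~\ref{spr471}: the theorem is stated as a known result with references to \cite{dixm,book05}, and the surrounding text proceeds directly to commentary on the construction. There is therefore nothing in the paper to compare your argument against line by line.

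That said, your proposal is correct and is precisely the standard GNS argument one finds in those references. The two ingredients you single out as non-routine are exactly the ones that distinguish the Banach-$^*$ case from the $C^*$ case: the iterated Cauchy--Schwarz/spectral-radius estimate $\beta(ax)\le\rho(a^*a)^{1/2}\beta(x)$ replacing the positivity of $\|a\|^2\bb-a^*a$, and the approximate-identity computation $\|u_\iota-\bb+N_f\|^2\to0$ (which works because $f(u_\iota)\to\|f\|$ and $f(u_\iota^*u_\iota)\to\|f\|$ by the usual squeeze with $\|u_\iota\|\le1$). Both steps are handled cleanly. One tiny cosmetic point: in your limit you actually use $\|(a^*a)^{2^n}\|^{1/2^{n+1}}\to\rho(a^*a)^{1/2}$, which of course follows from the spectral-radius formula you quote; you may wish to state it in that form.
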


The representation $\pi_f$ and the cyclic vector $\thh_f$ in
Theorem \ref{spr471} are said to be defined by a form $f$, and
a form $f$ equals the vector form defined by $\pi_f$ and
$\thh_f$.

As was mentioned above, we further restrict our consideration of
the GNS construction in Theorem \ref{spr471} to unital
$C^*$-algebras in view of the following \cite{dixm,book05}.

\begin{remark} \label{env} \mar{env}
Let $A$ be an involutive Banach algebra $A$ with an approximate
identity, and let $P(A)$ be the set of pure states of $A$ (Remark
\ref{gns60}). For each $a\in A$, we put
\mar{spr542}\beq
\|a\|'=\op\sup_{f\in P(A)} f(aa^*)^{1/2}, \qquad a\in A.
\label{spr542}
\eeq
It is a seminorm on $A$ such that $\|a\|'\leq \|a\|$. If $A$ is a
$C^*$-algebra, $\|a\|'= \|a\|$ due to the relation (\ref{spr410})
and the existence of an isomorphic representation of $A$. Let
$\cI$ denote the kernel of $\|.\|'$. It consists of $a\in A$ such
that $\|a\|'=0$. Then the completion $A^\dagger$ of the factor
algebra $A/\cI$ with respect to the quotient of the seminorm
(\ref{spr542}), is a $C^*$-algebra, called the enveloping
$C^*$-algebra of $A$.   There is the canonical morphism $\tau:A\to
A^\dagger$. Clearly, $A=A^\dagger$ if $A$ is a $C^*$-algebra. The
enveloping $C^*$-algebra $A^\dagger$ possesses the following
important properties.

$\bullet$ If $\pi$ is a representation of $A$, there is exactly
one representation $\pi^\dagger$ of $A^\dagger$ such that
$\pi=\pi^\dagger\circ\tau$. Moreover, the map $\pi\to\pi^\dagger$
is a bijection of a set of representations of $A$ onto a set of
representations of $A^\dagger$.

$\bullet$  If $f$ is a continuous positive form on $A$, there
exists exactly one positive form $f^\dagger$ on $A^\dagger$ such
that $f=f^\dagger\circ\tau$. Moreover, $\|f^\dagger\|=\|f\|$. The
map $f\to f^\dagger$ is a bijection of a set of continuous
positive forms on $A$ onto a set of positive forms on $A^\dagger$.
\end{remark}

Moreover, the cyclic vector $\thh_f$ in Theorem \ref{spr471}
defined by a positive form $f$ is the image of the identity under
the quotient $\wt A\to\wt A/N_f$, and thus the GNS construction
necessarily is concerned with unital algebras. In view of Theorems
\ref{gns4} and \ref{gns5}, we therefore can restrict our
consideration to unital $C^*$-algebras.

\section{Inequivalent vacua}

Let $A$ be a unital $C^*$-algebra of a quantum systems. As was
mentioned above, positive forms on a $C^*$ algebra are said to be
equivalent if they define its equivalent cyclic representations.

\begin{remark} \label{gns122} \mar{gns122}
Let us recall that two representations $\pi_1$ and $\pi_2$ of a
$^*$-algebra $A$ in Hilbert spaces $E_1$ and $E_2$ are equivalent
if there is an isomorphism $\g:E_1\to E_2$ such that
\mar{gns10}\beq
\pi_2(a)=\g\circ\pi_1(a)\circ \g^{-1}, \qquad a\in A.
\label{gns10}
\eeq
In particular, if representations are equivalent, their kernels
coincide with each other.
\end{remark}

Given two positive forms $f_1$ and $f_2$ on a unital $C^*$-algebra
$A$, we meet the following three variants.

(i) If $f_1=f_2$, there is an isomorphism $\g$ of the
corresponding Hilbert spaces $\g:E_1\to E_2$ such that the
relation (\ref{gns10}) holds,  and moreover
\mar{gns11}\beq
\thh_2=\g(\thh_1). \label{gns11}
\eeq

(ii) Let positive forms $f_1$ and $f_2$ be equivalent, but
different. Then their equivalence morphism $\g$ fails to obey the
relation (\ref{gns11}).

(iii) Positive forms $f_1$ and $f_2$ on $A$ are inequivalent.

In particular, let $\pi$ be a representation of $A$ in a Hilbert
space $E$, and let $\thh$ be an element of $E$ which defines the
vector form $\om_\thh$ (\ref{spr470}) on $A$. Then a
representation $\pi$ contains a summand which is equivalent to the
cyclic representation $\pi_{\om_\thh}$ of $A$ defined by a vector
form $\om_\thh$.

There are the following criteria of equivalence of positive forms.

\begin{theorem} \label{gns123} \mar{gns123}
Positive forms on a unital $C^*$-algebra are equivalent only if
their kernels contain a common largest closed two-sided ideal.
\end{theorem}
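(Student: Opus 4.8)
The plan is to reduce everything to the fact, recorded in Remark \ref{gns122}, that equivalent representations have equal kernels, after first identifying the largest closed two-sided ideal sitting inside the kernel of a positive form with the kernel of the representation that form produces. By ``the kernel of a positive form $f$'' I understand the closed left ideal $N_f=\{a\in A:f(a^*a)=0\}$ appearing in Theorem \ref{spr471}.

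First I would fix a positive form $f$ on the unital $C^*$-algebra $A$, with GNS triple $(\pi_f,E_f,\thh_f)$ supplied by Theorem \ref{spr471}, and note that $a\in N_f$ if and only if $\|\pi_f(a)\thh_f\|^2=\lng\pi_f(a^*a)\thh_f|\thh_f\rng=f(a^*a)=0$, i.e. $N_f=\{a\in A:\pi_f(a)\thh_f=0\}$. Since $\pi_f$ is norm-decreasing by (\ref{spr410}), this shows $N_f$ is closed, and in particular $\Ker\pi_f\subset N_f$, $\Ker\pi_f$ being a closed two-sided ideal of $A$ as the kernel of a $*$-morphism.

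Next I would prove that $\Ker\pi_f$ is the \emph{largest} closed two-sided ideal contained in $N_f$. Let $J\subset N_f$ be any closed two-sided ideal and take $a\in J$. For every $b\in A$ we have $ab\in J\subset N_f$, hence $\pi_f(a)\pi_f(b)\thh_f=\pi_f(ab)\thh_f=0$. Since $\thh_f$ is a cyclic vector for $\pi_f$, the vectors $\pi_f(b)\thh_f$, $b\in A$, are dense in $E_f$, so $\pi_f(a)=0$, i.e. $a\in\Ker\pi_f$. Thus $J\subset\Ker\pi_f$, as required.

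Finally, suppose $f_1$ and $f_2$ are equivalent positive forms on $A$. Then the cyclic representations $\pi_{f_1}$ and $\pi_{f_2}$ are equivalent, so by Remark \ref{gns122} their kernels coincide; write $J=\Ker\pi_{f_1}=\Ker\pi_{f_2}$. By the previous step $J$ is simultaneously the largest closed two-sided ideal contained in $N_{f_1}$ and the largest closed two-sided ideal contained in $N_{f_2}$, hence a common largest closed two-sided ideal inside the two kernels, which is the assertion. The only genuinely non-formal point, and the main obstacle, is the density argument in the third paragraph showing that a two-sided ideal lying in the left kernel is annihilated by $\pi_f$; the rest is bookkeeping around Theorem \ref{spr471} and Remark \ref{gns122}.
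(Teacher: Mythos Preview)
Your proof is correct and follows exactly the route the paper takes: the paper's one-line argument simply invokes the fact (citing Dixmier) that $\Ker\pi_f$ is the largest closed two-sided ideal contained in the kernel of $f$, then combines this with the observation that equivalent representations share the same kernel. You have reproduced that same skeleton but supplied the Dixmier fact with a direct proof via the cyclicity argument, which is the standard way to establish it; there is no substantive difference in strategy.
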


\begin{proof}
The result follows from the fact that the kernel of a cyclic
representation defined by a positive form on a unital
$C^*$-algebra is a largest closed two-sided ideal of the kernel of
this form \cite{dixm}
\end{proof}

\begin{theorem} \label{spr474} \mar{spr474}
Positive forms $f$ and $f'$ on a unital $C^*$-algebra $A$ are
equivalent iff there exist elements $b,b'\in A$ such that
\be
f'(a)=f(b^+ab), \qquad f(a)=f'(b'^+ab'), \qquad a\in A.
\ee
\end{theorem}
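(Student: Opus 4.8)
The plan is to derive both implications from the GNS construction of Theorem~\ref{spr471} together with Remark~\ref{gns29}, which characterises the cyclic vectors of a cyclic representation: a second vector is cyclic precisely when it is related to the given one by the action of algebra elements in both directions.

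For the ``only if'' part, assume $f$ and $f'$ are equivalent and let $\g\colon E_f\to E_{f'}$ be a unitary with $\pi_{f'}(a)=\g\,\pi_f(a)\,\g^{-1}$ for all $a\in A$. Since $\thh_f$ is cyclic for $\pi_f$, the vector $\g(\thh_f)$ is cyclic for $\pi_{f'}$, and so is $\thh_{f'}$; hence by Remark~\ref{gns29} there are $b,b'\in A$ with $\thh_{f'}=\pi_{f'}(b)\g(\thh_f)$ and $\g(\thh_f)=\pi_{f'}(b')\thh_{f'}$. Using that $\g$ is an isometry intertwining $\pi_f$ and $\pi_{f'}$, for every $a\in A$
\beq
f'(a)=\lng\pi_{f'}(a)\thh_{f'}|\thh_{f'}\rng=\lng\pi_{f'}(b^*ab)\g(\thh_f)|\g(\thh_f)\rng=\lng\pi_f(b^*ab)\thh_f|\thh_f\rng=f(b^*ab),
\eeq
and the analogous computation applied to $\g(\thh_f)=\pi_{f'}(b')\thh_{f'}$ yields $f(a)=f'(b'^*ab')$. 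This is the asserted pair of identities ($b^+$ denoting the involution $b^*$).

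For the ``if'' part, suppose $b,b'\in A$ obey $f'(a)=f(b^*ab)$ and $f(a)=f'(b'^*ab')$ for all $a$. I would show that each of $\pi_f,\pi_{f'}$ is equivalent to a subrepresentation of the other, and then invoke a Cantor--Schr\"oder--Bernstein argument. On the dense subspace $\pi_{f'}(A)\thh_{f'}\subset E_{f'}$ define $\g_0\colon\pi_{f'}(a)\thh_{f'}\mapsto\pi_f(ab)\thh_f$. The relation $\|\pi_{f'}(a)\thh_{f'}\|^2=f'(a^*a)=f(b^*a^*ab)=\|\pi_f(ab)\thh_f\|^2$ shows at once that $\g_0$ is well defined and isometric, so it extends to an isometry $\g\colon E_{f'}\to E_f$, and $\g\,\pi_{f'}(c)=\pi_f(c)\,\g$ since the two sides agree on $\pi_{f'}(A)\thh_{f'}$. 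Consequently $H:=\g(E_{f'})=\overline{\pi_f(Ab)\thh_f}$ is a closed $\pi_f$-invariant subspace, and $\g$ is a unitary equivalence of $\pi_{f'}$ with $\pi_f|_H$; thus $\pi_{f'}$ is equivalent to a subrepresentation of $\pi_f$. Running the same construction from the identity $f(a)=f'(b'^*ab')$ shows that $\pi_f$ is equivalent to a subrepresentation of $\pi_{f'}$.

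The remaining step — that two representations each equivalent to a subrepresentation of the other are equivalent — is the one I expect to be the main obstacle, the two constructions above being routine. I would close it by forming $\pi=\pi_f\oplus\pi_{f'}$ on $E_f\oplus E_{f'}$: the summands are the ranges of projections $P,P'$ in the von Neumann algebra $\pi(A)'$, the two-sided subordination means that each of $P,P'$ is Murray--von Neumann equivalent in $\pi(A)'$ to a subprojection of the other, and the Cantor--Schr\"oder--Bernstein theorem for projections in a von Neumann algebra then gives $P\sim P'$ in $\pi(A)'$, i.e.\ $\pi_f\cong\pi_{f'}$, so $f$ and $f'$ are equivalent. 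An alternative that stays inside $A$ is to observe first that the two identities force $\Ker\pi_f=\Ker\pi_{f'}$ (each containment follows from one identity, using $\Ker\pi_f=\{a:f(x^*a^*ax)=0\ \forall x\in A\}$ together with $Ab\subset A$), and then to analyse the isometry $\g\g'\in\pi_f(A)'$ built from both intertwiners, exploiting that the cyclic vector $\thh_f$ is separating for $\pi_f(A)'$ to conclude that $\g$ is surjective.
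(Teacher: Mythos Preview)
Your ``if'' direction via the Cantor--Schr\"oder--Bernstein theorem for projections in the commutant is correct and is in fact more careful than the paper's own argument, which simply asserts (via Remark~\ref{gns29}) that $\pi_f(b)\thh_f$ is cyclic for $\pi_f$. From the hypotheses one only obtains $f(a)=f((b'b)^*a(b'b))$, i.e.\ that the vector states at $\thh_f$ and at $\pi_f(b'b)\thh_f$ agree --- not that $\thh_f=\pi_f(b')\pi_f(b)\thh_f$, which is what an application of Remark~\ref{gns29} would require. Your observation that each GNS representation embeds as a subrepresentation of the other, followed by comparison of the corresponding projections in the von Neumann algebra $(\pi_f\oplus\pi_{f'})(A)'$, is exactly what fills this gap. (Your ``alternative'' at the end, however, does not work as written: an isometry in a von Neumann algebra having a separating vector need not be unitary, so the fact that $\thh_f$ is separating for $\pi_f(A)'$ is by itself not enough to force $\g\g'$ to be surjective.)

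The ``only if'' direction, by contrast, has a genuine error --- and it is the same one present in the paper's proof. Both arguments invoke the converse of Remark~\ref{gns29}: that whenever $\thh,\thh'$ are two cyclic vectors for $\pi$, there exist $b,b'\in A$ with $\thh'=\pi(b)\thh$ and $\thh=\pi(b')\thh'$. This is false in general: a cyclic vector lies in the closure $\ol{\pi(A)\thh}$, not necessarily in $\pi(A)\thh$ itself. Indeed the ``only if'' direction of the theorem fails as stated. Take $A=C[0,1]$, $f(g)=\int_0^1 g(x)\,dx$ and $f'(g)=\int_0^1 2x\,g(x)\,dx$. The GNS representations are multiplication on $L^2(dx)$ and on $L^2(2x\,dx)$, which are unitarily equivalent because the two measures are mutually absolutely continuous. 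But the relation $f(a)=f'(b'^*ab')$ reads $\int g\,dx=\int |b'|^2\,2x\,g\,dx$ for all $g\in C[0,1]$, forcing $|b'(x)|^2=1/(2x)$, which is impossible for $b'\in C[0,1]$. So no such $b'$ exists, even though $\pi_f\cong\pi_{f'}$.
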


\begin{proof} Let a positive form $f$ define a cyclic representation $(\pi_f,\thh_f)$ of $A$
in $E_f$. Let us consider an element $\pi_f(b)\thh_f\in E_f$. In
accordance with Remark \ref{gns29}, this element is a cyclic
element for a representation $\pi_f$. It provides a positive form
$\om_{\pi_f(b)\thh_f}$ on $A$ such that $\om_{\pi_f(b)\thh_f}=f'$.
Then a positive form $f'$ defines a cyclic representation
$(\pi_{f'},\thh_{f'})$ of $A$ in a Hilbert space $E_{f'}$ which is
isomorphic as $\g:E_{f'}\to E_f$ to a cyclic representation
$(\pi_f,\pi_f(b)\thh_f)$ in $E_f$ such that the relation
(\ref{gns11}) holds. Conversely, let positive forms $f$ and $f'$
be equivalent. Then a positive form $f'$ defines an isomorphic
cyclic representation $(\pi_f, \thh')$ in $E_f$, but with a
different cyclic vector $\thh'$. Then the result follows from
Remark \ref{gns29}.
\end{proof}

In particular, it follows from Theorem \ref{spr474} that given a
positive form $f$ on $A$, the state $f(\bb)^{-1}f$ of $A$ is
equivalent to $f$. Speaking on equivalent positive forms on $A$,
we therefore can restrict our consideration to states.

For instance, any cyclic representation of a $C^*$-algebra $A$ is
a summand of the Hilbert sum (\ref{0315}):
\mar{gns13}\beq
\pi_F=\op\oplus_{F(A)} \pi_f, \label{gns13}
\eeq
of cyclic representations of $A$ where $f$ runs through a set
$F(A)$ of all states of $A$. Since for any element $a\in A$ there
exists a state $f$ such that $f(a)\neq 0$, the representation
$\pi_F$ is injective and, consequently, isometric and isomorphic.

A space of continuous forms on a $C^*$-algebra $A$ is the
(topological) dual $A'$ of a Banach space $A$. It can be provided
both with a normed topology defined by the norm (\ref{0215}) and a
weak$^*$ topology (Section 14.1). It follows from the relation
(\ref{gns12}), that a subset $F(A)\subset A'$ of states is convex
and its extreme points are pure states.

\begin{remark} \mar{gns60} \label{gns60}
Let us recall that a positive form $f'$ on a $^*$-algebra $A$ is
said to be dominated  by a positive form $f$ if $f-f'$ is a
positive form \cite{book05,dixm}. A non-zero positive form $f$ on
a $^*$-algebra $A$ is called pure if every positive form $f'$ on
$A$ which is dominated by $f$ reads $\la f$, $0\leq \la\leq 1$.
\end{remark}

A key point is the following \cite{dixm}

\begin{theorem} \label{spr480} \mar{spr480}
The cyclic representation of $\pi_f$ of a $C^*$-algebra $A$
defined by a positive form $f$ on $A$ is irreducible iff $f$ is a
pure form \cite{dixm}
\end{theorem}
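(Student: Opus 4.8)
The plan is to prove both implications separately, reducing everything to the concrete structure of the GNS representation $(\pi_f,\thh_f)$ attached to $f$, together with the characterization of \emph{dominated} forms by positive operators in the commutant $\pi_f(A)'$.

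\medskip
\noindent\textbf{($\Leftarrow$)} First I would assume $f$ is pure and derive irreducibility of $\pi_f$. Suppose, for contradiction, that $\pi_f$ is reducible, so $E_f$ admits a nontrivial closed $\pi_f(A)$-invariant subspace $E_1$. Let $P$ be the orthogonal projection onto $E_1$; then $P\in\pi_f(A)'$ and $0\neq P\neq\id$. Define $f'(a)=\lng\pi_f(a)P\thh_f\,|\,P\thh_f\rng=\lng\pi_f(a)P\thh_f\,|\,\thh_f\rng$ (using $P^*=P$ and $P$ commuting with $\pi_f(a)$). One checks directly that $f'$ is a positive form and that $f-f'(a)=\lng\pi_f(a)(\id-P)\thh_f\,|\,\thh_f\rng$ is also positive, so $f'$ is dominated by $f$. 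Purity forces $f'=\la f$ for some $\la\in[0,1]$, i.e. $\lng\pi_f(a)P\thh_f|\thh_f\rng=\la\lng\pi_f(a)\thh_f|\thh_f\rng$ for all $a$. Since $\thh_f$ is cyclic and $P\in\pi_f(A)'$, this polarizes (replacing $a$ by $b^*a$) to $\lng\pi_f(a)P\thh_f|\pi_f(b)\thh_f\rng=\la\lng\pi_f(a)\thh_f|\pi_f(b)\thh_f\rng$; as $\pi_f(A)\thh_f$ is dense, $P=\la\,\id$ on $E_f$, contradicting $P$ being a nontrivial projection (its only possible eigenvalue would be $0$ or $1$ with the other also occurring). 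Hence $\pi_f$ is irreducible.

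\medskip
\noindent\textbf{($\Rightarrow$)} Conversely, assume $\pi_f$ is irreducible and show $f$ is pure. Let $f'$ be a positive form dominated by $f$, so both $f'$ and $f-f'$ are positive. The inequality $f'(b^*a)$ being a bounded sesquilinear form with respect to the $f$-inner product $\lng a|b\rng_f=f(b^*a)$ (since $0\le f'\le f$ gives $|f'(b^*a)|^2\le f'(a^*a)f'(b^*b)\le f(a^*a)f(b^*b)$) lets me descend $f'$ to a bounded Hermitian form on the pre-Hilbert space $\wt A/N_f$, hence on $E_f$, represented by a bounded self-adjoint operator $T$ with $0\le T\le\id$ and $f'(a)=\lng T\pi_f(a)\thh_f\,|\,\thh_f\rng$. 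The key computation is that $T$ commutes with every $\pi_f(b)$: for all $a,b$, $\lng T\pi_f(b)\pi_f(a)\thh_f|\thh_f\rng=f'(ba)=f'((b^*)^*a)$, and one rearranges using $f'(b^*a)=f'(a^*b)^{-}$ and the module structure to get $T\pi_f(b)=\pi_f(b)T$; this is the standard but slightly fiddly identity. By irreducibility, $\pi_f(A)'=\mathbb C\,\id$ (Schur), so $T=\la\,\id$ with $\la\in[0,1]$, and then $f'(a)=\la\lng\pi_f(a)\thh_f|\thh_f\rng=\la f(a)$. Thus $f$ is pure.

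\medskip
\noindent\textbf{Main obstacle.} The routine parts are the positivity bookkeeping and the Cauchy--Schwarz estimates; the genuinely delicate step is establishing that the dominating operator $T$ lies in the commutant $\pi_f(A)'$ (equivalently, in the $(\Leftarrow)$ direction, extracting a contradiction from $f'=\la f$ via the cyclicity of $\thh_f$). Both hinge on the same principle: identities that hold after pairing against the single vector $\thh_f$ must be promoted to operator identities on all of $E_f$, which is legitimate precisely because $\pi_f(A)\thh_f$ is dense and $T$ (resp. $P$) is a bounded operator commuting with the $A$-action. Carrying that promotion out carefully — rather than the algebra itself — is where the proof earns its keep, and it is exactly the point at which irreducibility (Schur's lemma) enters as $\pi_f(A)'=\mathbb C\,\id$.
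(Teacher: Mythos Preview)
Your argument is correct and is precisely the classical proof; the paper itself gives no proof of this theorem but simply cites Dixmier, and what you have written is essentially the argument found there (the bijection between positive forms dominated by $f$ and positive operators $0\le T\le\id$ in the commutant $\pi_f(A)'$, combined with Schur's lemma). The one place you are a bit vague --- showing $T\in\pi_f(A)'$ --- is easily made explicit: for $a,b,c\in A$ one has $\lng T\pi_f(c)\pi_f(a)\thh_f\,|\,\pi_f(b)\thh_f\rng=f'(b^*ca)=f'((c^*b)^*a)=\lng T\pi_f(a)\thh_f\,|\,\pi_f(c)^*\pi_f(b)\thh_f\rng=\lng\pi_f(c)T\pi_f(a)\thh_f\,|\,\pi_f(b)\thh_f\rng$, and density of $\pi_f(A)\thh_f$ finishes it.
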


In particular, any vector form defined by a vector of a carrier
Hilbert space of an irreducible representation is a pure form.

\begin{remark}
Let us note that a representation $\pi$ of a $^*$-algebra $A$ in a
Hilbert space $E$ is called topologically irreducible if the
following equivalent conditions hold:

$\bullet$ the only closed subspaces of $E$ invariant under
$\pi(A)$ are 0 and $E$;

$\bullet$ the commutant of $\pi(A)$ in $B(E)$ is a set of scalar
operators;

$\bullet$ every non-zero element of $E$  is a cyclic vector for
$\pi$.

\noindent At the same time, irreducibility of $\pi$ in the
algebraic sense means that the only subspaces of $E$ invariant
under $\pi(A)$ are 0 and $E$. If $A$ is a $C^*$-algebra, the
notions of topologically and algebraically  irreducible
representations are equivalent. It should be emphasized that a
representation of a $C^*$-algebra need not be a Hilbert sum of the
irreducible ones.
\end{remark}

An algebraically irreducible representation $\pi$ of a
$^*$-algebra $A$ is characterized by its kernel $\Ker\pi\subset
A$. This is a two-sided ideal, called primitive. Certainly,
algebraically irreducible representations with different kernels
are inequivalent, whereas equivalent irreducible representations
possesses the same kernel. Thus, we have a surjection
\mar{spr736}\beq
\wh A\ni \pi\to \Ker\pi \in \mathrm{Prim}(A) \label{spr736}
\eeq
of a set $\wh A$ of equivalence classes of algebraically
irreducible representations of a $^*$-algebra $A$ onto a set
Prim$(A)$ of primitive ideals of $A$.

A set Prim$(A)$ is equipped with the so called Jacobson topology
\cite{dixm}. This topology is not Hausdorff, but obeys the
Fr\'echet axiom,  i.e., for any two distinct points of Prim$(A)$,
there is a neighborhood of one of them which does not contain the
other. Then a set $\wh A$ is endowed with the coarsest topology
such that the surjection (\ref{spr736}) is continuous. Provided
with this topology, $\wh A$ is called the spectrum  of a
$^*$-algebra $A$. In particular, one can show the following.

\begin{theorem} \label{spr413} \mar{spr413}
If a $^*$-algebra $A$ is unital, its spectrum $\wh A$ is
quasi-compact, i.e., it satisfies the Borel--Lebesgue axiom, but
need not be Hausdorff.
\end{theorem}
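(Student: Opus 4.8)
The plan is to deduce the quasi-compactness of $\wh A$ from that of $\mathrm{Prim}(A)$ via the surjection (\ref{spr736}), and to prove the latter by the classical hull--kernel argument, with unitality of $A$ entering through the fact that the identity $\bb$ can lie only in ideals whose sum is all of $A$. (As everywhere in this section, $A$ is a unital $C^*$-algebra.)

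First I would recall the structure of the Jacobson topology: the closed subsets of $\mathrm{Prim}(A)$ are exactly the hulls $h(I)=\{P\in\mathrm{Prim}(A):P\supseteq I\}$ of the closed two-sided ideals $I$ of $A$, and $\bigcap_\al h(I_\al)=h(\sum_\al I_\al)$. The crucial lemma is that $h(I)=\varnothing$ iff $I=A$. Only the implication ``$I\neq A\Rightarrow h(I)\neq\varnothing$'' needs an argument: if $I$ is a proper closed ideal, then $A/I$ is a non-zero unital $C^*$-algebra, its state space is non-empty, convex and weak$^*$ compact, so by Krein--Milman it has a pure state, and by Theorem \ref{spr480} together with the GNS construction (Theorem \ref{spr471}) the algebra $A/I$ carries an irreducible representation; composing it with the quotient map $A\to A/I$ produces an irreducible representation of $A$ whose kernel is a primitive ideal containing $I$.

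Next I would check the Borel--Lebesgue axiom for $\mathrm{Prim}(A)$ in the form: every family of closed sets with the finite intersection property has non-empty intersection. Let $\{h(I_\al)\}_{\al\in L}$ have the finite intersection property but assume $\bigcap_\al h(I_\al)=\varnothing$. By the lemma $\sum_\al I_\al=A$, so $\bb\in\sum_\al I_\al$; since elements of this sum are finite sums, $\bb\in I_{\al_1}+\dots+I_{\al_n}$ for some $\al_1,\dots,\al_n$, whence $I_{\al_1}+\dots+I_{\al_n}=A$ and $h(I_{\al_1})\cap\dots\cap h(I_{\al_n})=\varnothing$ --- a contradiction. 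So $\mathrm{Prim}(A)$ is quasi-compact. I expect this to be the crux of the argument: it is precisely here that unitality is essential, and indeed $\mathrm{Prim}(A)$ is generally not quasi-compact when $A$ has no unit.

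Finally, write $p:\wh A\to\mathrm{Prim}(A)$, $p(\pi)=\Ker\pi$, for the surjection (\ref{spr736}). Since $\wh A$ carries the coarsest topology making $p$ continuous, every open subset of $\wh A$ is $p^{-1}(U)$ with $U$ open in $\mathrm{Prim}(A)$; hence from any open cover $\{p^{-1}(U_\al)\}$ of $\wh A$ one obtains, by surjectivity of $p$, an open cover $\{U_\al\}$ of $\mathrm{Prim}(A)$, a finite subcover $U_{\al_1},\dots,U_{\al_k}$ by the previous step, and the finite subcover $p^{-1}(U_{\al_1}),\dots,p^{-1}(U_{\al_k})$ of $\wh A$. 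Thus $\wh A$ is quasi-compact. That it need not be Hausdorff follows from the non-Hausdorff character of the Jacobson topology recalled above (inherited by $\wh A$, as $\wh A$ has the initial topology along $p$): for a separable infinite-dimensional Hilbert space $E$, the unital $C^*$-algebra $B(E)$ has $\mathrm{Prim}(B(E))=\{0,\cK(E)\}$, a two-element space in which $\{0\}$ is dense since its closure is $h(0)=\mathrm{Prim}(B(E))$, so it is not Hausdorff.
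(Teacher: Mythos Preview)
The paper states this theorem without proof (the preceding sentence reads ``one can show the following'', deferring to \cite{dixm}), so there is no argument in the paper to compare yours against. Your hull--kernel argument is the standard one and is correct.

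Two minor remarks. First, your lemma ``$h(I)=\varnothing$ iff $I=A$'' is stated and proved for \emph{closed} ideals, but you then apply it to the algebraic sum $\sum_\al I_\al$, which need not be closed. The missing step is that in a unital Banach algebra a proper two-sided ideal has proper closure: if some $x$ in the ideal satisfies $\|\bb-x\|<1$ then $x$ is invertible and the ideal is all of $A$. With this, $h(\sum_\al I_\al)=h(\overline{\sum_\al I_\al})=\varnothing$ gives $\overline{\sum_\al I_\al}=A$ and hence $\sum_\al I_\al=A$, after which your finiteness argument for $\bb$ goes through. Second, the theorem as printed says ``$^*$-algebra'' while you work under the section's standing hypothesis that $A$ is a unital $C^*$-algebra; this is fine in context, but note that the quasi-compactness of $\mathrm{Prim}(A)$ for an arbitrary unital ring already follows from the purely algebraic version of your argument (Zorn produces a maximal left ideal containing a given proper ideal, and the annihilator of the corresponding simple module is a primitive ideal), so your appeal to Krein--Milman and Theorem~\ref{spr480} is the natural $C^*$ route but not essential to the conclusion. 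Your non-Hausdorff example $B(E)$ is correct, and the transfer to $\wh A$ via the initial topology is valid because disjoint open sets in $\wh A$ pull back, by surjectivity of $p$, from disjoint open sets in $\mathrm{Prim}(A)$.
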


\begin{theorem} \label{spr414} \mar{spr414}
The spectrum $\wh A$ of a $C^*$-algebra $A$ is a locally
quasi-compact space.
\end{theorem}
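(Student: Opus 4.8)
**Proof proposal for Theorem 3.14 (local quasi-compactness of the spectrum of a $C^*$-algebra).**

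The plan is to reduce to the unital case already established as Theorem~\ref{spr413}. Let $A$ be a $C^*$-algebra, and form its unital extension $\wt A = \mathbb C\oplus A$, which is again a $C^*$-algebra. The key structural fact I would use is that the spectrum $\wh{\wt A}$ of the unitization decomposes as $\wh{\wt A} = \wh A \sqcup \{\chi_0\}$, where $\chi_0$ is the class of the one-dimensional representation $\la\bb + a \mapsto \la$ that annihilates $A$, and where $\wh A$ — the set of those irreducible representations of $\wt A$ that do not kill $A$, equivalently the irreducible representations of $A$ prolonged by Theorem~\ref{gns4} — carries exactly the subspace topology inherited from $\wh{\wt A}$. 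This last point is a statement about the Jacobson topology: $\mathrm{Prim}(A)$ is homeomorphic to $\mathrm{Prim}(\wt A)\setminus\{\Ker\chi_0\}$ via $I \mapsto I$ (viewing $I\subset A\subset\wt A$), and pulling back along the surjection (\ref{spr736}) gives the corresponding homeomorphism of spectra.

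Granting this, the argument is short. By Theorem~\ref{spr413}, $\wh{\wt A}$ is quasi-compact. The point $\chi_0$ is closed in $\wh{\wt A}$: its kernel is the maximal ideal $A\subset\wt A$, which is a closed two-sided ideal whose hull in $\mathrm{Prim}(\wt A)$ is the single point $\{\Ker\chi_0\}$, so $\{\chi_0\}$ is closed by definition of the Jacobson topology. Hence $\wh A = \wh{\wt A}\setminus\{\chi_0\}$ is an open subset of a quasi-compact space. It is a general topological fact that an open subspace of a quasi-compact space is locally quasi-compact: given $x\in\wh A$ and an open neighbourhood, one can use quasi-compactness of the ambient space together with the Fréchet (T$_1$ at closed points is not available, but) separation property recorded after Theorem~\ref{spr414}'s context — more precisely, I would invoke that $\wh{\wt A}$, being the spectrum of a unital $C^*$-algebra, has a base of quasi-compact open sets, so every point of the open set $\wh A$ has a quasi-compact (open) neighbourhood contained in $\wh A$. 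That is exactly local quasi-compactness.

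The main obstacle I anticipate is the first paragraph: establishing cleanly that $\wh A$ sits inside $\wh{\wt A}$ as an \emph{open} subspace with the subspace topology, i.e. that adjoining a unit only adds one closed point and does not disturb the Jacobson topology on what was already there. This requires the correspondence between (closed two-sided) ideals of $A$ and of $\wt A$ — every closed ideal of $A$ is a closed ideal of $\wt A$, and the closed ideals of $\wt A$ are these together with $A$ itself and $\wt A$ — and then checking that hull–kernel closure is compatible with this identification. I would also need the fact, standard for $C^*$-algebras, that the spectrum of a \emph{unital} $C^*$-algebra has a base of quasi-compact open sets (equivalently, $\mathrm{Prim}$ of a unital $C^*$-algebra is quasi-compact and its topology has a quasi-compact open base); this upgrades the bare quasi-compactness of Theorem~\ref{spr413} to what is needed to produce small quasi-compact neighbourhoods. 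With those two inputs in hand, the deduction of local quasi-compactness of $\wh A$ is purely formal point-set topology.
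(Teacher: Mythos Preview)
The paper does not actually supply a proof of Theorem~\ref{spr414}; it is stated as a known structural fact (the standard reference being Dixmier) and is immediately combined with Theorem~\ref{spr413} to conclude quasi-compactness in the unital case. So there is no paper proof to compare against, only the question of whether your argument stands on its own.

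Your reduction to the unitization is natural, and the identification $\wh{\wt A}=\wh A\sqcup\{\chi_0\}$ with $\{\chi_0\}$ closed is correct. The difficulty is the one you yourself flag at the end: from Theorem~\ref{spr413} you only know that $\wh{\wt A}$ is quasi-compact, and an open subset of a non-Hausdorff quasi-compact space need not be locally quasi-compact. To repair this you invoke that the spectrum of a unital $C^*$-algebra has a \emph{base} of quasi-compact open sets. But that statement is at least as strong as the theorem you are trying to prove (it is local quasi-compactness in the strong sense, for the unital case), and it does not follow from Theorem~\ref{spr413}. So the reduction, as written, is circular: the ``additional input'' you require is essentially the content of Theorem~\ref{spr414} itself, merely restricted to unital algebras.

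The standard direct argument avoids unitization altogether: for each $a\in A$ one shows that the open set $U_a=\{[\pi]\in\wh A:\pi(a)\neq 0\}$ is quasi-compact (this uses that $[\pi]\mapsto\|\pi(a)\|$ is lower semicontinuous and an argument with approximate identities or the functional calculus), and the family $\{U_a\}_{a\in A}$ is a base for the Jacobson topology. That gives local quasi-compactness immediately and uniformly, without passing through the unital case. If you want to keep your unitization strategy, you would need an independent proof of the quasi-compact-base property for unital $C^*$-algebras, and that proof will essentially be the argument just sketched.
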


It follows from Theorems \ref{spr413} and \ref{spr414} that the
spectrum of a unital $C^*$-algebra is quasi-compact.

\begin{example} \label{spr770} \mar{spr770} A $C^*$-algebra is said to be
elementary if it is isomorphic to an algebra $T(E)\subset B(E)$ of
compact operators in some Hilbert space $E$ (Example \ref{w90}).
Every non-trivial irreducible representation of an elementary
$C^*$ algebra $A\cong T(E)$ is equivalent to its isomorphic
representation by compact operators in $E$ \cite{dixm}. Hence, the
spectrum of an elementary algebra is a singleton set.
\end{example}

By analogy with Theorem \ref{spr474}, one can state the following
relations between equivalent pure states of a $C^*$-algebra.

\begin{theorem} \label{spr481} \mar{spr481}
Pure states $f$ and $f'$ of a unital $C^*$-algebra $A$ are
equivalent iff there exists a unitary element $U\in A$ such that
the relation
\mar{gns51'}\beq
f'(a)=f(U^*aU), \qquad  a\in A. \label{gns51'}
\eeq
holds.
\end{theorem}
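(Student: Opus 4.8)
The plan is to deduce Theorem~\ref{spr481} from the general equivalence criterion of Theorem~\ref{spr474} together with the characterization of pure states via irreducibility (Theorem~\ref{spr480}). The ``if'' direction is almost immediate: if $U\in A$ is unitary and $f'(a)=f(U^*aU)$, then taking $b=U$ and $b'=U^*$ in Theorem~\ref{spr474} (noting $U^{*+}=U$ since $U^*U=UU^*=\bb$, so that $f(a)=f'((U^*)^+a(U^*))=f'(UaU^*)$, which follows by substituting $a\mapsto UaU^*$ in the hypothesis) shows $f$ and $f'$ are equivalent. So the real content is the ``only if'' direction.

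For the ``only if'' direction, suppose the pure states $f$ and $f'$ are equivalent. First I would invoke Theorem~\ref{spr480}: since $f$ is pure, the cyclic representation $(\pi_f,\thh_f)$ is irreducible; likewise $(\pi_{f'},\thh_{f'})$. Equivalence of the forms means there is an isomorphism $\g:E_{f'}\to E_f$ intertwining $\pi_{f'}$ and $\pi_f$, so we may realize both forms inside the single irreducible representation $(\pi_f,E_f)$: there is a unit vector $\thh'=\g(\thh_{f'})\in E_f$ with $f'(a)=\lng\pi_f(a)\thh'|\thh'\rng$ for all $a\in A$, and $\thh'$ is again cyclic because the representation is irreducible (every nonzero vector is cyclic). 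By Remark~\ref{gns29} there are elements $b,b'\in A$ with $\thh'=\pi_f(b)\thh_f$ and $\thh_f=\pi_f(b')\thh'$.

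The heart of the argument — and the step I expect to be the main obstacle — is upgrading the element $b$ to a \emph{unitary} element of $A$. The idea is that $\pi_f(b)$ maps the unit cyclic vector $\thh_f$ to the unit vector $\thh'$, and one wants to adjust $b$ within the algebra so that the corresponding operator becomes unitary on $E_f$ while still sending $\thh_f$ to $\thh'$. Here one uses a transitivity-type result for irreducible representations of $C^*$-algebras (Kadison transitivity): because $\pi_f$ is irreducible and $A$ is a unital $C^*$-algebra, given the two unit vectors $\thh_f,\thh'\in E_f$ there exists a \emph{unitary} $U\in A$ with $\pi_f(U)\thh_f=\thh'$; more precisely one produces a self-adjoint $h\in A$ with $\pi_f(h)$ acting as the relevant rotation on $\thh_f$, then sets $U=\exp(ih)$. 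Granting such a $U$, we compute for every $a\in A$:
\be
f'(a)=\lng\pi_f(a)\thh'|\thh'\rng=\lng\pi_f(a)\pi_f(U)\thh_f|\pi_f(U)\thh_f\rng=\lng\pi_f(U^*aU)\thh_f|\thh_f\rng=f(U^*aU),
\ee
which is exactly (\ref{gns51'}). The delicate point in the obstacle step is that transitivity only gives an operator in $\pi_f(A)$ with the prescribed action on finitely many vectors; one must check that the norm estimates allow the self-adjoint lift $h$ (hence $U=\exp(ih)$) to be chosen in $A$ itself rather than merely in the weak closure, which is where unitality of $A$ and the $C^*$-identity are essential. Once $U$ is in hand the remaining verification is the short computation above.
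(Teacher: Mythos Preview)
Your proof is correct and follows essentially the same route as the paper's: both hinge on the irreducibility of $\pi_f$ for pure $f$ and the resulting exact transitivity $\pi_f(A)\thh_f=E_f$ (the paper phrases this as the pre-Hilbert space $A/N_f$ already being complete, i.e., $E_f=A/N_f$). You are simply more explicit than the paper's one-line sketch about invoking the unitary form of Kadison transitivity to pass from an arbitrary $b$ with $\pi_f(b)\thh_f=\thh'$ to a unitary $U\in A$.
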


\begin{proof}
A key point is that, if $f$ is a pure state of a unital
$C^*$-algebra, a pseudo-Hilbert space $A/N_f$ in Theorem
\ref{spr471} is complete, i.e., $E_f=A/N_f$.
\end{proof}

\begin{corollary} \label{gns33} \mar{gns33}
Let $\pi$ be an irreducible representation of a unital
$C^*$-algebra $A$ in a Hilbert space $E$. Given two distinct
elements $\thh_1$ and $\thh_2$ of $E$ (they are cyclic for $\pi$),
the vector forms on $A$ defined by $(\pi,\thh_1)$ and $(\pi,
\thh_2)$ are equal iff there exists $\la\in\mathbb C$, $|\la|=1$,
such that $\thh_1=\la\thh_2$.
\end{corollary}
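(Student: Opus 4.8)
The plan is to deduce Corollary~\ref{gns33} directly from Theorem~\ref{spr481} together with the structural description of the GNS data for pure states given in its proof. First I would observe that the two vector forms $\om_{\thh_1}$ and $\om_{\thh_2}$ on $A$ are pure states: indeed, since $\pi$ is irreducible and each $\thh_i$ is a cyclic vector, the vector form $\om_{\thh_i}$ coincides with a pure state after normalization (as remarked after Theorem~\ref{spr480}), and we may assume $\|\thh_1\|=\|\thh_2\|=1$ so that $\om_{\thh_i}(\bb)=1$. If the hypothesis $\om_{\thh_1}=\om_{\thh_2}$ holds, then in particular these two pure states are equivalent, so Theorem~\ref{spr481} produces a unitary $U\in A$ with $\om_{\thh_2}(a)=\om_{\thh_1}(U^*aU)$ for all $a\in A$; but since the forms are equal this says $\lng\pi(a)\thh_1|\thh_1\rng=\lng\pi(U^*aU)\thh_1|\thh_1\rng=\lng\pi(a)\pi(U)\thh_1|\pi(U)\thh_1\rng$ for all $a$.

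The core of the argument is then a rigidity statement: if $\xi,\eta$ are unit vectors in $E$ with $\lng\pi(a)\xi|\xi\rng=\lng\pi(a)\eta|\eta\rng$ for all $a\in A$, then $\eta=\la\xi$ with $|\la|=1$. To see this, I would use the fact that $\pi$ is cyclic with cyclic vector $\xi$ (every nonzero vector is cyclic for an irreducible representation of a $C^*$-algebra), so that the map $\pi(a)\xi\mapsto\pi(a)\eta$ is a well-defined isometry on the dense subspace $\pi(A)\xi$: indeed $\|\pi(a)\eta\|^2=\lng\pi(a^*a)\eta|\eta\rng=\lng\pi(a^*a)\xi|\xi\rng=\|\pi(a)\xi\|^2$. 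Hence it extends to a unitary operator $W\in B(E)$ with $W\xi=\eta$ and $W\pi(a)\xi=\pi(a)\eta=\pi(a)W\xi$ for all $a$, so $W$ lies in the commutant of $\pi(A)$. Since $\pi$ is topologically irreducible, its commutant consists of scalar operators, so $W=\la\,\id$ with $|\la|=1$ because $W$ is unitary; therefore $\eta=W\xi=\la\xi$. Applying this with $\xi=\pi(U)\thh_1$ and $\eta=\thh_1$ (both unit vectors giving the same vector form) already shows $\thh_1$ is a scalar multiple of $\pi(U)\thh_1$; but more simply, applying it directly with $\xi=\thh_1$, $\eta=\thh_2$ gives the claim.

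For the converse I would just compute: if $\thh_1=\la\thh_2$ with $|\la|=1$, then $\lng\pi(a)\thh_1|\thh_1\rng=\la\ol\la\lng\pi(a)\thh_2|\thh_2\rng=\lng\pi(a)\thh_2|\thh_2\rng$, so the vector forms agree. This direction is routine.

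The main obstacle is the rigidity step, i.e.\ showing that equality of the vector forms forces the two vectors to be proportional; everything hinges on promoting the set-theoretic equality of quadratic forms to a unitary intertwiner in the commutant and then invoking Schur-type irreducibility. One subtlety to handle carefully is the normalization: the corollary as stated speaks of arbitrary distinct elements $\thh_1,\thh_2$, but the relation $\thh_1=\la\thh_2$ with $|\la|=1$ forces equal norms, so implicitly one is comparing vectors of the same length; I would make this explicit by noting that $\|\thh_i\|^2=\om_{\thh_i}(\bb)$, so equality of the vector forms already forces $\|\thh_1\|=\|\thh_2\|$, after which the isometry argument goes through verbatim. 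Alternatively, one can bypass the normalization issue entirely by noting that $\om_{\thh_1}=\om_{\thh_2}$ trivially gives the unitary $U=\bb$ in Theorem~\ref{spr481}, and then quote the completeness $E_f=A/N_f$ from that theorem's proof to identify $E$ with $A/N_{\om_{\thh_1}}$ and track the cyclic vectors, but the direct commutant argument seems cleaner and more self-contained.
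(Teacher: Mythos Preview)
Your argument is correct. The paper does not give a separate proof of Corollary~\ref{gns33}; it is stated as an immediate consequence of Theorem~\ref{spr481}, whose proof hinges on the fact that for a pure state $f$ the GNS carrier space satisfies $E_f=A/N_f$. Your direct commutant argument---building the isometry $W:\pi(a)\thh_1\mapsto\pi(a)\thh_2$, extending it to a unitary on $E$, and then invoking that the commutant of an irreducible representation consists of scalars---is exactly the standard way to make the ``corollary'' explicit, and is in fact the content of GNS uniqueness specialized to this situation.

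Two small remarks. First, your opening detour through Theorem~\ref{spr481} to obtain a unitary $U\in A$ is unnecessary: as you yourself notice at the end of the second paragraph, the commutant argument applies directly with $\xi=\thh_1$ and $\eta=\thh_2$, so the unitary in $A$ never enters. You could drop the first paragraph entirely and lose nothing. Second, your handling of the normalization issue is correct and worth keeping: the equality $\|\thh_i\|^2=\om_{\thh_i}(\bb)$ forces $\|\thh_1\|=\|\thh_2\|$ as soon as the vector forms agree, so the isometry argument goes through without further assumption.
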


\begin{corollary} \label{gns34} \mar{gns34}
There is one-to-one correspondence between the pure states of a
unital $C^*$-algebra $A$ associated to the same irreducible
representation $\pi$ of $A$ in a Hilbert space $E$ and the
one-dimensional complex subspaces of $E$, i.e, these pure states
constitute a projective Hilbert space $PE$.
\end{corollary}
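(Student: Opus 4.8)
The plan is to realize every pure state of $A$ associated to $\pi$ as a vector form $\om_\thh$ (\ref{spr470}) defined by a unit vector $\thh\in E$, and then to compute the fibres of the map $\thh\mapsto\om_\thh$ using Corollary \ref{gns33}. Here a pure state \emph{associated to} the irreducible representation $\pi$ in $E$ means a pure state $f$ of $A$ whose cyclic representation $\pi_f$ is equivalent to $\pi$. So the first thing I would set up is the map: to a unit vector $\thh\in E$ assign the vector form $\om_\thh$. By the remark after Theorem \ref{spr480}, $\om_\thh$ is a pure form; since $\om_\thh(\bb)=\|\thh\|^2=1$ it is a pure state; and because $\pi$ is irreducible, $\thh\neq 0$ is a cyclic vector, so the summand of $\pi$ equivalent to $\pi_{\om_\thh}$ is all of $\pi$ and $\om_\thh$ is indeed associated to $\pi$.

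Next I would prove this map is onto the set of pure states associated to $\pi$. Given such an $f$, Theorem \ref{spr480} makes $\pi_f$ irreducible, and by hypothesis there is a Hilbert-space isomorphism $\g:E_f\to E$ with $\pi(a)=\g\circ\pi_f(a)\circ\g^{-1}$. From $f(\bb)=\lng\pi_f(\bb)\thh_f|\thh_f\rng=\|\thh_f\|^2=1$ and the fact that $\g$ is isometric, $\thh:=\g(\thh_f)\in E$ is a unit vector; and for all $a\in A$ one has $f(a)=\lng\pi_f(a)\thh_f|\thh_f\rng=\lng\pi(a)\thh|\thh\rng=\om_\thh(a)$, so $f=\om_\thh$. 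Hence $\thh\mapsto\om_\thh$ is a surjection of the unit sphere of $E$ onto the set of pure states of $A$ associated to $\pi$.

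Finally I would identify the fibres. By Corollary \ref{gns33}, for unit vectors $\thh_1,\thh_2\in E$ one has $\om_{\thh_1}=\om_{\thh_2}$ iff $\thh_1=\la\thh_2$ with $\la\in\mathbb C$, $|\la|=1$, i.e.\ iff $\thh_1$ and $\thh_2$ span the same one-dimensional complex subspace of $E$. Since every one-dimensional complex subspace of $E$ contains a unit vector, the surjection above factors through a bijection of the set of one-dimensional complex subspaces of $E$, that is the projective Hilbert space $PE$, onto the set of pure states of $A$ associated to $\pi$. The only step with genuine content is the surjectivity argument — converting the abstract equivalence $\pi_f\cong\pi$ into a concrete realization of $f$ on the fixed space $E$ and checking that $\thh_f$ is carried to an honest unit vector; once Corollary \ref{gns33} and the purity of vector forms of irreducible representations are in hand, the rest is bookkeeping.
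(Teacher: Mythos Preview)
Your proof is correct and follows exactly the line the paper intends: the paper states Corollary \ref{gns34} without an explicit proof, as an immediate consequence of Corollary \ref{gns33} together with the remark after Theorem \ref{spr480} that every vector form in an irreducible representation is pure, which is precisely the content you have written out.
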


There is an additional important criterion of equivalence of pure
states of a unital $C^*$-algebra \cite{glimm}.

\begin{theorem} \label{gns18} \mar{gns18}
Pure states $f$ and $f'$ of a unital $C^*$-algebra are equivalent
if $\|f-f'\|<2$.
\end{theorem}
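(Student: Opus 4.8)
The plan is to establish the contrapositive in sharpened form: if $f$ and $f'$ are pure states of a unital $C^*$-algebra $A$ that are \emph{not} equivalent, then $\|f-f'\|=2$. Since $f(\bb)=f'(\bb)=1$ and a positive form on a unital $C^*$-algebra satisfies $\|f\|=f(\bb)$, one always has $\|f-f'\|\leq\|f\|+\|f'\|=2$; so the point is the reverse inequality, and it suffices to produce a single $a\in A$ with $\|a\|\leq1$, $f(a)=1$ and $f'(a)=-1$ (or values arbitrarily close to these).

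First I would set up the GNS data. By Theorems \ref{spr471} and \ref{spr480}, the pure states $f$ and $f'$ define cyclic \emph{irreducible} representations $(\pi_f,\thh_f)$ in $E_f$ and $(\pi_{f'},\thh_{f'})$ in $E_{f'}$, and $\|\thh_f\|^2=f(\bb)=1$, $\|\thh_{f'}\|^2=1$. By the definition of equivalence of positive forms, the hypothesis that $f$ and $f'$ are inequivalent says precisely that $\pi_f$ and $\pi_{f'}$ are inequivalent irreducible representations; being irreducible and inequivalent, they are disjoint, so the von Neumann algebra generated by $(\pi_f\oplus\pi_{f'})(A)$ acting on $E_f\oplus E_{f'}$ is the full block algebra $B(E_f)\oplus B(E_{f'})$.

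The key step is to apply Kadison's transitivity theorem to the (reducible) representation $\pi_f\oplus\pi_{f'}$. Let $T$ be the operator on $E_f\oplus E_{f'}$ acting as the orthogonal projection onto $\mathbb C\thh_f$ on the first summand and as minus the orthogonal projection onto $\mathbb C\thh_{f'}$ on the second. Then $T=T^*\in B(E_f)\oplus B(E_{f'})$ with $\|T\|=1$, $T\thh_f=\thh_f$, $T\thh_{f'}=-\thh_{f'}$, and $\thh_f,\thh_{f'}$ are linearly independent since they lie in complementary summands. Transitivity then supplies a self-adjoint $a\in A$ with $\|a\|\leq1$ such that $\pi_f(a)\thh_f=\thh_f$ and $\pi_{f'}(a)\thh_{f'}=-\thh_{f'}$. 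By (\ref{gns2}) this gives $f(a)=\lng\pi_f(a)\thh_f|\thh_f\rng=1$ and $f'(a)=\lng\pi_{f'}(a)\thh_{f'}|\thh_{f'}\rng=-1$, whence $\|f-f'\|\geq|f(a)-f'(a)|=2$, so $\|f-f'\|=2$. Consequently $\|f-f'\|<2$ forces $f$ and $f'$ to be equivalent, which is the assertion. (Note this argument gives more: $\|f-f'\|=2$ exactly for inequivalent pure states, so the hypothesis $\|f-f'\|<2$ is sufficient but not necessary for equivalence.)

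The main obstacle is the transitivity input, together with the structural fact one notch before it, that inequivalent irreducible representations of a $C^*$-algebra are disjoint and hence generate $B(E_f)\oplus B(E_{f'})$; once these are available, the computation with vector forms is immediate. If one prefers to avoid the exact (self-adjoint, norm-one) form of transitivity, the same conclusion follows by approximation: for each $\ve>0$ one finds $a\in A$ with $\|a\|\leq1$, $\|\pi_f(a)\thh_f-\thh_f\|<\ve$ and $\|\pi_{f'}(a)\thh_{f'}+\thh_{f'}\|<\ve$, so that $|f(a)-f'(a)|>2-2\ve$, and letting $\ve\to0$ finishes the argument. This is essentially the line of reasoning of Glimm and Kadison \cite{glimm}.
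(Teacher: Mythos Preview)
The paper gives no proof of this theorem, only a citation to Glimm--Kadison \cite{glimm}; your argument is correct and is precisely theirs. Two minor remarks: what you call Kadison's transitivity applied to the reducible $\pi_f\oplus\pi_{f'}$ is really Glimm's extension of transitivity to finitely many pairwise inequivalent irreducibles (which follows from the disjointness fact you correctly established, or is bypassed by your Kaplansky-density alternative); and your closing parenthetical should not be read as a biconditional, since equivalent pure states can also satisfy $\|f-f'\|=2$ (e.g.\ orthogonal unit-vector states in a single irreducible representation), so ``$\|f-f'\|=2$ exactly for inequivalent pure states'' is true only in the one-directional sense your proof actually delivers.
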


Let $P(A)$ denote a set of pure states of a unital $C^*$-algebra
$A$. Theorem \ref{spr481} implies a surjection $P(A)\to \wh A$.
One can show that, if $P(A)\subset A'$ is provided with a relative
weak$^*$ topology, this surjection is continuous and open, i.e.,
it is a topological fibre
 bundle whose fibres are projective
Hilbert spaces \cite{dixm}.

Turning to a set $F(A)$ of states of a unital algebra
$C^*$-algebra $A$, we have the following.

\begin{theorem} \label{gns17} \mar{gns17}
A set $F(A)$ is a weakly$^*$-closed  convex hull of a set $P(A)$
of pure states of $A$. It is weakly* compact \cite{dixm}.
\end{theorem}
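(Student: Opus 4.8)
The plan is to realize $F(A)$ as a nonempty weakly* compact convex subset of the dual $A'$ and then invoke the Krein--Milman theorem, the only remaining point being the identification of its extreme points with the pure states $P(A)$.

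First I would verify weak* compactness. Since $A$ is unital, every state $f$ satisfies $\|f\|=f(\bb)=1$ (recall that a continuous form on a unital $C^*$-algebra is positive iff it attains its norm at $\bb$), so $F(A)$ sits inside the closed unit ball of $A'$, which is weakly* compact by Banach--Alaoglu (Section 14.1). It remains to check that $F(A)$ is weakly* closed. Writing
$$F(A)=\{f\in A': f(\bb)=1\}\cap\bigcap_{a\in A}\{f\in A': f(a^*a)\geq 0\},$$
each set on the right is weakly* closed because $f\mapsto f(b)$ is weakly* continuous for fixed $b\in A$ and the defining conditions ($f(\bb)=1$, and $f(a^*a)$ real and nonnegative) cut out closed subsets of $\mathbb C$; hence $F(A)$ is weakly* closed and therefore weakly* compact. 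Convexity of $F(A)$ is immediate (it was already noted after (\ref{gns12}) that a convex combination of states is a state), and $F(A)\neq\varnothing$ since states exist, e.g.\ by the GNS construction. Krein--Milman then gives $F(A)=\ol{\mathrm{co}}\big(\mathrm{ext}\,F(A)\big)$, the weakly* closed convex hull of the extreme points. Because pure states are in particular normalized, $P(A)\subset F(A)$, so $\ol{\mathrm{co}}(P(A))\subset F(A)$; the theorem will follow once I show $\mathrm{ext}\,F(A)=P(A)$, which upgrades this to equality.

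For that identification I would argue both inclusions from Remark \ref{gns60}. If $f$ is pure and $f=\tfrac12(f_1+f_2)$ with $f_1,f_2\in F(A)$, then $\tfrac12 f_1$ is dominated by $f$ (as $f-\tfrac12 f_1=\tfrac12 f_2$ is positive), so $\tfrac12 f_1=\la f$ for some $0\leq\la\leq 1$; evaluating at $\bb$ forces $\la=\tfrac12$, hence $f_1=f$, and likewise $f_2=f$, so $f$ is extreme. Conversely, let $f\in\mathrm{ext}\,F(A)$ and let $g$ be a positive form dominated by $f$. If $g=0$ or $g=f$ we are done; otherwise set $t=g(\bb)=\|g\|$. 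Then $t>0$ since $g\neq 0$, and $t<1$ since $f-g$ is a nonzero positive form with $(f-g)(\bb)=1-t$. Thus $f=t(g/t)+(1-t)\big((f-g)/(1-t)\big)$ exhibits $f$ as a nontrivial convex combination of the two states $g/t$ and $(f-g)/(1-t)$, so extremality yields $g/t=f$, i.e.\ $g=\la f$ with $0<\la<1$. Hence $f$ is pure and $\mathrm{ext}\,F(A)=P(A)$, giving $F(A)=\ol{\mathrm{co}}(P(A))$.

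I expect the weak* compactness step (Banach--Alaoglu together with the closedness verification) and the appeal to Krein--Milman to be entirely routine; the part needing genuine care is the equivalence between extreme states and pure forms, specifically the bookkeeping with the normalization $f(\bb)=1$ that both pins down the scalar $\la$ in the first direction and rules out the degenerate values $t=0,1$ in the second. This is the only place where unitality is really used, and it is what makes the statement clean.
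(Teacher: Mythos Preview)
Your argument is correct and is precisely the standard proof: Banach--Alaoglu plus the weak$^*$ closedness of the state space give compactness, Krein--Milman reduces the convex-hull claim to identifying extreme points, and the two-way bookkeeping with normalization $f(\bb)=1$ shows $\mathrm{ext}\,F(A)=P(A)$. The paper itself does not prove this theorem but simply cites Dixmier, so there is nothing to compare against beyond noting that what you wrote is exactly the argument one finds in that reference.
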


Herewith, by virtue of Theorem \ref{gns17}, any set of mutually
inequivalent pure states of a unital $C^*$-algebra is totally
disconnected in a normed topology, i.e., its connected components
are points only.

By virtue of Theorem \ref{spr474}, elements of a quantum algebra
$A$ can not perform invertible transitions between its
inequivalent states. At the same time, one can show the following.

\begin{theorem} \label{gns23} \mar{gns23}
There exists a wider unital $C^*$-algebra such that inequivalent
states of $A$ become its equivalent ones.
\end{theorem}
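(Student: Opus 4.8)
The plan is to embed $A$ faithfully into the bounded operators of its universal Hilbert space and then to exploit that this ambient $C^*$-algebra acts irreducibly, so that all of the vector states induced by states of $A$ become mutually equivalent.

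First I would take $E_F$ to be the carrier space of the universal representation $\pi_F=\op\oplus_{F(A)}\pi_f$ of (\ref{gns13}). Since $\pi_F$ is injective (as noted just after (\ref{gns13})) it is isometric, and being non-degenerate with $A$ unital it sends $\bb$ to $\id_{E_F}$; hence $\pi_F$ identifies $A$ with a unital $C^*$-subalgebra of $B(E_F)$, which I propose as the desired wider algebra. Next I would record that for each state $f\in F(A)$ the GNS cyclic vector $\thh_f$ of Theorem \ref{spr471} is a unit vector, since $\|\thh_f\|^2=f(\bb^*\bb)=f(\bb)=1$, and that it lies in the summand $E_f\subset E_F$ on which $\pi_F$ restricts to $\pi_f$. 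Consequently the vector form $\om_{\thh_f}$ (\ref{spr470}) on $B(E_F)$ is a state of $B(E_F)$ that restricts to $f$ on $A$, because
\be
\om_{\thh_f}(\pi_F(a))=\lng\pi_F(a)\thh_f|\thh_f\rng=\lng\pi_f(a)\thh_f|\thh_f\rng=f(a),\qquad a\in A.
\ee

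The main point is then to observe that the defining representation of $B(E_F)$ on $E_F$ is irreducible: every non-zero $\eta\in E_F$ is cyclic because $B(E_F)\eta=E_F$ (a rank-one operator carries $\eta$ to any prescribed vector), equivalently the commutant of $B(E_F)$ in $B(E_F)$ is $\mathbb C\,\id_{E_F}$. Hence each $\om_{\thh_f}$ is a vector form attached to a vector of the carrier space of an irreducible representation of $B(E_F)$, so it is even a pure state of $B(E_F)$; moreover the cyclic representation of $B(E_F)$ it determines is the subrepresentation of the defining one generated by $\thh_f$ \cite{dixm}, which is the whole defining representation since $\thh_f$ is cyclic. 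Thus all the states $\om_{\thh_f}$, $f\in F(A)$, yield one and the same irreducible representation of $B(E_F)$ and so are mutually equivalent --- alternatively, choosing a unitary $U\in B(E_F)$ with $U\thh_f=\thh_{f'}$ gives $\om_{\thh_{f'}}(b)=\om_{\thh_f}(U^*bU)$ for all $b\in B(E_F)$, which is equivalence by Theorem \ref{spr474}. It follows that for any two states $f,f'$ of $A$, equivalent or not, the extensions $\om_{\thh_f}$ and $\om_{\thh_{f'}}$ are equivalent states of $B(E_F)\supset A$, which is the assertion. I expect the only delicate step to be the routine bookkeeping identifying $E_f$ as a reducing subspace of $E_F$ carrying $\pi_f$, with $\thh_f$ a unit cyclic vector for $B(E_F)$; once that is settled, the irreducibility of $B(E_F)$ on $E_F$ does the rest.
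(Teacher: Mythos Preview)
Your proof is correct and follows essentially the same route as the paper: embed $A$ in $B(E_F)$ via the universal representation $\pi_F$ (\ref{gns13}), identify each state $f$ of $A$ with the vector state $\om_{\thh_f}$ of $B(E_F)$, and conclude by noting that all vector states of $B(E_F)$ are equivalent. Your version is more explicit than the paper's in justifying that last step --- you spell out the irreducibility of the defining representation of $B(E_F)$ and give the alternative via an explicit unitary and Theorem \ref{spr474} --- whereas the paper simply asserts that ``all vector states of $B(E_F)$ are equivalent''.
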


\begin{proof}
Let us consider the Hilbert sum $\pi_F$ (\ref{gns13}) of cyclic
representations of $A$ whose carrier space is a Hilbert sum
\mar{gns21}\beq
E_F=\op\oplus_{F(A)}E_f. \label{gns21}
\eeq
Let $B(E_F)$ be a unital $C^*$ algebra of bounded operators in
$E_F$ (\ref{gns21}). Since the representation $\pi_F$ of $A$ is
exact, an algebra $A$ is isomorphic to a subalgebra of $B(E_F)$.
Any state $f$ of $A$ is equivalent to a vector state
$\om_{\thh_f}$ of $\pi_f(A)$ which also is that of $B(E_F)$. Since
all vector states of $B(E_F)$ are equivalent, all states of $A$
are equivalent as those of $B(E_F)$.
\end{proof}

An algebra $B(E_F)$ contains the projectors $P_f$ onto summands
$E_f$ of $E_F$ (\ref{gns21}). Let $r(f)$ be some real function on
$F(A)$. Then there exists a bounded operator in $E_F$
(\ref{gns21}), which we denote
\mar{gns24}\beq
T=\op\sum_{F(A)}r(f)P_f, \label{gns24}
\eeq
such that its restriction to each summand $E_f$ of $E_F$ is
$r(f)P_f$. Certainly, this operator belongs to the commutant
$\pi_F(A)'$ of $\pi_F(A)$ in $B(E_F)$.

One can think of $T$ (\ref{gns24}) as being a superselection
operator of a quantum system which distinguish its states
\cite{hor}.

\section{Example. Infinite qubit systems}

Let $Q$ be a two-dimensional complex space $\mathbb C^2$ equipped
with the standard positive non-degenerate Hermitian form
$\lng.|.\rng_2$. Let $M_2$ be an algebra of complex $2\times
2$-matrices seen as a $C^*$-algebra. A system of $m$ qubits
 is usually described by a Hilbert space
$E_m=\op\ot^m Q$ and a $C^*$-algebra $A_m=\op\ot^m M_2$, which
coincides with the algebra $B(E_m)$ of bounded operators in $E_m$
\cite{keyl}. One can straightforwardly generalize this description
to an infinite set $S$ of qubits by analogy with a spin lattice
\cite{emch,book05,qubit}. Its algebra $A_S$ admits inequivalent
irreducible representations.

We follow the construction of infinite tensor products of Hilbert
spaces and $C^*$-algebras in \cite{emch}. Let $\{Q_s,s\in S\}$ be
a set of two-dimensional Hilbert spaces $Q_s=\mathbb C^2$. Let
$\op\times_SQ_s$ be a complex vector space whose elements are
finite linear combinations of elements $\{q_s\}$ of the Cartesian
product $\op\prod_S Q_s$ of the sets $Q_s$. A tensor product
$\op\ot_S Q_s$ of complex vector spaces $Q_s$ is the quotient of
$\op\times_S Q_s$ with respect to a vector subspace generated by
elements of the form:

$\bullet$ $\{q_s\} + \{q'_s\} -\{q''_s\}$, where $q_r  + q'_r
=q''_r$ for some element $r\in S$ and $q_s = q'_s =q''_s$ for all
the others,

$\bullet$ $\{q_s\} - \la\{q'_s\}$, $\la\in\mathbb C$, where $q_r=
\la q'_r$ for some element $r\in S$ and $q_s = q'_s$ for all the
others.

Given a map
\mar{gns91}\beq
\si:S\to Q, \qquad , \lng\si(s)|\si(s)\rng_2=1, \label{gns91}
\eeq
let us consider an element
\mar{gns92}\beq
\thh_\si=\{\thh_s=\si(s)\}\in \op\prod_S Q_s. \label{gns92}
\eeq
Let us denote $\ot^\si Q_s$ the subspace of $\op\ot_S Q_s$ spanned
by vectors $\ot q_s$ where $q_s\neq \thh_s$ only for a finite
number of elements $s\in S$. It is called the $\thh_\si$-tensor
product of vector spaces $Q_s$, $s\in S$.  Then $\ot^\si Q_s$ is a
pre-Hilbert space with respect to a positive non-degenerate
Hermitian form
\be
\lng \ot^\si q_s|\ot^\si q'_s\rng=\op\prod_{s\in S} \lng
q_s|q'_s\rng_2.
\ee
Its completion $Q_S^\si$ is a Hilbert space whose orthonormal
basis consists of the elements $e_{ir}=\ot q_s$, $r\in S$,
$i=1,2$, such that $q_{s\neq r}=\thh_s$ and $q_r=e_i$, where
$\{e_i\}$ is an orthonormal basis for $Q$.

Let now $\{A_s,s\in S\}$ be a set of unital $C^*$-algebras
$A_s=M_2$. These algebras are provided with the operator norm
\be
\|a\|=(\la_0\ol\la_0 +\la_1\ol\la_1 +\la_2\ol\la_2
+\la_3\ol\la_3)^{1/2}, \qquad a=i\la_0\bb
+\op\sum_{i=1,2,3}\la_i\tau^i,
\ee
where $\tau^i$ are the Pauli matrices. Given the family
$\{\bb_s\}$, let us construct the $\{\bb_s\}$-tensor product $\ot
A_s$ of vector spaces $A_s$. One can regard its elements as tensor
products of elements of $a_s\in A_s$, $s\in K$, for finite subsets
$K$ of $S$ and of the identities $\bb_s$, $s\in S\setminus K$. It
is easily justified that $\ot A_s$ is a normed $^*$-algebra with
respect to the operations
\be
(\ot a_s)(\ot a'_s)=\ot (a_sa'_s), \qquad (\ot a_s)^*=\ot a^*_s
\ee
and a norm
\be
\|\ot a_s\|=\op\prod_s\|a_s\|.
\ee
Its completion $A_S$ is a $C^*$-algebra treated as a quantum
algebra of a qubit system modelled over a set $S$. Then the
following holds \cite{emch}.

\begin{theorem} \label{gns63} \mar{gns63}
Given the element $\thh_\si=\{\thh_s\}$ (\ref{gns92}), the natural
representation $\pi^\si$ of a $^*$-algebra $\ot A_s$ in the
pre-Hilbert space $\ot^\si Q_s$ is extended to an irreducible
representation of a $C^*$-algebra $A_S$ in the Hilbert space
$Q_S^\si$ such that $\pi^\si(A_S)=B(Q_S^\si)$ is an algebra of all
bounded operators in $Q_S^\si$. Conversely, all irreducible
representations of $A_S$ are of this type.
\end{theorem}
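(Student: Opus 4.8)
The plan is to prove both directions of Theorem~\ref{gns63} by combining the GNS machinery of Section~2 with an explicit description of the vector states attached to the product vectors $\thh_\si$.

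First I would establish the forward direction. Fix a map $\si$ as in (\ref{gns91}) and the associated product vector $\thh_\si$ (\ref{gns92}). On the $\thh_\si$-tensor product $\ot^\si Q_s$ one has the obvious componentwise action of $\ot A_s$, namely $(\ot a_s)(\ot q_s)=\ot(a_s q_s)$, which is well defined because only finitely many $a_s$ differ from $\bb_s$ and only finitely many $q_s$ differ from $\thh_s$; call this $\pi^\si$. Each $\ot a_s$ with finite support acts as a finite tensor product of bounded operators on $Q_S^\si$, hence extends to a bounded operator, and since $\ot A_s$ is dense in $A_S$ with $\|\ot a_s\|=\prod_s\|a_s\|$ equal to the corresponding operator norm, $\pi^\si$ extends by continuity to a representation of the $C^*$-algebra $A_S$ on $Q_S^\si$. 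To see irreducibility I would compute the commutant: any bounded operator commuting with all $\pi^\si(A_S)$ commutes in particular with every local factor $M_2\cong B(Q_r)$ acting on the $r$-th slot, and since the $M_2$'s act irreducibly on each $Q_r$ and jointly generate (a dense subalgebra of) the operators supported on any finite set of slots, a standard density/approximation argument forces the commutant to consist of scalars; hence $\pi^\si$ is topologically irreducible, and by the equivalence of topological and algebraic irreducibility for $C^*$-algebras (Section~3) it is irreducible. That $\pi^\si(A_S)=B(Q_S^\si)$ then follows because an irreducible $C^*$-algebra of operators containing all compact operators, or more directly one that is strongly dense and weakly closed, must be all of $B(Q_S^\si)$; here the finitely-supported tensors already give all finite-rank operators relative to the basis $\{e_{ir}\}$, so the strong closure of $\pi^\si(A_S)$ is $B(Q_S^\si)$, and since $\pi^\si(A_S)$ is a $C^*$-algebra it is norm-closed, but one needs $A_S$ to be a von Neumann algebra on $Q_S^\si$ — this is supplied by irreducibility together with $A_S$ containing the compacts, giving $\pi^\si(A_S)''=B(Q_S^\si)$ and $\pi^\si(A_S)=B(Q_S^\si)$.

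Next I would prove the converse: every irreducible representation $\pi$ of $A_S$ is equivalent to some $\pi^\si$. By Theorem~\ref{spr480}, $\pi$ is the cyclic representation $\pi_f$ defined by a pure state $f$ of $A_S$. Restricting $f$ to each local factor $A_r=M_2\hookrightarrow A_S$ gives a state $f_r$ on $M_2$; I would argue that purity of $f$ forces each $f_r$ to be a \emph{pure} state of $M_2$ — if some $f_r$ were mixed, one could split $f$ along that slot and contradict purity — and a pure state of $M_2=B(\mathbb C^2)$ is a vector state $a\mapsto\lng a\,\si(r)|\si(r)\rng_2$ for a unit vector $\si(r)\in Q$, unique up to phase. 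This defines a map $\si:S\to Q$. The essential point is then that $f$ is the product state $\ot_s f_s$, i.e.\ $f(\ot a_s)=\prod_s\lng a_s\si(s)|\si(s)\rng_2$: since the local algebras generate a dense subalgebra of $A_S$ and $f$ is determined by its values on finitely-supported tensors, and since a pure state on a tensor product of $C^*$-algebras that is pure on each factor is the product of those factor states (again by a factorisation/purity argument), $f$ coincides with the vector form defined by $\thh_\si$ in the representation $\pi^\si$. By uniqueness of the GNS representation (Theorems~\ref{gns} and~\ref{spr471}), $\pi_f\cong\pi^\si$.

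The main obstacle I anticipate is the factorisation step in the converse — showing that a pure state of $A_S$ restricts to pure states on all local factors \emph{and} is the product of them. One direction of care is needed because $A_S$ is an infinite tensor product, so one cannot literally write $A_S=M_2\ot(\text{rest})$ and must instead work with the directed family of finite partial tensor products $\ot_{s\in K}M_2\ot\bb_{S\setminus K}$, check that $f$ restricted to each is a pure state of a full matrix algebra $M_{2^{|K|}}\cong B(\mathbb C^{2^{|K|}})$ hence a vector state, and then verify these vector states are compatible and factorise as $K$ grows; the compatibility plus the normalisation $\lng\si(s)|\si(s)\rng_2=1$ is exactly what makes the limit land in the $\thh_\si$-tensor product rather than a different incomplete tensor product. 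Once this is in place, everything else is a routine assembly of the stated results.
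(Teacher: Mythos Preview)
The paper does not prove this theorem; it is quoted from \cite{emch}, so there is no in-paper argument to compare against. Your proposal, however, has two genuine errors. In the forward direction, the claim that ``the finitely-supported tensors already give all finite-rank operators'' is false: an elementary tensor with $a_s=\bb_s$ outside a finite $K$ acts as the identity on all slots outside $K$, hence is never of finite rank when $S$ is infinite. In fact the equality $\pi^\si(A_S)=B(Q_S^\si)$ cannot hold for infinite $S$, since $A_S$ is a norm-separable UHF algebra while $B(Q_S^\si)$ is not norm-separable once $Q_S^\si$ is infinite-dimensional; irreducibility only yields $\pi^\si(A_S)''=B(Q_S^\si)$, and the paper's formulation is imprecise on this point --- you should flag it rather than try to argue a false equality.

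The more serious gap is in the converse. Your claim that purity of $f$ forces each restriction $f_r=f|_{A_r}$ to be pure is wrong: already for $S=\{1,2\}$, the vector state on $M_2\ot M_2\cong M_4$ determined by the Bell vector $2^{-1/2}(e_1\ot e_1+e_2\ot e_2)$ is pure, yet restricts to the tracial (maximally mixed) state on each $M_2$ factor. Pure states of $A_S$ need not be product states, so your factorisation strategy collapses at the first step. A correct argument must work at the level of unitary equivalence of GNS representations --- showing that the GNS module of an arbitrary pure state is isomorphic as an $A_S$-module to some $Q_S^\si$ --- rather than trying to identify the state itself with a product state; this requires analysing the restriction of $\pi_f$ to the increasing net of finite matrix subalgebras $M_{2^{|K|}}$ and extracting a compatible family of unit vectors from the resulting tensor decompositions of the carrier space, which is considerably more delicate than what you outline.
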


An element $\thh_\si\in Q_S^\si$ in Theorem \ref{gns63} defines a
pure state $f_\si$ of an algebra $A_S$. Consequently, a set of
pure states of this algebra is a set of maps $\si$ (\ref{gns91}).

\begin{theorem}
Pure states $f_\si$ and $f_{\si'}$ of an algebra $A_S$ are
equivalent iff
\mar{gns67}\beq
\op\sum_{s\in S} ||\lng\si(s)|\si'(s)\rng_2|-1|<\infty.
\label{gns67}
\eeq
\end{theorem}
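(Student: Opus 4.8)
The plan is to translate the equivalence of the pure states $f_\si$ and $f_{\si'}$ into a statement about the corresponding irreducible representations $\pi^\si$ and $\pi^{\si'}$ via Theorem \ref{spr480} and Theorem \ref{spr481}, and then to reduce the comparison of these two representations to the comparison of the two reference vectors $\thh_\si$ and $\thh_{\si'}$ inside a common infinite tensor product. The natural arena is von Neumann's original analysis of incomplete infinite tensor products: two such reference sequences $\{\thh_s\}$ and $\{\thh'_s\}$ generate the same (or unitarily equivalent) incomplete tensor product precisely when the infinite product $\op\prod_{s\in S}\lng\thh_s|\thh'_s\rng_2$ converges in the strong sense, and a standard fact is that this strong convergence of a product of unit-modulus-bounded complex numbers is equivalent to $\op\sum_{s\in S}|\,|\lng\thh_s|\thh'_s\rng_2|-1|<\infty$ together with convergence of the phases. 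So the skeleton is: (i) $f_\si\sim f_{\si'}$ $\iff$ $\pi^\si\cong\pi^{\si'}$; (ii) $\pi^\si\cong\pi^{\si'}$ $\iff$ the weak-operator closures $\pi^\si(A_S)'' = B(Q_S^\si)$ and $\pi^{\si'}(A_S)'' = B(Q_S^{\si'})$ are spatially isomorphic in a way intertwining the $A_S$-actions, which happens $\iff$ $\thh_\si$ and $\thh_{\si'}$ lie in equivalent incomplete tensor products; (iii) the last condition is exactly \eqref{gns67} modulo a phase adjustment that can always be absorbed.

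\smallskip

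First I would set up the reduction. By Theorem \ref{spr480}, $\pi^\si = \pi_{f_\si}$ and $\pi^{\si'} = \pi_{f_{\si'}}$ are irreducible, and by Theorem \ref{spr481} the pure states $f_\si$ and $f_{\si'}$ are equivalent iff there is a unitary $U\in A_S$ with $f_{\si'}(a) = f_\si(U^*aU)$ for all $a$; this is in turn equivalent to unitary equivalence of $\pi^\si$ and $\pi^{\si'}$, since both states sit inside the same irreducible representation when it is unitarily transported. So the task becomes: decide when $\pi^\si$ and $\pi^{\si'}$ are unitarily equivalent as representations of $A_S$. Here I would invoke the infinite-tensor-product machinery of Emch already cited in the text: the Hilbert space $Q_S^\si$ is the incomplete tensor product $C^*$-completion built from the reference vector $\thh_\si$, and $\pi^\si(A_S)$ acts by $\ot a_s$ on $\ot q_s$. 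A unitary intertwiner $W\colon Q_S^{\si'}\to Q_S^\si$ with $W\pi^{\si'}(a) = \pi^\si(a)W$ exists iff $\thh_{\si'}$ (as a vector of $\op\prod_S Q_s$) lies in the incomplete tensor product determined by $\thh_\si$ — equivalently, iff $\thh_\si$ and $\thh_{\si'}$ are "weakly equivalent" reference sequences in von Neumann's sense.

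\smallskip

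Next I would make the convergence criterion explicit. Put $c_s = \lng\si(s)|\si'(s)\rng_2\in\mathbb C$, so $|c_s|\le 1$ by Cauchy--Schwarz, with $|c_s| = 1$ iff $\si(s)$ and $\si'(s)$ are proportional. The two reference sequences are weakly equivalent iff the infinite product $\op\prod_{s\in S} c_s$ converges to a nonzero limit in the appropriate (unordered) sense. Writing $c_s = |c_s| e^{i\vf_s}$, this splits into: $\op\prod_s |c_s|$ converges to a nonzero limit, which is the classical statement that $\op\sum_s(1-|c_s|) = \op\sum_s\big(1 - |\lng\si(s)|\si'(s)\rng_2|\big) < \infty$, i.e. precisely \eqref{gns67} since $1-|c_s|\ge 0$; and $\op\sum_s\vf_s$ converges, which is a condition on phases only. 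The phase condition, however, does not obstruct equivalence: one may replace $\si'(s)$ by $e^{-i\vf_s}\si'(s)$ (a map $S\to Q$ defining the same pure state $f_{\si'}$, since $f_{\si'}$ depends on $\si'$ only through the one-dimensional subspaces $\mathbb C\si'(s)$ by Corollary \ref{gns33}), which makes all new inner products nonnegative reals without changing either side of \eqref{gns67}. Hence the phase part can always be arranged, and the only genuine constraint is \eqref{gns67}. The forward direction (equivalence $\Rightarrow$ \eqref{gns67}) follows by running the same argument backwards: if \eqref{gns67} fails, then $\op\prod_s|c_s| = 0$ even after any phase adjustment, the reference sequences are inequivalent, $Q_S^\si$ and $Q_S^{\si'}$ carry no $A_S$-intertwining unitary, and so $f_\si\not\sim f_{\si'}$.

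\smallskip

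The main obstacle is step (iii): cleanly proving that $A_S$-equivariant unitary equivalence of the two incomplete tensor product representations is \emph{exactly} the weak equivalence of reference sequences, and that the latter reduces to the scalar product condition. The "if" is a direct construction — build $W$ on the dense span of simple tensors by $\ot q_s\mapsto\ot q_s$ (reindexing finitely many factors against the new reference vector) and check it is isometric using convergence of $\op\prod c_s$, then extend by continuity and verify equivariance on the generators $\ot a_s$. The "only if" is the subtler half: one must show that if the product diverges to zero then the identity map on simple tensors fails to be bounded and, moreover, that \emph{no} intertwiner exists — here I would use irreducibility (so any intertwiner is unique up to scalar and must send $\thh_{\si'}$ to a vector representing $f_\si$) to pin down what $W$ would have to be, and then derive a contradiction with $\op\prod_s|c_s| = 0$. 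This is the classical von Neumann incomplete-tensor-product argument; I expect the bookkeeping with the unordered infinite product and the separate treatment of moduli versus phases to be the part requiring the most care, but no new ideas beyond what is in Emch's construction already cited.
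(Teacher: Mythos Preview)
The paper does not actually prove this theorem: it is stated without proof in Section~4, with the surrounding infinite-tensor-product construction attributed to Emch~\cite{emch}. So there is no ``paper's proof'' to compare against; the result is quoted. Your proposal supplies exactly the standard argument one finds behind this citation --- von~Neumann's classification of incomplete infinite direct products, specialized to factors $(\mathbb C^2,M_2)$ --- and the skeleton (i)--(iii) is correct.

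One terminological point worth tightening. In von~Neumann's language, reference sequences $\{\thh_s\}$ and $\{\thh'_s\}$ are \emph{equivalent} when $\prod_s\lng\thh_s|\thh'_s\rng$ converges (phases included), and \emph{weakly equivalent} when $\sum_s(1-|\lng\thh_s|\thh'_s\rng|)<\infty$, i.e.\ equivalent after a phase adjustment. Your sentence ``$\thh_{\si'}$ lies in the incomplete tensor product determined by $\thh_\si$ --- equivalently, weakly equivalent'' conflates the two: the first clause characterizes equivalence, while unitary equivalence of $\pi^\si$ and $\pi^{\si'}$ corresponds to \emph{weak} equivalence. Your argument is nonetheless sound, because the phase-absorption step (replacing $\si'(s)$ by $e^{-i\vf_s}\si'(s)$) is precisely what bridges the two notions. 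That replacement does leave $f_{\si'}$ unchanged --- not really via Corollary~\ref{gns33}, which concerns a single Hilbert space, but simply because each local factor $\lng a_s\si'(s)|\si'(s)\rng_2$ is manifestly invariant under $\si'(s)\mapsto e^{-i\vf_s}\si'(s)$, and $f_{\si'}$ is the product of these on a dense subalgebra. With that straightened out, your outline is the canonical proof.
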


In particular, the relation (\ref{gns67}) holds if maps $\si$ and
$\si'$ differ from each other on a finite subset of $S$.

By analogy with a Higgs vacuum, one can treat the maps $\si$
(\ref{gns91}) as classical vacuum fields.

\section{Example. Locally compact groups}

Let $G$ be a locally compact group provided with a Haar measure
(Section 14.4). A space $L^1_\mathbb C(G)$ of equivalence classes
of complex integrable functions (or, simply, complex integrable
functions) on $G$ is an involutive Banach algebra (Section 14.3)
with an approximate identity. As was mentioned above, there is
one-to-one correspondence between the representations of this
algebra and the strongly continuous unitary representations of a
group $G$ (Theorem \ref{gns61}). Thus, one can employ the GNS
construction in order to describe these representations of $G$
\cite{dixm,book05}.

Let a left Haar measure $dg$ on $G$ hold fixed, and by an
integrability condition throughout is meant the
$dg$-integrability.

A  uniformly (resp. {strongly) continuous unitary representation}
of a locally compact group $G$ in a Hilbert space $E$ is a
continuous homeomorphism $\pi$ of $G$ to a subgroup $U(E)\subset
B(E)$ of unitary operators in $E$ provided with the normed (resp.
strong) operator topology. A uniformly continuous representation
is strongly continuous. However, the uniform continuity of a
representation is rather rigorous condition. For instance, a
uniformly continuous irreducible unitary representation of a
connected locally compact real Lie group is necessarily
finite-dimensional. Therefore, one usually studies strongly
continuous representations of locally compact groups.

In this case, any element $\xi$ of a carrier Hilbert space $E$
yields the continuous map $G\ni g\to \pi(g)\xi\in E$. Since strong
and weak operator topologies on a unitary group $U(E)$ coincide,
we have a bounded continuous complex function
\mar{spr530}\beq
\vf_{\xi,\eta}(g)=\lng \pi(g)\xi|\eta\rng \label{spr530}
\eeq
on $G$ for any fixed elements $\xi,\eta\in E$. It is called the
coefficient of a representation $\pi$. There is an obvious
equality
\be
\vf_{\xi,\eta}(g)=\ol{\vf_{\eta,\xi}(g^{-1})}.
\ee

The Banach space $L^1_\mathbb C(G)$ of integrable complex
functions on $G$ is provided with the structure of an involutive
Banach algebra with respect to the contraction $f_1*f_2$
(\ref{spr531}) and the involution
\be
f(g)\to f^*(g)=\Delta(g^{-1})\ol{f(g^{-1})},
\ee
where $\Delta$ is the modular function of $G$. It is called the
group algebra of $G$. A map $f\to f(g)dg$ defines an isometric
monomorphism of $L^1_\mathbb C(G)$ to a Banach algebra $M^1(G,
\mathbb C)$ of bounded complex measures on $G$ provided with the
involution $\m^*=\ol{\m^{-1}}$. Unless otherwise stated,
$L^1_\mathbb C(G)$ will be identified with its image in
$M^1(G,\mathbb C)$. In particular, a group algebra $L^1_\mathbb
C(G)$ admits an approximate identity which converges to the Dirac
measure $\ve_\bb\in M^1(G,\mathbb C)$.

\begin{remark} \label{spr541} \mar{spr541}
The group algebra $L^1_\mathbb C(G)$ is not a $C^*$-algebra. Its
enveloping $C^*$-algebra $C^*(G)$ is called the $C^*$-algebra of a
locally compact group $G$.
\end{remark}

Unitary representations of a locally compact group $G$ and
representations of a group algebra $L^1_\mathbb C(G)$ are related
as follows \cite{dixm}.

\begin{theorem} \label{gns61} \mar{gns61}
There is one-to-one correspondence between the (strongly
continuous) unitary representations $\pi$ of a locally compact
group $G$ and the representations $\pi^L$ (\ref{spr567}) of its
group algebra $L^1_\mathbb C(G)$.
\end{theorem}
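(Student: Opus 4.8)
The plan is to establish the correspondence in two directions, using the standard dictionary between unitary representations of $G$ and non-degenerate representations of $L^1_{\mathbb C}(G)$. First I would start from a strongly continuous unitary representation $\pi$ of $G$ in a Hilbert space $E$ and define the associated representation $\pi^L$ of the group algebra by integration against the Haar measure: for $f\in L^1_{\mathbb C}(G)$ and $\xi,\eta\in E$ set $\lng \pi^L(f)\xi|\eta\rng=\int_G f(g)\lng\pi(g)\xi|\eta\rng\,dg$. Here one uses that $g\mapsto\pi(g)\xi$ is continuous (by strong continuity, and since weak and strong operator topologies coincide on $U(E)$), so the integrand is a bounded continuous function times an $L^1$ function, hence integrable; the bound $\|\pi^L(f)\xi\|\le\|f\|\,\|\xi\|$ shows $\pi^L(f)\in B(E)$ with $\|\pi^L(f)\|\le\|f\|_1$. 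That $\pi^L$ is an algebra homomorphism follows from the defining property of the convolution product $f_1*f_2$ on $L^1_{\mathbb C}(G)$ together with invariance of the Haar measure (a Fubini argument), and $\pi^L(f^*)=\pi^L(f)^*$ follows from the definition of the involution $f^*(g)=\Delta(g^{-1})\ol{f(g^{-1})}$ together with the behaviour of $dg$ under inversion. Non-degeneracy is the point where the approximate identity of $L^1_{\mathbb C}(G)$ enters: since an approximate identity converges to the Dirac measure $\ve_{\bb}$, one gets $\pi^L(u_\iota)\xi\to\xi$ for every $\xi$, so $\pi^L(L^1_{\mathbb C}(G))E$ is dense in $E$.

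Conversely, given a (non-degenerate) representation $\rho$ of $L^1_{\mathbb C}(G)$ in $E$, I would recover a unitary representation $\pi$ of $G$ as follows. Because $\rho$ extends to the multiplier algebra — equivalently, to the measure algebra $M^1(G,\mathbb C)$ in which $L^1_{\mathbb C}(G)$ sits as a closed two-sided ideal — one can define $\pi(g):=\ol\rho(\ve_g)$, the image of the Dirac measure at $g$. Concretely $\pi(g)\rho(f)\xi:=\rho(\ve_g*f)\xi=\rho(L_g f)\xi$, where $L_g$ is left translation; this is well defined on the dense subspace $\rho(L^1_{\mathbb C}(G))E$ because translation is an isometric automorphism of $L^1_{\mathbb C}(G)$, and it extends by continuity to all of $E$. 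One checks $\pi(g)$ is unitary (its inverse is $\pi(g^{-1})$, coming from $\ve_g*\ve_{g^{-1}}=\ve_{\bb}$) and $\pi(gh)=\pi(g)\pi(h)$ from $\ve_g*\ve_h=\ve_{gh}$. Strong continuity of $g\mapsto\pi(g)$ reduces, on the dense subspace, to continuity of $g\mapsto L_g f$ in $L^1$-norm, which is a standard fact about the group algebra of a locally compact group.

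Finally I would verify that the two constructions are mutually inverse: starting from $\pi$, forming $\pi^L$, and then reconstructing via $\ve_g$ returns $\pi$ (use that $\pi^L(L_g f)=\pi(g)\pi^L(f)$, which follows from the change of variables $h\mapsto g^{-1}h$ in the defining integral); and starting from $\rho$, the reconstructed $\pi$ satisfies $\pi^L=\rho$ because $\int_G f(g)\pi(g)\,dg$ computed weakly reproduces $\rho(f)$ by density and an approximate-identity argument. One should also note the correspondence preserves equivalence and irreducibility, since an intertwiner for $\pi$ is exactly an intertwiner for $\pi^L$ (and conversely, using $\pi(g)$ lies in the strong closure of $\pi^L(L^1_{\mathbb C}(G))$). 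The main obstacle, and the place where care is genuinely needed, is the converse direction: justifying that a non-degenerate representation of $L^1_{\mathbb C}(G)$ extends canonically to the Dirac measures $\ve_g$ — i.e. the passage through multipliers of the non-unital algebra $L^1_{\mathbb C}(G)$ — and checking strong continuity of the resulting $\pi$; both rest on the $L^1$-continuity of translation and on the approximate identity, so those facts about the group algebra are the real engine of the proof.
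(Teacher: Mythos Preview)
Your proposal is correct and follows essentially the same route as the paper's proof. Both directions match: in the forward direction you define $\pi^L(f)$ weakly via the coefficient functions $\lng\pi(g)\xi|\eta\rng$ just as the paper does (the paper phrases this as first producing a representation of the whole measure algebra $M^1(G,\mathbb C)$ and then restricting to $L^1_\mathbb C(G)$); in the converse direction your definition $\pi(g)\rho(f)\xi=\rho(\ve_g*f)\xi=\rho(L_gf)\xi$ on the dense subspace $\rho(L^1_\mathbb C(G))E$ is exactly the paper's construction of $\pi(g)=\pi^L(\ve_g)$ via the limit $\pi^L(\g(g)u_\iota)\to\pi^L(\ve_g)$ along an approximate identity --- you have simply packaged the same limit as ``extension to the multiplier algebra.'' Your write-up is in fact somewhat more complete than the paper's sketch, since you spell out the $*$-preservation, the Fubini step for multiplicativity, the mutual-inverse verification, and the intertwiner correspondence, none of which the paper makes explicit.
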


\begin{proof}
Let $\pi$ be a (strongly continuous) unitary representation of $G$
in a Hilbert space $E$. Given a bounded positive measure $\m$ on
$G$, let us consider the integrals
\be
\vf_{\xi,\eta}(\m)=\int \lng \pi(g)\xi|\eta\rng \m
\ee
of the coefficient functions $\vf_{\xi,\eta}(g)$ (\ref{spr530})
for all $\xi,\eta \in E$. There exists a bounded operator
$\pi(\m)\in B(E)$ in $E$ such that
\be
\lng \pi(\m)\xi|\eta\rng=\vf_{\xi,\eta}(\m), \qquad  \xi,\eta \in
E.
\ee
It is called the operator-valued integral of $\pi(g)$ with respect
to the measure $\m$, and is denoted by
\mar{qm340}\beq
\pi(\m)=\int \pi(g)\m(g). \label{qm340}
\eeq
The assignment $\m\to\pi(\m)$ provides a representation of a
Banach $^*$-algebra $M^1(G,\mathbb C)$ in $E$. Its restriction
\mar{spr567}\beq
\pi^L(f)=\int \pi(g)f(g)dg\in B(E) \label{spr567}
\eeq
to $L^1_\mathbb C(G)$ is non-degenerate. One says that the
representations (\ref{qm340}) of $M^1(G,\mathbb C)$ and
(\ref{spr567}) of $L^1_\mathbb C(G)$ are determined by a unitary
representation $\pi$ of $G$. Conversely, let $\pi^L$ be a
representation of a Banach $^*$-algebra $L^1_\mathbb C(G)$ in a
Hilbert space $E$. There is a monomorphism $g\to\ve_g$ of a group
$G$ onto a subgroup of Dirac measures $\ve_g$, $g\in G$, of an
algebra $M^1(G,\mathbb C)$. Let $\{u_\iota(q)\}_{\iota\in I}$ be
an approximate identity in $L^1_\mathbb C(G)$. Then
$\{\pi^L(u_\iota)\}$ converges to an element of $B(E)$ which can
be seen as a representation $\pi^L(\ve_\bb)$ of the unit element
$\ve_\bb$ of $M^1(G,\mathbb C)$. Accordingly,
$\{\pi^L(\g(g)u_\iota)\}$ converges to $\pi^L(\ve_g)$. Thereby, we
obtain the (strongly continuous) unitary representation
$\pi(g)=\pi^L(\ve_g)$ of a group $G$ in a Hilbert space $E$.
Moreover, the representation (\ref{spr567}) of $L^1_\mathbb C(G)$
determined by this representation $\pi$ of $G$ coincides with the
original representation $\pi^L$ of $L^1_\mathbb C(G)$.
\end{proof}

Moreover, $\pi$ and $\pi^L$ have the same cyclic vectors and
closed invariant subspaces. In particular, a representation
$\pi^L$ of $L^1_\mathbb C(G)$ is topologically irreducible iff the
associated representation $\pi$ of $G$ is so. It should be
emphasized that, since $L^1_\mathbb C(G)$ is not a $C^*$-algebra,
its topologically irreducible representations need not be
algebraically irreducible. By irreducible representations of a
group $G$, we will mean only its topologically irreducible
representations.

Theorem \ref{gns61} enables us to apply the GNS construction
(Theorem \ref{spr471}) in order to characterize unitary
representations of $G$ by means of positive continuous forms on
$L^1_\mathbb C(G)$.

In accordance with Remark \ref{spr520}, a continuous form on a
group algebra $L_\mathbb C^1(G)$ is defined as
\mar{qm341}\beq
\f(f)=\int \psi(g)f(g)dg   \label{qm341}
\eeq
by an element $\psi$ of a space $L^\infty_\mathbb C(G)$ of
infinite integrable complex functions on $G$ (Remark
\ref{spr520}). However, a function $\psi$ should satisfy the
following additional condition in order that the form
(\ref{qm341}) to be positive.

A continuous complex function $\psi$ on $G$ is called
positive-definite if
\be
\op\sum_{i,j}\psi(g_j^{-1}g_i)\ol\la_i\la_j\geq 0
\ee
for any finite set $g_1,\ldots,g_m$ of elements of $G$ and any
complex numbers $\la_1,\ldots,\la_m$. In particular, if $m=2$ and
$g_1=\bb$, we obtain
\be
\psi(g^{-1})=\ol{\psi(g)}, \qquad \nm\psi(g)\leq\psi(\bb), \qquad
  g\in G,
\ee
i.e., $\psi(\bb)$ is bounded.

\begin{lemma} \label{1080} \mar{1080}
The continuous form (\ref{qm341}) on $L^1_\mathbb C(G)$ is
positive iff $\psi\in L^\infty_\mathbb C(G)$ locally almost
everywhere equals a continuous positive-definite function.
\end{lemma}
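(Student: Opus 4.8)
The plan is to establish the equivalence in two directions, using the correspondence between continuous forms on $L^1_{\mathbb C}(G)$ and elements of $L^\infty_{\mathbb C}(G)$ together with the GNS construction (Theorem \ref{spr471}) and Theorem \ref{gns61}. First I would prove the ``if'' direction. Suppose $\psi$ agrees locally almost everywhere with a continuous positive-definite function; without loss of generality take $\psi$ itself continuous and positive-definite. The goal is to show $\f(f^**f)\geq 0$ for all $f\in L^1_{\mathbb C}(G)$. Writing out $f^**f$ as an integral and substituting into $\f$, one gets a double integral $\int\int \psi(h^{-1}g)\,\ol{f(h)}\,f(g)\,dh\,dg$ (after the appropriate change of variables using the modular function in the definition of the involution on $L^1_{\mathbb C}(G)$). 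The key step is to recognize this as a limit of Riemann-type sums of the form $\sum_{i,j}\psi(g_j^{-1}g_i)\ol{\la_i}\la_j$, each of which is $\geq 0$ by the positive-definiteness hypothesis; passing to the limit (justified by continuity of $\psi$ and integrability of $f$, via an approximation of $f$ by simple functions with small support) gives $\f(f^**f)\geq 0$.

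Next I would prove the ``only if'' direction, which is the substantive half. Assume the form $\f$ of (\ref{qm341}) is positive. By Theorem \ref{spr471} applied to the Banach $^*$-algebra $L^1_{\mathbb C}(G)$ with its approximate identity, $\f$ is the vector form of a cyclic representation $(\pi^L_\f,\thh_\f)$ of $L^1_{\mathbb C}(G)$ in a Hilbert space $E_\f$. By Theorem \ref{gns61} this representation comes from a strongly continuous unitary representation $\pi$ of $G$ on $E_\f$, with $\pi^L_\f(f)=\int\pi(g)f(g)\,dg$. The natural candidate for the positive-definite function is the coefficient $\vf_{\thh_\f,\thh_\f}(g)=\lng\pi(g)\thh_\f|\thh_\f\rng$ of (\ref{spr530}), which is continuous and bounded, and is positive-definite because $\sum_{i,j}\vf_{\thh_\f,\thh_\f}(g_j^{-1}g_i)\ol{\la_i}\la_j=\|\sum_i\la_i\pi(g_i)\thh_\f\|^2\geq 0$. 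It then remains to verify that $\psi$ and $\vf_{\thh_\f,\thh_\f}$ induce the same form on $L^1_{\mathbb C}(G)$, i.e., $\int\psi(g)f(g)\,dg=\int\lng\pi(g)\thh_\f|\thh_\f\rng f(g)\,dg$ for all $f$; this follows by unwinding the identity $\f(f)=\lng\pi^L_\f(f)\thh_\f|\thh_\f\rng$ from Theorem \ref{spr471}(iv) and the operator-valued integral formula (\ref{spr567}). Since two elements of $L^\infty_{\mathbb C}(G)$ defining the same continuous functional on $L^1_{\mathbb C}(G)$ must coincide locally almost everywhere, we conclude $\psi=\vf_{\thh_\f,\thh_\f}$ locally a.e., which is the desired continuous positive-definite function.

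The main obstacle I anticipate is the careful bookkeeping with the modular function $\Delta$ in the ``if'' direction: the involution on $L^1_{\mathbb C}(G)$ is $f^*(g)=\Delta(g^{-1})\ol{f(g^{-1})}$, so after expanding $\f(f^**f)$ one must perform the right substitutions so that the $\Delta$-factors cancel and one is genuinely left with a sum over values $\psi(g_j^{-1}g_i)$ rather than some twisted variant; getting this exactly right (and ensuring the approximating sums converge, using uniform continuity of $\psi$ on compact sets and an $L^1$-approximation of $f$) is where the real work lies. A secondary point to handle cleanly is the reduction ``locally almost everywhere equal to a continuous function'': in the ``only if'' direction this is automatic since we produce an honest continuous $\vf_{\thh_\f,\thh_\f}$ agreeing with $\psi$ a.e., and in the ``if'' direction one observes that the form (\ref{qm341}) depends only on the local-a.e. class of $\psi$, so replacing $\psi$ by its continuous representative does not change $\f$.
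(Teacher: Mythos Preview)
The paper does not actually give a proof of this lemma: it is stated as a known result (essentially the classical characterization of positive forms on a group algebra, as in Dixmier \cite{dixm}) and the text immediately moves on to Theorem~\ref{qm539}. So there is no ``paper's own proof'' to compare against.

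Your proposal is correct and is precisely the standard argument. For the ``only if'' direction your use of Theorem~\ref{spr471} followed by Theorem~\ref{gns61} to produce the coefficient function $\vf_{\thh_\f,\thh_\f}$ and then identify it with $\psi$ locally a.e.\ via the duality $L^\infty_\mathbb C(G)=(L^1_\mathbb C(G))'$ is exactly right. For the ``if'' direction your worry about the modular function is well placed but the cancellation works out cleanly: expanding $(f^**f)(g)=\int \Delta(h^{-1})\ol{f(h^{-1})}f(h^{-1}g)\,dh$ and substituting $h\mapsto h^{-1}$ gives $\int \ol{f(h)}f(hg)\,dh$ with no $\Delta$ left, and then one more left-translation in $g$ yields $\f(f^**f)=\iint \psi(h^{-1}g)\ol{f(h)}f(g)\,dh\,dg$, to which your Riemann-sum/simple-function approximation applies directly.
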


Then cyclic representations of a group algebra $L^1_\mathbb C(G)$
and the unitary cyclic representations of a locally compact group
$G$ are defined by continuous positive-definite functions on $G$
in accordance with the following theorem.

\begin{theorem} \label{qm539} \mar{qm539}
Let $\pi_\psi$ be a representation of $L^1_\mathbb C(G)$ in a
Hilbert space $E_\psi$ and $\thh_\psi$ a cyclic vector for
$\pi_\psi$ which are determined by the form (\ref{qm341}). Then
the associated unitary representation $\pi_\psi$ of $G$ in
$E_\psi$ is characterized by a relation
\mar{qm342}\beq
\psi(g)=\lng \pi(g)\thh_\psi|\thh_\psi\rng. \label{qm342}
\eeq
Conversely, a complex function $\psi$ on $G$ is continuous
positive-definite iff there exists a unitary representation
$\pi_\psi$ of $G$ and a cyclic vector $\thh_\psi$ for $\pi_\psi$
such that the equality (\ref{qm342}) holds.
\end{theorem}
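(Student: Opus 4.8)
The plan is to reduce Theorem \ref{qm539} to the GNS construction of Theorem \ref{spr471} together with the dictionary between unitary representations of $G$ and representations of $L^1_\mathbb C(G)$ provided by Theorem \ref{gns61}. First I would treat the forward direction. Given the positive-definite function $\psi$, Lemma \ref{1080} guarantees that the form $\f_\psi(f)=\int\psi(g)f(g)dg$ is a continuous positive form on the Banach $^*$-algebra $L^1_\mathbb C(G)$, which has an approximate identity. Hence Theorem \ref{spr471} yields a cyclic representation $(\pi^L_\psi,\thh_\psi)$ of $L^1_\mathbb C(G)$ on a Hilbert space $E_\psi$ with $\f_\psi(f)=\lng\pi^L_\psi(f)\thh_\psi|\thh_\psi\rng$. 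By Theorem \ref{gns61} and the remark following it (same cyclic vectors, same invariant subspaces), this corresponds to a strongly continuous unitary representation $\pi_\psi$ of $G$ on $E_\psi$ with $\thh_\psi$ cyclic, related by $\pi^L_\psi(f)=\int\pi_\psi(g)f(g)dg$.

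Next I would identify the coefficient function of $\pi_\psi$. Plugging $\pi^L_\psi(f)=\int\pi_\psi(g)f(g)dg$ into the GNS identity gives
\[
\int\psi(g)f(g)dg=\lng\pi^L_\psi(f)\thh_\psi|\thh_\psi\rng=\int\lng\pi_\psi(g)\thh_\psi|\thh_\psi\rng f(g)dg
\]
for every $f\in L^1_\mathbb C(G)$. Since $g\mapsto\lng\pi_\psi(g)\thh_\psi|\thh_\psi\rng$ is bounded and continuous (strong–weak topologies on $U(E_\psi)$ coincide), and $\psi$ is locally almost everywhere equal to a continuous function, the equality of the two integrals against all test functions $f$ forces $\psi(g)=\lng\pi_\psi(g)\thh_\psi|\thh_\psi\rng$ pointwise, i.e.\ the relation (\ref{qm342}). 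This is the step I expect to be the main obstacle: one must invoke that $L^1_\mathbb C(G)$ acting by convolution against an approximate identity separates continuous functions, so that agreement of $\int\psi f\,dg$ and $\int\vf_{\thh_\psi,\thh_\psi}f\,dg$ for all $f$ implies $\psi=\vf_{\thh_\psi,\thh_\psi}$ as continuous functions; care is needed because $\psi$ is only an $L^\infty$ class, but its continuous representative is the one that equals the coefficient.

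For the converse, suppose $\pi_\psi$ is a strongly continuous unitary representation of $G$ with cyclic vector $\thh_\psi$ and $\psi(g)=\lng\pi_\psi(g)\thh_\psi|\thh_\psi\rng$. Then $\psi$ is bounded and continuous, and for any finite set $g_1,\dots,g_m\in G$ and scalars $\la_1,\dots,\la_m$,
\[
\op\sum_{i,j}\psi(g_j^{-1}g_i)\ol\la_i\la_j=\Big\lng\pi_\psi\big(\op\sum_i\la_i g_i\big)\thh_\psi\,\Big|\,\pi_\psi\big(\op\sum_j\la_j g_j\big)\thh_\psi\Big\rng=\Big\|\op\sum_i\la_i\pi_\psi(g_i)\thh_\psi\Big\|^2\ge 0,
\]
using unitarity ($\pi_\psi(g_j)^*=\pi_\psi(g_j^{-1})$) to rewrite the Hermitian form. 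Hence $\psi$ is continuous positive-definite. Assembling the two directions, together with the observation that the form $\f_\psi$ determines $(\pi_\psi,\thh_\psi)$ up to the equivalence of Theorem \ref{spr474}, completes the proof.
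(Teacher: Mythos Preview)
The paper states Theorem \ref{qm539} without proof; it is presented as a standard result from the theory of group algebras (implicitly referencing \cite{dixm}), so there is no in-text argument to compare against. Your proposal is correct and is precisely the standard proof: use Lemma \ref{1080} to pass from $\psi$ to a positive form on $L^1_\mathbb C(G)$, apply the GNS construction (Theorem \ref{spr471}), translate via Theorem \ref{gns61} to a unitary representation of $G$, and identify $\psi$ with the coefficient $\lng\pi_\psi(g)\thh_\psi|\thh_\psi\rng$ by duality $L^\infty_\mathbb C(G)=(L^1_\mathbb C(G))'$; the converse is the elementary positivity computation you wrote. One cosmetic point: in the converse, the notation $\pi_\psi\big(\sum_i\la_i g_i\big)$ is informal since $\pi_\psi$ is only a group homomorphism; it is cleaner to write directly $\sum_i\la_i\pi_\psi(g_i)\thh_\psi$ and expand the inner product, which is what you in fact do on the right-hand side.
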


By analogy with a Higgs vacuum, one can think of functions $\psi$
in Theorem \ref{qm539} as being classical vacuum fields.

\begin{example} \label{x10} \mar{x10}
Let a group $G$ acts on a Hausdorff topological space $Z$ on the
left. Let $\m$ be a quasi-invariant measure on $Z$ under a
transformation group $G$, i.e., $\g(g)\m=h_g\m$ where $h_g$ is the
Radon--Nikodym derivative in Theorem \ref{spr510}. Then there is a
representation
\mar{x11}\beq
G\ni g: f\to \Pi(g)f, \qquad (\Pi(g)f)(z)=h_g^{1/2}(z)f(gz)
\label{x11}
\eeq
of $G$ in a Hilbert space $L_\mathbb C^2(Z,\m)$ of square
$\m$-integrable complex functions on $Z$ \cite{dao}. It is a
unitary representation due to the equality
\be
\|f\|_\m=\int |f(z)|\m(z)=\int |f(g(z)|^2 \m(g(z))=  \int
h_g(z)|f(g(z)|^2 \m(z)=\|\Pi(g)f\|^2_\m.
\ee
A group $G$ can be equipped with the coarsest topology such that
the representation (\ref{x11}) is strongly continuous. For
instance, let $Z=G$ be a locally compact group, and let $\m=dg$ be
a left Haar measure. Then the representation (\ref{x11}) comes to
the left-regular representation
\mar{spr570}\beq
(\Pi(g)f)(q)=f(g^{-1}q), \qquad f\in L^2_\mathbb C(G), \qquad q\in
G, \label{spr570}
\eeq
of $G$ in a Hilbert space $L^2_\mathbb C(G)$ of square integrable
complex functions on $G$. Note that the above mentioned coarsest
topology on $G$ is coarser then the original one, i.e., the
representation (\ref{spr570}) is strongly continuous.
\end{example}

Let us consider unitary representations of a locally compact group
$G$ which are contained in its left-regular representation
(\ref{spr570}). In accordance with the expression (\ref{qm340}),
the corresponding representation $\Pi(h)$ of a group algebra
$L^1_\mathbb C(G)$ in $L^2_\mathbb C(G)$ reads
\mar{qm344}\beq
(\Pi(h)f)(q)=\int h(g)(\Pi(g)f)(q)dg=\int
h(g)f(g^{-1}q)dg=(h*f)(q). \label{qm344}
\eeq

Let $G$ be a unimodular group. There is the following criterion
that its unitary representation is contained in the left-regular
one.

\begin{theorem} \label{qm360} \mar{qm360}
If a continuous positive-definite function $\psi$ on a unimodular
locally compact group $G$ is square integrable, then the
representation $\pi_\psi$ of $G$ determined by $\psi$ is contained
in the left-regular representation    $\Pi$ (\ref{spr570}) of $G$.
Conversely, let $\pi$ be a cyclic unitary representation of $G$
which is contained in $\Pi$, and let $\thh$ be a cyclic vector for
$\pi$. Then a continuous positive-definite function
$\lng\pi(g)\thh|\thh\rng$ on $G$ is square integrable.
\end{theorem}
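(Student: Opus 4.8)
I would handle the two implications separately, in both directions passing through the identification of cyclic unitary representations of $G$ with continuous positive-definite functions (Theorem~\ref{qm539}) and the uniqueness built into the GNS construction (Theorem~\ref{spr471}). \emph{Direct assertion.} The aim is to realise the representation $\pi_\psi$ of $G$ as a subrepresentation of $\Pi$ inside $L^2_{\mathbb C}(G)$; by Theorem~\ref{qm539} it suffices to produce a vector $v\in L^2_{\mathbb C}(G)$ with $\lng\Pi(g)v|v\rng_2=\psi(g)$ for all $g\in G$. Indeed, the closure $H_v$ of the linear span of $\{\Pi(g)v:g\in G\}$ is then a closed $\Pi$-invariant subspace, $\Pi|_{H_v}$ is cyclic with cyclic vector $v$ and defining positive-definite function $\psi$, hence equivalent to $\pi_\psi$ by the uniqueness part of Theorem~\ref{qm539}, so $\pi_\psi$ is contained in $\Pi$.

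To build $v$ I would exploit square-integrability of $\psi$. On the dense domain $C_c(G)\subset L^2_{\mathbb C}(G)$ define $Tf=f*\psi$; this lands in $L^2_{\mathbb C}(G)$ since $f\in L^1_{\mathbb C}(G)$ and $\psi\in L^2_{\mathbb C}(G)$, and the operator $T$ commutes with every left translation $\Pi(g)$. Unimodularity and the identity $\ol{\psi(x)}=\psi(x^{-1})$ make $T$ symmetric, and positive-definiteness of $\psi$ is precisely the statement $\lng Tf|f\rng_2\ge0$ for $f\in C_c(G)$; so $T\ge0$ and one may pass to its Friedrichs extension $B\ge0$. Since each $\Pi(g)$ preserves $C_c(G)$ and is isometric both for $\|\cdot\|_2$ and for the form $\lng T\cdot|\cdot\rng_2$, it extends to a unitary of the form domain of $B$, so $B$ and hence $S:=B^{1/2}$ commute with all $\Pi(g)$, with $C_c(G)\subset\mathrm{dom}\,B\subset\mathrm{dom}\,S$. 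Now fix an approximate identity $\{u_\iota\}\subset C_c(G)$ of nonnegative functions with supports shrinking to $\{\bb\}$ and put $v_\iota=Su_\iota$. From $\lng Su_\iota|Su_\kappa\rng_2=\lng Tu_\iota|u_\kappa\rng_2=\lng u_\iota*\psi\,|\,u_\kappa\rng_2\to\psi(\bb)$ (continuity of $\psi$ and $u_\iota\to\delta_{\bb}$) a routine estimate shows $\{v_\iota\}$ is Cauchy, hence $v_\iota\to v$ in $L^2_{\mathbb C}(G)$ with $\|v\|_2^2=\psi(\bb)$, and
\[\lng\Pi(g)v|v\rng_2=\lim_\iota\lng\Pi(g)Su_\iota|Su_\iota\rng_2=\lim_\iota\lng\Pi(g)(u_\iota*\psi)\,|\,u_\iota\rng_2=\psi(g)\]
by the same continuity argument; should the bookkeeping of inverses deliver $\ol{\psi(g)}$ here instead, one simply reruns the construction with $\ol\psi$, again continuous positive-definite and square-integrable, using that $\Pi$ is self-conjugate. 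This produces the required $v$.

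\emph{Converse.} Realise $\pi$ as $\Pi|_H$ for a closed $\Pi$-invariant $H\subset L^2_{\mathbb C}(G)$, so the cyclic vector becomes $\thh\in H\subset L^2_{\mathbb C}(G)$. The positive-definite function under consideration is then the matrix coefficient $\varphi(g)=\lng\Pi(g)\thh|\thh\rng_2=\int_G\thh(g^{-1}q)\ol{\thh(q)}\,dq$, and one checks $\ol{\varphi(g)}=(\thh*\thh^{\#})(g)$ with $\thh^{\#}(x):=\ol{\thh(x^{-1})}\in L^2_{\mathbb C}(G)$; the task is to see that this convolution lies in $L^2_{\mathbb C}(G)$. \emph{This last step is the main obstacle}: the convolution of two $L^2$-functions is in general only a bounded function in $C_0(G)$ and need not be square-integrable, so $\thh\in L^2_{\mathbb C}(G)$ by itself is not enough. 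One must extract the square-integrability either by replacing $\thh$ with a cyclic vector in the dense class $L^1_{\mathbb C}(G)\cap L^2_{\mathbb C}(G)$, for which $\|\thh*\thh^{\#}\|_2\le\|\thh\|_1\,\|\thh^{\#}\|_2<\infty$, or by invoking the canonical trace on the von Neumann algebra $\Pi(G)''$ and reading $\varphi$ off the GNS data. This estimate — together with, on the direct side, the control of the limiting construction of $v$ in the absence of a unit in $L^1_{\mathbb C}(G)$ — is where I expect the real work to lie; the remainder is bookkeeping with the Haar measure and the correspondences of Theorems~\ref{gns61} and~\ref{qm539}.
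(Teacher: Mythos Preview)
The paper does not actually prove this theorem; it states it (presumably deferring to Dixmier) and follows it only with a one-line indication of how the direct implication works: there exists a positive-definite $\thh\in L^2_{\mathbb C}(G)$ with $\psi=\thh*\thh=\thh*\thh^*$, and this $\thh$ serves as the cyclic vector inside $L^2_{\mathbb C}(G)$. Your construction is precisely a way of producing that $\thh$: the operator $T=\,\cdot*\psi$ you write down is right convolution by $\psi$, its positive square root $S$ is right convolution by the sought-for $\thh$, and your limit $v=\lim_\iota S u_\iota=\lim_\iota u_\iota*\thh$ is $\thh$ itself. So on the direct side you are not taking a different route; you are filling in the details behind the paper's bare assertion that such a factorisation exists (this is Godement's classical argument).

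On the converse the paper says nothing at all, and you are right to flag it as the genuine difficulty: for an arbitrary $\thh\in L^2_{\mathbb C}(G)$ the coefficient $g\mapsto\lng\Pi(g)\thh|\thh\rng$ is only in $C_0(G)$, not obviously in $L^2$. Your first proposed fix --- replacing $\thh$ by a cyclic vector lying in $L^1\cap L^2$ --- does not work as stated, because changing the cyclic vector changes the positive-definite function whose square-integrability is in question; you would only be proving the result for a \emph{different} matrix coefficient. The second route you mention, through the semifinite trace on the von Neumann algebra $\Pi(G)''$ (equivalently, through the Plancherel formalism for unimodular $G$), is the standard way to close this gap, but it requires substantially more machinery than anything assembled in the paper. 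In short: your direct argument matches and expands the paper's sketch; your converse is honestly incomplete, as is the paper's.
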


The representation $\pi_\psi$ in Theorem \ref{qm360} is
constructed as follows. Given a square integrable continuous
positive-definite function $\psi$ on $G$, there exists a
positive-definite function $\thh\in L^2_\mathbb C(G)$ such that
\be
\psi=\thh*\thh=\thh*\thh^*=\thh^**\ol\thh^*.
\ee
This is a cyclic vector for $\pi_\psi$. The coefficients
(\ref{spr530}) of a representation $\pi_\psi$ read
\be
\vf_{\xi,\eta}(g)=\ol{(\eta*\xi^*)(g)}.
\ee

In particular, if representations $\pi_\psi$ and $\pi_{\psi'}$,
determined by square integrable continuous positive-definite
functions $\psi$ and $\psi'$ on $G$, are irreducible and
inequivalent, then the corresponding cyclic vectors $\thh_\psi$
and $\thh_{\psi'}$ are orthogonal in $L^2_\mathbb C(G)$, while the
functions $\psi$ and $\psi'$ fulfil the relations
\mar{qm345}\beq
\int\psi'(g)\ol\psi(g)dg=0, \qquad \psi*\psi'=0.\label{qm345}
\eeq

Now, let $G$ be a connected locally compact (i.e.,
finite-dimensional) real Lie group. Any unitary representation of
$G$ yields a representation of its right Lie algebra $\cG$ as
follows. In particular, a finite-dimensional unitary
representation of $G$ in a Hilbert space $E$ is analytic, and a
Lie algebra $\cG$ is represented by bounded operators in $E$.

If $\pi$ is an infinite-dimensional (strongly continuous) unitary
representation of $G$ in a Hilbert space $E$, a representation of
a Lie algebra $\cG$ fails to be defined everywhere on $E$ in
general. To construct a carrier space of $\cG$, let us consider a
space $\cK^\infty(G,\mathbb C)\subset L^1_\mathbb C(G)$ of smooth
complex functions on $G$ of compact support and  the vectors
\mar{spr586}\beq
e_f=\pi^L(f)e=\int \pi(g)f(g)edg, \qquad e\in E, \qquad f\in
\cK^\infty(G,\mathbb C), \label{spr586}
\eeq
where $\pi^L$ is the representation (\ref{spr567}) of a group
algebra $L^1_\mathbb C(G)$ \cite{hurt}. The vectors $e_f$
(\ref{spr586}) exemplify smooth vectors of the representation
$\pi$  because, for any $\eta\in E$, the coefficients
$\varphi_{e_f,\eta}(g)$ of $\pi$ are smooth functions on $G$. The
vectors $e_f$ (\ref{spr586}) for all $e\in E$ and $f\in
\cK^\infty(G,\mathbb C)$ constitute a dense vector subspace
$E_\infty$ of $E$. Let $u_a$ be a right-invariant vector field on
$G$ corresponding to an element $a\in \cG$. Then the assignment
\be
\pi_\infty(a): e_f\to \pi^L(u_a\rfloor df)e
\ee
provides a representation of a Lie algebra $\cG$ in $E_\infty$.

\section{GNS construction. Unbounded operators}

There are quantum algebras (e.g., of quantum fields) whose
representations in Hilbert spaces need not be normed. Therefore,
generalizations of the conventional GNS representation of
$C^*$-algebras (Theorem \ref{spr471}) to some classes of unnormed
topological $^*$-algebras has been studied
\cite{book05,hor,schmu}.

In a general setting, by an operator in a Hilbert (or Banach)
space $E$ is meant a linear morphism $a$ of a dense subspace
$D(a)$ of $E$ to $E$. The $D(a)$ is called the domain of an
operator $a$. One says that an operator $b$ on $D(b)$ is an
extension of an operator $a$ on $D(a)$ if $D(a)\subset D(b)$ and
$b|_{D(a)}=a$. For the sake of brevity, we will write $a\subset
b$. An operator $a$ is said to be
 bounded  on $D(a)$ if there exists a real number $r$ such that
\be
\|ae\|\leq r\|e\|, \qquad  e\in D(a).
\ee
If otherwise, it is called unbounded.  Any bounded operator on a
domain $D(a)$ is uniquely extended to a bounded operator
everywhere on $E$.

An operator $a$ on a domain $D(a)$ is called closed if the
condition that a sequence $\{e_i\}\subset D(a)$ converges to $e\in
E$ and that the sequence $\{ae_i\}$ does to $e'\in E$ implies that
$e\in D(a)$ and $e'=ae$. Of course, any operator defined
everywhere on $E$ is closed. An operator $a$ on a domain $D(a)$ is
called closable if it can be extended to a closed operator. The
closure of a closable operator $a$ is defined as the minimal
closed extension of $a$.

Operators $a$ and $b$ in $E$ are called adjoint if
\be
\lng ae|e'\rng=\lng e|be'\rng, \quad  e\in D(a), \quad e'\in D(b).
\ee
Any operator $a$ has a maximal adjoint operator  $a^*$, which is
closed. Of course, $a\subset a^{**}$ and $b^*\subset a^*$ if
$a\subset b$. An operator $a$ is called  symmetric if it is
adjoint to itself, i.e., $a\subset a^*$. Hence, a symmetric
operator is closable. One can obtain the following chain of
extensions of a symmetric operator:
\be
a\subset\ol a\subset a^{**}\subset a^*=\ol a^*=a^{***}.
\ee
In particular, if $a$ is a symmetric operator, so are $\ol a$ and
$a^{**}$. At the same time, the maximal adjoint operator $a^*$ of
a symmetric operator $a$ need not be symmetric. A symmetric
operator $a$ is called  self-adjoint if $a=a^*$, and it is called
 essentially self-adjoint if $\ol a=a^*=\ol a^*$. It should be
emphasized that a symmetric operator $a$ is sometimes called
essentially self-adjoint if $a^{**}=a^*$. We here follow the
terminology of \cite{pow1}. If $a$ is a closed operator, the both
notions coincide. For bounded operators, the notions of symmetric,
self-adjoint and essentially self-adjoint operators coincide.

Let $E$ be a Hilbert space. A pair $(B,D)$ of a dense subspace $D$
of $E$ and a unital algebra $B$ of (unbounded) operators in $E$ is
called the  $Op^*$-algebra ($O^*$-algebra in the terminology of
\cite{schmu}) on a domain $D$ if, whenever $b\in B$, we have
\cite{hor,pow1}: (i) $D(b)=D$ and $bD\subset D$, (ii) $D\subset
D(b^*)$, (iii) $b^*|_D\subset B$. An algebra $B$ is provided with
the involution $b\to b^+=b^*|_D$, and its elements are closable.

A representation $\pi(A)$ of a $^*$-algebra $A$ in a Hilbert space
$E$ is defined as a homomorphism of $A$ to an $Op^*$-algebra
$(B,D(\pi))$ of (unbounded) operators in $E$ such that
$D(\pi)=D(\pi(a))$ for all $a\in A$ and this representation is
Hermitian, i.e., $\pi(a^*)\subset \pi(a)^*$ for all $a\in A$. In
this case, one also considers the representations
\be
&& \ol\pi: a \to \ol\pi(a)=\ol{\pi(a)}|_{D(\ol\pi)},
\qquad D(\ol\pi)=\op\bigcap_{a\in A} D(\ol{\pi(a)}),\\
&&  \pi^*: a \to \pi^*(a)=\pi(a^*)^*|_{D(\pi^*)},
\qquad D(\pi^*)=\op\bigcap_{a\in A} D(\pi(a)^*),
\ee
called the closure  of a representation $\pi$, and the adjoint
representation, respectively. There are the representation
extensions $\pi\subset\ol\pi\subset\pi^*$, where $\pi_1\subset
\pi_2$ means $D(\pi_1)\subset D(\pi_2)$. The representations
$\ol\pi$ and $\pi^{**}$ are Hermitian, while $\pi^*=\ol\pi^*$. A
Hermitian representation $\pi(A)$ is said to be  closed if
$\pi=\ol\pi$, and it is self-adjoint if $\pi=\pi^*$. Herewith, a
representation $\pi(A)$ is closed (resp. self-adjoint) if one of
operators $\pi(A)$ is closed (resp. self-adjoint).

A representation domain $D(\pi)$ is endowed with the
graph-topology.  It is generated by neighborhoods of the origin
\be
U(M,\ve)=\{x\in D(\pi)\,:\,\op\sum_{a\in M} \|\pi(a)x\|<\ve\},
\ee
where $M$ is a finite subset of elements of $A$. All operators of
$\pi(A)$ are continuous with respect to this topology. Let us note
that the graph-topology is finer than the relative topology on
$D(\pi)\subset E$, unless all operators $\pi(a)$, $a\in A$, are
bounded \cite{schmu}.

Let $\ol N^g$ denote the closure of a subset $N\subset D(\pi)$
with respect to the graph-topology. An element $\thh\in D(\pi)$ is
called strongly cyclic  (cyclic in the terminology of
\cite{schmu}) if
\be
D(\pi)\subset \ol{(\pi(A)\thh)}^g.
\ee
Then the GNS representation Theorem \ref{spr471} can be
generalized to Theorem \ref{gns} \cite{hor,schmu}.

Similarly to Remark \ref{gns123}, we say that representations
$\pi_1$ and $\pi_2$ of a $^*$-algebra $A$ are equivalent if there
exists an isomorphism $\g$ of their carrier spaces such that
\be
D(\pi_1)=\g(D(\pi_2)), \qquad \pi_1(a)=\g\circ\pi_2(a)\circ
\g^{-1}, \qquad a\in A.
\ee
In particular, if representations are equivalent, their kernels
coincide with each other.

Accordingly, states $f$ and $f'$ of a unital topological
$^*$-algebra $A$ in Theorem \ref{gns} are called equivalent if
they define equivalent representations $\pi_f$ and$\pi_{f'}$.

By analogy with Theorem \ref{spr474}, one can show the following.

\begin{theorem} \mar{spr474'} \label{spr474'}
Positive continuous forms $f$ and $f'$ on a unital topological
$^*$-algebra $A$ are equivalent if there exist elements $b,b'\in
A$ such that
\be
f'(a)=f(b^+ab), \qquad f(a)=f'(b'^+ab'), \qquad a\in A.
\ee
\end{theorem}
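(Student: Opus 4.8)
The plan is to imitate the proof of Theorem~\ref{spr474}, replacing the Hilbert‑space norm closures used there by graph‑topology closures and keeping careful track of domains. Let $(\pi_f,\thh_f)$ be the strongly cyclic Hermitian representation of $A$ on the domain $D(\pi_f)\subset E_f$ furnished by the GNS construction (Theorem~\ref{gns}), so that $f(a)=\lng\pi_f(a)\thh_f|\thh_f\rng$ and $\pi_f(A)\thh_f$ is graph‑dense in $D(\pi_f)$. Put $\xi:=\pi_f(b)\thh_f$; since an $Op^*$‑algebra leaves its domain invariant, $\xi\in D(\pi_f)$ and $\pi_f(A)\xi\subset D(\pi_f)$ as well.

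First I would compute the vector form $\om_\xi$. For $a\in A$ we have $\om_\xi(a)=\lng\pi_f(ab)\thh_f|\pi_f(b)\thh_f\rng$, and I move $\pi_f(b)$ across the inner product: both $\pi_f(ab)\thh_f$ and $\thh_f$ lie in $D(\pi_f)$, and, $\pi_f$ being Hermitian, $D(\pi_f)\subset D(\pi_f(b)^*)$ with $\pi_f(b)^*|_{D(\pi_f)}=\pi_f(b^+)$; hence $\om_\xi(a)=\lng\pi_f(b^+ab)\thh_f|\thh_f\rng=f(b^+ab)=f'(a)$, i.e.\ $\om_\xi=f'$. If, moreover, $\xi$ is \emph{strongly} cyclic for $\pi_f$ (graph‑density of $\pi_f(A)\xi$ in $D(\pi_f)$), then $(\pi_f,\xi)$ is a strongly cyclic Hermitian representation with vector form $f'$, so by the uniqueness accompanying the GNS construction it is equivalent to $(\pi_{f'},\thh_{f'})$; in particular $\pi_f\cong\pi_{f'}$, which is exactly equivalence of $f$ and $f'$. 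The second hypothesis is what produces this strong cyclicity: composing $f'(a)=f(b^+ab)$ with $f(a)=f'(b'^+ab')$ gives $f(c^+ac)=f(a)$ for all $a$, where $c:=b'b$, and the same adjoint computation shows that $\eta:=\pi_f(c)\thh_f=\pi_f(b')\xi$ has vector form $\om_\eta=f=\om_{\thh_f}$. Since $\pi_f(A)\eta=\pi_f(A)\pi_f(b')\xi\subset\pi_f(A)\xi$, it suffices to know that $\eta$ is strongly cyclic, i.e.\ that replacing $\thh_f$ by $\pi_f(c)\thh_f$ does not shrink the graph‑closure of the orbit. Equivalently, and symmetrically, one may build from this step and its mirror an isometric intertwiner $V\colon E_{f'}\to E_f$ with $V\thh_{f'}=\xi$ and an isometric intertwiner $W\colon E_f\to E_{f'}$ with $W\thh_f=\pi_{f'}(b')\thh_{f'}$, and then argue that $VW$ and $WV$ are the identities.

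The main obstacle is precisely this strong‑cyclicity step. In the $C^*$ case (Remark~\ref{gns29}) the analogous statement is automatic because everything happens in one fixed norm; here one must instead work with the graph topology on $D(\pi_f)$, which is strictly finer than the relative Hilbert‑space topology as soon as some $\pi_f(a)$ is unbounded. Concretely, one has to use that each $\pi_f(b')$ is graph‑continuous on $D(\pi_f)$ (true for any operator of a representation) and that the isometry $\pi_f(a)\thh_f\mapsto\pi_f(ac)\thh_f$ implementing the invariance $f=f\circ(c^+\,\cdot\,c)$ extends to a graph‑homeomorphism of $D(\pi_f)$ onto itself; this is where the full strength of having \emph{both} $b$ and $b'$ is spent, and the only point where the unbounded setting genuinely departs from Theorem~\ref{spr474}. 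The remaining ingredients — the adjoint manipulations in the computation of $\om_\xi$ (which use Hermiticity, not self‑adjointness, of $\pi_f$, so that one never needs more than $\pi_f(b^+)=\pi_f(b)^*|_{D(\pi_f)}$) and the appeal to uniqueness in Theorem~\ref{gns} — are routine, modulo careful bookkeeping of the common domain $D(\pi_f)$.
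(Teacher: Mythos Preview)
Your approach is exactly the one the paper intends: the paper offers no proof of Theorem~\ref{spr474'} at all, only the sentence ``By analogy with Theorem~\ref{spr474}, one can show the following,'' so there is nothing more detailed to compare against. Your proposal is precisely that analogy, carried out with the graph topology in place of the norm topology, and your computation of $\om_\xi=f'$ via Hermiticity of $\pi_f$ is correct.

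You are also right that the strong-cyclicity step is the only nontrivial point, and right to flag it as not fully closed. In fact the paper's own proof of Theorem~\ref{spr474} is a bit loose at the same spot: the appeal to Remark~\ref{gns29} would require $\thh_f=\pi_f(b'')\xi$ for some $b''$, whereas the hypothesis only gives that $\eta=\pi_f(b')\xi$ has the \emph{same vector form} as $\thh_f$, not that it equals $\thh_f$. Your $V,W$ construction is the correct way to organise the argument in both settings. The missing observation is that $V$ and $W$ are not merely isometric but \emph{intertwine} the representations, so $VW\in\pi_f(A)'$ and $WV\in\pi_{f'}(A)'$; moreover your graph-seminorm computation shows $VW$ maps $\pi_f(A)\thh_f$ onto $\pi_f(A)\eta$ preserving every seminorm $\|\pi_f(a)\,\cdot\,\|$, so the graph closure of $\pi_f(A)\eta$ in $D(\pi_f)$ is the image under a graph-isometric intertwiner of $D(\pi_f)$ itself. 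What still needs an explicit sentence is why this image is \emph{all} of $D(\pi_f)$ rather than a proper invariant graph-closed subspace; this is where GNS uniqueness (Theorem~\ref{gns}) for the form $f$, applied to the cyclic subrepresentation generated by $\eta$, has to be invoked and then fed back. The paper does not supply this sentence either, so your proposal is at least as complete as the paper's argument and considerably more candid about where the work lies.
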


We point out the particular class of nuclear barreled
$^*$-algebras. Let $A$ be a locally convex topological
$^*$-algebra whose topology is defined by a set of multiplicative
seminorms $p_\iota$ which satisfy the condition
\be
p_\iota(a^*a)=p_\iota(a)^2, \qquad a\in A.
\ee
It is called a $b^*$-algebra.  A unital $b^*$-algebra as like as a
$C^*$-algebra is regular and symmetric, i.e., any element $(\bb
+a^*a)$, $a\in A$, is invertible and, moreover, $(\bb +a^*a)^{-1}$
is bounded \cite{all,igu}.
  The
$b^*$-algebras are related to $C^*$-algebras as follows.

\begin{theorem}
Any $b^*$-algebra is the Hausdorff projective limit of a family of
$C^*$-algebras, and {\it vice versa} \cite{igu}.
\end{theorem}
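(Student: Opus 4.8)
The plan is to realise a $b^*$-algebra $A$ through its quotients by the kernels of the defining seminorms, and conversely to extract the seminorms of a projective limit from the $C^*$-norms of its factors; here $b^*$-algebras are taken to be Hausdorff and complete (for a general one, one first passes to the completion, which does not affect the Hausdorff quotient).

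For the forward direction, let the topology of $A$ be given by multiplicative seminorms $\{p_\iota\}_{\iota\in I}$ with $p_\iota(a^*a)=p_\iota(a)^2$. Since the pointwise maximum of finitely many such seminorms is again multiplicative and still satisfies the same identity, I may pass to the saturated family indexed by the finite subsets of $I$; thus $I$ is directed and $\iota\leq\kappa$ yields $p_\iota\leq p_\kappa$. Put $N_\iota=\Ker p_\iota$. Submultiplicativity makes $N_\iota$ a two-sided ideal, and submultiplicativity together with $p_\iota(a^*a)=p_\iota(a)^2$ forces $p_\iota(a^*)=p_\iota(a)$, so $N_\iota$ is a $^*$-ideal; this short computation is the first routine but essential point. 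Then $A/N_\iota$, normed by $p_\iota$, is a pre-$C^*$-algebra, the $C^*$-identity being precisely the hypothesis on $p_\iota$; let $A_\iota$ be its completion, a $C^*$-algebra. For $\iota\leq\kappa$ we have $N_\kappa\subset N_\iota$, so the identity map of $A$ descends to a norm-contractive $^*$-morphism $A/N_\kappa\to A/N_\iota$ and hence to a $^*$-morphism $\pi_{\kappa\iota}\colon A_\kappa\to A_\iota$; these are compatible, so $\{A_\iota,\pi_{\kappa\iota}\}$ is a projective system.

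Next I would show that the canonical $^*$-morphism $\Phi\colon a\mapsto(a+N_\iota)_{\iota\in I}$ is an isomorphism of topological $^*$-algebras of $A$ onto $\varprojlim A_\iota$, the latter taken as a closed $^*$-subalgebra of $\prod_\iota A_\iota$. The topology of $A$ is the initial topology for the maps $A\to A_\iota$, which is exactly the subspace topology $\varprojlim A_\iota$ carries inside the product, so $\Phi$ is a topological embedding; it is injective because $A$ is Hausdorff, i.e. $\bigcap_\iota N_\iota=0$. Its image is dense: given $(a_\iota)\in\varprojlim A_\iota$, a finite $F\subset I$ and $\varepsilon>0$, choose $\kappa$ above every index of $F$, then $a\in A$ with $\|(a+N_\kappa)-a_\kappa\|_{A_\kappa}<\varepsilon$ by density of $A/N_\kappa$ in $A_\kappa$; contractivity of $\pi_{\kappa\iota}$ gives $\|(a+N_\iota)-a_\iota\|_{A_\iota}<\varepsilon$ for every $\iota\leq\kappa$, in particular on $F$. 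As $A$ is complete and $\Phi$ is a topological embedding, the image of $\Phi$ is complete, hence closed, hence equals its dense closure $\varprojlim A_\iota$. I expect the genuine obstacle to be exactly this surjectivity step: it is where completeness is used, and it explains why the factors $A_\iota$ must be the completed $C^*$-algebras rather than the bare quotients $A/N_\iota$.

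For the converse, let $\{B_\iota,\pi_{\kappa\iota}\}$ be a projective system of $C^*$-algebras over a directed set, and $B=\varprojlim B_\iota$, regarded as a closed $^*$-subalgebra of the locally convex topological $^*$-algebra $\prod_\iota B_\iota$. For each $\iota$ let $q_\iota$ be the pullback to $B$, along the coordinate projection $\mathrm{pr}_\iota\colon B\to B_\iota$, of the norm of $B_\iota$; the family $\{q_\iota\}$ defines the topology of $B$. Each $q_\iota$ is submultiplicative, since $\mathrm{pr}_\iota$ is a $^*$-morphism into a submultiplicative $C^*$-norm, and $q_\iota(x^*x)=\|\mathrm{pr}_\iota(x)^*\mathrm{pr}_\iota(x)\|=\|\mathrm{pr}_\iota(x)\|^2=q_\iota(x)^2$ by the $C^*$-identity in $B_\iota$. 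Hence $B$ is a $b^*$-algebra, Hausdorff as a subspace of a product of normed spaces. This direction is essentially formal: one only tracks that all structure maps are $^*$-morphisms, so the defining properties transfer coordinatewise.
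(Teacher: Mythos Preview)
The paper does not supply its own proof of this theorem; it is simply quoted from \cite{igu}. So there is no proof in the paper to compare against, and your attempt stands on its own.

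Your argument is the standard one and is essentially correct. The forward direction---quotienting by the kernels of the defining $C^*$-seminorms, completing to $C^*$-algebras $A_\iota$, and identifying $A$ with the projective limit via a topological embedding whose image is dense and (by completeness of $A$) closed---is exactly how this result is usually established. The saturation step (passing to finite maxima of seminorms to get a directed index set) and the verification that $p_\iota(a^*)=p_\iota(a)$ are handled cleanly. The converse direction, pulling back the $C^*$-norms along the coordinate projections, is indeed purely formal.

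One small point worth flagging explicitly: the paper's definition of a $b^*$-algebra does not state completeness, yet a Hausdorff projective limit of $C^*$-algebras is automatically complete. You acknowledge this by declaring at the outset that you take $b^*$-algebras to be complete (or pass to the completion), which is the right move; strictly speaking, without completeness the forward implication yields only that the completion of $A$ is such a projective limit. In the source \cite{igu} the algebras are taken complete, so your reading matches the intended statement.
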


In particular, every $C^*$-algebra $A$ is a  barreled
$b^*$-algebra,  i.e., every absorbing balanced closed subset is a
neighborhood of the origin of $A$.

Let us additionally assume that $A$ is a nuclear algebra, i.e., a
nuclear space (Section 14.2). Then we have the following variant
of the GNS representation Theorem \ref{gns} \cite{igu}

\begin{theorem} \mar{gns36} \label{gns36}
Let $A$ be a unital nuclear barreled $b^*$-algebra and $f$ a
positive continuous form on $A$. There exists a unique (up to
unitary equivalence) cyclic representation $\pi_f$ of $A$ in a
Hilbert space $E_f$ by operators on a common invariant domain
$D\subset E_f$. This domain can be topologized to conform a rigged
Hilbert space such that all the operators representing $A$ are
continuous on $D$.
\end{theorem}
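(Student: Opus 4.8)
The plan is to run the quotient construction of the GNS theorem (Theorem \ref{gns}) and then to use the two standing hypotheses separately: the $b^*$-structure to control the resulting unbounded operators, and nuclearity together with barreledness to topologize the common domain as a rigged Hilbert space. First I would set $N_f=\{a\in A:f(a^*a)=0\}$. The one point that genuinely needs the $b^*$-hypothesis is that $N_f$ is a \emph{left} ideal. Since $A$ is a $b^*$-algebra it is the Hausdorff projective limit of the $C^*$-algebras $A_\iota$ obtained by completing $A/p_\iota^{-1}(0)$; a continuous positive form $f$ factors as $f=g\circ\rho_\iota$ through one such $C^*$-algebra $A_\iota$ (with canonical map $\rho_\iota:A\to A_\iota$), because continuity forces $|f(\cdot)|\leq c\,p_\iota(\cdot)$ for some $\iota$, and the extension $g$ is positive since $\rho_\iota(A)$ is dense in $A_\iota$. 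In $A_\iota$ one has $\rho_\iota(b)^*\rho_\iota(b)\leq p_\iota(b)^2\bb$, hence $\rho_\iota(a)^*\rho_\iota(b)^*\rho_\iota(b)\rho_\iota(a)\leq p_\iota(b)^2\rho_\iota(a)^*\rho_\iota(a)$, and applying $g$ gives the basic estimate
\[
f(a^*b^*ba)\leq p_\iota(b)^2\,f(a^*a),\qquad a,b\in A .
\]
This shows at once that $N_f$ is a left ideal, that (\ref{spr472}) descends to a positive-definite form on $D:=A/N_f$, and that the left multiplications $\tau(b):[a]\mapsto[ba]$ are well defined. Writing $E_f$ for the Hilbert completion of $D$, $\thh_f$ for the image of $\bb$, and $\pi_f(b)$ for $\tau(b)$ regarded as an operator with domain $D(\pi_f)=D$, one gets $\pi_f(A)D\subset D$, the identity $\lng\pi_f(b)[a]|[c]\rng=f(c^*ba)=\lng[a]|\pi_f(b^*)[c]\rng$ (so $\pi_f(b^*)\subset\pi_f(b)^*$ and each $\pi_f(b)$ is closable), and $\pi_f(A)\thh_f=D$, i.e. $\thh_f$ is strongly cyclic. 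Hence $(\pi_f(A),D)$ is an $Op^*$-algebra and $f(a)=\lng\pi_f(a)\thh_f|\thh_f\rng$ as in (\ref{gns2}).

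Next I would topologize $D=A/N_f$ with the quotient topology of $A$. Because $A$ is a topological $^*$-algebra, left multiplication $L_b:A\to A$ is continuous, and $q\circ L_b=\pi_f(b)\circ q$ for the quotient map $q:A\to D$; by the universal property of the quotient topology each $\pi_f(b):D\to D$ is then continuous. The estimate above gives $\|q(a)\|_{E_f}^2=f(a^*a)\leq c^2 p_\iota(a)^2$, so $q:A\to E_f$ is continuous and the inclusion $D\hookrightarrow E_f$ is continuous with dense range. As a Hausdorff quotient of a nuclear barreled space, $D$ is itself nuclear and barreled, so $D\subset E_f\subset D'$ is a rigged Hilbert space on which $A$ acts by continuous operators. (One checks in passing that the graph topology of $\pi_f$ is coarser than this quotient topology, since its seminorms $x\mapsto\|\pi_f(a)x\|_{E_f}$ are dominated by $c\,p_\iota(a)$ times the quotient seminorm of $x$, so the operators are also continuous for the graph topology, as the construction of that topology demands.)

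For uniqueness up to unitary equivalence I would argue as in the bounded case. If $(\pi',\thh')$ is another cyclic Hermitian representation on a Hilbert space $E'$ with $\lng\pi'(a)\thh'|\thh'\rng=f(a)$, then $U:\pi'(a)\thh'\mapsto\pi_f(a)\thh_f$ is well defined and isometric since $\|\pi'(a)\thh'\|^2=f(a^*a)=\|\pi_f(a)\thh_f\|^2$; it extends to a unitary $E'\to E_f$, intertwines the two representations on the dense domains $\pi'(A)\thh'$ and $\pi_f(A)\thh_f$, and — because these are strongly cyclic — carries the graph topologies, hence the full domains, onto one another. Thus the cyclic representation attached to $f$ is unique up to unitary equivalence.

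The step I expect to be the real obstacle is the bundle of facts behind the second paragraph: establishing the operator estimate by reducing the $b^*$-algebra to its defining family of $C^*$-algebras (the one genuinely algebraic lemma), and then verifying that the quotient topology on $D$ — the natural carrier of nuclearity and barreledness — is simultaneously fine enough to embed continuously into $E_f$ and to make every $\pi_f(a)$ continuous, while being correctly positioned relative to the graph topology named in the statement. The remaining ingredients (the quotient construction, Hermiticity, strong cyclicity and uniqueness) go through essentially as in Theorems \ref{spr471} and \ref{gns}.
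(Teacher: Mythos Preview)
The paper does not prove this theorem: it is stated with a bare citation to Iguri--Castagnino \cite{igu}, and the surrounding text only records the definition of a $b^*$-algebra together with the (also merely cited) fact that any $b^*$-algebra is a Hausdorff projective limit of $C^*$-algebras. There is therefore no ``paper's own proof'' to compare against.

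Your argument is the natural one and is essentially correct. The decisive algebraic step --- factoring the continuous positive form through a single $C^*$-quotient $A_\iota$ to obtain the inequality $f(a^*b^*ba)\le p_\iota(b)^2 f(a^*a)$ --- is precisely the way one exploits the pro-$C^*$ structure, and it is exactly the content the paper delegates to \cite{igu}. Once that estimate is in hand, the quotient construction, Hermiticity, strong cyclicity, and uniqueness go through as you indicate, and the permanence of nuclearity and barreledness under Hausdorff quotients gives the rigged triple $D\subset E_f\subset D'$ with each $\pi_f(b)$ continuous. One small point worth making explicit: your claim that $|f|\le c\,p_\iota$ for a \emph{single} index $\iota$ uses that the defining family of $C^*$-seminorms is directed; this is automatic once you pass to the projective-limit description, but it is not part of the paper's raw definition of a $b^*$-algebra, so you should invoke the cited projective-limit theorem at that point rather than the definition alone.
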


\begin{example} \mar{gns90} \label{gns90}
The following is an example of nuclear barreled $b^*$-algebras
which is very familiar from quantum field theory
\cite{sard91,ccr,axiom}. Let $Q$ be a nuclear space (Section
14.2). Let us consider a direct limit
\mar{x2}\beq
\wh\ot Q =\mathbb C \oplus Q \oplus Q\wh\ot Q\oplus \cdots\oplus
Q^{\wh\ot n}\oplus\cdots \label{x2}
\eeq
of vector spaces
\mar{x2a}\beq
 \wh\ot^{\leq n} Q =\mathbb C \oplus
Q \oplus Q\wh\ot Q\oplus\cdots \oplus Q^{\wh\ot n}, \label{x2a}
\eeq
where $\wh\ot$ is the topological tensor product with respect to
Grothendieck's topology (which coincides with the $\ve$-topology
on the tensor product of nuclear spaces \cite{piet}). The space
(\ref{x2}) is provided with the inductive limit topology,  the
finest topology such that the morphisms $\wh\ot^{\leq n} Q\to
\wh\ot Q$ are continuous and, moreover, are imbeddings
\cite{trev}. A convex subset $V$ of $\wh\ot Q$ is a neighborhood
of the origin in this topology iff $V\cap \wh\ot^{\leq n} Q$ is so
in $\wh\ot^{\leq n} Q$. Furthermore, one can show that $\wh\ot Q$
is a unital nuclear barreled LF-algebra with respect to a tensor
product \cite{belang}. The  LF-property implies that a linear form
$f$ on $\wh\ot Q$ is continuous iff the restriction of $f$ to each
$\wh\ot^{\leq n} Q$ is so \cite{trev}. If a continuous conjugation
$*$ is defined on $Q$, the algebra $\wh\ot Q$ is involutive with
respect to the operation
\mar{x3}\beq
*(q_1\ot\cdots \ot q_n)=q_n^*\ot\cdots \ot q_1^* \label{x3}
\eeq
on $Q^{\ot n}$ extended by continuity and linearity to $Q^{\wh\ot
n}$. Moreover, $\wh\ot Q$ is a $b^*$-algebra as follows. Since $Q$
is a nuclear space, there is a family $\|.\|_k$, $k\in\mathbb
N_+$, of continuous norms on $Q$. Let $Q_k$ denote the completion
of $Q$ with respect to the norm $\|.\|_k$. Then one can show that
the tensor algebra $\ot Q_k$ is a $C^*$-algebra and that $\wh\ot
Q$ (\ref{x2}) is a projective limit of these $C^*$-algebras with
respect to morphisms $\ot Q_{k+1}\to \ot Q_k$ \cite{igu}. Since
$\wh\ot Q$ (\ref{x2}) is a nuclear barreled $b^*$-algebra, it
obeys GNS representation Theorem \ref{gns36}. Herewith, let us
note that, due to the LF-property, a positive continuous form $f$
on $\wh\ot Q$ is defined by a family of its restrictions $f_n$ to
tensor products $\wh\ot^{\leq n} Q$. One also can restrict a form
$f$ and the corresponding representation $\pi_f$ to a tensor
algebra
\mar{gns100}\beq
A_Q=\ot Q\subset \wh\ot Q \label{gns100}
\eeq
of $Q$. However, a representation $\pi_f(A_Q)$ of $A_Q$ in a
Hilbert space $E_f$ need not be cyclic.
\end{example}

In quantum field theory, one usually choose $Q$ the Schwartz space
of functions of rapid decrease (Sections 9 and 10).

\section{Example. Commutative nuclear groups}

Following Example \ref{gns90} in a case of a real nuclear space
$Q$, let us consider a commutative tensor algebra
\mar{gns101}\beq
B_Q=\mathbb R \oplus Q\oplus Q\vee Q\oplus\cdots\oplus \op\vee^n
Q\oplus \cdots. \label{gns101}
\eeq
Provided with the direct sum topology, $B_Q$ becomes a unital
topological $^*$-algebra. It coincides with the universal
enveloping algebra of the Lie algebra $T_Q$ of an additive Lie
group $T(Q)$ of translations in $Q$. Therefore, one can obtain the
states of an algebra $B_Q$ by constructing cyclic strongly
continuous unitary representations of a nuclear Abelian group
$T(Q)$.

\begin{remark} Let us note that, in contrast to that studied in
Section 5, a nuclear group $T(Q)$ is not locally compact, unless
$Q$ is finite-dimensional.
\end{remark}

A cyclic strongly continuous unitary representation $\pi$ of
$T(Q)$ in a Hilbert space $(E,\lng.|.\rng_E)$ with a normalized
cyclic vector $\theta\in E$ yields a complex function
\be
Z(q)=\lng \pi(T(q))\theta|\theta\rng_E
\ee
on $Q$. This function is proved to be continuous and
positive-definite, i.e., $Z(0)=1$ and
\be
\op\sum_{i,j} Z(q_i-q_j)\ol\la_i\la_j\geq 0
\ee
for any finite set $q_1,\ldots,q_m$ of elements of $Q$ and
arbitrary complex numbers $\la_1,\ldots,\la_m$.

In accordance with the well-known Bochner theorem for nuclear
spaces (Theorem \ref{spr525}), any continuous positive-definite
function $Z(q)$ on a nuclear space $Q$ is the Fourier transform
\mar{qm545}\beq
Z(q)=\int\exp[i\lng q,u\rng]\m(u) \label{qm545}
\eeq
of a positive measure $\m$ of total mass 1 on the dual $Q'$ of
$Q$. Then the above mentioned representation $\pi$ of $T(Q)$ can
be given by the operators
\mar{q2}\beq
T_Z(q)\rho(u)=\exp[i\langle q,u\rangle]\rho(u)  \label{q2}
\eeq
in a Hilbert space $L_\mathbb C^2(Q',\m)$ of equivalence classes
of square $\m$-integrable complex functions $\rho(u)$ on $Q'$. A
cyclic vector $\thh$ of this representation is the
$\m$-equivalence class $\thh\ap_\m 1$ of a constant function
$\rho(u)=1$. Then we have
\mar{ccr1}\beq
Z(q)=\lng T_Z(q)\thh|\thh\rng_\m=\int \exp[i\langle q,u\rangle]\m.
\label{ccr1}
\eeq

Conversely, every positive measure $\m$ of total mass 1 on the
dual $Q'$ of $Q$ defines the cyclic strongly continuous unitary
representation (\ref{q2}) of a group $T(Q)$. By virtue of the
above mentioned Bochner theorem, it follows that every continuous
positive-definite function $Z(q)$ on $Q$ characterizes a cyclic
strongly continuous unitary representation (\ref{q2}) of a nuclear
Abelian group $T(Q)$. We agree to call $Z(q)$ the generating
function of this representation.

\begin{remark}
The representation (\ref{q2}) need not be topologically
irreducible. For instance, let $\rho(u)$ be a function on $Q'$
such that a set where it vanishes is not a $\m$-null subset of
$Q'$. Then the closure of a set $T_Z(Q)\rho$ is a $T(Q)$-invariant
closed subspace of $L_\mathbb C^2(Q',\m)$.
\end{remark}

Different generating functions $Z(q)$ determine inequivalent
representations $T_Z$ (\ref{q2}) of $T(Q)$ in general. One can
show the  following \cite{gelf64}.

\begin{theorem} \label{gns77} \mar{gns77}
Distinct generating functions $Z(q)$ and $Z'(q)$ determine
equivalent representations $T_Z$ and $T_{Z'}$ (\ref{q2}) of $T(Q)$
in Hilbert spaces $L^2_\mathbb C(Q',\m)$ and $L^2_\mathbb
C(Q',\m')$  iff they are the Fourier transform of equivalent
measures on $Q'$.
\end{theorem}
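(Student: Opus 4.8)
The plan is to reduce the statement to standard facts about multiplication operators on $L^2$-spaces. Recall that two measures are \emph{equivalent} when they are mutually absolutely continuous, i.e.\ have the same null sets; by the Radon--Nikodym theorem (Theorem \ref{spr510}) this is the same as the existence of a density $h=d\m/d\m'$ with $h>0$ almost everywhere.

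First I would dispose of the "if" direction. Suppose $\m\sim\m'$ and put $h=d\m/d\m'$, so that $d\m=h\,d\m'$. The assignment $U:\rho\mapsto h^{1/2}\rho$ satisfies $\int|h^{1/2}\rho|^2\,d\m'=\int|\rho|^2 h\,d\m'=\int|\rho|^2\,d\m$, hence is an isometry $L^2_\mathbb C(Q',\m)\to L^2_\mathbb C(Q',\m')$; it is onto because $h>0$ almost everywhere, with inverse $\rho\mapsto h^{-1/2}\rho$. Since $h^{1/2}$ and $\exp[i\langle q,u\rangle]$ are both multiplication operators they commute, so $U T_Z(q)=T_{Z'}(q)U$ for every $q\in Q$; thus $U$ is an intertwining isomorphism and $T_Z$, $T_{Z'}$ are equivalent. (The Fourier transform being injective on measures, distinct equivalent measures do yield distinct generating functions, so this situation is genuine.)

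For the "only if" direction, let $U:L^2_\mathbb C(Q',\m)\to L^2_\mathbb C(Q',\m')$ be a unitary with $U T_Z(q)=T_{Z'}(q)U$ for all $q$. The key step is to identify the von Neumann algebra generated by $\{T_Z(q):q\in Q\}$ with the maximal abelian algebra $\mathcal M_\m$ of all multiplications by bounded $\m$-measurable functions on $Q'$: the characters $u\mapsto\exp[i\langle q,u\rangle]$ separate the points of $Q'$ (because $Q$ does), for fixed $q$ the unitaries $T_Z(tq)$, $t\in\mathbb R$, generate by weak limits all multiplications by bounded Borel functions of $\langle q,\cdot\rangle$, and the cylinder sets generate the $\sigma$-algebra carrying $\m$; a Stone--Weierstrass/weak-$*$-density argument then gives $W^*(\{T_Z(q)\})=\mathcal M_\m$, and likewise on the primed side. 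The intertwining relation therefore forces $U M^{(\m)}_{\chi_B}=M^{(\m')}_{\chi_B}U$ for every cylinder set $B$. Applying this to the constant function $1\in L^2_\mathbb C(Q',\m)$ (it lies in the space because $\m$ is finite) and setting $g=U(1)$, we get $U(\chi_B)=\chi_B g$, whence, since $U$ is isometric, $\m(B)=\|\chi_B\|^2_\m=\|\chi_B g\|^2_{\m'}=\int_B|g|^2\,d\m'$; thus $\m=|g|^2\m'\ll\m'$. Running the same argument with $U^{-1}$ gives $\m'\ll\m$, so $\m\sim\m'$.

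Since $Z$ and $Z'$ are by construction the Fourier transforms of $\m$ and $\m'$ (the Bochner correspondence for nuclear spaces, Theorem \ref{spr525}), the condition $\m\sim\m'$ is exactly the assertion that $Z$ and $Z'$ are the Fourier transforms of equivalent measures, which completes the argument. The part I expect to be the main obstacle is the generation lemma $W^*(\{T_Z(q)\})=\mathcal M_\m$: one must combine the separation-of-points property of the dual pairing, the Fourier analysis that recovers bounded functions of $\langle q,\cdot\rangle$ from the one-parameter groups $t\mapsto T_Z(tq)$, and careful bookkeeping of the cylinder $\sigma$-algebra on the non-locally-compact space $Q'$; equivalently, this is the statement that the projection-valued measure underlying $T_Z$ is cyclic and hence determined up to unitary equivalence by the measure class of $\m$.
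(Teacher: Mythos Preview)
Your ``if'' direction is exactly the paper's argument: the paper writes $\m'=s^2\m$ with $s>0$ almost everywhere and exhibits the multiplication map $\rho\mapsto s\rho$ as the intertwining isometry (its (\ref{qm581})--(\ref{qm580})), which is your $U:\rho\mapsto h^{1/2}\rho$ with $s=h^{1/2}$.

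For the ``only if'' direction the paper gives no argument at all; it simply attributes the full statement to Gelfand--Vilenkin \cite{gelf64}. Your route through the von Neumann algebra generated by $\{T_Z(q)\}$ is the standard and correct way to fill this in. One simplification worth noting: you do not need the full Stone--Weierstrass/cylinder-set bookkeeping to get $W^*(\{T_Z(q)\})=\mathcal M_\m$. The paper already records that $\thh\approx_\m 1$ is cyclic for $T_Z$; hence it is cyclic for the abelian von Neumann algebra $W^*(\{T_Z(q)\})\subset\mathcal M_\m$, and an abelian von Neumann algebra with a cyclic vector is automatically maximal abelian, forcing equality with $\mathcal M_\m$. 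From there your computation $\m(B)=\int_B|g|^2\,d\m'$ with $g=U(1)$ goes through unchanged. So your proposal is correct and, on the converse, strictly more detailed than what the paper supplies.
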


Indeed, let
\mar{qm581}\beq
\m'= s^2\m, \label{qm581}
\eeq
where a function $s(u)$ is strictly positive almost everywhere on
$Q'$, and $\m(s^2)=1$. Then the map
\mar{qm580}\beq
L^2_\mathbb C(Q',\m')\ni\rho(u)\to s(u)\rho(u)\in L^2_\mathbb C
(Q',\m) \label{qm580}
\eeq
provides an isomorphism between the representations $T_{Z'}$ and
$T_Z$.

Similarly to the case of finite-dimensional Lie groups (Section
4), any strongly continuous unitary representation (\ref{q2}) of a
nuclear group $T(Q)$ implies a representation of its Lie algebra
by operators
\mar{gns102}\beq
\f(q)\rho(u)=\lng q,u\rng\rho(u) \label{gns102}
\eeq
in the same Hilbert space $L^2_\mathbb C(Q',\m)$. Their mean
values read
\mar{gns103}\beq
\lng\f(q)\rng=\om_\thh(\f(q))=\lng\f(q)\rng=\int \lng q,u\rng
\mu(u). \label{gns103}
\eeq

The representation (\ref{gns102}) is extended to that of the
universal enveloping algebra $B_Q$ (\ref{gns101}).

\begin{remark} \label{gns109} \mar{gns109}
Let us consider representations of $T(Q)$  with generating
functions $Z(q)$ such that $\mathbb R\ni t\to Z(tq)$ is an
analytic function on $\mathbb R$ at $t=0$ for all $q\in Q$. Then
one can show that a function $\lng q|u\rng$ on $Q'$ is square
$\m$-integrable for all $q\in Q$ and that, consequently, the
operators $\f(q)$ (\ref{gns102}) are bounded everywhere in a
Hilbert space $L^2_\mathbb C(Q',\m)$. Moreover, the corresponding
mean values of elements of $B_Q$ can be computed by the formula
\mar{ccr8}\beq
\lng \f(q_1)\cdots\f(q_n)\rng=i^{-n} \frac{\dr}{\dr\al^1}
\cdots\frac{\dr}{\dr\al^n}Z(\al^iq_i)|_{\al^i=0}=  \int\lng
q_1,u\rng\cdots\lng q_n, u \rng \mu(u). \label{ccr8}
\eeq
\end{remark}

\section{Infinite canonical commutation relations}

The canonical commutation relations (CCR) are of central
importance in AQT as a method of canonical quantization. A
remarkable result about CCR for finite degrees of freedom is the
Stone--von Neumann uniqueness theorem which states that all
irreducible representations of these CCR are unitarily equivalent
\cite{petz}. On the contrary, CCR of infinite degrees of freedom
admit infinitely many inequivalent irreducible representations
\cite{flor}.

One can provide the comprehensive description of representations
of CCR modelled over an infinite-dimensional nuclear space $Q$
\cite{gelf64,book05,ccr}.

Let $Q$ be a real nuclear space endowed with a non-degenerate
separately continuous Hermitian form $\lng.|.\rng$. This Hermitian
form brings $Q$ into a Hausdorff pre-Hilbert space. A nuclear
space $Q$, the completion $\wt Q$ of a pre-Hilbert space $Q$, and
the dual $Q'$ of $Q$ make up the rigged Hilbert space $Q\subset
\wt Q\subset Q'$ (\ref{spr450}).

Let us consider a group $G(Q)$ of triples $g=(q_1,q_2,\la)$ of
elements $q_1$, $q_2$ of $Q$ and complex numbers $\la$, $|\la|=1$,
which are subject to multiplications
\mar{qm541}\beq
(q_1,q_2,\la)(q'_1,q'_2,\la')=(q_1+q'_1,q_2+q'_2, \exp[i\lng
q_2,q'_1\rng] \la\la'). \label{qm541}
\eeq
It is a Lie group whose group space is a nuclear manifold modelled
over a vector space $W=Q\oplus Q\oplus \mathbb R$. Let us denote
$T(q)=(q,0,0)$,  $P(q)=(0,q,0)$. Then the multiplication law
(\ref{qm541}) takes a form
\mar{qm543}\beq
T(q)T(q')=T(q+q'), \quad P(q)P(q')=P(q+q'), \quad
P(q)T(q')=\exp[i\lng q|q'\rng]T(q')P(q). \label{qm543}
\eeq
Written in this form, $G(Q)$ is called the nuclear Weyl CCR group.

The complexified Lie algebra of a nuclear Lie group $G(Q)$ is the
unital Heisenberg CCR algebra $\ccG(Q)$. It is generated by
elements $I$, $\f(q)$, $\pi(q)$, $q\in Q$, which obey the
Heisenberg CCR
\mar{qm540}\beq
[\f(q),I]=\pi(q),I]=[\f(q),\f(q')]=[\pi(q),\pi(q')]=0, \qquad
[\pi(q),\f(q')]=-i\lng q|q'\rng I. \label{qm540}
\eeq
There is the exponential map
\be
T(q)=\exp[i\f(q)], \qquad P(q)=\exp[i\pi(q)].
\ee

Due to the relation (\ref{1085}), the normed topology on a
pre-Hilbert space $Q$ defined by a Hermitian form $\lng.|.\rng$ is
coarser than the nuclear space topology. The latter is metric,
separable and, consequently, second-countable. Hence, a
pre-Hilbert space $Q$ also is second-countable and, therefore,
admits a countable orthonormal basis. Given such a basis $\{q_i\}$
for $Q$, the Heisenberg CCR (\ref{qm540}) take a form
\be
[\f(q_j),I]=\pi(q_j),I]=[\f(q_j),\f(q_k)]=[\pi(q_k),\pi(q_j)]=0,
\qquad [\pi(q_j),\f(q_k)]=-i\dl_{jk}I.
\ee

A Weyl CCR group $G(Q)$ contains two nuclear Abelian subgroups
$T(Q)$ and $P(Q)$. Following the representation algorithm in
\cite{gelf64}, we first construct representations of a nuclear
Abelian group $T(Q)$.  Then these representations under certain
conditions can be extended to representations of a Weyl CCR group
$G(Q)$.

Following Section 7, we treat a nuclear Abelian group $T(Q)$ as
being a group of translations in a nuclear space $Q$. Let us
consider its cyclic strongly continuous unitary representation
$T_Z$ (\ref{q2}) in a Hilbert space $L^2_\mathbb C(Q',\m)$ of
equivalence classes of square $\m$-integrable complex functions
$\rho(u)$ on the dual $Q'$ of $Q$ which is defined by the
generating function $Z$ (\ref{qm545}). This representation can be
extended to a Weyl CCR group $G(Q)$ if a measure $\m$ possesses
the following property.

Let $u_q$, $q\in Q$, be an element of $Q'$ given by the condition
$\lng q',u_q\rng=\lng q'|q\rng$, $q'\in Q$. These elements form
the range of a monomorphism $Q\to Q'$ determined by a Hermitian
form $\lng.|.\rng$ on $Q$. Let a measure $\m$ in the expression
(\ref{qm545}) remain equivalent under translations
\be
Q'\ni u\to u+u_q \in Q', \qquad   u_q\in Q\subset Q',
\ee
in $Q'$, i.e.,
\mar{qm547}\beq
\m(u+u_q)=a^2(q,u)\m(u), \qquad  u_q\in Q\subset Q', \label{qm547}
\eeq
where a function $a(q,u)$ is square $\m$-integrable and strictly
positive almost everywhere on $Q'$. This function fulfils the
relations
\mar{qm555}\beq
a(0,u)=1, \qquad a(q+q',u)=a(q,u)a(q',u+u_q). \label{qm555}
\eeq
A measure on $Q'$ obeying the condition (\ref{qm547}) is called
translationally quasi-invariant, though it does not remain
equivalent under an arbitrary translation in $Q'$, unless $Q$ is
finite-dimensional.

Let the generating function $Z$ (\ref{qm545}) of a cyclic strongly
continuous unitary representation of a nuclear group $T(Q)$ be the
Fourier transform  of a translationally quasi-invariant measure
$\m$ on $Q'$.  Then one can extend the representation (\ref{q2})
of this group to a unitary strongly continuous representation of a
Weyl CCR group $G(Q)$ in a Hilbert space $L^2_\mathbb C(Q',\m)$ by
the operators (\ref{x11}) in Example \ref{x10}. These operators
read
\mar{qm548}\beq
P_Z(q)\rho(u)=a(q,u)\rho(u+u_q). \label{qm548}
\eeq
Herewith, the following is true.

\begin{theorem} \label{gns78} \mar{gns78}
Equivalent representations of a group $T(Q)$ are extended to
equivalent representations of a Weyl CCR group $G(Q)$.
\end{theorem}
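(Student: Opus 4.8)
The plan is to trace through how an equivalence of $T(Q)$-representations interacts with the extension to the Weyl CCR group $G(Q)$. Recall from Theorem \ref{gns77} that two equivalent cyclic representations $T_Z$ and $T_{Z'}$ in $L^2_\mathbb C(Q',\m)$ and $L^2_\mathbb C(Q',\m')$ correspond to equivalent measures, so by (\ref{qm581}) we may write $\m'=s^2\m$ with $s$ strictly positive $\m$-almost everywhere and $\m(s^2)=1$, and the intertwining isomorphism is the multiplication map (\ref{qm580}), namely $\g:\rho(u)\mapsto s(u)\rho(u)$. First I would assume both $\m$ and $\m'$ are translationally quasi-invariant in the sense of (\ref{qm547}), so that both extend to representations of $G(Q)$ via the operators (\ref{qm548}): $P_Z(q)\rho(u)=a(q,u)\rho(u+u_q)$ and similarly $P_{Z'}(q)\rho(u)=a'(q,u)\rho(u+u_q)$, where $a$ and $a'$ are the square roots of the respective Radon--Nikodym derivatives.

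The core computation is to show that the same map $\g$ from (\ref{qm580}) intertwines $P_Z$ and $P_{Z'}$, i.e. $\g\circ P_{Z'}(q)=P_Z(q)\circ\g$, so that $\g$ intertwines the full $G(Q)$-representations (it already intertwines the $T_Z$ part, and $G(Q)$ is generated by $T(Q)$ and $P(Q)$). Writing out both sides applied to $\rho\in L^2_\mathbb C(Q',\m')$: the left side gives $s(u)a'(q,u)\rho(u+u_q)$, while the right side gives $a(q,u)s(u+u_q)\rho(u+u_q)$. So the required identity reduces to the pointwise relation
\[
s(u)\,a'(q,u)=a(q,u)\,s(u+u_q)\qquad\text{for $\m$-a.e. }u,\ \text{all }q\in Q.
\]
This in turn should follow by differentiating (or rather comparing Radon--Nikodym cocycles of) the relation $\m'=s^2\m$ under the translation $u\mapsto u+u_q$: from (\ref{qm547}) for $\m'$ we get $\m'(u+u_q)=a'^2(q,u)\m'(u)=a'^2(q,u)s^2(u)\m(u)$, while directly $\m'(u+u_q)=s^2(u+u_q)\m(u+u_q)=s^2(u+u_q)a^2(q,u)\m(u)$. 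Equating the two and taking positive square roots (legitimate since $s,a,a'$ are strictly positive a.e.) yields exactly the displayed cocycle identity. One should check the cocycle consistency conditions (\ref{qm555}) are respected, but this is automatic since $a$, $a'$ and the translates of $s$ all satisfy the chain rule for Radon--Nikodym derivatives.

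The main obstacle, and the point requiring the most care, is the measure-theoretic bookkeeping: the translation $u\mapsto u+u_q$ is only a quasi-invariance (not invariance) in infinite dimensions, so one must be sure that the null sets on which $s$ or the $a$'s misbehave are preserved under all the relevant translations by elements $u_q\in Q\subset Q'$, and that ``$\m$-a.e.'' statements can be promoted to statements valid simultaneously for a dense set of $q$'s — here second-countability of $Q$ (noted earlier via the countable orthonormal basis $\{q_i\}$) lets one take a countable dense set of translations and intersect the countably many null sets. Once the cocycle identity holds on a common conull set for all $q$ in a countable dense subgroup, continuity of all operators in $q$ extends it to all of $Q$, and then $\g$ is the desired unitary intertwiner of the extended representations, establishing the theorem.
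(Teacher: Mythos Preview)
Your approach is essentially the same as the paper's: both arguments reduce to the cocycle identity $s(u)\,a'(q,u)=a(q,u)\,s(u+u_q)$, obtained by comparing the two expressions for $\m'(u+u_q)$ via $\m'=s^2\m$ and the translational quasi-invariance of $\m$. The one minor difference is that the paper \emph{derives} the translational quasi-invariance of $\m'$ (with $a'(q,u)=s^{-1}(u)a(q,u)s(u+u_q)$) from that of $\m$, rather than assuming it at the outset; your measure-theoretic caveats about null sets and second-countability are more careful than the paper, which simply writes down the formula $P_{Z'}(q)=s^{-1}P_Z(q)s$ and stops.
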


\begin{proof}
Let $\m'$ (\ref{qm581}) be a $\m$-equivalent positive measure of
total mass 1 on $Q'$. The equality
\be
\m'(u+u_q)=s^{-2}(u)a^2(q,u)s^2(u+u_q)\m'(u)
\ee
shows that it also is translationally quasi-invariant. Then the
isomorphism (\ref{qm580}) between representations $T_Z$ and
$T_{Z'}$ of a nuclear Abelian group $T(Q)$ is extended to the
isomorphism
\be
P_{Z'}(q)= s^{-1}P_Z(q)s: \rho(u)\to
s^{-1}(u)a(q,u)s(u+u_q)\rho(u+u_q)
\ee
of the corresponding representations of a Weyl CCR group $G(Q)$.
\end{proof}

Similarly to the case of finite-dimensional Lie groups (Section
4), any strongly continuous unitary representation $T_Z$
(\ref{q2}), $P_Z$ (\ref{qm548}) of a nuclear Weyl CCR group $G(Q)$
implies a representation of its Lie algebra $\ccG(Q)$ by
(unbounded) operators in the same Hilbert space $L^2_\mathbb
C(Q',\m)$ \cite{book05,ccr}.  This representation reads
\mar{qm549,62}\ben
&& \f(q)\rho(u)=\lng q,u\rng\rho(u), \qquad
    \pi(q)\rho(u)=-i(\dl_q+\eta(q,u))\rho(u), \label{qm549}\\
&& \dl_q\rho(u)=\op\lim_{\al\to 0}\al^{-1}[\rho(u+\al u_q)-\rho(u)],
\qquad \al\in\mathbb R,\nonumber\\
&& \eta(q,u)=\op\lim_{\al\to 0}\al^{-1}[a(\al q,u)-1].\label{qm562}
\een
It follows at once from the relations (\ref{qm555}) that
\be
&& \dl_q\dl_{q'}=\dl_{q'}\dl_q, \qquad
\dl_q(\eta(q',u))=\dl_{q'}(\eta(q,u)),\\
&& \dl_q=-\dl_{-q}, \qquad \dl_q(\lng q',u\rng)=\lng q'|q\rng,\\
&& \eta(0,u)=0, \quad  u\in Q',\qquad
\dl_q\thh=0,  \quad  q\in Q.
\ee
Then it is easily justified that the operators (\ref{qm549})
fulfil the Heisenberg CCR (\ref{qm540}). The unitarity condition
implies the conjugation rule
\be
\lng q,u\rng^*=\lng q,u\rng, \qquad \dl_q^*=-\dl_q -2\eta(q,u).
\ee
Hence, the operators (\ref{qm549}) are Hermitian.

The operators $\pi(q)$ (\ref{qm549}), unlike $\f(q)$, act in a
subspace $E_\infty$ of all smooth complex functions in
$L^2_\mathbb C (Q',\m)$ whose derivatives of any order also
belongs to $L^2_\mathbb C(Q',\m)$. However, $E_\infty$ need not be
dense in a Hilbert space $L^2_\mathbb C(Q',\m)$, unless $Q$ is
finite-dimensional.

A space $E_\infty$ also is a carrier space of a representation of
the universal enveloping algebra $\ol\ccG(Q)$ of a Heisenberg CCR
algebra $\ccG(Q)$. The representations of $\ccG(Q)$ and
$\ol\ccG(Q)$ in $E_\infty$ need not be irreducible. Therefore, let
us consider a subspace $E_\thh=\ol\ccG(Q)\thh$ of $E_\infty$,
where $\thh$ is a cyclic vector for a representation of a Weyl CCR
group in $L^2_\mathbb C(Q',\m)$. Obviously, a representation of a
Heisenberg CCR algebra $\ccG(Q)$ in $E_\thh$ is algebraically
irreducible.

One also introduces creation and annihilation operators
\mar{qm552}\beq
a^\pm(q)=\frac{1}{\sqrt 2}[\f(q)\mp i\pi(q)]=\frac{1}{\sqrt
2}[\mp\dl_q \mp\eta(q,u) + \lng q,u\rng]. \label{qm552}
\eeq
They obey the conjugation rule $(a^\pm(q))^*=a^\mp(q)$ and the
commutation relations
\be
[a^-(q), a^+(q')]=\lng q|q'\rng\bb, \qquad
[a^+(q),a^+(q')]=[a^-(q),a^-(q')]=0.
\ee
The particle number operator $N$ in a carrier space $E_\thh$ is
defined by  conditions
\be
[N,a^\pm(q)]=\pm a^\pm(q)
\ee
up to a summand $\la\bb$. With respect to a countable orthonormal
basis $\{q_k\}$, this operator $N$ is given by a sum
\mar{qm591}\beq
N=\op\sum_k a^+(q_k)a^-(q_k), \label{qm591}
\eeq
but need not be defined everywhere in $E_\thh$, unless $Q$ is
finite-dimensional.

Gaussian measures given by the Fourier transform (\ref{spr523})
exemplify a physically relevant class of translationally
quasi-invariant measures on the dual $Q'$ of a nuclear space $Q$.
Their Fourier transforms obey the analiticity condition in Remark
\ref{gns109}.

Let $\m_K$ denote a Gaussian measure on $Q'$ whose Fourier
transform is a generating function
\mar{qm563}\beq
Z_K=\exp[-\frac12 M_K(q)] \label{qm563}
\eeq
with the covariance form
\mar{qm560}\beq
M_K(q)=\lng K^{-1}q|K^{-1}q\rng, \label{qm560}
\eeq
where $K$ is a bounded invertible operator in the Hilbert
completion $\wt Q$ of $Q$ with respect to a Hermitian form
$\lng.|.\rng$. The Gaussian measure $\m_K$ is translationally
quasi-invariant, i.e.,
\be
\m_K(u+u_q)=a_K^2(q,u)\m_K(u).
\ee
Using the formula (\ref{ccr8}), one can show that
\mar{qm561}\beq
a_K(q,u)= \exp[-\frac14 M_K(Sq)- \frac12\lng Sq,u\rng],
\label{qm561}
\eeq
where $S=KK^*$ is a bounded Hermitian operator in $\wt Q$.

Let us construct a representation of a CCR algebra $\ccG(Q)$
determined by the generating function $Z_K$ (\ref{qm563}).
Substituting the function (\ref{qm561}) into the formula
(\ref{qm562}), we obtain
\be
\eta(q,u)= -\frac12\lng Sq,u\rng.
\ee
Hence, the operators $\f(q)$ and $\pi(q)$ (\ref{qm549}) take a
form
\mar{qm565}\beq
\f(q)=\lng q,u\rng, \qquad \pi(q)=-i(\dl_q-\frac12\lng Sq,u\rng).
\label{qm565}
\eeq
Accordingly, the creation and annihilation operators (\ref{qm552})
read
\mar{qm566}\beq
a^\pm(q)=\frac{1}{\sqrt 2}[\mp\dl_q \pm \frac12\lng Sq,u\rng +
\lng q,u\rng]. \label{qm566}
\eeq
They act on the subspace $E_\thh$, $\thh\ap_{\m_K}1$, of a Hilbert
space $L^2_\mathbb C(Q',\m_K)$, and they are Hermitian with
respect to a Hermitian form $\lng.|.\rng_{\m_K}$ on $L^2_\mathbb
C(Q',\m_K)$.

\begin{remark} \label{ccr10} \mar{ccr10}
If a representation of CCR is characterized by the Gaussian
generating function (\ref{qm563}), it is convenient for a
computation to express all operators into the operators $\dl_q$
and $\f(q)$, which obey commutation relations
\be
[\dl_q,\f(q')]=\lng q'|q\rng.
\ee
For instance, we have
\be
\pi(q)=-i\dl_q -\frac{i}{2}\f(Sq).
\ee
The mean values $\lng\f(q_1)\cdots\f(q_n)\dl_q\rng$ vanish, while
the meanvalues $\lng\f(q_1)\cdots\f(q_n)\rng$, defined by the
formula (\ref{ccr8}), obey the Wick theorem relations
\mar{ccr20}\beq
\lng\f(q_1)\cdots\f(q_n)\rng =\sum \lng\f(q_{i_1})\f(q_{i_2})\rng
\cdots \lng\f(q_{i_{n-1}})\f(q_{i_n})\rng, \label{ccr20}
\eeq
where the sum runs through all partitions of a set $1,\ldots,n$ in
ordered pairs $(i_1<i_2),\ldots(i_{n-1}<i_n)$, and where
\be
\lng\f(q)\f(q')\rng=\lng K^{-1}q|K^{-1}q'\rng.
\ee
\end{remark}

In particular, let us put $K=\sqrt2\cdot\bb$. Then the generating
function (\ref{qm563}) takes a form
\mar{qm567}\beq
Z_{\mathrm F}(q)=\exp[-\frac14\lng q|q\rng], \label{qm567}
\eeq
and defines the Fock representation of a Heisenberg CCR algebra
$\ccG(Q)$:
\mar{gns81}\ben
&& \f(q)=\lng q,u\rng, \qquad  \pi(q)=-i(\dl_q-\lng q,u\rng), \label{gns81}\\
&& a^+(q)=\frac{1}{\sqrt 2}[-\dl_q + 2\lng q,u\rng], \qquad
a^-(q)=\frac{1}{\sqrt 2}\dl_q. \nonumber
\een
Its carrier space is the subspace $E_\thh$, $\thh\ap_{\m_{\mathrm
F}}1$
    of the Hilbert space
$L^2_\mathbb C(Q',\m_{\mathrm F})$, where $\m_{\mathrm F}$ denotes
a Gaussian measure whose Fourier transform is (\ref{qm567}). We
agree to call it the Fock measure.

The Fock representation (\ref{gns81}) up to an equivalence is
characterized by the existence of a cyclic vector $\thh$ such that
\mar{qm570}\beq
a^-(q)\thh=0, \qquad  q\in Q. \label{qm570}
\eeq
For the representation in question, this is $\thh\ap_{\m_{\mathrm
F}}1$. An equivalent condition is that the particle number
operator $N$ (\ref{qm591}) exists and its spectrum is lower
bounded. The corresponding eigenvector of $N$ in $E_\thh$ is
$\thh$ itself so that $N\thh=0$. Therefore, it is treated a
particleless vacuum.

A glance at the expression (\ref{qm566}) shows that the condition
(\ref{qm570}) does not hold, unless $Z_K$ is $Z_{\mathrm F}$
(\ref{qm567}). For instance, the particle number operator in the
representation (\ref{qm566}) reads
\be
&& N=\op\sum_j a^+(q_j)a^-(q_j)= \op\sum_j[-\dl_{q_j}\dl_{q_j}
+S^j_k\lng q_k,u\rng\dr_{q_j} + \\
&& \qquad
(\dl_{km}-\frac14 S^j_kS^j_m)\lng q_k,u\rng\lng q_m,u\rng -
(\dl_{jj}-\frac12 S^j_j)],
\ee
where $\{q_k\}$ is an orthonormal basis for a pre-Hilbert space
$Q$. One can show that this operator is defined everywhere on
$E_\thh$ and is lower bounded  only if the operator $S$ is a sum
of the scalar operator $2\cdot\bb$ and a nuclear operator in $\wt
Q$, in particular, if
\be
\mathrm{Tr}(\bb -\frac12 S)<\infty.
\ee
This condition also is sufficient for measures $\m_K$ and $\m_F$
(and, consequently, the corresponding representations) to be
equivalent \cite{gelf64}. For instance, a generating function
\be
Z_c(q)=\exp[-\frac{c^2}{2}\lng q|q\rng], \qquad c^2\neq \frac12,
\ee
defines a non-Fock representation of nuclear CCR.

\begin{remark} \label{gns79} \mar{gns79}
Since the Fock measure $\m_{\mathrm F}$ on $Q'$ remains equivalent
only under translations by vectors $u_q\in Q\subset Q'$, the
measure
\be
\m_\si(u) =\m_{\mathrm F}(u+\si), \qquad \si\in Q'\setminus Q,
\ee
on $Q'$ determines a non-Fock representation of nuclear CCR.
Indeed, this measure is translationally quasi-invariant:
\be
\m_\si(u+u_q)=a^2_\si(q,u)\m_\si(u),\qquad a_\si(q,u)=a_{\mathrm
F}(q,u-\si),
\ee
and its Fourier transform
\be
Z_\si(q)=\exp[i\lng q,\si\rng]Z_{\mathrm F}(q)
\ee
is a positive-definite continuous function on $Q$. Then the
corresponding representation of a CCR algebra is given by
operators
\mar{qm597}\beq
a^+(q)=\frac{1}{\sqrt 2}(-\dl_q + 2\lng q,u\rng -\lng
q,\si\rng),\qquad a^-(q)=\frac{1}{\sqrt 2}(\dl_q + \lng
q,\si\rng). \label{qm597}
\eeq
In comparison with the all above mentioned representations, these
operators possess non-vanishing vacuum mean values
\be
\lng a^\pm(q)\thh|\thh\rng_{\m_{\mathrm F}}=\lng q,\si\rng.
\ee
If $\si\in Q\subset Q'$, the representation (\ref{qm597}) becomes
equivalent to the Fock representation (\ref{gns81}) due to a
morphism
\be
\rho(u)\to \exp[-\lng \si,u\rng]\rho(u+\si).
\ee
\end{remark}

\begin{remark} \label{gns80} \mar{gns80}
Let us note that the non-Fock representation (\ref{qm565}) of the
CCR algebra (\ref{qm540}) in a Hilbert space $L^2_\mathbb
C(Q',\m_K)$ is the Fock representation
\be
\f_K(q)=\f(q)=\lng q,u\rng,\qquad \pi_K(q)=\pi(S^{-1}q) =
-i(\dl^K_q -\frac12\lng q,u\rng), \qquad \dl^K_q =\dl_{S^{-1}q},
\ee
of a CCR algebra $\{\f_K(q),\pi_K(q),I\}$, where
\be
[\f_K(q),\pi_K(q)]=i\lng K^{-1}q|K^{-1}q'\rng I.
\ee
\end{remark}

\section{Free quantum fields}

There are two main algebraic formulation of QFT. In the framework
of the first one, called local QFT, one associates to a certain
class of subsets of a Minkowski space a net of von Neumann, $C^*$-
or $Op^*$-algebras which obey certain axioms
\cite{araki,buch,haag,halv,hor}. Its inductive limit is called
either a global algebra (in the case of von Neumann algebras) or a
quasilocal algebra (for a net of $C^*$-algebras). This
construction is extended to non-Minkowski spaces, e.g., globally
hyperbolic spacetimes \cite{brun,brun2,ruzz}.

In a different formulation of algebraic QFT with reference to the
field-particle dualism, realistic quantum field models are
described by tensor algebras, as a rule.

Let $Q$ be a nuclear space. Let us consider the direct limit
$\wh\otimes Q$ (\ref{x2}) of the vector spaces $\wh\ot^{\leq n} Q$
(\ref{x2a}) where $\wh\ot$ is the topological tensor product with
respect to Grothendieck's topology. As was mentioned above,
provided with the inductive limit topology, the tensor algebra
$\wh\otimes Q$ (\ref{x2}) is a unital nuclear $b^*$-algebra
(Example \ref{gns90}). Therefore, one can apply GNS representation
Theorem \ref{gns36} to it. A state $f$ of this algebra is given by
a tuple $\{f_n\}$ of continuous forms on the tensor algebra $A_Q$
(\ref{gns100}). Its value $f(q^1\cdots q^n)$ are interpreted as
the vacuum expectation of a system of fields $q^1,\ldots,q^n$.

In algebraic QFT, one usually choose by $Q$ the Schwartz space of
functions of rapid decrease.

\begin{remark} \label{spr451} \mar{spr451}
By functions of rapid decrease on an Euclidean space $\mathbb R^n$
are called  complex smooth functions $\psi(x)$ such that the
quantities
\mar{spr453}\beq
\|\psi\|_{k,m}=\op\max_{|\al|\leq k} \op\sup_x(1+x^2)^m|D^\al
\psi(x)| \label{spr453}
\eeq
are finite for all $k,m\in \mathbb N$. Here, we follow the
standard notation
\be
D^\al=\frac{\dr^{|\al|}}{\dr^{\al_1} x^1\cdots\dr^{\al_n}x^n},
\qquad |\al|=\al_1+\cdots +\al_n,
\ee
for an $n$-tuple  of natural numbers $\al=(\al_1,\ldots,\al_n)$.
The functions of rapid decrease constitute a nuclear space
$S(\mathbb R^n)$ with respect to the topology determined by the
seminorms (\ref{spr453}). Its dual is a space $S'(\mathbb R^n)$ of
tempered distributions \cite{bog,gelf64,piet}. The corresponding
contraction form is written as
\be
\lng \psi,h\rng=\op\int \psi(x) h(x) d^nx, \qquad \psi\in
S(\mathbb R^n), \qquad h\in S'(\mathbb R^n).
\ee
A  space $S(\mathbb R^n)$ is provided with a non-degenerate
separately continuous Hermitian form
\be
\lng \psi|\psi'\rng=\int \psi(x)\ol{\psi'(x)}d^nx.
\ee
The completion of $S(\mathbb R^n)$ with respect to this form is a
space $L^2_C(\mathbb R^n)$ of square integrable complex functions
on $\mathbb R^n$. We have a rigged Hilbert space
\be
S(\mathbb R^n)\subset L^2_C(\mathbb R^n) \subset S'(\mathbb R^n).
\ee
Let $\mathbb R_n$ denote the dual of $\mathbb R^n$  coordinated by
$(p_\la)$. The Fourier transform
\mar{spr460,1}\ben
&& \psi^F(p)=\int \psi(x)e^{ipx}d^nx, \qquad px=p_\la x^\la,
\label{spr460}\\
&& \psi(x)=\int \psi^F(p)e^{-ipx}d_np, \qquad d_np=(2\pi)^{-n}d^np,
\label{spr461}
\een
defines an isomorphism between the spaces $S(\mathbb R^n)$ and
$S(\mathbb R_n)$. The Fourier transform of tempered distributions
is given by the condition
\be
\int h(x)\psi(x)d^nx=\int h^F(p)\psi^F(-p)d_np,
\ee
and it is written in the form (\ref{spr460}) -- (\ref{spr461}). It
provides an isomorphism between the spaces of tempered
distributions $S'(\mathbb R^n)$ and $S'(\mathbb R_n)$.
\end{remark}

For the sake of simplicity, we here restrict our consideration to
real scalar fields and choose by $Q$ the real subspace $RS^4$ of
the Schwartz space $S(\mathbb R^4)$ of smooth complex functions of
rapid decrease on $\mathbb R^4$ \cite{ccr}. Since a subset
$\op\ot^nS(\mathbb R^k)$ is dense in $S(\mathbb R^{kn})$, we
henceforth identify the tensor algebra $A_{RS^4}$ (\ref{gns100})
of a nuclear space $RS^4$ with the algebra
\mar{qm801}\beq
A=\mathbb R\oplus RS^4\oplus RS^8\oplus\cdots, \label{qm801}
\eeq
called the Borchers algebra \cite{borch,hor,ccr}. Any state $f$ of
this algebra is represented by a collection of tempered
distributions $\{W_k\in S'(\mathbb R^{4k})\}$ by the formula
\be
f(\psi_k)= \int
W_k(x_1,\ldots,x_k)\psi_k(x_1,\ldots,x_k)d^4x_1\cdots d^4x_k,
\qquad \psi_k\in RS^{4k}.
\ee

For instance, the states of scalar quantum fields in a Minkowski
space $\mathbb R^4$ are described by the Wightman functions
$W_n\subset S'(\mathbb R^{4k})$ in the Minkowski space which obey
the Garding--Wightman axioms of axiomatic QFT \cite{bog,wigh,zin}.
Let us mention the Poincar\'e covariance axiom, the condition of
the existence and uniqueness of a vacuum $\thh_0$, and the
spectrum condition. They imply that: (i) a carrier Hilbert space
$E_W$ of Wightman quantum fields admits a unitary representation
of a Poinar\'e group, (ii) a space $E_W$ contains a unique (up to
scalar multiplications) vector $\psi_0$, called the vacuum vector,
invariant under Poincar\'e transformations, (iii) the spectrum of
an energy-momentum operator lies in the closed positive light
cone. In particular, the Poincar\'e covariance condition implies
the translation invariance and the Lorentz covariance of Wightman
functions. Due to the translation invariance of Wightman functions
$W_k$, there exist tempered distributions $w_k\in S'(\mathbb
R^{4k-4})$, also called Wightman functions, such that
\mar{1278}\beq
W_k(x_1,\ldots,x_k)= w_k(x_1-x_2,\ldots,x_{k-1}-x_k). \label{1278}
\eeq
Note that Lorentz covariant tempered distributions for one
argument only are well described \cite{bog,zin07}. In order to
modify Wightman's theory, one studies different classes of
distributions which Wightman functions belong to
\cite{solov,thom}.

Let us here focus on states of the Borchers algebra $A$
(\ref{qm801}) which describe free quantum scalar fields of mass
$m$ \cite{ccr,axiom}.

Let us provide a nuclear space $RS^4$ with a positive complex
bilinear form
\mar{qm802,x5}\ben
&& (\psi|\psi')_m=\frac{2}{i}\int \psi(x)D^-_m(x-y)\psi'(y)d^4xd^4y=\int
\psi^F(-\om,-\op p^\to)\psi'^F(\om,\op p^\to)\frac{d_3p}{\om},
\label{qm802}\\
&& D^-_m(x)=i(2\pi)^{-3}\int \exp[-ipx]\thh(p_0)\dl(p^2-m^2)d^4p,
\label{x5}\\
&& \om=({\op p^\to}^2 +m^2)^{1/2}, \nonumber
\een
where $p^2$ is the Minkowski square, $\thh(p_0)$ is the Heaviside
function, and $D^-_m(x)$ is the negative frequency part of the
Pauli--Jordan function
\mar{gns120}\beq
 D_m(x)=i(2\pi)^{-3}\int
\exp[-ipx](\thh(p_0)-\thh(-p_0))\dl(p^2-m^2)d^4p. \label{gns120}
\eeq
Since a function $\psi(x)$ is real, its Fourier transform
(\ref{spr460}) satisfies an equality $\psi^F(p)=\ol\psi^F(-p)$.

The bilinear form (\ref{qm802}) is degenerate because the
Pauli--Jordan function $D^-_m(x)$ obeys a mass shell equation
\be
(\Box +m^2)D^-_m(x)=0.
\ee
It takes non-zero values only at elements $\psi^F\in RS_4$ which
are not zero on a mass shell $p^2=m^2$. Therefore, let us consider
the quotient space
\mar{gns121}\beq
\g_m:RS^4\to RS^4/J, \label{gns121}
\eeq
where
\be
J=\{\psi\in RS^4\, :\, (\psi|\psi)_m=0\}
\ee
is the kernel of the square form (\ref{qm802}). The map $\g_m$
(\ref{gns121}) assigns  the couple of functions $(\psi^F(\om,\op
p^\to),\psi^F(-\om,\op p^\to))$ to each element $\psi\in RS^4$
with a Fourier transform $\psi^F(p_0,\op p^\to)\in RS_4$. Let us
equip the factor space $RS^4/J$ with a real bilinear form
\mar{qm803}\ben
&& (\g\psi|\g\psi')_L=\mathrm{Re}(\psi|\psi')= \label{qm803}\\
&& \qquad \frac12\int [\psi^F(-\om,-\op p^\to)\psi'^F(\om,\op p^\to)
+\psi^F(\om,-\op p^\to) \psi'^F(-\om,\op p^\to)]\frac{d_3\op
p^\to}{\om}. \nonumber
\een
Then it is decomposed into a direct sum $RS^4/J=L^+\oplus L^-$ of
subspaces
\be
L^\pm=\{\psi^F_\pm(\om,\op p^\to)=\frac12(\psi^F(\om,\op p^\to)
\pm\psi^F(-\om,\op p^\to))\},
\ee
which are mutually orthogonal with respect to the bilinear form
(\ref{qm803}).

There exist continuous isometric morphisms
\be
\g_+:\psi^F_+(\om,\op p^\to) \to q^F(\op p^\to)=\om^{-1/2}\psi^F_+
(\om,\op p^\to),\qquad \g_-:\psi^F_-(\om,\op p^\to) \to q^F(\op
p^\to)=-i\om^{-1/2}\psi^F_- (\om,\op p^\to)
\ee
of spaces $L^+$ and $L^-$ to a nuclear space $RS^3$ endowed with a
non-degenerate separately continuous Hermitian form
\mar{qm807}\beq
\lng q|q'\rng=\int q^F(-\op p^\to)q'^F(\op p^\to)d_3p.
\label{qm807}
\eeq
It should be emphasized that the images $\g_+(L^+)$ and
$\g_-(L^-)$ in $RS^3$ are not orthogonal with respect to the
scalar form (\ref{qm807}). Combining $\g_m$ (\ref{gns121}) and
$\g_\pm$, we obtain continuous morphisms $\tau_\pm: RS^4\to RS^3$
given by the expressions
\be
&& \tau_+(\psi)=\g_+(\g_m\psi)_+=\frac{1}{2\om^{1/2}}\int[\psi^F(\om,\op
p^\to) + \psi^F(-\om,\op p^\to)]\exp[-i\op p^\to\op x^\to]d_3p,\\
&& \tau_-(\psi)=\g_-(\g_m\psi)_-=\frac{1}{2i\om^{1/2}}\int[\psi^F(\om,\op
p^\to) - \psi^F(-\om,\op p^\to)]\exp[-i\op p^\to\op x^\to]d_3p.
\ee

Now let us consider a Heisenberg CCR algebra
\mar{1110}\beq
\ccG(RS^3)=\{(\f(q),\pi(q)), I,\,q\in RS^3\} \label{1110}
\eeq
modelled over a nuclear space $RS^3$, which is equipped with the
Hermitian form (\ref{qm807}) (Section 8). Using the morphisms
$\tau_\pm$, let us define a map
\mar{1111}\beq
\G_m: RS^4\ni \psi \to \f(\tau_+(\psi)) -\pi(\tau_-(\psi))\in
\ccG(RS^3). \label{1111}
\eeq
With this map, one can think of (\ref{1110}) as being the algebra
of instantaneous CCR of scalar fields on a Minkowski space
$\mathbb R^4$. Owing to the map (\ref{1111}), any representation
of the Heisenberg CCR algebra $\ccG(RS^3)$ (\ref{1110})  defined
by a translationally quasi-invariant measure $\m$ on $S'(\mathbb
R^3)$ induces a state
\mar{qm805}\beq
f_m(\psi^1\cdots\psi^n)=\lng\f(\tau_+(\psi^1))
+\pi(\tau_-(\psi^1))] \cdots [\f(\tau_+(\psi^n))
+\pi(\tau_-(\psi^n))]\rng \label{qm805}
\eeq
of the Borchers algebra $A$ (\ref{qm801}). Furthermore, one can
justify that the corresponding distributions $W_n$ fulfil the mass
shell equation and that the following commutation relation holds:
\be
W_2(x,y) -W_2(y,x)=-iD_m(x-y),
\ee
where $D_m(x-y)$ is the Pauli--Jordan function (\ref{gns120}).
Thus, the state (\ref{qm805}) of the Borchers algebra $A$
(\ref{qm801}) describes quantum scalar fields of mass $m$.

For instance, let us consider the Fock representation $Z_{\mathrm
F}(q)$ (\ref{qm567}) of the Heisenberg CCR algebra $\ccG(RS^3)$
(\ref{1110}). Using the formulae in Remark \ref{ccr10} where a
form $\lng q|q'\rng$ is given by the expression (\ref{qm807}), one
observes that the state $f_m$ (\ref{qm805}) satisfies the Wick
theorem relations
\mar{1263}\beq
f_m(\psi^1\cdots\psi^n)=\op\sum_{(i_1\ldots i_n)}
f_2(\psi^{i_1}\psi^{i_2})\cdots f_2(\psi^{i_{n-1}}\psi^{i_n}),
\label{1263}
\eeq
where a state $f_2$ is given by the Wightman function
\mar{ccr21}\beq
W_2(x,y)=\frac{1}{i}D^-_m(x-y). \label{ccr21}
\eeq
Thus, the state $f_m$ (\ref{1263}) describes free quantum scalar
fields of mass $m$.

Similarly, one can obtain states of the Borchers algebra $A$
(\ref{qm801}) generated by non-Fock representations (\ref{qm563})
of the instantaneous CCR algebra $\ccG(RS^3)$, e.g., if
$K^{-1}=c\bb\neq 2^{-1/2}\bb$. These states fail to be defined by
Wightman functions.

It should be emphasized that, given a different mass $m'$, we have
a different map $\G_{m'}$ (\ref{1111}) of the Borchers algebra $A$
(\ref{qm801}) to the Heisenberg CCR algebra $\ccG(RS^3)$
(\ref{1110}). Accordingly, the Fock representation $Z_{\mathrm
F}(q)$ (\ref{qm567}) of the Heisenberg CCR algebra $\ccG(RS^3)$
(\ref{1110}) yields the state $f_{m'}$ (\ref{1263}) where a state
$f_2$ is given by the Wightman function
\mar{ccr21'}\beq
W_2(x,y)=\frac{1}{i}D^-_{m'}(x-y). \label{ccr21'}
\eeq
If $m\neq m'$, the states $f_m$ and $f_{m'}$ (\ref{1263}) the
Borchers algebra $A$ (\ref{qm801}) are inequivalent because its
representations $\G_m$ and $\G_{m'}$ (\ref{1111}) possess
different kernels.

\section{Euclidean QFT}

In QFT, interacting quantum fields created at some instant and
annihilated at another one are described by complete Green
functions.  They are given by the chronological functionals
\mar{1260,030}\ben
&&f^c(\psi_k)= \int
W_k^c(x_1,\ldots,x_k)\psi_k(x_1,\ldots,x_k)d^4x_1\cdots d^4x_k,
\qquad \psi_k\in RS^{4k}, \label{1260}\\
&& W^c_k(x_1,\ldots,x_k)= \op\sum_{(i_1\ldots
i_k)}\thh(x^0_{i_1}-x^0_{i_2})
\cdots\thh(x^0_{i_{k-1}}-x^0_{i_n})W_k(x_1,\ldots,x_k),
\label{030}
\een
where $W_k\in S'(\mathbb R^{4k})$ are tempered distributions, and
the sum runs through all permutations $(i_1\ldots i_k)$ of the
tuple of numbers $1,\ldots,k$ \cite{bog2}.

A problem is that the functionals $W^c_k$ (\ref{030}) need not be
tempered distributions. For instance, $W^c_1\in S'(\mathbb R)$ iff
$W_1\in S'(\mathbb R_\infty)$, where $\mathbb R_\infty$ is the
compactification of $\mathbb R$ by means of a point
$\{+\infty\}=\{-\infty\}$ \cite{bog}. Moreover, chronological
forms are not positive. Therefore, they do not provide states of
the Borchers algebra $A$ (\ref{qm801}) in general.

At the same time, the chronological forms (\ref{030}) come from
the Wick rotation of Euclidean states of the Borchers algebra
\cite{sard91,ccr,axiom}. As is well known, the Wick rotation
enables one to compute the Feynman diagrams of perturbed QFT by
means of Euclidean propagators. Let us suppose that it is not a
technical trick, but quantum fields in an interaction zone are
really Euclidean. It should be emphasized that the above mentioned
Euclidean states differ from the well-known Schwinger functions in
the Osterwalder--Shraded Euclidean QFT
\cite{bog,ost,ccr,axiom,schlin,zin}. The Schwinger functions are
the Laplace transform of Wightman functions, but not chronological
forms.

Since the chronological forms (\ref{030}) are symmetric, the
Euclidean states of a Borchers algebra $A$ can be obtained as
states of the corresponding commutative tensor algebra $B_{RS^4}$
(\ref{gns101}) \cite{sard91,ccr,axiom}. Provided with the direct
sum topology, $B_{RS^4}$ becomes a topological involutive algebra.
It coincides with the enveloping algebra of the Lie algebra of an
additive Lie group $T(RS^4)$ of translations in $RS^4$. Therefore,
one can obtain states of an algebra $B_{RS^4}$ by constructing
cyclic strongly continuous unitary representations of a nuclear
Abelian group $T(RS^4)$ (Section 7). Such a representation is
characterized by a continuous positive-definite generating
function $Z$ on $SR^4$. By virtue of Bochner Theorem \ref{spr525},
this function is the Fourier transform
\mar{031}\beq
Z(\phi)=\int \exp[i \langle\phi,y\rangle]d\mu(y) \label{031}
\eeq
of a positive measure $\mu$ of total mass 1 on the dual $(RS^4)'$
of $RS^4$. Then the above mentioned representation of $T(RS^4)$
can be given by operators
\mar{qq2}\beq
\wh\phi \rho(y)=\exp[i\langle \f,y\rangle]\rho(y)  \label{qq2}
\eeq
in a Hilbert space $L_{\mathbb C}^2((RS^4)',\m)$ of the
equivalence classes of square $\m$-integrable complex functions
$\rho(y)$ on $(RS^4)'$. A cyclic vector $\thh$ of this
representation is the $\m$-equivalence class $\thh\ap_\m 1$ of the
constant function $\rho(y)=1$.

Conversely, every positive measure $\m$ of total mass 1 on the
dual $(RS^4)'$ of $RS^4$ defines the cyclic strongly continuous
unitary representation (\ref{qq2}) of a group $T(RS^4)$. Herewith,
distinct generating functions $Z$ and $Z'$ characterize equivalent
representations $T_Z$ and $T_{Z'}$ (\ref{qq2}) of $T(RS^4)$ in the
Hilbert spaces $L^2_{\mathbb C}((RS^4)',\m)$ and $L^2_{\mathbb
C}((RS^4)',\m')$  iff they are the Fourier transform of equivalent
measures on $(RS^4)'$ (Theorem \ref{gns77}).

If a generating function $Z$ obeys the analiticity condition in
Remark \ref{gns109},  a state $f$ of $B_{RS^4}$ is given by the
expression
\mar{w0}\beq
f_k(\phi_1\cdots\phi_k)=i^{-k}\frac{\dr}{\dr \al^1}
\cdots\frac{\dr}{\dr\alpha^k}Z(\alpha^i\phi_i)|_{\alpha^i=0}=\int\langle
\phi_1,y\rangle\cdots\langle \phi_k,y \rangle d\mu(y). \label{w0}
\eeq
Then one can think of $Z$ (\ref{031}) as being a generating
functional of complete Euclidean Green functions $f_k$ (\ref{w0}).

For instance, free Euclidean fields are described by Gaussian
states. Their generating functions are of the form
\mar{a10}\beq
Z(\phi)=\exp(-\frac12M(\f,\f)), \label{a10}
\eeq
where $M(\f,\f)$ is a positive-definite Hermitian bilinear form on
$RS^4$ continuous in each variable. They are the Fourier transform
of some Gaussian measure on $(RS^4)'$. In this case, the forms
$f_k$ (\ref{w0}) obey the Wick relations (\ref{ccr20}) where
\be
f_1=0, \qquad f_2(\f,\f')=M(\f,\f').
\ee
Furthermore, a covariance form $M$ on $RS^4$ is uniquely
determined as
\mar{1270}\beq
M(\f_1,\f_2)=\int W_2(x_1,x_2)\f_1(x_1)\f_2(x_2) \label{1270}
d^nx_1d^nx_2.
\eeq
by a tempered distribution $W_2\in S'(\mathbb R^8)$.

In particular, let a tempered distribution $M(\f,\f')$ in the
expression (\ref{1270}) be Green's function of some positive
elliptic differential operator $\cE$, i.e.,
\be
\cE_{x_1}W_2(x_1,x_2)=\dl(x_1-x_2),
\ee
where $\dl$ is Dirac's $\dl$-function. Then the distribution $W_2$
reads
\mar{1272}\beq
W_2(x_1,x_2)=w(x_1-x_2), \label{1272}
\eeq
and we obtain a form
\be
&& f_2(\f_1\f_2)=M(\f_1,\f_2)=\int w(x_1-x_2)\f_1(x_1)\f_2(x_2)
d^4x_1 d^4x_2=\\
&& \qquad \int w(x)\f_1(x_1)\f_2(x_1-x)d^4x d^4x_1=\int w(x)\vf(x)d^4x=
\int w^F(p)\vf^F(-p) d_4p, \\
&& x=x_1-x_2, \qquad \vf(x)=\int \f_1(x_1)\f_2(x_1-x)d^4x_1.
\ee
For instance, if
\be
\cE_{x_1} =-\Delta_{x_1}+m^2,
\ee
where $\Delta$ is the Laplacian, then
\mar{1271}\beq
w(x_1-x_2)=\int\frac{\exp(-iq(x_1-x_2))}{p^2+m^2}d_4p,
\label{1271}
\eeq
where $p^2$ is the Euclidean square, is the propagator of a
massive Euclidean scalar field. Note that, restricted to the
domain $(x^0_1-x^0_2)<0$, it coincides with the Schwinger function
$s_2(x_1-x_2)$.

Let $w^F$ be the Fourier transform  of the distribution $w$
(\ref{1272}). Then its Wick rotation is the functional
\be
\wt w(x)=\thh(x)\op\int_{\ol Q_+}w^F(p)\exp(-px)d_4p +
\thh(-x)\op\int_{\ol Q_-}w^F(p)\exp(-px)d_4p
\ee
on scalar fields on a Minkowski space \cite{ccr,axiom}. For
instance, let $w(x)$ be the Euclidean propagator (\ref{1271}) of a
massive scalar field. Then due to the analyticity of
\be
w^F(p)=(p^2+m^2)^{-1}
\ee
on the domain $\im p\cdot \re p>0$, one can show that $\wt
w(x)=-iD^c(x)$ where $D^c(x)$ is familiar causal Green's function.

A problem is that a measure $\m$ in the generating function $Z$
(\ref{031}) fails to be written in an explicit form.

At the same time, a measure $\mu$ on $(RS^4)'$ is uniquely defined
by a set of measures $\mu_N$ on the finite-dimensional spaces
$\mathbb R_N=(RS^4)'/E$ where $E\subset (RS^4)'$ denotes a
subspace of forms on $RS^4$ which are equal to zero at some
finite-dimensional subspace $\mathbb R^N\subset RS^4$. The
measures $\mu_N$ are images of $\mu$ under the canonical mapping
$(RS^4)'\to\mathbb R_N$. For instance, every vacuum expectation
$f_n(\f_1\cdots\f_n)$ (\ref{w0}) admits the representation by an
integral
\mar{S8}\beq
f_k(\f_1\cdots\f_k)=\int\langle w,\f_1\rangle\cdots\langle w,\f_k
\rangle d\mu_N(w) \label{S8}
\eeq
for any finite-dimensional subspace $\mathbb R^N$ which contains
$\f_1,\ldots,\f_k$. In particular, one can replace the generating
function (\ref{031}) by the generating function
\be
Z_N(\la_ie^i)=\int\exp(i\la_iw^i)\mu_N(w^i)
\ee
on $\mathbb R^N$ where $\{e^i\}$ is a basis for $\mathbb R^N$ and
$\{w^i\}$ are coordinates with respect to the dual basis for
$\mathbb R_N$. If $f$ is a Gaussian state, we have the familiar
expression (\ref{qm610}):
\mar{S9}\beq
d\mu_N=(2\pi\det[M^{ij}])^{-N/2}\exp[-\frac12(M^{-1})_{ij}w^iw^j]
d^Nw, \label{S9}
\eeq
where $M^{ij}=M(e^i,e^j)$ is a non-degenerate covariance matrix.

The representation (\ref{S8}) however is not unique, and the
measure $\mu_N$ depends on the specification of a
finite-dimensional subspace $\mathbb R^N$ of $RS^4$.

\begin{remark} \label{gns133} \mar{gns133}
Note that an expression
\mar{gns134}\beq
\exp(-\int L(\f)d^4x)\op\prod_x[d\f(x)] \label{gns134}
\eeq
conventionally used in perturbed QFT is a symbolic functional
integral, but not a true measure \cite{glimm,john,schmit}. In
particular, it is translationally invariant, i.e.,
\be
[d\f(x)]=[d(\f(x)+ \mathrm{const}.)],
\ee
whereas there is no (translationally invariant) Lebesgue measure
on infinite-dimensional vector space as a rule (see \cite{versh}
for an example of such a measure).
\end{remark}

\section{Higgs vacuum}

In contrast to the formal expression (\ref{gns134}) of perturbed
QFT, the true integral representation (\ref{031}) of generating
functionals enables us to handle non-Gaussian and inequivalent
Gaussian representations of the commutative tensor algebra $B_Q$
(\ref{gns101}) of Euclidean scalar fields. Here, we describe one
of such a representation as a model of  a Higgs vacuum
\cite{NC91,sard91}.

In Standard Model of particle physics, Higgs vacuum is represented
as a constant background part $\si_0$ of a Higgs scalar field
$\si$ \cite{nov,SM}. In algebraic QFT, one can describe free Higgs
field similar to matter fields by a commutative tensor algebra
$B_\Si$ where $\Si$ is a real nuclear space.

Let $Z(\wh\si)$ be the generating function (\ref{a10}) of a
Gaussian state of $B_\Si$, and let $\m$ be the corresponding
Gaussian measure on the dual $\Si'$ of $\Si$. In contrast with a
finite-dimensional case, Gaussian measures on infinite-dimensional
spaces  fail to be quasi-invariant under translations as a rule.
The introduction of a Higgs vacuum means a translation
\be
\g:\Si'\ni \si\to \si+\si_0\in\Si', \qquad \si_0\in\Si'
\ee
in a space $\Si'$ such that an original Gaussian measure $\m(\si)$
is replaced by a measure $\mu_{\si_0}(\si)=\m(\si+\si_0)$
possessing the Fourier transform
\be
Z_{\si_0}(\wh\si)=\exp(i\lng\wh\si,\si_0\rng Z(\wh\si)
\ee
The measures $\mu$ and $\mu_{\si_0}$ are equivalent iff a vector
$\si_0\in\Si'$ belongs to the canonical image of $\Si$ in $\Si'$
with respect to the scalar form $\langle|\rangle=M(,)$ (Remark
\ref{gns79}). Then the measures $\mu$ and $\mu_{\si_0}$ define the
equivalent states (\ref{w0}) of an algebra $B_\Si$. This
equivalence is performed by the unitary operator
\be
\rho(\si)\to \exp(-\langle \si|\si_0\rangle)\rho(\si+\si_0),
\qquad \rho(\si)\in L^2(\Si',\mu).
\ee
This operator fails to be constructed if $\si_0\in
\Si'\setminus\Si$, and the measures $\m$ and $\m_{\si_0}$ are
inequivalent.

Following the terminology of Standard Model, let us call $\si_0\in
\Si'\setminus\Si$ the Higgs vacuum field and $\si\in\Si\subset
\Si'$ the Higgs boson fields. Then we can say the following.

(i) A Higgs vacuum field $\si_0$ and Higgs boson fields $\si$
belong to different classes of functions. For instance, one
usually chooses a constant Higgs vacuum field $\si_0$ in QFT. If
$\Si=RS^4$, a constant function is an element of $(SR^4)'\setminus
SR^4$. At the same time, since $\Si$ is dense in $\Si'$, the
elements $\si_0$ and $\si$ can be arbitrarily closed to each other
with respect to a topology in $\Si'$. However, a covariance form
$M$ and some other functions being well defined at points $\si$
become singular at points $\si_0$.

(ii) One can think of a Higgs vacuum field $\si_0$ as being the
classical one in the sense that $\si_0\in \Si'\setminus\Si$,
whereas Higgs boson fields $\si\in \Si\subset \Si'$ are quantized
fields because the possess quantum partners $\wh\si\in\Si\subset
B_\Si$.

(iii) States of Higgs boson fields in the presence of in
equivalent Higgs vacua are inequivalent.

(iv) Let the generating function $Z$ and $Z_{\si_0}$ be restricted
to some finite-dimensional subspace $\mathbb R^N\subset\Si$. Then
there exists an element $\si_{0N}\in\mathbb R_N$ such that
$\lng\wh\si,\si_0\rng=\lng\wh\si|\si_{0N}\rng$ for any
$\wh\si\in\mathbb R^N$. As a consequence, a generating function
$Z_{\si_0}$ takes a form
\be
Z_{\si_0N}(\la_i\wh\si^i)=
(2\pi\det[M^{ij}])^{-N/2}\int\exp(\la_i\si^i)
\exp[-\frac12(M^{-1})_{ij}(\si-\si_{0N})^i(\si -\si_{0N})^j]
d^N\si,
\ee
where $M^{ij}$ is the covariance matrix of $Z_N$, $\si^i$ denote
coordinates in $\mathbb R_N$, and $\si^i_{0N}$ are coordinates of
a vector $\si_{0N}$ in $\mathbb R_N$. It follows that, if a number
of quantum Higgs boson fields $\\wh\si$ is finite, their
interaction with a classical Higgs vacuum field $\si_0$ reduced to
an interaction with some quantum fields $\wh\si_{0N}$ by
perturbation theory.

In Standard Model, a Higgs vacuum is responsible for spontaneous
symmetry breaking. Let us study this phenomenon (Sections 12 and
13).

\section{Automorphisms of quantum systems}

In order to say something, we mainly restrict our analysis to
automorphisms of $C^*$ algebras.

We consider uniformly and strongly continuous one-parameter groups
of automorphisms of $C^*$-algebras. Let us note that any weakly
continuous one-parameter group of endomorphism of a $C^*$-algebra
also is also strongly continuous and their weak and strong
generators coincide with each other \cite{brat,book05}.

\begin{remark} \label{w416} \mar{w416}
There is the following relation between morphisms of a
$C^*$-algebra $A$ and a set $F(A)$ of its states which is a convex
subset of the dual $A'$ of $A$ (Theorem \ref{gns17}).  A linear
morphism $\g$ of a $C^*$-algebra $A$ as a vector  space is called
the Jordan morphism if relations
\be
\g(ab+ba)=\g(a)\g(b)+\g(b)\g(a), \qquad \f(a^*)=\g(a)^*, \qquad
a,b\in A.
\ee
hold. One can show the following \cite{emch}. Let $\g$ be a Jordan
automorphism of a unital $C^*$-algebra $A$. It yields the dual
weakly$^*$ continuous affine bijection $\g'$ of $F(A)$ onto
itself, i.e.,
\be
\g'(\la f+(1-\la)f')=\la\g'(f) +(1-\la)\g'(f'), \qquad f,f',\in
F(A), \qquad \la\in [0,1].
\ee
Conversely, any such a map of $F(A)$ is the dual to some Jordan
automorphism of $A$. However, if $G$ is a connected group of
weakly continuous Jordan automorphisms of a unital $C^*$-algebra
$A$ is a weakly (and, consequently, strongly) continuous group of
automorphisms of $A$.
\end{remark}

A topological group $G$ is called the strongly (resp. uniformly)
continuous group of automorphisms of a $C^*$-algebra $A$ if there
is its continuous monomorphism to the group $\Is(A)$ of
automorphisms of $A$ provided with the strong (resp. normed)
operator topology, and if its action on $A$ is separately
continuous.

One usually deals with strongly continuous groups of automorphisms
because of the following reason. Let $G(\mathbb R)$ be
one-parameter group of automorphisms of a $C^*$-algebra $A$. This
group  is uniformly (resp. strongly) continuous if it is a range
of a continuous map of $\mathbb R$ to the group $\Is(A)$ of
automorphisms of $A$ which is provided with the normed (resp.
strong) operator topology and whose action on $A$ is separately
continuous. A problem is that, if a curve $G(\mathbb R)$ in
$\Is(A)$ is continuous with respect to the normed operator
topology, then a curve $G(\mathbb R)(a)$ for any $a\in A$ is
continuous in a $C^*$-algebra $A$, but the converse is not true.
At the same time, a curve $G(\mathbb R)$ is continuous in $\Is(A)$
with respect to the strong operator topology iff a curve
$G(\mathbb R)(a)$ for any $a\in A$  is continuous in $A$. By this
reason, strongly continuous one-parameter groups of automorphisms
of $C^*$-algebras are most interesting. However, the infinitesimal
generator of such a group fails to be bounded, unless this group
is uniformly continuous.

\begin{remark} \label{w431} \mar{w431}
If $G(\mathbb R)$ is a strongly continuous one-parameter group of
automorphisms of a $C^*$-algebra $A$, there are the following
continuous maps \cite{brat}:

$\bullet$ $\mathbb R\ni t\to \lng G_t(a), f\rng \in \mathbb C$ is
continuous for all $a\in A$ and $f\in A'$;

$\bullet$ $A\ni a\to G_t(a) \in A$ is continuous for all
$t\in\mathbb R$;

$\bullet$ $\mathbb R\ni t\to G_t(a)\in A$ is continuous for all
$a\in A$.
\end{remark}

Without a loss of generality, we further assume that $A$ is a
unital $C^*$-algebra. Infinitesimal generators of one-parameter
groups of automorphisms of $A$ are derivations of $A$.

By a derivation $\dl$ of $A$ throughout is meant an (unbounded)
symmetric derivation of $A$ (i.e., $\dl(a^*)=\dl(a)^*$, $a\in A$)
which is defined on a dense involutive subalgebra $D(\dl)$ of $A$.
If a derivation $\dl$ on $D(\dl)$ is bounded, it is extended to a
bounded derivation everywhere on $A$. Conversely, every derivation
defined everywhere on a $C^*$-algebra is bounded \cite{dixm}. For
instance, any inner derivation $\dl(a)=i[b,a]$, where $b$ is a
Hermitian element of $A$, is bounded. A space of derivations of
$A$ is provided with the involution $u\to u^*$ defined by the
equality
\mar{w160}\beq
\dl^*(a)= -\dl(a^*)^*, \qquad a\in\cA. \label{w160}
\eeq

There is the following relation between bounded derivations of a
$C^*$-algebra $A$ and uniformly continuous one-parameter groups of
automorphisms of $A$ \cite{brat}.

\begin{theorem} \label{spr591} \mar{spr591}
Let $\dl$ be a derivation of a $C^*$-algebra $A$. The following assertions
are equivalent:

$\bullet$ $\dl$ is defined everywhere and, consequently, is bounded;

$\bullet$ $\dl$ is the infinitesimal generator of a uniformly
continuous one-parameter group $G(\mathbb R)$ of automorphisms of
a $C^*$-algebra $A$.

\noindent Furthermore, for any representation $\pi$ of $A$ in a
Hilbert space $E$, there exists a bounded self-adjoint operator
$\cH\in \pi(A)''$ in $E$ and the unitary uniformly continuous
representation
\mar{spr579'}\beq
\pi(G_t)=\exp(-it\cH), \qquad t\in\mathbb R, \label{spr579'}
\eeq
of the group $G(\mathbb R)$ in $E$ such that
\mar{spr592,3}\ben
&& \pi(\dl(a))=-i[\cH,\pi(a)], \qquad  a\in A, \label{spr592}\\
&& \pi(G_t(a))=e^{-it\cH}\pi(a)e^{it\cH}, \qquad  t\in\mathbb
R.\label{spr593}
\een
\end{theorem}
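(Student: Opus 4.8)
The plan is to prove the equivalence of the two assertions first and then to construct the implementing self-adjoint operator and the unitary group. For the equivalence I would begin with the forward direction. If $\dl$ is defined on all of $A$ it is automatically bounded, since a derivation everywhere defined on a $C^*$-algebra is bounded, as recalled above \cite{dixm}. Then the exponential series $G_t=\exp(t\dl)=\sum_{n\geq 0}\frac{t^n}{n!}\dl^n$ converges in operator norm in $B(A)$ for every $t\in\mathbb R$, with $\|G_t-\id\|\leq e^{|t|\|\dl\|}-1$, so $\{G_t\}$ is a uniformly continuous one-parameter group whose infinitesimal generator is $\dl$. That each $G_t$ is a $*$-automorphism is then a routine check: iterating the Leibniz rule gives $\dl^n(ab)=\sum_{k=0}^n\binom{n}{k}\dl^k(a)\dl^{n-k}(b)$, whence $G_t(ab)=G_t(a)G_t(b)$; symmetry of $\dl$ gives $\dl^n(a^*)=(\dl^n(a))^*$ by induction, whence $G_t(a^*)=G_t(a)^*$; and $G_{-t}$ is a two-sided inverse of $G_t$. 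For the converse, suppose $\dl$ is the generator of a uniformly continuous group, so $t\mapsto G_t$ is norm continuous into $B(A)$. Here I would run the standard semigroup argument: $R_t:=\frac1t\int_0^t G_s\,ds\to\id$ in norm as $t\to 0^+$, hence $R_t$ is invertible in $B(A)$ for small $t$; from $(G_h-\id)R_t=\frac1h\big(\int_t^{t+h}-\int_0^h\big)G_s\,ds$ one reads off that $h^{-1}(G_h-\id)$ converges in norm as $h\to 0$ to the bounded operator $t^{-1}(G_t-\id)R_t^{-1}$, and this limit is the generator $\dl$; thus $\dl$ is bounded and everywhere defined.

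For the ``furthermore'' part, fix a representation $\pi$ of $A$ in $E$ and put $M:=\pi(A)''$. First I would pass the bounded derivation $\dl$ to $M$: a bounded derivation of a $C^*$-algebra induces, in any representation, a $\sigma$-weakly continuous derivation $\wt\dl$ of the enveloping von Neumann algebra with $\wt\dl\circ\pi=\pi\circ\dl$. Then I would invoke the Kadison--Sakai theorem that every derivation of a von Neumann algebra is inner \cite{brat}: this produces a self-adjoint element $\cH\in M$ with $\wt\dl(b)=-i[\cH,b]$ for all $b\in M$, and restriction to $b=\pi(a)$ gives (\ref{spr592}). I expect this to be the main obstacle: it rests on the nontrivial Kadison--Sakai innerness theorem, and it is precisely the passage to the weak closure that forces $\cH\in\pi(A)''$ rather than merely $\cH\in B(E)$.

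Finally I would assemble the unitary group. Since $\cH=\cH^*\in M$ is bounded, $\pi(G_t):=\exp(-it\cH)$ converges in operator norm, is a uniformly continuous one-parameter group of unitaries, and lies in $M=\pi(A)''$ by the functional calculus, which is (\ref{spr579'}). To obtain (\ref{spr593}), observe that both $t\mapsto\big(b\mapsto e^{-it\cH}be^{it\cH}\big)$ and $t\mapsto\exp(t\wt\dl)$ are norm-continuous one-parameter groups of maps on $M$ that solve $\frac{d}{dt}\Phi_t=\wt\dl\circ\Phi_t$ with $\Phi_0=\id$ --- for the first because $\frac{d}{dt}(e^{-it\cH}be^{it\cH})=-i[\cH,e^{-it\cH}be^{it\cH}]=\wt\dl(e^{-it\cH}be^{it\cH})$ --- so by uniqueness of solutions they coincide. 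Composing with $\pi$ and using $\exp(t\wt\dl)\circ\pi=\pi\circ\exp(t\dl)=\pi\circ G_t$ gives $\pi(G_t(a))=e^{-it\cH}\pi(a)e^{it\cH}$, and differentiating at $t=0$ re-derives (\ref{spr592}) as a consistency check. That $t\mapsto\pi(G_t)$ is then a uniformly continuous unitary representation of $G(\mathbb R)$ in $E$ is immediate, which completes the argument.
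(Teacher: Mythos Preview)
The paper does not supply a proof of this theorem; it is stated with a reference to Bratteli--Robinson \cite{brat}. Your argument is correct and is precisely the standard proof one finds in that source: the exponential-series construction for the forward implication, the averaged-resolvent trick $R_t=\tfrac{1}{t}\int_0^t G_s\,ds$ for the converse, and the Kadison--Sakai innerness theorem applied to the normal extension of $\dl$ to $\pi(A)''$ for the implementing operator $\cH$.

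Two small remarks. First, there is a harmless slip in your identity for $(G_h-\id)R_t$: the prefactor coming from $R_t$ is $\tfrac{1}{t}$, not $\tfrac{1}{h}$; the limit you then take and the conclusion $\dl=t^{-1}(G_t-\id)R_t^{-1}$ are correct. Second, the step you rightly single out as the crux --- that a bounded derivation of $A$ passes to a $\sigma$-weakly continuous derivation of $\pi(A)''$ --- deserves one more word of justification: $\dl$ leaves the closed ideal $\Ker\pi$ invariant (since $\Ker\pi=(\Ker\pi)^2$ in a $C^*$-algebra and $\dl(ab)=\dl(a)b+a\dl(b)\in\Ker\pi$ for $a,b\in\Ker\pi$), so $\dl$ descends to $\pi(A)$, and then the standard bidual-extension argument yields the $\sigma$-weakly continuous extension to $\pi(A)''$. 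With that filled in, your outline is complete and matches the reference the paper cites.
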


A $C^*$-algebra need not admit non-zero bounded derivations. For
instance, no commutative $C^*$-algebra possesses bounded
derivations.
 The
following is the relation between (unbounded) derivations of a
$C^*$-algebra $A$ and strongly continuous
one-parameter groups of automorphisms of $A$ \cite{brat75,pow}.

\begin{theorem} \label{spr595} \mar{spr595}
Let $\dl$ be a closable derivation of a $C^*$-algebra $A$. Its
closure $\ol\dl$ is an infinitesimal generator of a strongly
continuous one-parameter group of automorphisms of $A$ iff

(i) the set $(\bb +\la\dl)(D(\dl)$ for any $\la\in\mathbb
R\setminus\{0\}$ is dense in $A$,

(ii)
$\|(\bb +\la\dl)(a)\|\geq \|a\|$ for any $\la\in\mathbb R$ and any $a\in
A$.
\end{theorem}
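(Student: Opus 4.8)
The plan is to read Theorem~\ref{spr595} as the derivation-theoretic version of the Hille--Yosida generation theorem, refined so that the generated contraction semigroup is in fact a group of $*$-automorphisms, and to prove the two implications in turn. Throughout I use that every $*$-automorphism of a $C^*$-algebra is isometric and that multiplication in $A$ is jointly norm-continuous.

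For necessity, suppose $\ol\dl$ is the infinitesimal generator of a strongly continuous one-parameter group $\{G_t\}_{t\in\mathbb R}$ of automorphisms of $A$, so that $\|G_t(a)\|=\|a\|$ for all $t$ and $a$. Laplace-transforming the group in the appropriate time direction (using $G_{-s}$ when $\la>0$ and $G_s$ when $\la<0$) exhibits $(\bb+\la\ol\dl)^{-1}$, for every real $\la\neq 0$, as an everywhere-defined bounded operator of norm at most $1$; this is exactly the estimate~(ii). Surjectivity of $\bb+\la\ol\dl$ onto $A$, combined with the fact that $D(\dl)$ is a core for $\ol\dl$ and that $\bb+\la\ol\dl$ is continuous for the graph norm, shows that $(\bb+\la\dl)(D(\dl))$ is dense in $A$, which is~(i).

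For sufficiency, assume~(i) and~(ii). Condition~(ii) says $\bb+\la\dl$ is injective on $D(\dl)$ with a contractive inverse on its range, and since $\dl$ is closable this passes to $\ol\dl$; condition~(i) makes that range dense, and a routine continuity argument then promotes $\bb+\la\ol\dl$ to a bijection of $D(\ol\dl)$ onto $A$. Hence $\|(\mu\bb+\ol\dl)^{-1}\|\leq|\mu|^{-1}$ for all real $\mu\neq 0$, so by Hille--Yosida $\ol\dl$ generates a $C_0$-semigroup $\{G_t\}_{t\geq 0}$ of contractions and $-\ol\dl$ generates $\{G_{-t}\}_{t\geq 0}$; together they form a $C_0$-group $\{G_t\}_{t\in\mathbb R}$, and $G_tG_{-t}=\bb$ forces each $G_t$ to be isometric. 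The involution is handled by uniqueness of the generated group: the map $a\mapsto G_t(a^*)^*$ is a $C_0$-group whose generator is $a\mapsto\ol\dl(a^*)^*=\ol\dl(a)$, since $\dl$, hence $\ol\dl$, is symmetric, so $G_t(a^*)=G_t(a)^*$.

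The crux is multiplicativity of $G_t$. First I would check that the closure of the densely defined closable derivation on the $*$-subalgebra $D(\dl)$ is again a derivation: if $a_n\to a$, $b_n\to b$ with $\dl(a_n)\to\ol\dl(a)$ and $\dl(b_n)\to\ol\dl(b)$, then $a_nb_n\to ab$ while $\dl(a_nb_n)=\dl(a_n)b_n+a_n\dl(b_n)\to\ol\dl(a)b+a\ol\dl(b)$, so $ab\in D(\ol\dl)$ and the Leibniz rule holds on $D(\ol\dl)$. Then, for $a,b\in D(\ol\dl)$, both $t\mapsto G_t(a)G_t(b)$ and $t\mapsto G_t(ab)$ are norm-differentiable and, using $G_t(a)\in D(\ol\dl)$, $\tfrac{d}{dt}G_t(a)=\ol\dl(G_t(a))$ and the Leibniz rule, solve the same Cauchy problem $u'(t)=\ol\dl(u(t))$, $u(0)=ab$; uniqueness of solutions for a $C_0$-semigroup generator gives $G_t(ab)=G_t(a)G_t(b)$. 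Density of $D(\ol\dl)$ and boundedness of $G_t$ extend this identity to all of $A$, so each $G_t$ is an isometric $*$-homomorphism of $A$ onto itself, i.e. a $*$-automorphism, and continuity of $t\mapsto G_t(a)$ makes $\{G_t\}$ a strongly continuous one-parameter group of automorphisms with generator $\ol\dl$. I expect the main obstacle to be precisely this last paragraph: verifying that the closure stays a derivation and then extracting multiplicativity from uniqueness of the associated Cauchy problem, together with the bookkeeping in the necessity part needed to transfer density of the range from $D(\ol\dl)$ back to the core $D(\dl)$.
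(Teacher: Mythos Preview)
The paper does not prove Theorem~\ref{spr595}; it merely states the result and cites the original sources \cite{brat75,pow}. Your outline is the standard argument one finds there: Hille--Yosida applied to both $\ol\dl$ and $-\ol\dl$ yields a $C_0$-group of isometries, and multiplicativity is then obtained by first checking that the closure $\ol\dl$ inherits the Leibniz rule from the dense $^*$-subalgebra $D(\dl)$ and then invoking uniqueness for the Cauchy problem $u'=\ol\dl(u)$. The steps you flag as delicate---that $D(\ol\dl)$ is again a subalgebra on which $\ol\dl$ is a derivation, and the transfer of the range-density condition~(i) from $D(\dl)$ to surjectivity of $\bb+\la\ol\dl$ via closedness of $\ol\dl$ together with the norm estimate~(ii)---are handled correctly in your sketch, so there is nothing to correct and nothing in the paper to compare against.
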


It should be noted that, if
$A$ is a unital algebra
and $\dl$ is its closable derivation, then $\bb\in D(\dl)$.

Let us mention a more convenient sufficient condition of a
derivation of a $C^*$-algebra to be an infinitesimal generator of
a strongly continuous one-parameter group of its automorphisms. A
derivation $\dl$ of a $C^*$-algebra $A$ is called well-behaved if,
for each element $a\in D(\dl)$, there exists a state $f$ of $A$
such that $f(a)=\|a\|$ and $f(\dl(a))=0$. If $\dl$ is a
well-behaved derivation, it is closable \cite{kish}, and obeys the
condition (ii) in Theorem \ref{spr595} \cite{brat75,pow}. Then we
come to the following.

\begin{theorem} \label{spr596} \mar{spr596}
If $\dl$ is a well-behaved derivation of a $C^*$-algebra $A$ and
it obeys the condition (i) in Theorem \ref{spr595}, its closure
$\ol\dl$ is an infinitesimal generator of a strongly continuous
one-parameter group of automorphisms of $A$.
\end{theorem}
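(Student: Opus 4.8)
The plan is to show that the closure $\ol\dl$ satisfies both hypotheses of Theorem \ref{spr595}, from which the claim follows immediately. Hypothesis (i) --- that $(\bb+\la\dl)(D(\dl))$ is dense in $A$ for every $\la\in\mathbb R\setminus\{0\}$ --- is assumed outright, so the only work is to verify that a well-behaved derivation is closable and that it obeys the norm estimate (ii).

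For closability I would simply invoke the cited result of Kishimoto \cite{kish}: a well-behaved derivation of a $C^*$-algebra is automatically closable. Hence $\ol\dl$ exists, is again a symmetric derivation, and (since $A$ is unital) $\bb\in D(\ol\dl)$.

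The substantive step is the verification of (ii), and it is short. Given $a\in D(\dl)$, well-behavedness supplies a state $f$ of $A$ with $f(a)=\|a\|$ and $f(\dl(a))=0$. Then, for every $\la\in\mathbb R$,
\be
\|(\bb+\la\dl)(a)\|\geq |f((\bb+\la\dl)(a))|=|f(a)+\la f(\dl(a))|=\|a\|,
\ee
using only that a state has norm $1$. This establishes (ii) on the dense domain $D(\dl)$, and I would then extend it to all of $D(\ol\dl)$ by the standard limiting argument, using $\ol{(\bb+\la\dl)}=\bb+\la\ol\dl$. With (i) and (ii) in hand for $\ol\dl$, Theorem \ref{spr595} applies verbatim and yields that $\ol\dl$ is the infinitesimal generator of a strongly continuous one-parameter group of automorphisms of $A$.

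The main obstacle is essentially that the two heavy inputs --- the closability theorem for well-behaved derivations and the Hille--Yosida-type Theorem \ref{spr595} itself --- are quoted from the literature; the genuinely self-contained content is just the one-line state-evaluation estimate above, whose only delicate aspect is the passage from the core $D(\dl)$ to the domain of the closure.
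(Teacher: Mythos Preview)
Your proposal is correct and follows essentially the same approach as the paper: the paper's justification (given in the paragraph immediately preceding the theorem) is precisely that a well-behaved derivation is closable \cite{kish} and satisfies condition (ii) \cite{brat75,pow}, whence Theorem \ref{spr595} applies. The only difference is that you spell out the one-line state-evaluation argument for (ii) explicitly, whereas the paper merely cites the literature for it.
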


For instance, a derivation $\dl$ is well-behaved if it is
approximately inner, i.e., there exists a sequence of self-adjoint
elements $\{b_n\}$ in $A$ such that
\be
\dl(a)=\op\lim_n i[b_n,a], \qquad  a\in A.
\ee

In contrast with a case of a uniformly continuous one-parameter
group of automorphisms of a $C^*$-algebra $A$, a representation of
$A$ does not imply necessarily a unitary representation
(\ref{spr579'}) of a strongly continuous one-parameter group of
automorphisms of $A$, unless the following.

\begin{theorem} \label{spr759} \mar{spr759}
Let $G(\mathbb R)$ be a strongly continuous one-parameter group of
automorphisms of a $C^*$-algebra $A$ and $\dl$ its infinitesimal
generator. Let $A$ admit a state $f$ such that
\mar{w420}\beq
|f(\dl(a))|\leq\la[f(a^*a) + f(aa^*)]^{1/2} \label{w420}
\eeq
for all $a\in A$ and a positive number $\la$, and let
$(\pi_f,\thh_f)$ be a cyclic representation of $A$ in a Hilbert
space $E_f$  defined by $f$. Then there exist a self-adjoint
operator $\cH$ on a domain $D(\cH)\subset A\thh_f$ in $E_f$ and
the unitary strongly continuous representation (\ref{spr579'}) of
$G(\mathbb R)$ in $E_f$ which fulfils the relations (\ref{spr592})
-- (\ref{spr593}) for $\pi=\pi_f$.
\end{theorem}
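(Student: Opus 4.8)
The plan is to mimic, inside the GNS representation $(\pi_f,\thh_f,E_f)$ attached to $f$, the construction used for bounded derivations in Theorem \ref{spr591}: namely to exhibit a self-adjoint operator $\cH$ in $E_f$ with $\pi_f(\dl(a))=-i[\cH,\pi_f(a)]$ and then to put $\pi(G_t)=\exp(-it\cH)$, with Stone's theorem supplying the strong continuity of the unitary group. Throughout I write $\pi=\pi_f$, $\thh=\thh_f$, and use that $\dl$, being the infinitesimal generator of a strongly continuous one-parameter group of automorphisms, is a closed densely defined symmetric derivation whose domain $D(\dl)$ is an involutive subalgebra of $A$, that $\dl(\bb)=0$, and that $G_tD(\dl)=D(\dl)$ with $\dl\circ G_t=G_t\circ\dl$ and $\frac{d}{dt}G_t(a)=G_t(\dl(a))$ for $a\in D(\dl)$. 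Since $D(\dl)$ is dense in $A$ and $\thh$ is cyclic, $D_0=\pi(D(\dl))\thh$ is a dense subspace of $E_f$ contained in $\pi_f(A)\thh_f$.

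The core of the argument, and the step I expect to be the main obstacle, is the construction of $\cH$. On the domain $D_0$ one defines a symmetric operator $\cH_0$ by prescribing the commutator action $\cH_0\,\pi(a)\,\xi-\pi(a)\,\cH_0\,\xi=i\,\pi(\dl(a))\,\xi$ for $a\in D(\dl)$, $\xi\in D_0$, together with its value on a cyclic vector; the point at which hypothesis (\ref{w420}) is indispensable is precisely the assertion that these prescriptions are compatible and determine a well-defined symmetric operator on $D_0$. Indeed (\ref{w420}) says that the linear functional $a\mapsto f(\dl(a))$ is bounded by $\lambda$ times the norm of the vector $(\pi(a)\thh,\pi(a^*)\thh)$ in $E_f\oplus E_f$, so that the Riesz theorem, applied to the closure of $\{(\pi(a)\thh,\pi(a^*)\thh):a\in A\}$, yields the auxiliary vectors needed to pin down $\cH_0$ consistently on $D_0$ and to see that it is symmetric (and that $\dl(a)\in A$ for $a\in D(\dl)$ keeps the remaining terms bounded). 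One must then show that $\cH_0$ is essentially self-adjoint: the standard route is to combine (\ref{w420}) with the commutator relation and the fact that $G_t$ already integrates $\dl$ on $A$ to produce, inside $D_0$, a dense set of analytic vectors for $\cH_0$ on which the exponential series converges, invoking the generation machinery that also underlies Theorems \ref{spr595}--\ref{spr596}. (Equivalently, one may first construct the implementing unitary cocycle directly by a Dyson-type series whose convergence is controlled by $\lambda$ and define $\cH$ as its Stone generator; either way (\ref{w420}) is the essential input.) Set $\cH=\overline{\cH_0}$, with dense domain $D(\cH)\subset\pi_f(A)\thh_f$, and $U_t=\exp(-it\cH)$.

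By construction $\cH$ satisfies $\pi_f(\dl(a))=-i[\cH,\pi_f(a)]$, which is (\ref{spr592}). It remains to verify that $U_t$ implements $G_t$, i.e. $U_t\pi_f(a)U_{-t}=\pi_f(G_t(a))$ for all $a\in A$, which is (\ref{spr593}) with $\pi=\pi_f$ and reproduces (\ref{spr592}) on differentiating at $t=0$. For $a\in D(\dl)$ the families $t\mapsto U_t\pi_f(a)U_{-t}$ and $t\mapsto\pi_f(G_t(a))$ are bounded, strongly continuous, agree at $t=0$, and have strong derivatives $U_t\pi_f(\dl(a))U_{-t}$ (using (\ref{spr592}) and $U_tD_0\subset D(\cH)$) and $\pi_f(G_t(\dl(a)))=\pi_f(\dl(G_t(a)))$ respectively; hence $W_t(a):=U_t\pi_f(a)U_{-t}-\pi_f(G_t(a))$ obeys $\frac{d}{dt}W_t(a)=W_t(\dl(a))$ with $W_0(a)=0$, and iterating this identity on a norm-dense set of $G$-analytic elements $a$ gives $W_t(a)=0$ there, whereupon density of $D(\dl)$ in $A$ together with $\|W_t(a)\|\le 2\|a\|$ forces $W_t\equiv 0$ on all of $A$. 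This establishes the existence of $\cH$, the unitary representation $\pi(G_t)=\exp(-it\cH)$ in $E_f$, and the relations (\ref{spr592})--(\ref{spr593}). The only genuinely hard point is showing that the commutator prescription determines an essentially self-adjoint $\cH_0$, and it is exactly there that the bound (\ref{w420}) on the state $f$ is used.
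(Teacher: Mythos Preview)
The paper does not supply a proof of Theorem \ref{spr759}; it merely states the result (with the follow-up remark that (\ref{w420}) already forces $\dl$ to be closable, citing \cite{kish}) and uses it as a black box drawn from the literature, essentially \cite{brat}. So there is nothing in the paper to compare your argument against line by line.

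As to your sketch itself: the overall architecture is the standard one and is correct --- implement $\dl$ by a symmetric operator $\cH_0$ on $D_0=\pi_f(D(\dl))\thh_f$, upgrade to a self-adjoint $\cH$, set $U_t=\exp(-it\cH)$, and verify (\ref{spr592})--(\ref{spr593}) by the ODE argument on $W_t(a)$. The part you flag as ``the only genuinely hard point'' is indeed the crux, and your treatment of it is still a gap rather than a proof. Two concrete issues. First, the \emph{definition} of $\cH_0$ is not actually written down: saying that the commutator relation together with ``its value on a cyclic vector'' and some Riesz vectors from (\ref{w420}) ``pin down $\cH_0$ consistently'' is a description of what one hopes to achieve, not a verification. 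What has to be checked is that $\pi_f(a)\thh_f=\pi_f(b)\thh_f$ forces the two candidate images to agree, and this is precisely where the bound (\ref{w420}) enters --- but you need to exhibit the formula for $\cH_0\pi_f(a)\thh_f$ explicitly (in the literature it is $i\pi_f(\dl(a))\thh_f$ plus a bounded correction coming from the Riesz vector(s)) and then carry out that check. Second, essential self-adjointness is asserted via ``analytic vectors'' or ``a Dyson-type series controlled by $\lambda$'', but neither route is substantiated: you have not shown $D_0$ contains a dense set of analytic vectors for $\cH_0$, nor that the putative Dyson series converges. These are exactly the technical steps the cited references (\cite{brat}, \cite{kish}, \cite{pow}) work out; your write-up identifies them but does not discharge them.

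Your final paragraph, deriving (\ref{spr593}) from (\ref{spr592}) via the differential equation for $W_t$, is fine in outline; note only that the invariance $U_tD_0\subset D(\cH)$ you invoke is itself part of what must be proved once $\cH$ is constructed, and that the density of $G$-analytic elements in $A$ (needed for the iteration) should be quoted explicitly.
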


It should be emphasized that the condition (\ref{w420}) in Theorem
\ref{spr759} is sufficient in order that a derivation $\dl$ to be
closable \cite{kish}.

Note that there is a general problem of a unitary representation
of an automorphism group of a $C^*$-algebra $A$. Let $\pi$ be a
representation of $A$ in a Hilbert space $E$. Then an automorphism
$\rho$ of $A$ possesses a unitary representation in $E$ if there
exists a unitary operator $U_\rho$ in $E$ such that
\mar{081}\beq
\pi(\rho(a))=U_\rho\pi(a)U_\rho^{-1}, \qquad a\in A. \label{081}
\eeq
A key point is that such a representation is never unique. Namely,
let $U$ and $U'$ be arbitrary unitary elements of the commutant
$\pi(A)'$ of $\pi(A)$. Then $UU_\rho U'$ also provides a unitary
representation of $\rho$. For instance, one can always choose
phase multipliers $U=\exp(i\al)\bb\in U(1)$. A consequence of this
ambiguity is the following.

Let $G$ be a group of automorphisms of an algebra $A$ whose
elements $g\in G$ admit unitary representations $U_g$ (\ref{081}).
The set of operators $U_g$, $g\in G$, however need not be a group.
In general, we have
\be
U_g U_{g'}=U(g,g')U_{gg'}U'(g,g'), \qquad U(g,g'),U'(g,g')\in
\pi(A)'.
\ee
If all $U(g,g')$ are phase multipliers, one says that the unitary
operators $U_g$, $g\in G$, form a projective representation
$U(G)$:
\be
 U_g U_{g'}=k(g,g')U_{gg'}, \qquad g,g'\in G,
\ee
of a group $G$ \cite{cass,varad}. In this case, a set $U(1)\times
U(G)$ becomes a group which is a central $U(1)$-extension
\mar{084}\beq
\bb\ar U(1)\ar U(1)\times U(G)\ar G\ar \bb \label{084}
\eeq
of a group $G$. Accordingly, the projective representation
$\pi(G)$ of $G$ is a splitting of the exact sequence (\ref{084}).
It is characterized by $U(1)$-multipliers $k(g,g')$ which form a
two-cocycle
\mar{085}\beq
k(\bb,g)=k(g,\bb)=\bb, \qquad
k(g_1,g_2g_3)k(g_2,g_3)=k(g_1,g_2)k(g_1g_2,g_3) \label{085}
\eeq
of the cochain complex of $G$ with coefficients in $U(1)$
\cite{book05,mcl}. A different splitting of the exact sequence
(\ref{084}) yields a different projective representation $U'(G)$
of $G$ whose multipliers $k'(g,g')$ form a cocycle equivalent to
the cocycle (\ref{085}). If this cocycle is a coboundary, there
exists a splitting of the extension (\ref{084}) which provides a
unitary representation of a group $G$ of automorphisms of an
algebra $A$ in $E$.

For instance, let $B(E)$ be a $C^*$-algebra of bounded operators
in a Hilbert space $E$. All its automorphisms are inner. Any
(unitary) automorphism $U$ of a Hilbert space $E$ yields an inner
automorphism
\mar{spr750}\beq
a\to UaU^{-1}, \qquad  a\in B(E), \label{spr750}
\eeq
of $B(E)$. Herewith, the automorphism (\ref{spr750}) is the
identity iff $U=\la\bb$, $|\la|=1$, is a scalar operator in $E$.
It follows that the group of automorphisms of $B(E)$ is the
quotient $U(E)/U(1)$ of a unitary group $U(E)$ with respect to a
circle subgroup $U(1)$. Therefore, given a group $G$ of
automorphisms of the $C^*$-algebra $B(E)$, the representatives
$U_g$ in $U(E)$ of elements $g\in G$ constitute a group up to
phase multipliers, i.e.,
\be
U_gU_{g'}=\exp[i\al(g,g')]U_{gg'}, \qquad \al(g,g')\in\mathbb R.
\ee
Nevertheless, if $G$ is a one-parameter weakly$^*$ continuous
group of automorphisms of $B(E)$ whose infinitesimal generator is
a bounded derivation of $B(E)$, one can choose the multipliers
$\exp[i\al(g,g')]=1$.

In a general setting, let $G$ be a group and $\cA$ a commutative
algebra. An $\cA$-multiplier of $G$ is a map $\xi: G\times G\to
\cA$ such that
\be
\xi(\bb_G,g)=\xi(g,\bb_G)=\bb_\cA, \qquad
\xi(g_1,g_2g_3)\xi(g_2,g_3)=\xi(g_1,g_2)\xi(g_1g_2,g_3), \qquad
g,g_i\in G.
\ee
For instance, $\xi:G\times G\to\bb_\cA\in \cA$ is a multiplier.
Two $A$-multipliers $\xi$ and $\xi'$ are said to be equivalent if
there exists a map $f:G\to \cA$ such that
\be
\xi(g_1,g_2)=\frac{f(g_1g_2)}{f(g_1)f(g_2)}\xi'(g_1,g_2), \qquad g_i\in
G.
\ee
An $\cA$-multiplier is called exact if it is equivalent to the
multiplier $\xi=\bb_\cA$. A set of $\cA$-multipliers is an Abelian
group with respect to the pointwise multiplication, and the set of
exact multipliers is its subgroup. Let $HM(G,\cA)$ be the
corresponding factor group.

If $G$ is a locally compact topological group and $\cA$ a Hausdorff
topological algebra,
one additionally requires that multipliers $\xi$ and equivalence maps
$f$ are measurable maps. In this case, there is a natural topology on
$HM(G,\cA)$ which is locally quasi-compact, but need not be Hausdorff
\cite{moor}.

\begin{theorem} \label{w711} \mar{w711} \cite{cass}.
Let $G$ be a simply connected locally compact Lie group. Each
$U(1)$-multiplier $\xi$ of $G$ is brought into a form
$\xi=\exp{i\al}$, where $\al$ is an $\mathbb R$-multiplier.
Moreover, $\xi$ is exact iff $\al$ is well. Any $\mathbb
R$-multiplier of $G$ is equivalent to a smooth one.
\end{theorem}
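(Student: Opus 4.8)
The plan is to reduce the three assertions to a covering-space lifting through $\exp\colon\mathbb R\to U(1)$, a connectedness argument about $\mathbb Z$-valued cocycles, and Bargmann's smoothing of measurable group cocycles. As a preliminary step I would invoke the regularization theorem for measurable multipliers on a locally compact second countable group \cite{moor,cass} to replace the given $U(1)$-multiplier $\xi$ by an equivalent continuous one; since $G$ is a connected Lie group it is second countable, so this applies, and the smoothing in the last step will afterwards upgrade continuity to smoothness.

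First I would establish the lifting assertion. Because $G$ is simply connected, so is $G\times G$, and $\exp\colon\mathbb R\to U(1)$ is the universal covering homomorphism; hence the continuous map $\xi\colon G\times G\to U(1)$ has a continuous lift $\alpha\colon G\times G\to\mathbb R$ with $\xi=e^{i\alpha}$, normalized by $\alpha(\bb,\bb)=0$, and this lift is unique. To see that $\alpha$ is itself an $\mathbb R$-multiplier, note that $\xi(\bb,g)=\xi(g,\bb)=\bb$ forces $g\mapsto\alpha(\bb,g)$ and $g\mapsto\alpha(g,\bb)$ to be continuous $2\pi\mathbb Z$-valued functions on the connected space $G$ vanishing at $\bb$, hence identically zero, and the \emph{cocycle defect}
\[
d(g_1,g_2,g_3)=\alpha(g_1,g_2g_3)+\alpha(g_2,g_3)-\alpha(g_1,g_2)-\alpha(g_1g_2,g_3)
\]
is continuous, $2\pi\mathbb Z$-valued by the multiplier identity for $\xi$, and vanishes at $(\bb,\bb,\bb)$, hence vanishes identically on the connected space $G\times G\times G$. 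For the exactness clause: if $\alpha(g_1,g_2)=f(g_1g_2)-f(g_1)-f(g_2)$ for some $f\colon G\to\mathbb R$, then $\xi=e^{i\alpha}$ is the $U(1)$-coboundary of $g\mapsto e^{if(g)}$, so $\xi$ is exact; conversely, if $\xi(g_1,g_2)=\lambda(g_1g_2)\lambda(g_1)^{-1}\lambda(g_2)^{-1}$ for a continuous $\lambda\colon G\to U(1)$, lift $\lambda$ to a continuous $\mu\colon G\to\mathbb R$ (again using simple connectedness) with $\lambda=e^{i\mu}$, so that $\alpha(g_1,g_2)-\bigl(\mu(g_1g_2)-\mu(g_1)-\mu(g_2)\bigr)$ is a continuous $2\pi\mathbb Z$-valued $\mathbb R$-multiplier, therefore zero, and $\alpha$ is exact.

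The analytically substantive assertion is the last one, for which I would use Bargmann's smoothing argument. Fix a left Haar measure on $G$ and a smooth probability density $\rho$ of compact support. Given a (continuous, after regularization) $\mathbb R$-multiplier $\alpha$, rearranging the cocycle identity writes $\alpha(g_1,g_2)$, modulo an explicit coboundary in $g_1,g_2$, as a combination of integrals of right translates of $\alpha$ of the type $\int_G\alpha(g_1,g_2h)\rho(h)\,dh=\int_G\alpha(g_1,h)\rho(g_2^{-1}h)\,dh$, in which the dependence on $g_2$ has been transferred entirely to the smooth density $h\mapsto\rho(g_2^{-1}h)$; one further application of the identity does the same for $g_1$. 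Differentiating under the integral sign — legitimate since $\alpha$ is locally bounded and $\rho$ is smooth with compact support — then shows that $\alpha$ agrees, up to a coboundary, with a smooth $\mathbb R$-multiplier, so it is equivalent to a smooth one; the modular-function factors that appear when $G$ is not unimodular are smooth and cause no trouble. Combined with the first paragraph this yields, for any $U(1)$-multiplier $\xi$, a presentation $\xi=e^{i\alpha}$ with $\alpha$ equivalent to a smooth $\mathbb R$-multiplier. I expect the main obstacle to be precisely this step — organizing the cocycle bookkeeping so that all of $g_1$ and $g_2$ are absorbed into smooth densities, together with the logically prior regularization of a merely measurable multiplier, which is where the hypothesis that $G$ is a locally compact Lie group is genuinely used.
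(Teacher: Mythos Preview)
The paper does not supply a proof of this theorem: it is stated with a citation to \cite{cass} and the text moves on immediately afterwards. So there is no in-paper argument to compare against; your proposal is being measured only against the cited literature.

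Your outline is the standard Bargmann route and is essentially correct. The covering-space lift of a continuous $\xi$ through $\exp:\mathbb R\to U(1)$ using simple connectedness of $G\times G$, followed by the connectedness argument forcing the $2\pi\mathbb Z$-valued defect to vanish, is exactly how one obtains the $\mathbb R$-multiplier $\alpha$. The smoothing step via convolution against a compactly supported smooth density and repeated use of the cocycle identity is likewise the classical mechanism.

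One point to tighten: in the exactness clause you lift a trivializing map $\lambda:G\to U(1)$ to $\mu:G\to\mathbb R$ ``again using simple connectedness''. That lifting needs $\lambda$ continuous, but the paper's framework only demands that equivalence maps be measurable. You should either invoke the regularization theory once more to replace a measurable trivialization by a continuous one (this is available in the Moore--Mackey setting you already cite), or argue cohomologically: a measurable Borel section of $\exp$ still gives a measurable lift $\mu$, and then $\alpha-\delta\mu$ is a measurable $2\pi\mathbb Z$-valued $2$-cocycle on the connected group $G$, which one shows is a coboundary (equivalently, $H^2_{\mathrm{Borel}}(G,\mathbb Z)$ injects trivially here). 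As written, this step tacitly assumes more regularity of $\lambda$ than has been justified.
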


Let $G$ be a locally compact group of strongly continuous
automorphisms of a $C^*$-algebra $A$. Let $M(A)$ denote a
multiplier algebra of $A$, i.e., the largest $C^*$-algebra
containing $A$ as an essential ideal,  i.e., if $a\in M(A)$ and
$ab=0$ for all $b\in A$, then $a=0$ \cite{woron94}). For instance,
$M(A)=A$ if $A$ is a unital algebra. Let $\xi$ be a multiplier of
$G$ with values in the center of $M(A)$. A  $G$-covariant
representation $\pi$ of $A$ \cite{dopl,naud} is a representation
$\pi$ of $A$ (and, consequently, $M(A)$) in a Hilbert space $E$
together with a projective representation of $G$ by unitary
operators $U(g)$, $g\in G$, in $E$ such that
\be \pi(g(a)) = U(g)\pi(a)U^*(g),
\qquad U(g)U(g')=\pi(\xi(g,g'))U(gg').
\ee

\section{Spontaneous symmetry breaking}

Given a topological $^*$-algebra $A$ and its state $f$, let $\rho$
be an automorphism of $A$. Then it defines a state
\mar{0810}\beq
f_\rho(a)=f(\rho(a)), \qquad a\in A, \label{0810}
\eeq
of $A$. A state $f$ is said to be stationary with respect to an
automorphism $\rho$ of $A$ if
\mar{082}\beq
f(\rho(a))=f(a), \qquad a\in A. \label{082}
\eeq

One speaks about spontaneous symmetry breaking if a state $f$ of a
quantum algebra $A$ fails to be stationary with respect to some
automorphisms of $A$.

We can say something if $A$ is a $C^*$-algebra and its GNS
representations are considered \cite{brat,dixm,book05}.

\begin{theorem} \mar{t1} \label{t1} Let $f$ be a state of a $C^*$-algebra $A$ and
$(\pi_f,\thh_f,E_f)$ the corresponding cyclic representation of
$A$. An automorphism $\rho$ of $A$ defines the state $f_\rho$
(\ref{0810}) of $A$ such that a carrier space $E_{\rho f}$ of the
corresponding cyclic representation $\pi_{\rho f}$ is  isomorphic
to $E_f$.
\end{theorem}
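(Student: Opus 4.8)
The plan is to realize $f_\rho$ as the vector form of a cyclic representation manufactured directly from the GNS data of $f$, and then to match this with the GNS representation attached to $f_\rho$. First I would dispose of two routine points: since $\rho$ is an automorphism of the unital algebra $A$ we have $\rho(\bb)=\bb$, and since an automorphism of a $C^*$-algebra is isometric (hence continuous) while positive forms on a $C^*$-algebra are automatically continuous, $f_\rho=f\circ\rho$ (\ref{0810}) is again a state: indeed $f_\rho(a^*a)=f(\rho(a)^*\rho(a))\ge 0$ and $f_\rho(\bb)=f(\bb)=1$, so $(\pi_{f_\rho},\thh_{f_\rho},E_{f_\rho})$ is well defined by the GNS construction (Theorem \ref{spr471}).

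The main step is to consider the composite $\pi':=\pi_f\circ\rho$, which is a representation of $A$ in the \emph{same} Hilbert space $E_f$. Because $\rho(A)=A$, one has $\pi'(A)\thh_f=\pi_f(\rho(A))\thh_f=\pi_f(A)\thh_f$, whose closure is $E_f$; thus $\thh_f$ remains a cyclic vector for $\pi'$, and $\pi'$ is non-degenerate. Its vector form (\ref{spr470}) is
\[
a\ \longmapsto\ \lng\pi'(a)\thh_f|\thh_f\rng=\lng\pi_f(\rho(a))\thh_f|\thh_f\rng=f(\rho(a))=f_\rho(a),
\]
so $(\pi',\thh_f,E_f)$ is a cyclic representation of $A$ whose vector form is exactly $f_\rho$.

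Finally I would identify $(\pi',\thh_f,E_f)$ with $(\pi_{f_\rho},\thh_{f_\rho},E_{f_\rho})$. The cleanest route is an explicit isometry: from $f_\rho(a^*a)=f(\rho(a)^*\rho(a))$ one gets that the ideal $N_{f_\rho}$ of Theorem \ref{spr471} equals $\rho^{-1}(N_f)$, so $\rho$ passes to a linear bijection $A/N_{f_\rho}\to A/N_f$, and it is isometric for the pre-Hilbert products (\ref{spr472}) since $f_\rho(b^*a)=f(\rho(b)^*\rho(a))=\lng\rho(a)|\rho(b)\rng_f$. Completing, this gives a unitary isomorphism $\g:E_{f_\rho}\to E_f$ with $\g(\thh_{f_\rho})=\thh_f$ and $\pi'(a)=\g\circ\pi_{f_\rho}(a)\circ\g^{-1}$; in particular $E_{f_\rho}\cong E_f$, which is the assertion (with the stronger conclusion, in the sense of Remark \ref{gns122}, that the two representations agree after twisting by $\rho$). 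Equivalently, one may simply invoke the uniqueness up to unitary equivalence of the cyclic representation attached to a positive form, applied to $(\pi',\thh_f)$ and $(\pi_{f_\rho},\thh_{f_\rho})$, both of which have vector form $f_\rho$.

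I do not expect a genuine obstacle. The step carrying the most (mild) subtlety is the identification in the last paragraph, but it reduces exactly to the two identities $N_{f_\rho}=\rho^{-1}(N_f)$ and $\lng\rho(a)|\rho(b)\rng_f=\lng a|b\rng_{f_\rho}$, i.e.\ to the functoriality of the quotient-and-completion of the GNS construction under $\rho$.
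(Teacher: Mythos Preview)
Your argument is correct and is precisely the standard one: the paper does not spell out a proof of this theorem (it is stated with references to \cite{brat,dixm,book05}), but the sentence immediately following it---that ``the representations $\pi_{\rho f}$ can be given by operators $\pi_{\rho f}(a)=\pi_f(\rho(a))$ in the carrier space $E_f$''---is exactly the construction you carry out, namely $\pi'=\pi_f\circ\rho$ on $E_f$ with cyclic vector $\thh_f$, identified with the GNS triple of $f_\rho$ by uniqueness. So your proposal matches the paper's (implicit) approach.
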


It follows that the representations $\pi_{\rho f}$ can be given by
operators $\pi_{\rho f}(a)=\pi_f(\rho(a))$ in the carrier space
$E_f$ of the representation $\pi_f$, but these representations
fail to be equivalent, unless an automorphism $\rho$ possesses the
unitary representation (\ref{081}) in $E_f$. In this case, a state
$f$ is stationary relative to $\rho$. The converse also is true.

\begin{theorem} \mar{t2} \label{t2}
If a state $f$ of a $C^*$-algebra $A$ is the stationary state
(\ref{082}) with respect to an automorphism $\rho$ of $A$, there
exists a unique unitary representation $U_\rho$ (\ref{0810}) of
$\rho$ in $E_f$ such that
\mar{087}\beq
U_\rho\thh_f=\thh_f. \label{087}
\eeq
\end{theorem}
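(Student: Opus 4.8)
The plan is to implement the automorphism $\rho$ directly on the GNS space $E_f$, using the formula forced by the desired covariance relation (\ref{081}). First I would define a linear map $U_\rho$ on the dense subspace $\pi_f(A)\thh_f\subset E_f$ by setting $U_\rho(\pi_f(a)\thh_f)=\pi_f(\rho(a))\thh_f$ for $a\in A$. The key computation is that, for all $a,b\in A$,
\[
\lng U_\rho\pi_f(a)\thh_f|U_\rho\pi_f(b)\thh_f\rng=f(\rho(b)^*\rho(a))=f(\rho(b^*a))=f(b^*a)=\lng\pi_f(a)\thh_f|\pi_f(b)\thh_f\rng,
\]
where the second equality uses that $\rho$ is a $^*$-morphism (it preserves the involution and is multiplicative) and the third is exactly the stationarity hypothesis (\ref{082}). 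This single identity does double duty: taking $b=a$, it shows that $U_\rho$ sends $\pi_f(a)\thh_f$ to $0$ whenever $\pi_f(a)\thh_f=0$, so $U_\rho$ is a well-defined linear map, and it shows that $U_\rho$ preserves the Hermitian form. Hence $U_\rho$ extends by continuity to an isometry of $E_f$ into itself.

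Next I would argue that this isometry is onto, and therefore unitary: its range contains $\pi_f(\rho(A))\thh_f=\pi_f(A)\thh_f$ because $\rho$ is surjective, and $\pi_f(A)\thh_f$ is dense in $E_f$; a linear isometry of a Hilbert space with dense range is unitary. Then I would verify the two asserted properties. Since $A$ is unital and $\rho(\bb)=\bb$, we get $U_\rho\thh_f=U_\rho\pi_f(\bb)\thh_f=\pi_f(\rho(\bb))\thh_f=\thh_f$, which is (\ref{087}). For the covariance relation, note that $U_\rho^{-1}\pi_f(b)\thh_f=\pi_f(\rho^{-1}(b))\thh_f$, so on the dense subspace $U_\rho\pi_f(a)U_\rho^{-1}(\pi_f(b)\thh_f)=U_\rho\pi_f(a\,\rho^{-1}(b))\thh_f=\pi_f(\rho(a)b)\thh_f=\pi_f(\rho(a))\pi_f(b)\thh_f$; by continuity $U_\rho\pi_f(a)U_\rho^{-1}=\pi_f(\rho(a))$ for every $a\in A$, i.e. $U_\rho$ is the sought unitary representation (\ref{081}) of $\rho$ in $E_f$.

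Finally, for uniqueness I would take any unitary $V$ on $E_f$ with $\pi_f(\rho(a))=V\pi_f(a)V^{-1}$ for all $a$ and $V\thh_f=\thh_f$; then $V\pi_f(a)\thh_f=\pi_f(\rho(a))V\thh_f=\pi_f(\rho(a))\thh_f=U_\rho\pi_f(a)\thh_f$, so $V$ and $U_\rho$ agree on the dense subspace $\pi_f(A)\thh_f$ and hence $V=U_\rho$. I do not expect a serious obstacle; the only point needing care is that well-definedness of $U_\rho$ on $\pi_f(A)\thh_f$ is not automatic, and is obtained, simultaneously with isometry, precisely from the chain $f(\rho(b^*a))=f(b^*a)$ --- this is where the stationarity hypothesis is genuinely used, and without it the defining formula would not even be consistent.
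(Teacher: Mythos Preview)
Your argument is correct and is exactly the standard construction of the implementing unitary for a state-preserving $^*$-automorphism. Note, however, that the paper does not supply its own proof of this theorem: it states it as a known result (with the surrounding discussion attributed to \cite{brat,dixm,book05}) and moves on. So there is nothing to compare your proof against; what you have written is essentially the textbook proof one finds in those references, and it is complete as stated. The only implicit assumption you use beyond the hypothesis is that $A$ is unital (to get $U_\rho\thh_f=\pi_f(\rho(\bb))\thh_f=\thh_f$), but the paper has explicitly stipulated this at the start of Section~12, so no caveat is needed.
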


It follows from Theorem \ref{spr591} that, since any uniformly
continuous one-parameter group of automorphisms of a $C^*$-algebra
$A$ admits a unitary representation, each state $f$ of $A$ is
stationary for this group. However, this is not true for an
arbitrary uniformly continuous group $G$ of automorphisms of $A$.
For instance, let $B(E)$ be the $C^*$-algebra of all bounded
operators in a Hilbert space $E$. Any automorphisms of $B(E)$ is
inner and, consequently, possesses a unitary representation in
$E$. Since the commutant of $B(E)$ reduces to scalars, the group
of automorphisms of $B(E)$ admits a projective representation in
$E$, but it need not be unitary.

It follows from Theorem \ref{spr759}) that, if a state $f$ of a
$C^*$-algebra $A$ is stationary under a strongly continuous group
$G(\mathbb R)$ of automorphisms of $A$, i.e., $f(]dl(a))=0$, there
exists unitary representation of this group in $E_f$. However,
this condition is sufficient, but not necessary.

Moreover, one can show the following \cite{dixm}.

\begin{theorem} \mar{t4} \label{t4}
Let $G$ be a strongly continuous group of automorphisms of a
$C^*$-algebra $A$, and let a state $f$ of $A$ be stationary for
$G$. Then there exists a unique unitary representation of $G$ in
$E_f$ whose operators obey the equality (\ref{087}).
\end{theorem}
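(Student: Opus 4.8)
The plan is to build the representation pointwise out of Theorem \ref{t2} and then verify that it is multiplicative. Fix $g\in G$. Since $f$ is stationary for $G$, it is a fortiori stationary for the single automorphism $g$, i.e. it satisfies (\ref{082}) with $\rho=g$. Theorem \ref{t2} therefore produces a \emph{unique} unitary operator $U_g$ on $E_f$ with
\be
\pi_f(g(a))=U_g\,\pi_f(a)\,U_g^{-1},\qquad a\in A, \qquad\mbox{and}\qquad U_g\thh_f=\thh_f .
\ee
Taking $g=e$ (the unit element), the operator $\id$ has both listed properties, so by uniqueness $U_e=\id$. For multiplicativity, fix $g,g'\in G$ and observe that $U_gU_{g'}$ is unitary, fixes $\thh_f$ (since $U_{g'}\thh_f=\thh_f$ and then $U_g\thh_f=\thh_f$), and implements $gg'$:
\be
(U_gU_{g'})\,\pi_f(a)\,(U_gU_{g'})^{-1}=U_g\,\pi_f(g'(a))\,U_g^{-1}=\pi_f\big(g(g'(a))\big)=\pi_f\big((gg')(a)\big).
\ee
Thus $U_gU_{g'}$ has exactly the two properties characterizing $U_{gg'}$, and the uniqueness clause of Theorem \ref{t2} forces $U_gU_{g'}=U_{gg'}$. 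Hence $g\mapsto U_g$ is a unitary representation of $G$ in $E_f$ whose operators satisfy (\ref{087}).

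Uniqueness of this representation follows from the same input. If $g\mapsto V_g$ is any unitary representation of $G$ in $E_f$ with $V_g\thh_f=\thh_f$ for all $g$, then (cf. (\ref{081})) each $V_g$ implements the automorphism $g$ and fixes $\thh_f$; these are precisely the conditions that pin down $U_g$ in Theorem \ref{t2}, so $V_g=U_g$ for every $g\in G$.

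Finally, it is natural (and matches the hypothesis on $G$) to note that $g\mapsto U_g$ is strongly continuous. Since $G$ is a topological group and the $U_g$ are isometries, it suffices to check continuity at the identity, i.e. $U_g\xi\to\xi$ as $g\to e$, for $\xi$ in the dense subspace $\pi_f(A)\thh_f\subset E_f$. For $\xi=\pi_f(a)\thh_f$ one has $U_g\pi_f(a)\thh_f=\pi_f(g(a))U_g\thh_f=\pi_f(g(a))\thh_f$, whence
\be
\|U_g\pi_f(a)\thh_f-\pi_f(a)\thh_f\|^2=\|\pi_f\big(g(a)-a\big)\thh_f\|^2=f\big((g(a)-a)^*(g(a)-a)\big),
\ee
which tends to $0$ because $g\mapsto g(a)$ is norm continuous (the action of $G$ on $A$ is separately continuous) and $f$ is a continuous functional; a routine $\ve/3$ argument, using that the $U_g$ are uniformly bounded, extends continuity to all $\xi\in E_f$. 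In the end the theorem has no serious obstacle: its content is already contained in Theorem \ref{t2}, and the normalization (\ref{087}) is exactly what removes the ambiguity (multiplication of each implementing unitary by an element of the commutant $\pi_f(A)'$), so that the pointwise choices cohere into a homomorphism and are uniquely determined. The only point that repays attention is the verification that the compatibility $U_gU_{g'}=U_{gg'}$ comes for free from uniqueness rather than having to be imposed.
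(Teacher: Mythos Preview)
Your argument is correct and is the standard one: build $U_g$ pointwise from Theorem \ref{t2}, use the uniqueness clause there to get $U_e=\id$ and $U_gU_{g'}=U_{gg'}$, and check strong continuity on the cyclic dense subspace $\pi_f(A)\thh_f$ via $U_g\pi_f(a)\thh_f=\pi_f(g(a))\thh_f$ together with the norm continuity of $g\mapsto g(a)$. The paper itself does not supply a proof of Theorem \ref{t4}; it merely states the result with a reference to Dixmier, so there is nothing to compare against beyond noting that your proof is exactly the argument one finds there. One small point of phrasing: in your uniqueness paragraph, the clause ``then (cf.\ (\ref{081})) each $V_g$ implements the automorphism $g$'' should be read as part of the hypothesis on $V$ (the theorem is implicitly about implementing unitaries), not as a deduction from $V_g\thh_f=\thh_f$ alone; you clearly intend this, but it would read more cleanly if stated as an assumption.
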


Let now $G$ be a group of strongly or uniformly continuous group
of automorphisms of a $C^*$-algebra $A$, and let $f$ be a state of
$A$. Let us consider a set of states $f_g$ (\ref{0810}), $g\in G$,
of $A$ defined by automorphisms $g\in G$. Let $f$ be stationary
with respect to a proper subgroup $H$ of $G$. Then a set of
equivalence classes of states $f_g$, $g\in G$, is a subset of the
factor space $G/H$, but need not coincide with $G/H$.

This is just the case of spontaneous symmetry breaking in Standard
Model where A Higgs vacuum is a stationary state with respect to
some proper subgroup of a symmetry group \cite{nov,SM}.

In axiomatic QFT, the spontaneous symmetry breaking phenomenon is
described by the Goldstone theorem \cite{bog}.

Let $G$ be a connected Lie group of internal symmetries
(automorphisms of the Borchers algebra $A$ over $\id \mathbb R^4$)
whose infinitesimal generators are given by conserved currents
$j^k_\m$. One can show the following \cite{bog}.

\begin{theorem} \mar{t5} \label{t5}
A group $G$ of internal symmetries possesses a unitary
representation in $E_W$ iff the Wightman functions are
$G$-invariant.
\end{theorem}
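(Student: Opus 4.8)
The plan is to reduce the statement to the stationarity criteria of Theorems \ref{t2} and \ref{t4} via the GNS construction. First I would identify the carrier space $E_W$ of the Wightman quantum fields with the GNS space $E_f$ supplied by Theorem \ref{gns} for the vacuum state $f$ of the Borchers algebra $A$ (\ref{qm801}) whose components are the Wightman functions $W_k$, the vacuum vector being the cyclic vector $\thh_f$. Next I would record the elementary dictionary between the two sides: for an internal symmetry $g\in G$ acting on $A$ by an automorphism $\rho_g$ and for $a=\psi_1\cdots\psi_k$, the number $f(\rho_g(a))$ is the integral of the $g$-transformed distribution $W_k$ against $\psi_1\ot\cdots\ot\psi_k$, so, since $\ot^kRS^4$ is dense in $RS^{4k}$, the equality $f(\rho_g(a))=f(a)$ for all test functions is exactly $G$-invariance of $W_k$. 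Thus the Wightman functions are $G$-invariant if and only if $f$ is stationary, in the sense of (\ref{082}), under every $\rho_g$, $g\in G$.

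For the ``if'' direction I would then run the argument of Theorem \ref{t2} inside the GNS data of Theorem \ref{gns}; it applies verbatim to the topological $^*$-algebra $A$, the only inputs being cyclicity of $\thh_f$ and $f\circ\rho_g=f$. This yields, for each $g$, a unitary $U_g$ on $E_f$ with $U_g\thh_f=\thh_f$ and $\pi_f(\rho_g(a))=U_g\pi_f(a)U_g^{-1}$, unique with these two properties. The uniqueness clause forces $U_gU_{g'}=U_{gg'}$, since both sides implement $\rho_{gg'}$ and fix $\thh_f$; hence $g\mapsto U_g$ is a genuine, not merely projective, representation of $G$. Its strong continuity is checked on the dense subspace $\pi_f(A)\thh_f$ by a standard $\varepsilon/3$ estimate, using that $G$ acts strongly continuously on $A$ by automorphisms and that each $U_g$ is isometric and fixes $\thh_f$. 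When $A$ is replaced by its $C^*$-completion this step is literally Theorem \ref{t4}. This produces the required unitary representation of $G$ in $E_W$.

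For the ``only if'' direction, assume $G$ possesses a unitary representation $U$ in $E_W$ implementing its automorphisms, $U_g\pi_f(a)U_g^{-1}=\pi_f(\rho_g(a))$. Writing $V_x$ for the Wightman translation unitaries (which fix $\thh_f$) and using that internal symmetries commute with spacetime translations, one gets $U_gV_x=V_xU_g$ first on the dense set $\pi_f(A)\thh_f$ and hence everywhere, so that $U_g\thh_f$ is translation-invariant; by the uniqueness-of-vacuum axiom $U_g\thh_f=\chi(g)\thh_f$ with $|\chi(g)|=1$. After the harmless renormalisation replacing $U_g$ by $\ol{\chi(g)}\,U_g$ I may assume $U_g\thh_f=\thh_f$, whence
\be
f(\rho_g(a))=\lng\pi_f(\rho_g(a))\thh_f|\thh_f\rng=\lng U_g\pi_f(a)U_g^{-1}\thh_f|\thh_f\rng=\lng\pi_f(a)\thh_f|\thh_f\rng=f(a),
\ee
so $f$ is stationary under $G$ and, by the first paragraph, the Wightman functions are $G$-invariant. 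I expect this last direction to be the main obstacle: the GNS/algebraic machinery alone only yields an implementing $U_g$ up to a vacuum-moving phase, and it is precisely the spectrum condition together with uniqueness of the vacuum that normalises $U_g$ so as to fix $\thh_f$; once that is done, both the cocycle ambiguity and the passage back to stationarity are immediate.
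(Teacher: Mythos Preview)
The paper does not actually prove Theorem \ref{t5}; it is stated with the preamble ``One can show the following \cite{bog}'' and left without argument, the proof being delegated to Bogoliubov--Logunov--Oksak--Todorov. There is therefore nothing in the paper to compare your proposal against line by line.

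That said, your argument is the standard one and is essentially what the cited reference does: identify $E_W$ with the GNS space of the vacuum state on the Borchers algebra, translate $G$-invariance of the $W_k$ into stationarity of $f$ under the automorphisms $\rho_g$, and then run the implementability/uniqueness argument of Theorems \ref{t2}--\ref{t4}. Two small remarks. First, Theorems \ref{t2} and \ref{t4} are stated in the paper for $C^*$-algebras, while the Borchers algebra is only a nuclear barreled $b^*$-algebra; you are right that the construction of $U_g$ on $\pi_f(A)\thh_f$ and its isometric extension go through verbatim in the setting of Theorem \ref{gns}, but it is worth saying so explicitly rather than invoking \ref{t4} as a black box. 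Second, in the ``only if'' direction your renormalisation by $\ol{\chi(g)}$ is correct but not even needed: once $U_g\thh_f=\chi(g)\thh_f$ with $|\chi(g)|=1$, the computation
\be
f(\rho_g(a))=\lng U_g\pi_f(a)U_g^{-1}\thh_f\,|\,\thh_f\rng=|\chi(g)|^2\lng\pi_f(a)\thh_f\,|\,\thh_f\rng=f(a)
\ee
already gives stationarity, so the phase drops out automatically. Your identification of the crucial inputs on that side --- commutation of internal symmetries with translations and uniqueness of the vacuum --- is exactly right.
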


\begin{theorem} \mar{t6} \label{t6}
A group $G$ of internal symmetries admits a unitary representation
if a strong spectrum condition holds, i.e., there exists a mass
gap.
\end{theorem}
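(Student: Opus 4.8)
The plan is to deduce Theorem \ref{t6} from Theorem \ref{t5}: it is enough to show that the strong spectrum condition forces the Wightman functions of the theory to be $G$-invariant, for then Theorem \ref{t5} produces the unitary representation of $G$ in $E_W$. Since $G$ is a connected Lie group whose infinitesimal generators are the charges $Q^k=\int j^k_0(x)\,d^3x$ attached to the conserved currents $j^k_\mu$, $G$-invariance of every Wightman function $W_n$ is equivalent to the vanishing of all the infinitesimal variations $\dl_k W_n$. Writing $\dl_k W_n$ as the spatial integral of $\lng 0|[j^k_0(x),\phi_1(x_1)\cdots\phi_n(x_n)]|0\rng$ and expanding the commutator, one sees that it is assembled from the order parameters $\lng 0|[Q^k,B(0)]|0\rng$, where $B$ ranges over the local fields (basic and composite) of the theory. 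Hence the whole theorem reduces to showing: under a mass gap every such order parameter vanishes --- which is the contrapositive of the Goldstone theorem.

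For that step, fix a local field $B$ and consider the tempered distribution $F_\mu(x)=\lng 0|[j^k_\mu(x),B(0)]|0\rng$. Inserting a complete set of energy--momentum eigenstates and using $\lng 0|j^k_\mu(x)|0\rng=0$, the spectrum condition confines the Fourier transform $\wt F_\mu(p)$ to $\ol V_+\cup(-\ol V_+)$, and the strong spectrum condition sharpens this to the region $p^2\ge m^2$ with $m>0$. Lorentz covariance of the current forces $\wt F_\mu(p)=p_\mu\,\epsilon(p_0)\,\rho(p^2)$ for a single Lorentz-invariant distribution $\rho$, and current conservation $\dr^\mu j^k_\mu=0$ gives $p^\mu\wt F_\mu(p)=p^2\rho(p^2)\epsilon(p_0)=0$, so $\rho$ is concentrated on the light cone $p^2=0$. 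This is incompatible with the support in $\{p^2\ge m^2\}$, so $\rho\equiv 0$, hence $F_\mu\equiv 0$; in particular $\lng 0|[Q^k,B]|0\rng=\int F_0(x)\,d^3x=0$. Substituting back, all variations $\dl_k W_n$ vanish, the Wightman functions are $G$-invariant by connectedness of $G$, and Theorem \ref{t5} completes the argument.

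The main obstacle is the rigorous treatment of the charge $Q^k$ and of the spatial (adiabatic) limit, because when the symmetry is spontaneously broken $Q^k$ is not an operator on $E_W$. One has to work throughout with currents smeared by test functions supported in a large spatial ball and in a short time interval, prove that the matrix elements in question have limits as the ball grows and the interval shrinks, and invoke microcausality of the currents together with the support properties coming from the spectrum condition in order to discard the surface terms. The same care is needed to justify the covariant decomposition of $\wt F_\mu$ (in particular to rule out derivative-of-$\delta$ contributions to $\rho$) and to carry out the reduction of the $n$-point Ward identities to the order parameters. All of this is standard in axiomatic quantum field theory (cf. \cite{bog}), but it is precisely where the spectrum condition and locality are used.
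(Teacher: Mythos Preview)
The paper does not actually prove Theorem~\ref{t6}: like Theorem~\ref{t5}, it is stated as a known result with a bare citation to \cite{bog}, and then used to motivate the Goldstone theorem (Theorem~\ref{t7}). So there is no in-paper argument to compare against. Your sketch reproduces the standard route one finds in that reference: the K\"all\'en--Lehmann analysis of $\lng 0|[j^k_\mu(x),B(0)]|0\rng$, current conservation forcing the spectral density onto the light cone, the mass gap killing it, hence vanishing of all order parameters, $G$-invariance of the Wightman functions, and finally Theorem~\ref{t5}. That is exactly the intended logic.

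One point worth tightening: the covariant decomposition $\wt F_\mu(p)=p_\mu\,\epsilon(p_0)\,\rho(p^2)$ is only valid when $B$ is a Lorentz \emph{scalar}. For the objects you actually need --- products $\phi_1(x_1)\cdots\phi_n(x_n)$ at distinct points, or fields carrying spin --- the tensor decomposition has more pieces, and your ``reduction to order parameters $\lng 0|[Q^k,B(0)]|0\rng$ with $B$ local'' does not literally go through by expanding the commutator. The cleaner way to close the gap is to bypass the Lorentz decomposition altogether: use that $\lng 0|[Q^k_R(t),\Phi]|0\rng$ is $t$-independent (current conservation plus locality), time-average, and observe that only zero-energy intermediate states survive, which the mass gap and $\lng 0|j^k_\mu|0\rng=0$ exclude. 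You already flag the adiabatic-limit issues in your last paragraph; the same caveat applies here.
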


As a consequence, we come to the above mentioned Goldstone
theorem.

\begin{theorem} \mar{t7} \label{t7}
If there is a group $G$ of internal symmetries which are
spontaneously broken, there exist elements $\f\in E_W$ of zero
spin and mass such that $\lng \f| j^k_\m\psi_0\rng\neq 0$ for some
generators of $G$.
\end{theorem}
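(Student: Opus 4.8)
The plan is to read off a massless, spinless contribution to a current two-point function directly from the failure of $G$-invariance, keeping everything in terms of smeared local operators so as never to invoke the ill-defined charges $Q^k=\int j^k_0\,d^3x$. Since $G$ is spontaneously broken it has no unitary representation in $E_W$, so Theorem \ref{t5} says the Wightman functions are not $G$-invariant and (the contrapositive of) Theorem \ref{t6} says there is no mass gap. Non-invariance supplies an \emph{order parameter}: there is an element $A$ of the Borchers algebra and an index $k$ such that the infinitesimal variation of $\lng\psi_0|A\psi_0\rng$ along the $k$-th generator is a non-zero number $c$; equivalently, for a space-cutoff smearing $j^k_0(g_R)$ of the conserved current that approximates $Q^k$ as $R\to\infty$, the limit of $\lng\psi_0|[\,j^k_0(g_R),A(0)\,]\psi_0\rng$ equals $c\neq0$.

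Next I would study the commutator distribution $C^k_\m(x)=\lng\psi_0|[\,j^k_\m(x),A(0)\,]\psi_0\rng$. By translation invariance it depends on $x$ alone; by the spectrum condition $\wt C^k_\m(p)=\e(p_0)\,\vr_\m(p)$ with $\vr_\m$ supported on $p^2\ge0$, where $\e$ is the sign function. Lorentz covariance --- $j^k_\m$ a vector field, $A$ a scalar, $\psi_0$ invariant --- pins $\vr_\m$ to the structure $\vr_\m(p)=p_\m\,\si(p^2)$ for a single scalar distribution $\si$ in $p^2\ge0$. Imposing current conservation $\dr^\m j^k_\m=0$, i.e. $p^\m\wt C^k_\m(p)=0$, gives $p^2\si(p^2)=0$, hence $\si(p^2)=c'\,\dl(p^2)$ for a constant $c'$: all of the spectral weight sits on the light cone $p^2=0$.

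It then remains to evaluate $c'$ and to identify the responsible states. Integrating $C^k_0$ over a spatial slice recovers exactly the number $c$ of the first paragraph, so $c'$ is a non-zero multiple of $c$ and in particular $c'\neq0$. Inserting a complete set of states into $C^k_\m$, the term $c'\,p_\m\,\dl(p^2)\,\e(p_0)$ forces the presence of zero-mass states $\f\in E_W$ with $\lng\psi_0|j^k_\m(0)\f\rng\neq0$ and $\lng\f|A(0)\psi_0\rng\neq0$. Finally, since $j^k_\m$ is a Lorentz vector and $A$ a Lorentz scalar, the polarization content of a massless representation contracted into $\lng\psi_0|j^k_\m(0)\f\rng$ can produce the tensor structure $p_\m\,\dl(p^2)$ only out of the helicity-zero (spin-zero) sector; a standard little-group computation excludes nonzero helicity here. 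This delivers $\f\in E_W$ of zero mass and zero spin with $\lng\f|j^k_\m\psi_0\rng\neq0$, which is the assertion.

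I expect the main obstacle to be the first step: turning ``$G$ is spontaneously broken'' into the rigorous statement that the regularized commutator $\lng\psi_0|[\,j^k_0(g_R),A(0)\,]\psi_0\rng$ has a non-zero large-volume limit independent of the cutoff $g_R$, i.e. controlling the volume (and, implicitly, the test-function) limits of the smeared currents within the Wightman framework. Granting that, the remainder is the by-now routine spectral-and-Lorentz-covariance argument sketched above.
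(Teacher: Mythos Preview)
The paper does not actually supply a proof of Theorem~\ref{t7}: it is stated as a consequence of Theorems~\ref{t5} and~\ref{t6} with a reference to \cite{bog}, and the text then moves on to discuss Goldstone states. So there is no ``paper's own proof'' to compare against beyond the implicit appeal to the standard Wightman-framework argument in the cited monograph.

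Your sketch \emph{is} that standard argument, and it is laid out correctly: (i) use the contrapositives of Theorems~\ref{t5} and~\ref{t6} to secure a non-invariant Wightman function and the absence of a mass gap; (ii) extract an order parameter $A$ with non-vanishing infinitesimal variation; (iii) analyse the commutator $\lng\psi_0|[j^k_\m(x),A(0)]\psi_0\rng$ via the spectral condition and Lorentz covariance to get $\vr_\m(p)=p_\m\si(p^2)$; (iv) use current conservation to force $\si(p^2)=c'\dl(p^2)$; (v) feed the non-zero order parameter back in to show $c'\neq0$; (vi) read off the massless spin-zero intermediate states from the resolution of the identity. This is precisely the Ezawa--Swieca/Kastler--Robinson--Swieca line of proof reproduced in \cite{bog}, so your approach matches what the paper is tacitly invoking.

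Your self-identified weak point is the genuine one: the passage from ``$G$ is spontaneously broken'' to a well-defined, cutoff-independent limit $\lim_{R\to\infty}\lng\psi_0|[j^k_0(g_R),A(0)]\psi_0\rng\neq0$ is where the real analysis lives, and it requires locality (to make the commutator have compact spatial support for fixed time-smearing) together with a careful choice of the test functions $g_R$. The little-group step excluding nonzero helicity is also more delicate than a single sentence suggests, but for a scalar $A$ and a conserved vector $j^k_\m$ the reduction to helicity zero is indeed routine. Overall the proposal is a faithful outline of the textbook proof the paper defers to.
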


These elements of unit norm are called Goldstone states. It is
easily observed that, if a group $G$ of spontaneously broken
symmetries contains a subgroup of exact symmetries $H$, the
Goldstone states carrier out a homogeneous representation of $G$
isomorphic to the quotient $G/H$.

This fact attracted great attention to such kind representations
and motivated to describe classical Higgs fields as sections of a
fibre bundle with a typical fibre $G/H$ \cite{higgs,sard08a,tmp}.

\section{Appendixes}

This Section summarizes some relevant material on topological
vector spaces and measures on non-compact spaces.

\subsection{Topological vector spaces}

There are several standard topologies introduced on an
(infinite-dimensional) complex or real vector space and its dual
\cite{book05,rob}. Topological vector spaces throughout are
assumed to be locally convex. Unless otherwise stated, by the dual
$V'$ of a topological vector space $V$ is meant its topological
dual, i.e., the space of continuous linear maps of $V\to \mathbb
R$.

Let us note that a topology on a vector space $V$ often is
determined by a set of seminorms. A non-negative real function $p$
on $V$ is called the seminorm if it satisfies the conditions
\be
p(\la x)=|\la|p(x), \qquad p(x+y)\leq p(x) +p(y), \qquad x,y\in V,
\quad \la\in\mathbb R.
\ee
A seminorm $p$ for which $p(x)=0$ implies $x=0$ is called the
norm. Given any set $\{p_i\}_{i\in I}$ of seminorms on a vector
space $V$, there is the coarsest topology on $V$ compatible with
the algebraic structure such that all seminorms $p_i$ are
continuous. It is a locally convex topology whose base of closed
neighborhoods consists of sets
\be
\{x\, :\, \op\sup_{1\leq i\leq n} p_i(x)\leq \ve\}, \qquad \ve>0,
\qquad n\in\mathbb N_+.
\ee

Let $V$ and $W$ be two vector spaces whose Cartesian product
$V\times W$ is provided with a bilinear form $\lng v,w\rng$ which
obeys the following conditions:

$\bullet$ for any element $v\neq 0$ of $V$, there exists an
element $w\in W$ such that $\lng v, w\rng\neq 0$;

$\bullet$ for any element $w\neq 0$ of $W$, there exists an
element $v\in V$ such that $\lng v, w\rng\neq 0$.

\noindent Then one says that $(V,W)$ is a dual pair. If $(V,W)$ is
a dual pair, so is $(W,V)$. Clearly, $W$ is isomorphic to a vector
subbundle of the algebraic dual $V^*$ of $V$, and $V$ is a
subbundle of the algebraic dual of $W$.

Given a dual pair $(V,W)$, every vector $w\in W$ defines a
seminorm $p_w=|\lng v,w\rng|$ on $V$. The coarsest topology
$\si(V,W)$ on $V$ making all these seminorms continuous is called
the weak topology determined by $W$ on $V$.  It also is the
coarsest topology on $V$ such that all linear forms in $W\subset
V^*$ are continuous. Moreover, $W$ coincides with the dual $V'$ of
$V$ provided with the weak topology $\si(V,W)$, and $\si(V,W)$ is
the coarsest topology on $V$ such that $V'=W$.  Of course, the
weak topology is Hausdorff.

For instance, if $V$ is a Hausdorff topological vector space with
the dual $V'$, then $(V,V')$ is a dual pair. The weak topology
$\si(V,V')$ on $V$ is coarser than the original topology on $V$.
Since $(V',V)$ also is a dual pair, the dual $V'$ of $V$ can be
provided with the weak$^*$ topology topology $\si(V',V)$. Then $V$
is the dual of $V'$, equipped with the weak$^*$ topology.

The weak$^*$ topology is the coarsest case of a topology of
uniform convergence on $V'$. A subset $M$ of a vector space $V$ is
said to absorb a subset $N\subset V$ if there is a number $\e\geq
0$ such that $N\subset \la M$ for all $\la$ with $|\la|\geq \e$.
An absorbent set is one which absorbs all points. A subset $N$ of
a topological vector space $V$ is called bounded if it is absorbed
by any neighborhood of the origin of $V$. Let $(V,V')$ be a dual
pair and $\cN$ some family of weakly bounded subsets of $V$. Every
$N\subset\cN$ yields a seminorm
\be
p_N(v')=\op\sup_{v\in N} |\lng v, v'\rng|
\ee
on the dual $V'$ of $V$. A topology on $V'$ defined by a set of
seminorms $p_N$, $N\in\cN$, is called the topology of uniform
convergence on the sets of $\cN$. When $\cN$ is a set of all
finite subsets of $V$, we have the coarsest topology of uniform
convergence which is the above mentioned weak$^*$ topology
$\si(V',V)$. The finest topology of uniform convergence is
obtained by taking $\cN$ to be a set of all weakly bounded subsets
of $V$. It is called the strong topology. The dual $V''$ of $V'$,
provided with the strong topology, is called the bidual. One says
that $V$ is reflexive if $V=V''$.

Since $(V',V)$ is a dual pair, a vector space $V$ also can be
provided with the topology of uniform convergence on subsets of
$V'$, e.g., the weak$^*$ and strong topologies. Moreover, any
Hausdorff locally convex topology on $V$ is a topology of uniform
convergence. The coarsest and finest topologies of them are the
weak$^*$ and strong topologies, respectively. There is the
following chain
\be
\mathrm{weak}^* \,<\, \mathrm{weak}\,<\, \mathrm{original}\,<\,
\mathrm{strong}
\ee
of topologies on $V$, where $<$ means "to be finer".

For instance, let $V$ be a normed space. The dual $V'$ of $V$ also
is equipped with a norm
\mar{spr446}\beq
\|v'\|'=\op\sup_{\|v\|=1}|\lng v,v'\rng|, \qquad v\in V, \qquad
v'\in V'. \label{spr446}
\eeq
Let us consider a set of all balls $\{v\,:\, \|v\|\leq\e,
\,\e>0\}$ in $V$. The topology of uniform convergence on this set
coincides with strong and normed topologies on $V'$ because weakly
bounded subsets of $V$ also are bounded by a norm. Normed and
strong topologies on $V$ are equivalent. Let $\ol V$ denote the
completion of a normed space $V$. Then $V'$ is canonically
identified to $(\ol V)'$ as a normed space, though weak$^*$
topologies on $V'$ and $(\ol V)'$ are different. Let us note that
both $V'$ and $V''$ are Banach spaces. If $V$ is a Banach space,
it is closed in $V''$ with respect to the strong topology on $V''$
and dense in $V''$ equipped with the weak$^*$ topology. One
usually considers the weak$^*$, weak and normed (equivalently,
strong) topologies on a Banach space.

It should be noted that topology on a finite-dimensional vector
space is locally convex and Hausdorff iff it is determined by the
Euclidean norm.

Let us say a few words on morphisms of topological vector spaces.

A linear morphism between two topological vector spaces is called
the weakly continuous morphism if it is continuous with respect to
the weak topologies on these vector spaces. In particular, any
continuous morphism between topological vector spaces is weakly
continuous \cite{rob}.

A linear morphism between two topological vector spaces is called
 bounded if the image of a bounded set is bounded. Any continuous
morphism is bounded. A topological vector space is called  the
Mackey space if any bounded endomorphism of this space is
continuous (we follow the terminology of \cite{rob}). Metrizable
and, consequently, normed spaces are of this type.

Any linear morphism $\g: V\to W$ of topological vector spaces
yields the dual morphism $\g': W'\to V'$ of the their topological
duals such that
\be
\lng v,\g'(w)\rng= \lng \g(v),w\rng, \qquad v\in V, \qquad w\in W.
\ee
If $\g$ is weakly continuous, then $\g'$ is weakly$^*$ continuous.
If $V$ and $W$ are normed spaces, then any weakly continuous
morphism $\g: V\to W$ is continuous and strongly continuous. Given
normed topologies on $V'$ and $W'$, the dual morphism $\g':W'\to
V'$ is continuous iff $\g$ is continuous.

\subsection{Hilbert, countably Hilbert and nuclear spaces}

Let us recall the relevant basics on pre-Hilbert and Hilbert
spaces \cite{bourb,book05}.

A Hermitian form on a complex vector space $E$ is defined as a
sesquilinear form $\lng.|.\rng$ such that
\be
\lng e|e'\rng=\ol{\lng e'|e\rng}, \qquad \lng \la e|e'\rng=\lng
e|\ol\la e'\rng=\la \lng e|e'\rng, \quad e,e'\in E, \quad \la\in
\mathbb C.
\ee

\begin{remark}
There exists another convention where $\lng e|\la e'\rng=\la \lng
e|e'\rng$.
\end{remark}

A Hermitian form $\lng.|.\rng$ is said to be positive  if $\lng
e|e\rng\geq 0$ for all $e\in E$. All Hermitian forms throughout
are assumed to be positive. A Hermitian form is called
non-degenerate if the equality $\lng e|e\rng= 0$ implies $e=0$. A
complex vector space endowed with a Hermitian form is called the
pre-Hilbert space. Morphisms of pre-Hilbert spaces, by definition,
are isometric.

A Hermitian form provides $E$ with the topology defined by a
seminorm $\| e\|=\lng e|e\rng^{1/2}$. Hence, a pre-Hilbert space
is Hausdorff iff a Hermitian form $\lng.|.\rng$ is non-degenerate,
i.e., a seminorm $\| e\|$ is a norm. In this case, it is called
the scalar product.

A complete Hausdorff pre-Hilbert space is called the Hilbert
space. Any Hausdorff pre-Hilbert space can be completed to a
Hilbert space. .

The following are the standard constructions of new Hilbert spaces
from the old ones.

$\bullet$ Let  $(E^\iota, \lng.|.\rng_{E^\iota})$ be a set of
Hilbert spaces and $\sum E^\iota$ denote a direct sum of vector
spaces $E^\iota$. For any two elements $e=(e^\iota)$ and
$e'=(e'^\iota)$ of $\sum E^\iota$, a sum
\mar{spr403}\beq
\lng e|e'\rng_\oplus = \op\sum_\iota \lng
e^\iota|e'^\iota\rng_{E^\iota} \label{spr403}
\eeq
is finite, and defines a non-degenerate Hermitian form on $\sum
E^\iota$. The completion $\oplus E^\iota$ of $\sum E^\iota$ with
respect to this form is a Hilbert space, called the Hilbert sum of
$E^\iota$.

$\bullet$ Let $(E,\lng.|.\rng_E)$ and $(H,\lng.|.\rng_H)$ be
Hilbert spaces. Their tensor product $E\ot H$ is defined as the
completion of a tensor product of vector spaces $E$ and $H$ with
respect to the scalar product
\be
&& \lng w_1|w_2\rng_\ot=\op\sum_{\iota,\bt}\lng e_1^\iota | e_2^\bt\rng_E
\lng h_1^\iota | h_2^\bt\rng_H,\\
&& w_1=\op\sum_\iota e^\iota_1\ot h^\iota_1,  \quad
w_2=\op\sum_\bt e^\bt_2\ot h^\bt_2, \quad e_1^\iota,e_2^\bt\in E,
\quad h_1^\iota,h_2^\bt\in H.
\ee

$\bullet$ Let $E'$ be the topological dual of a Hilbert space $E$.
Then the assignment
\mar{spr400}\beq
e\to \ol e(e')= \lng e'|e\rng, \qquad e,e'\in E, \label{spr400}
\eeq
defines an antilinear  bijection of $E$ onto $E'$, i.e., $\ol{\la
e}=\ol\la\ol e$. The dual $E'$ of a Hilbert space is a Hilbert
space provided with the scalar product $\lng \ol e|\ol e'\rng'=
\lng e'|e\rng$ such that the morphism (\ref{spr400}) is isometric.
The $E'$ is called the dual Hilbert space,  and is denoted by $\ol
E$.

Physical applications of Hilbert spaces are limited by the fact
that the dual of a Hilbert space $E$ is anti-isomorphic to $E$.
The construction of a rigged Hilbert space describes the dual
pairs $(E,E')$ where $E'$ is larger than $E$ \cite{gelf64}.

Let a complex vector space $E$ have a countable set of
non-degenerate Hermitian forms $\lng.|.\rng_k$, $k\in\mathbb N_+,$
such that
\be
\lng e|e\rng_1\leq \cdots\leq \lng e|e\rng_k\leq\cdots
\ee
for all $e\in E$. The family of norms
\mar{spr445}\beq
\|.\|_k=\lng.|.\rng^{1/2}_k, \qquad k\in\mathbb N_+,
\label{spr445}
\eeq
yields a Hausdorff topology on $E$. A space $E$ is called the
countably Hilbert space if it is complete with respect to this
topology \cite{gelf64}. For instance, every Hilbert space is a
countably Hilbert space where all Hermitian forms $\lng.|.\rng_k$
coincide.

Let $E_k$ denote the completion of $E$ with respect to the norm
$\|.\|_k$ (\ref{spr445}). There is the chain of injections
\mar{1086}\beq
E_1\supset E_2\supset \cdots E_k\supset \cdots \label{1086}
\eeq
together with a homeomorphism $E=\op\cap_k E_k$. The dual spaces
form the increasing chain
\mar{1087}\beq
E'_1\subset E'_2\subset \cdots \subset E'_k \subset \cdots,
\label{1087}
\eeq
and $E'=\op\cup_k E'_k$. The dual $E'$ of $E$ can be provided with
the weak$^*$ and strong topologies. One can show that a countably
Hilbert space is reflexive.

Given a countably Hilbert space $E$ and $m\leq n$, let $T^n_m$ be
a prolongation of the map
\be
E_n\supset E\ni e\to e\in E\subset E_m
\ee
to a continuous map of $E_n$ onto a dense subset of $E_m$. A
countably Hilbert space $E$ is called the nuclear space if, for
any $m$, there exists $n$ such that $T^m_n$ is a nuclear map,
i.e.,
\be
T^n_m(e)=\op\sum_i\la_i\lng e|e^i_n\rng_{E_n} e^i_m,
\ee
where: (i) $\{e^i_n\}$ and $\{e_m^i\}$ are bases for the Hilbert
spaces $E_n$ and $E_m$, respectively, (ii) $\la_i\geq 0$, (iii)
the series $\sum \la_i$ converges \cite{gelf64}.

An important property of nuclear spaces is that they are perfect,
i.e., every bounded closed set in a nuclear space is compact. It
follows immediately that a Banach (and Hilbert) space is not
nuclear, unless it is finite-dimensional. Since a nuclear space is
perfect, it is separable, and the weak$^*$ and strong topologies
(and, consequently, all topologies of uniform convergence) on a
nuclear space $E$ and its dual $E'$ coincide.

Let $E$ be a nuclear space, provided with still another
non-degenerate Hermitian form $\lng.|.\rng$ which is separately
continuous, i.e., continuous with respect to each argument. It
follows that there exist numbers $M$ and $m$ such that
\mar{1085}\beq
\lng e|e\rng\leq M\|e\|_m, \qquad e\in E. \label{1085}
\eeq
Let $\wt E$ denote the completion of $E$ with respect to this
form. There are the injections
\mar{spr450}\beq
E\subset \wt E\subset E', \label{spr450}
\eeq
where $E$ is a dense subset of $\wt E$ and $\wt E$ is a dense
subset of $E'$, equipped with the weak$^*$ topology. The triple
(\ref{spr450}) is called the rigged Hilbert space. Furthermore,
bearing in mind the chain of Hilbert spaces (\ref{1086}) and that
of their duals (\ref{1087}), one can convert the triple
(\ref{spr450}) into the chain of spaces
\mar{1088}\beq
E\subset\cdots\subset E_k\subset\cdots E_1\subset\wt E\subset
E'_1\subset \cdots\subset E'_k\subset\cdots\subset E'.
\label{1088}
\eeq

\begin{remark}
Real Hilbert, countably Hilbert, nuclear and rigged Hilbert spaces
are similarly described.
\end{remark}

\subsection{Measures on locally compact spaces}

Measures on a locally compact space $X$ are defined as continuous
forms on spaces of continuous real (or complex) functions of
compact support on $X$, and they are extended to a wider class of
functions on $X$ \cite{bourb6,book05}.

Let $\mathbb C^0(X)$ be a ring of continuous complex functions on
$X$. For each compact subset $K$ of $X$,  we have a seminorm
\mar{spr490}\beq
p_K(f)=\op\sup_{x\in K} |f(x)| \label{spr490}
\eeq
on $C^0(X)$. These seminorms provide $\mathbb C^0(X)$ with the
Hausdorff topology of compact convergence. We abbreviate with
$\cK(X,\mathbb C)$ the dense subspace of $\mathbb C^0(X)$ which
consists of continuous complex functions of compact support on
$X$. It is a Banach space with respect to a norm
\mar{spr508}\beq
\|f\|=\op\sup_{x\in X} |f(x)|. \label{spr508}
\eeq
Its normed topology, called the topology of uniform convergence,
is finer than the topology of compact convergence, and these
topologies coincide if $X$ is a compact space.

A space $\cK(X,\mathbb C)$ also can be equipped with another
topology, which is especially relevant to integration theory. For
each compact subset $K\subset X$, let $\cK_K(X,\mathbb C)$ be a
vector subspace of $\cK(X,\mathbb C)$ consisting of functions of
support in $K$. Let $\cU$ be a set of all absolutely convex
absorbent subsets $U$ of $\cK(X,\mathbb C)$ such that, for every
compact $K$, a set $U\cap \cK_K(X,\mathbb C)$ is a neighborhood of
the origin in $\cK_K(X,\mathbb C)$ under the topology of uniform
convergence on $K$. Then $\cU$ is a base of neighborhoods for a
(locally convex) topology, called the inductive limit topology, on
$\cK(X,\mathbb C)$ \cite{rob}. This is the finest topology such
that an injection $\cK_K(X,\mathbb C)\to \cK(X,\mathbb C)$ is
continuous. The inductive limit topology is finer than the
topology of uniform convergence, and these topologies coincide if
$X$ is a compact space. Unless otherwise stated, referring to a
topology on $\cK(X,\mathbb C)$, we will mean the inductive limit
topology.

A complex measure on a locally compact space $X$ is defined as a
continuous form $\m$ on a space $\cK(X,\mathbb C)$ of continuous
complex functions of compact support on $X$. The value $\m(f)$,
$f\in \cK(X,\mathbb C)$, is called the integral $\int f\m$ of $f$
with respect to the measure $\m$. The space $M(X,\mathbb C)$ of
complex measures on $X$ is the dual  of $\cK(X, \mathbb C)$. It is
provided with the weak$^*$ topology.

Given a complex measure $\m$, any continuous complex function
$h\in \mathbb C^0(X)$ on $X$ defines the continuous endomorphism
$f\to hf$ of the space $\cK(X,\mathbb C)$ and yields a new complex
measure $h\m(f)=\m(hf)$. Hence, the space $M(X,\mathbb C)$ of
complex measures on $X$ is a module over the ring $\mathbb
C^0(X)$.

Let $\cK(X)\subset \cK(X,\mathbb C)$ denote a vector space of
continuous real functions of compact support on $X$. The
restriction of a complex measure $\m$ on $X$ to $\cK(X)$ is a
continuous complex form on $\cK(X)$, equipped with the inductive
limit topology. Any complex measure $\m$ is uniquely determined as
\mar{spr512}\beq
\m(f)=\m(\re f) + i\m(\im f), \qquad f\in\cK(X,\mathbb C),
\label{spr512}
\eeq
by its restriction to $\cK(X)$. A complex measure $\m$ on $X$ is
called a real measure if its restriction to $\cK(X)$ is a real
form. A complex measure $\m$ is real iff $\m=\ol\m$, where $\ol\m$
is the conjugate measure given by the condition
$\ol\m(f)=\ol{\m(\ol f)}$, $f\in \cK(X,\mathbb C)$.

A  measure on a locally compact space $X$ is defined as a
continuous real form on a space $\cK(X)$ of continuous real
functions of compact support on $X$ provided with the inductive
limit topology. Any real measure on $\cK(X,\mathbb C)$ restricted
to $\cK(X)$ is a measure. Conversely, each measure $\m$ on
$\cK(X)$ is extended to the real measure (\ref{spr512}) on
$\cK(X,\mathbb C)$. Thus, measures on $\cK(X)$ and  real measures
on $\cK(X,\mathbb C)$ can be identified.

A measure $\m$ on a locally compact space $X$ is called  positive
if $\m(f)\geq 0$ for all positive functions $f\in \cK(X)$. Any
measure $\m$ defines the positive measure $|\m|(f)=|\m(f)|$, and
can be represented by the combination
\be
\m=\frac12(|\m|+\m)-\frac12(|\m|-\m)
\ee
of two positive measures.

A complex measure $\m$ on a locally compact space $X$ is called
bounded if there is a positive number $\la$ such that $|\m(f)|\leq
\la\|f\|$ for all $f\in \cK(X, \mathbb C)$. A complex measure $\m$
is bounded iff it is continuous with respect to the topology of
uniform convergence on $\cK(X,\mathbb C)$. Hence, a space
$M^1(X,\mathbb C)\subset M(X,\mathbb C)$ of bounded complex
measures is the dual of $\cK(X,\mathbb C)$, provided with this
topology. It is a Banach space with respect to a norm
\mar{spr565}\beq
\|\m\|=\sup\{ |\m(f)|\,:\, \|f\|=1\},\, f\in \cK(X,\mathbb C)\}.
\label{spr565}
\eeq
Of course, any complex measure on a compact space is bounded. If
$\m$ is a bounded complex measure and $h$ is a bounded continuous
function on $X$, the complex measure $h\m$ is bounded.

Similarly, a Banach space $M^1(X)$ of bounded measures on $X$ is
defined.

\begin{example} \label{spr492} \mar{spr492}
Given a point $x\in X$, the assignment $\ve_x:f\to f(x)$, $f\in
\cK(X)$, defines the Dirac measure on $X$. Any finite linear
combination of Dirac measures is a measure, called a point
measure. The Dirac measure $\ve_x$ is bounded, and $\|\ve_x\|=1$.
\end{example}

Now we extend a class of integrable functions as follows. Let
$\mathbb R_{\pm\infty}$ denote the extended real line, obtained
from $\mathbb R$ by the adjunction of points $\{+\infty\}$ and
$\{-\infty\}$. It is a ring such that $0\cdot\infty=0$ and
$\infty-\infty=0$. Let $J_+$ be a space of positive lower
semicontinuous functions on $X$ which take their values in the
extended real line $\mathbb R_{\pm\infty}$. These functions
possess the following important properties:

$\bullet$ the upper bound of any set of elements of $J_+$ and the
lower bound of a finite set of elements of $J_+$ also are elements
of $J_+$;

$\bullet$ any function $f\in J_+$ is an upper bound of a family of
positive functions $h\in\cK(X)$ such that $h\leq f$.

The last fact enables one to define the upper integral of a
function $f\in J_+$ with respect to a positive measure $\m$ on $X$
as the element
\mar{spr493}\beq
\m^*(f)=\int^* f\m=\sup\{\m(h)\,:\, h\in\cK(X),\,0\leq h\leq f\}
\label{spr493}
\eeq
of $\mathbb R_{\pm\infty}$. Of course, $\m^*(f)=\m(f)$ if $f\in
\cK(X)$.

\begin{example}
Let $U$ be an open subset of $X$ and $\vf_U$ its  characteristic
function. It is readily observed that $\vf_U\in J_+$. Given a
positive measure $\m$ on $X$, the upper integral
$\m(U)=\m^*(\vf_U)$ is called the outer measure of $U$. For
instance, the outer measure of a relatively compact set $U$ (i.e.,
$U$ is a subset of a compact set) is finite. The (finite or
infinite) number $\m(X)=\m^*(1)$ is called the  total mass of a
measure $\m$. In particular, a measure on a locally compact space
is bounded iff it has a finite total mass.
\end{example}

Let $f$ be an arbitrary positive $\mathbb R_{\pm\infty}$-valued
function on a locally compact space $X$ (not necessarily lower
semicontinuous). There exist functions $g\in J_+$ such that $g\geq
f$ (e.g., $g=+\infty$). Then the upper integral of $f$ with
respect to a positive measure $\m$ on $X$ is defined as
\mar{spr495}\beq
\m^*(f)=\inf\{\m^*(h)\,:\, h\in J_+,\,  h\geq f\}. \label{spr495}
\eeq

\begin{example}
The outer measure $\m^*(V)=\m^*(\vf_V)$ of an arbitrary subset $V$
of $X$ exemplifies the upper integral (\ref{spr495}). In
particular, one says that $V\subset X$ is a $\m$-null set if
$\m(V)=0$. Two $\mathbb R_{\pm\infty}$-valued functions $f$ and
$f'$ on a locally compact space are called $\m$-equivalent if they
differ from each other only on a $\m$-null set; then
$\m^*(f)=\m^*(f')$. Two positive measures $\m$ and $\m'$ are said
to be equivalent if any compact $\m$-null set also is a $\m'$-null
set, and {\it vice versa}. They coincide if $\m(K)=\m'(K)$ for any
compact set $K\subset X$.
\end{example}

\begin{example}
A real function $f$ on a subset $V\subset X$ is said to be defined
almost everywhere with respect to a positive measure $\m$ on $X$
if the complement $X\setminus V$ of $V$ is a $\m$-null set. For
instance, an $\mathbb R_{\pm\infty}$-valued function $f$ which is
finite almost everywhere on $X$ exemplifies a real function
defined almost everywhere on $X$. Conversely, one can think of a
positive function defined almost everywhere on $X$ as being
$\m$-equivalent to some positive $\mathbb R_{\pm\infty}$-valued
function on $X$.
\end{example}

The following classes of integrable functions (and maps) are
usually considered.

Let $f$ be a map of a locally compact space $X$ to a Banach space
$F$, provided with a norm $|.|$ (e.g., $F$ is $\mathbb R$ or
$\mathbb C$). Given a positive measure $\m$ on $X$, let us define
the positive (finite or infinite) number
\mar{spr496}\beq
N_p(f)=\left[\int^* |f|^p \m\right]^{1/p}, \qquad 1\leq p<\infty.
\label{spr496}
\eeq
Clearly, $N_p(f)=N_p(f')$ if $f$ and $f'$ are $\m$-equivalent maps
on $X$, i.e., if they differ on a $\m$-null subset of $X$. There
is the Minkowski inequality
\mar{spr497}\beq
N_p(f+f')\leq N_p(f) + N_p(f'). \label{spr497}
\eeq

$\bullet$ Let $R^p_F(X,\m)$ be a space of maps $X\to F$ such that
$N_p(f)<+\infty$. In accordance with the Minkowski inequality
(\ref{spr497}), it is a vector space and $N_p$ (\ref{spr496}) is a
seminorm on $R^p_F(X,\m)$. Provided with the corresponding
topology, $R^p_F(X,\m)$ is a complete space, but not necessarily
Hausdorff. A space $\cK(X,F)$ of continuous maps $X\to F$ of
compact support belongs to $R^p_F(X,\m)$.

$\bullet$ A space $\cL^p_F(X,\m)$ is defined as the closure of
$\cK(X,F)\subset R^p_F(X,\m)$. Elements of $\cL^p_F(X,\m)$ are
called  integrable $F$-valued functions of degree $p$.  In
particular, elements of $\cL^1_F(X,\m)$ are called integrable
$F$-valued functions, while those of $\cL^2_F(X,\m)$ are square
integrable $F$-valued functions. Any element of $R^p_F(X,\m)$
which is $\m$-equivalent to an element of $\cL^p_F(X,\m)$ belongs
to $\cL^p_F(X,\m)$. An $F$-valued map defined almost everywhere on
$X$ also is called integrable if it is $\m$-equivalent to an
element of $\cL^p_F(X,\m)$.

$\bullet$ A space $L^p_F(X,\m)$ consists of classes of
$\m$-equivalent integrable $F$-valued maps of degree $p$. One
usually treat elements of this space as $F$-valued functions
without fear of confusion, and call them integrable $F$-valued
functions of degree $p$, too. The $L^p_F(X,\m)$ is a Banach space
with respect to the norm (\ref{spr496}).

There are the following important relations between the spaces
$L^p_F(X,\m)$, $1\leq p<+\infty$.

If $f\in \cL^p_F(X,\m)$, then $|f|^{(p/q)-1}f$ belongs to
$\cL^q_F(X,\m)$ for any $1\leq q<+\infty$, and {\it vice versa}.
Moreover, $f\to |f|^{(1/q)-1}f$ provides a homeomorphism between
topological spaces $\cL^1_F(X,\m)$ and $\cL^q_F(X,\m)$.

Let the numbers $1< p <+\infty$ and $1< q <+\infty$ obey the
condition
\mar{spr503}\beq
p^{-1} + q^{-1} =1. \label{spr503}
\eeq
If $f\in L^p_\mathbb C(X,\m)$ is an integrable complex function on
$X$ of degree $p$ and $f'\in L^q_\mathbb C(X,\m)$ is that of
degree $q$, then $f\ol f'$ is integrable, i.e., belongs to
$L^1_\mathbb C(X,\m)$. In particular, a space $L^2_\mathbb
C(X,\m)$ of square integrable complex functions on a locally
compact space $X$ is a separable Hilbert space with respect to a
scalar product
\mar{spr500}\beq
\lng f|f'\rng= \int f\ol f'\m. \label{spr500}
\eeq

One can say something more in the case of real functions. Let
numbers $p$ and $q$ obey the condition (\ref{spr503}). Any
integrable real function $f\in \cL^q_\mathbb R(X,\m)$ on $X$ of
degree $q$ defines a continuous real form
\mar{spr505}\beq
\f_f:f'\to \int ff'\m \label{spr505}
\eeq
on a space $L^p_\mathbb R(X,\m)$ such that $N_q(f)=\|\f_f\|$.
Conversely, each continuous real form on $L^p_\mathbb R(X,\m)$ is
of type (\ref{spr505}) where $f$ is an element of $\cL^q_\mathbb
R(X,\m)$ whose equivalence class in $L^q_\mathbb R(X,\m)$ is
uniquely defined. As a consequence, there is an isomorphism
between Banach spaces $L^q_\mathbb R(X,\m)$ and $(L^p_\mathbb
R(X,\m))'$, and a Banach space $L^p_\mathbb R(X,\m)$ is reflexive.

\begin{remark} \label{spr520} \mar{spr520}
One can define a space $L^\infty_\mathbb C(X,\m)$ of complex
infinite integrable functions on $X$ as the dual of a Banach space
$L^1_\mathbb C(X,\m)$. In particular, any bounded continuous
function belongs to $L^\infty_\mathbb C(X,\m)$. Let us note that a
space $L^1_\mathbb C(X,\m)$ is not reflexive, i.e., the dual of
$L^\infty_\mathbb C(X,\m)$, provided with the strong topology,
does not coincide with $L^1_\mathbb C(X,\m)$.
\end{remark}

We now turn to the relation between equivalent measures. Let $f$
be a positive $\mathbb R_{\pm\infty}$-valued function on a locally
compact space $X$. Given a positive measure $\m$ on $X$, a
quantity
\mar{spr521}\beq
\wt\m(f)=\op\sup_{K\subset X} \m^*(\vf_Kf), \label{spr521}
\eeq
where $K$ runs through a set of all compact subsets of $X$, is
called the essential upper integral of $f$. Since $\vf_Kf\leq f$
for any compact subset $K$, the inequality $\wt\m(f)\leq \m^*(f)$
holds. In particular, if $V$ is a subset of $X$ and $\wt
\m(\vf_V)=0$, one says that $V$ is a locally $\m$-null set,
 i.e., any point $x\in X$ has a neighborhood $U$ such that $U\cap
V$ is a $\m$-null set. Essential upper integrals coincide with the
upper ones if $X$ is a locally compact space countable at
infinity.

One says that a function on a subset $V$ of $X$ is defined locally
almost everywhere if the complement of $V$ is a locally null set.
A real function $f$ defined locally almost everywhere on $X$ is
called  locally $\m$-integrable if any point $x\in X$ has a
neighborhood $U$ such that $\vf_Uf$ is a $\m$-integrable function
or, equivalently, if $hf$ is a $\m$-integrable function for any
positive function $h\in \cK(X)$ of compact support.

Let $h$ be a locally $\m$-integrable function which is defined and
non-negative almost everywhere on $X$. There is a positive measure
$h\m$ on $X$ which obeys the relation $h\m(f)=\m(hf)$ for any
$f\in \cK(X)$. One says that $h\m$ is a measure with the basis
$\m$ and the density $h$. For instance, if $h$ is a
$\m$-integrable function, then $h\m$ is a bounded measure on $X$.
This construction also is extended to complex functions and
complex measures, seen as compositions of the positive real ones.

\begin{theorem} \label{spr510} \mar{spr510}
Positive measures $\m$ and $\m'$ on a locally compact space $X$
are equivalent iff $\m'=f\m$, where $f$ is a locally
$\m$-integrable function such that $f>0$ locally almost everywhere
on $X$.
\end{theorem}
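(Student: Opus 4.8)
The plan is to prove the two implications of the equivalence separately, the forward one (recovering the density) being the substantial step.

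\textbf{A density yields equivalence.} Suppose $\m'=f\m$ with $f$ locally $\m$-integrable and $f>0$ locally $\m$-almost everywhere, and let $K\subset X$ be compact. Then $\vf_Kf$ is $\m$-integrable and $\m'(K)=\m(\vf_Kf)$ by the defining relation $h\m(g)=\m(hg)$ of a measure with density. If $\m(K)=0$, then $\vf_Kf=0$ $\m$-a.e., so $\m'(K)=0$. Conversely, if $\m'(K)=0$, the non-negative $\m$-integrable function $\vf_Kf$ has zero integral, hence vanishes $\m$-a.e.; since a locally $\m$-null set meets the compact set $K$ in a $\m$-null set, the set $\{f=0\}\cap K$ is $\m$-null, and combining this with $\vf_Kf=0$ $\m$-a.e. forces $\m(K)=0$. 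Thus $\m$ and $\m'$ have exactly the same compact null sets, which is the asserted equivalence.

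\textbf{Equivalence yields a density: the bounded case.} Assume first that $X$ is compact, so that $\m$, $\m'$ and $\lambda:=\m+\m'$ are bounded. By Cauchy--Schwarz the form $g\mapsto\int g\,\m'$ is continuous on the Hilbert space $L^2_{\mathbb C}(X,\lambda)$, so there is $h\in L^2_{\mathbb C}(X,\lambda)$ with $\int g\,\m'=\int gh\,\lambda$ for all $g$, whence $\int g(1-h)\,\m'=\int gh\,\m$; testing against characteristic functions gives $0\leq h\leq 1$ $\lambda$-a.e. Taking $g=\vf_{\{h=1\}}$ shows $\m(\{h=1\})=0$, hence $\{h=1\}$ is $\m'$-null by hypothesis; taking $g=\vf_{\{h=0\}}$ shows $\m'(\{h=0\})=0$, hence $\{h=0\}$ is $\m$-null. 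So $0<h<1$ $\lambda$-a.e. and $f:=h/(1-h)$ is finite and strictly positive $\m$-a.e. Substituting $g=\phi/(1-h)$ for $\phi\geq 0$ in the identity above (extended from $L^2_{\mathbb C}(X,\lambda)$ to non-negative measurable functions by monotone convergence, legitimate since $\lambda$ is finite) gives $\m'(\phi)=\m(\phi f)$, i.e. $\m'=f\m$; with $\phi=1$ this yields $\int f\,\m=\m'(X)<\infty$, so $f$ is $\m$-integrable. (Equivalently, one may simply invoke the Radon--Nikodym theorem for Radon measures \cite{bourb6}.)

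\textbf{Equivalence yields a density: the general case, and the main obstacle.} For an arbitrary locally compact $X$, apply the bounded case to $\m|_K$ and $\m'|_K$ for every compact $K$, obtaining a $\m|_K$-integrable $f_K>0$ $\m$-a.e. with $\m'|_K=f_K\,\m|_K$. Uniqueness of densities (two locally $\m$-integrable functions defining the same measure coincide $\m$-a.e.) forces the $f_K$ to agree $\m$-a.e. on the compact overlaps, so they patch to a locally $\m$-integrable $f$ on $X$ with $f>0$ locally $\m$-a.e.; and for $\phi\in\cK(X)$ a finite partition of unity subordinate to relatively compact open neighbourhoods covering the support of $\phi$ gives $\m'(\phi)=\m(\phi f)$, i.e. $\m'=f\m$. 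The delicate point is exactly this second implication: the bounded case rests on the Hilbert-space (von Neumann) argument or a black-box appeal to Radon--Nikodym, the passage to general $X$ requires the localization together with uniqueness of densities to glue the local pieces coherently, and the \emph{strict} positivity of $f$ uses the \emph{full} equivalence --- both inclusions of null sets --- not merely $\m'\ll\m$.
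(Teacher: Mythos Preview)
The paper does not give a proof of this theorem: it is stated as a standard result in the appendix on measures (with implicit reference to Bourbaki, \emph{Int\'egration}), and the paper simply names the function $f$ the Radon--Nikodym derivative. So there is no ``paper's own proof'' to compare against.

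On its own merits your argument is essentially correct and follows the classical route. The first implication is clean. For the second, the von~Neumann $L^2$ trick in the bounded case is standard and correctly executed; your observation that the strict positivity of $f$ uses both directions of the equivalence (not merely $\m'\ll\m$) is exactly the point. The only place that deserves a word of caution is the gluing step in the general case: patching the $f_K$ to a single locally $\m$-integrable function on an arbitrary locally compact space $X$ (which need not be $\sigma$-compact or paracompact) is not entirely formal. One should either work directly with equivalence classes modulo locally $\m$-null sets and invoke the sheaf-like uniqueness of densities, or pass through a locally finite covering argument on the support of the measure; Bourbaki's treatment in \emph{Int\'egration}, Chap.~V handles precisely this localization. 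With that caveat noted, the proof is sound.
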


   The function $f$ in Theorem \ref{spr510} is called the
Radon--Nikodym derivative.  Of course, $\wt\m'(f')=\wt\m(ff')$ for
any positive integrable function $f'$ on $X$.

\subsection{Haar measures}

Let us point out the peculiarities of measures on locally compact
groups \cite{bourb6,book05}.

Let $G$ be a topological group acting continuously on a locally
compact space $X$ on the left, i.e., a map
\mar{spr455}\beq
\g(g): X\ni x\to gx\in X, \qquad g\in G, \label{spr455}
\eeq
is continuous for any $g\in G$, and so is a map $G\ni g\to gx\in
X$ for any $x\in X$. It should be emphasized that a map
\be
G\times X\ni (g,x)\to gx\in X
\ee
need not be continuous.

Let $f$ be a real function on $X$ and $\m$ a measure on $X$. A
group $G$ acts on $f$ and $\m$ by the laws
\be
(\g(g)f)(x)=f(g^{-1}x),\qquad  (\g(g)\m)(f)=\m(\g(g^{-1})f).
\ee
A measure $\g(g)\m$ is the image of a measure $\m$ with respect to
the map (\ref{spr455}).

A measure $\m$ on $X$, subject to the action of a group $G$, is
said to be:

$\bullet$ invariant if $\g(g)\m=\m$ for all $g\in G$;

$\bullet$ relative invariant, if there is a strictly positive
number $\chi(g)$ such that $\g(g)\m=\chi(g)^{-1}\m$ for each $g\in
G$;

$\bullet$ quasi-invariant, if measures $\m$ and $\g(g)\m$ are
equivalent for all $g\in G$.

\noindent A strictly positive function $g\to \chi(g)$ yields a
representation of $G$ in $\mathbb R$. It is called the multiplier
of a measure $\m$.

Let a topological group $G$ act continuously on a locally compact
space $X$ on the right, i.e.,
\be
\tau(g): X\ni x\to xg^{-1}\in X.
\ee
The corresponding transformations of functions and measures on $X$
read
\be
(\tau(g)f)(x)=f(xg),\qquad (\tau(g)\m)(f)=\m(\tau(g^{-1})f).
\ee
Then invariant, relative invariant, and quasi-invariant measures
on $X$ are defined similarly to the case of $G$ acting on $X$ on
the left.

Now let $G$ be a locally compact group acting on itself by left
and right multiplications
\mar{B10}\ben
&& \g(g): q\to gq, \qquad \tau(g):q\to qg^{-1}, \qquad q\in
G,\nonumber\\
&& \g(g_1)\tau(g_2)=\tau(g_2)\g(g_1). \label{B10}
\een
Accordingly, left- and right-invariant measures,  relative left-
and right-invariant measures,  left- and right-quasi-invariant
measures on a group $G$ are defined. Each measure $\m$ on $G$ also
yields the inverse measure $\m^{-1}$ given by a relation
\be
\int f(g)\m^{-1}(g)=\int f(g^{-1})\m(g), \qquad
  f\in \cK(G).
\ee

A positive (non-vanishing) left-invariant measure on a locally
compact group $G$ is called the left Haar measure (or, simply, the
{Haar measure). Similarly, the right Haar measure} is defined.

\begin{theorem}
A locally compact group $G$ admits a unique Haar measure with
accuracy to a number multiplier. The total mass $\m(G)$ of a Haar
measure $\m$ on $G$ is finite iff $G$ is a compact group.
\end{theorem}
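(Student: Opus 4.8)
The plan is to prove the three assertions in order: existence of a left Haar measure, its uniqueness up to a positive scalar, and the equivalence between finiteness of the total mass and compactness of $G$. Existence I would obtain by the classical covering construction. Fix a compact set $V_0 \subset G$ with nonempty interior. For a compact $K \subset G$ and an open neighborhood $U$ of the identity $e$, let $(K:U)$ be the least number of left translates $g_1 U, \ldots, g_n U$ whose union contains $K$; this is finite by compactness of $K$. Put $\mu_U(K) = (K:U)/(V_0:U)$. One checks the elementary properties: $\mu_U$ is left invariant, $\mu_U(K) \le \mu_U(K')$ for $K \subset K'$, $\mu_U(K_1 \cup K_2) \le \mu_U(K_1) + \mu_U(K_2)$, $0 < \mu_U(V_0) = 1$, and $\mu_U(K) \le (K:V_0)$. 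The decisive point is approximate additivity: for disjoint compact $K_1, K_2$ there is a neighborhood $W$ of $e$ such that $\mu_U(K_1 \cup K_2) = \mu_U(K_1) + \mu_U(K_2)$ whenever $U \subset W$. Each $\mu_U$ is then a point of the compact product space $\prod_K [0,(K:V_0)]$, and by Tychonoff's theorem the net $(\mu_U)$, indexed by the neighborhood filter of $e$ ordered by reverse inclusion, has a cluster point $\mu$. This $\mu$ is left invariant, monotone and finitely additive on compact sets; the usual passage to an outer regularization on open sets and then to arbitrary functions produces a left invariant positive Radon measure on $\cK(G)$, i.e.\ a left Haar measure. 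The right Haar measure is then the inverse measure of a left one.

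For uniqueness, let $\mu$ and $\nu$ be two left Haar measures. Choose $g \in \cK(G)$, $g \ge 0$, with $\int g\, d\nu = 1$, and let $f \in \cK(G)$ be arbitrary. Consider the double integral $\int \int f(x)\, g(x^{-1}y)\, d\nu(y)\, d\mu(x)$, which is legitimate by Fubini since the integrand is continuous of compact support and both measures are Radon. Integrating first in $y$ and using left invariance of $\nu$ (substitute $y \mapsto xy$) gives $\int f\, d\mu$. Integrating in the other order, substituting $x \mapsto yx$ by left invariance of $\mu$, expresses the same quantity as $\int f\, d\nu$ times a constant $c$ that does not involve $f$. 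Hence $\int f\, d\mu = c \int f\, d\nu$ for all $f \in \cK(G)$, so $\mu = c\,\nu$ with $c > 0$; equivalently, in the language of Theorem \ref{spr510}, the Radon--Nikodym derivative of $\mu$ with respect to $\nu$ is forced to be constant by left invariance.

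For the mass criterion: if $G$ is compact then $G$ is itself a compact set, and any measure on a compact space is bounded, so the total mass $\mu(G) = \mu^*(1)$ is finite. Conversely, suppose $G$ is not compact and let $\mu$ be a Haar measure. Pick a compact symmetric neighborhood $W$ of $e$; since $W$ has nonempty interior and $\mu$ is non-vanishing and invariant, $\mu(W) > 0$. Choose a symmetric neighborhood $W_1$ of $e$ with $W_1 W_1 \subset W$. Because $G$ is not covered by finitely many left translates of $W$, one constructs inductively elements $g_1, g_2, \ldots$ with $g_{n+1} \notin g_1 W \cup \cdots \cup g_n W$; then the sets $g_n W_1$ are pairwise disjoint (if $g_m w_1 = g_n w_1'$ with $m < n$, then $g_m^{-1} g_n \in W_1 W_1 \subset W$, contradicting the construction), and each has measure $\mu(g_n W_1) = \mu(W_1) > 0$ by left invariance. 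Hence $\mu(G) \ge \mu(g_1 W_1 \cup \cdots \cup g_N W_1) = N\,\mu(W_1)$ for every $N$, so $\mu(G) = \infty$. Thus finiteness of the total mass is equivalent to compactness of $G$.

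I expect the main obstacle to be the existence part: establishing approximate additivity of the covering functionals $\mu_U$ with enough uniformity in $U$ to survive the Tychonoff limit, and then verifying that the cluster point $\mu$ genuinely extends from a finitely additive content on compacta to a Radon measure on $\cK(G)$. Uniqueness is a clean Fubini manipulation, and the mass dichotomy is routine once existence is secured.
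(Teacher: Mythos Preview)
The paper does not prove this theorem at all: it appears in the appendix (Section 14.4) as a stated classical fact, with the implicit reference being Bourbaki's \emph{Int\'egration}. So there is no ``paper's proof'' to compare against; what you have written is essentially the standard textbook argument, and the existence sketch and the mass criterion are both correct as outlined.

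There is, however, a genuine slip in your uniqueness computation. After integrating first in $x$ with the substitution $x\mapsto yx$ you correctly obtain
\[
\int\!\!\int f(yx)\,g(x^{-1})\,d\mu(x)\,d\nu(y)
=\int g(x^{-1})\Bigl[\int f(yx)\,d\nu(y)\Bigr]d\mu(x).
\]
You then assert that the bracket equals $\int f\,d\nu$ times a factor independent of $f$. But $\int f(yx)\,d\nu(y)$ is the integral of a \emph{right} translate of $f$, and left invariance of $\nu$ gives you nothing here; in a non-unimodular group this bracket genuinely depends on $x$ through the (as yet unknown) modular function, and you cannot separate it into $(\text{constant})\cdot\int f\,d\nu$ without circularity. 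The Fubini strategy is the right one, but the integrand has to be arranged differently: one clean version fixes $h\in\cK^+(G)$ with $\int h\,d\mu\ne 0$, sets
\[
\int f\,d\nu=\frac{1}{\int h\,d\mu}\int\!\!\int h(y)\,f(x)\,d\mu(y)\,d\nu(x),
\]
substitutes $y\mapsto x^{-1}y$ (left invariance of $\mu$), then $x\mapsto y^{-1}x$ (left invariance of $\nu$), and only then swaps the order; the resulting expression is visibly $c\int f\,d\mu$ with $c$ independent of $f$. Your instinct that ``uniqueness is a clean Fubini manipulation'' is right, but the particular manipulation you wrote does not close.
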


Let us choose, once and for all, a left Haar measure $dg$ on a
locally compact group $G$. If $G$ is a compact group, $dg$ is
customarily the Haar measure of total mass 1.

\begin{example}
The Lebesgue measure $dx$ is a Haar measure on the additive group
$G=\mathbb R$. Its inverse is $-dx$.
\end{example}

The equality (\ref{B10}) shows that, if $dg$ is a Haar measure on
$G$, a measure $\tau(g')dg$ for any $g'\in G$ also is
left-invariant. Therefore, there exists a unique continuous
strictly positive function $\Delta(g')$ on $G$ such that
$\tau(g')dg=\Delta(g')dg$, $g'\in G$. It is called the modular
function of $G$. If $dg$ is a left Haar measure, its inverse
$(dg)^{-1}$ is a right Haar measure. There is a relation
$(dg)^{-1} =\Delta(g)^{-1}dg$. If $\Delta(g)=1$, a group $G$ is
called unimodular.  Left and right Haar measures on a unimodular
group differ from each other in a number multiplier. For instance,
compact, commutative, and semisimple groups are unimodular. There
is the following criterion of a unimodular group. If the unit
element of a locally compact group has a compact neighborhood
invariant under inner automorphisms, this group is unimodular.

Measures $\m_1,\ldots,\m_n$ on a locally compact group $G$ are
called mutually contractible if there exists a measure
$\op*_i\m_i$ on $G$ given by a relation
\mar{spr540}\beq
\int f(g)\op*_i\m_i(g)=\int f(g_1\cdots g_n) \m_1(g_1)\cdots
\m_n(g_n), \qquad f\in \cK(G). \label{spr540}
\eeq
It is an image of the product measure $\m_1\cdots\m_n$ on
$\op\times^nG$ with respect to a map
\be
\op\times^nG\ni (g_1,\ldots, g_n) \to g_1\cdots g_n\in G.
\ee
Let $\ve_g$, $g\in G$, be the Dirac measure on $G$. The following
relations hold for all $x,y,z\in G$:

$\bullet$ $\ve_x*\ve_y=\ve_{xy}$;

$\bullet$ $\ve_x*\m=\g(x)\m$ and $\m*\ve_x=\tau(x^{-1})\m$;

$\bullet$ if measures $\la,\m,\nu$ are contractible, the pairs of
measures $\la$ and $\m$, $\m$ and $\nu$, $\la*\m$ and $\nu$, $\la$
and $\m*\nu$ also are contractible, and we have
\be
\la*\m*\nu=(\la*\m)*\nu=\la*(\m*\nu).
\ee

For instance, any two bounded measures on $G$ are contractible,
and a space $M^1(G)$ of these measures is a unital Banach algebra
with respect to the contraction $*$ (\ref{spr540}),
 where the Dirac measure $\ve_\bb$ is the unit element.

One also defines:

$\bullet$ the contraction of a measure $\nu$ and a $dg$-integrable
function $f$ on $G$  as the density of the contraction $\nu*(fdg)$
with respect to a Haar measure $dg$ on $G$;

$\bullet$ the contraction of $dg$-integrable functions $f_1$ and
$f_2$ on $G$ as the density of the contraction $(f_1dg)*(f_2dg)$
with respect to a Haar measure $dg$ on $G$, i.e.,
\mar{spr531}\beq
(f_1*f_2)(g)=\int f_1(q)f_2(q^{-1}g)dq, \qquad q,g\in G.
\label{spr531}
\eeq

\subsection{Measures on infinite-dimensional vector spaces}

Throughout this Section, $E$ denotes a real Hausdorff topological
vector space. Infinite-dimensional topological vector spaces need
not be locally compact, and measures on them are defined as
follows \cite{bourb6,gelf64,book05}. All measures are assumed to
be positive.

Let $N(E)$ denote a set of closed vector subspaces of $E$ of
finite codimension, i.e., a vector subspace $V$ of $E$ belongs to
$N(E)$ iff there exists a finite set $y_1,\ldots,y_n$ of elements
of the dual $E'$ of $E$ such that $V$ consists of $x\in E$ which
obey the equalities $\lng x,y_i\rng=0$, $i=1,\ldots,n$.

A quasi-measure (or a cylinder set measure in the terminology of
\cite{gelf64})
   on $E$ is defined as family
$\m=\{\m_V, V\in N(E)\}$ of bounded measures $\m_V$ on
finite-dimensional vector spaces $E/V$ such that if $W\subset V$,
the measure $\m_V$ is the image of a measure $\m_W$ with respect
to the canonical morphism $E/W\to E/V.$

For instance, each bounded measure on $E$ yields a quasi-measure
$\{\m_V, V\in N(E)\}$, where $\m_V$ is the image of a measure $\m$
with respect to the canonical morphism $r_V:E\to E/V$. There is
one-to-one correspondence between the bounded measures on $E$ and
the quasi-measures on $E$ which obey the following condition. For
any $\ve>0$, there exists a compact subset $K\subset E$ such that
\be
\m_V(E/V-r_V(K))\leq\ve, \qquad  V\in N(E).
\ee
Clearly, any quasi-measure on a finite-dimensional vector space is
a measure.

Let $\g:E\to F$ be a continuous morphism of topological vector
spaces. For any $W\in N(F)$, a subspace $V=\g^{-1}(W)$ of $E$
belongs to $N(E)$, and $\g$ yields a morphism $\g^W: E/V\to F/W$.
Let $\m=\{\m_V,V\in N(E)\}$ be a quasi-measure on $E$. Then one
can assign the measure
\be
\nu_W=\g^W_*(\m_{\g^{-1}(W)})
\ee
to each $W\in N(F)$. It is readily observed that the family
$\nu=\{\nu_W, W\in N(F)\}$  is a quasi-measure on $F$. It is
called the image of a quasi-measure $\m$ with respect to a
continuous morphism $\g$.

In particular, let $F=\mathbb R$, and let $y\in E'$ be a
continuous form on $E$. The image $\m_y$ of a quasi-measure $\m$
on $E$ with respect to a form $y$ is a measure on $\mathbb R$. The
Fourier transform of a quasi-measure $\m$ on $E$ is defined as a
complex function
\mar{G0}\beq
Z(y)=\int_{\mathbb R} e^{it}\m_y(t)  \label{G0}
\eeq
on the dual $E'$ of $E$. If $\m$ is a bounded measure on $E$, its
Fourier transform reads
\mar{G1}\beq
Z(y)=\op\int_E\exp[i\lng x,y\rng]\m(x). \label{G1}
\eeq

Let us point out the following variant of the well-known Bochner
theorem. A complex function $Z$ on a topological vector space $F$
is called positive-definite if
\be
\op\sum_{i,j} Z(y_i-y_j)\ol c_ic_j \geq 0
\ee
for any finite set $y_1,\ldots,y_m$ of elements of $F$ and any
complex numbers $c_1,\ldots,c_m$.

\begin{theorem} \label{box} \mar{box}
The Fourier transform (\ref{G1}) provides a bijection of the set
of quasi-measures on a Hausdorff topological vector space $E$ to
the set of positive-definite functions on the dual $E'$ of $E$
whose restriction to any finite-dimensional subspace of $E'$ is
continuous.
\end{theorem}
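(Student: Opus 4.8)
The plan is to reduce everything to the classical Bochner theorem in finite dimensions, exploiting the duality between the directed set $N(E)$ and the directed set of finite-dimensional subspaces of the dual $E'$. First I would set up this duality. Since $E$ is a Hausdorff locally convex space, for every $V\in N(E)$ the topological dual $(E/V)'$ is canonically a finite-dimensional subspace $V^\perp$ of $E'$, and conversely every finite-dimensional subspace $L\subset E'$ equals $V^\perp$ for the closed finite-codimensional subspace $V={}^\perp L$; by the Hahn--Banach theorem the correspondence $V\leftrightarrow V^\perp$ is an order-reversing bijection of directed sets. Moreover $E/V$ is a finite-dimensional Hausdorff topological vector space, hence Euclidean, so the classical Bochner theorem applies on $E/V$ and on $(E/V)'\cong V^\perp$.

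The one computation to record at the outset is the compatibility of the two notions of Fourier transform. Given a quasi-measure $\mu=\{\mu_V\}$ and a finite-dimensional $L\subset E'$ with $V={}^\perp L$, each $y\in L$ is a functional on $E/V$, and the measure $\mu_y$ (the image of $\mu$ under $y\colon E\to\mathbb R$) is the image of $\mu_V$ under this functional; hence the Fourier transform $Z$ of $\mu$, defined by (\ref{G0}), restricted to $L$, coincides with the ordinary finite-dimensional Fourier transform $\wh{\mu_V}$ of the bounded measure $\mu_V$, i.e. with (\ref{G1}) computed on $E/V$.

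With this in hand the three assertions follow routinely. For injectivity: if two quasi-measures share the same $Z$, then for each $V$ the finite-dimensional measures $\mu_V$ and $\mu'_V$ have the same Fourier transform $Z|_{V^\perp}$, hence coincide by uniqueness in the classical theorem, so $\mu=\mu'$. That $Z$ lands in the stated set: $Z|_L=\wh{\mu_V}$ is continuous and positive-definite on each finite-dimensional $L$ by the easy direction of the classical theorem, and positive-definiteness of $Z$ on all of $E'$ follows since any finite family $y_1,\dots,y_m\in E'$ spans such an $L$. For surjectivity: given a positive-definite $Z$ on $E'$ with continuous restriction to every finite-dimensional subspace, for each $V\in N(E)$ the restriction $Z|_{V^\perp}$ is continuous and positive-definite on the Euclidean space $V^\perp\cong(E/V)'$, so by the classical Bochner theorem it is $\wh{\mu_V}$ for a unique bounded positive measure $\mu_V$ on $E/V$. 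If $W\subset V$ in $N(E)$, the canonical surjection $p\colon E/W\to E/V$ has transpose the inclusion $V^\perp\hookrightarrow W^\perp$, so $\wh{p_*\mu_W}$ and $\wh{\mu_V}$ both equal $Z$ on $V^\perp$, whence $p_*\mu_W=\mu_V$ by uniqueness; thus $\mu=\{\mu_V\}$ is a quasi-measure, and by the compatibility identity above its Fourier transform is exactly $Z$.

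The only genuine content beyond this bookkeeping is the finite-dimensional Bochner theorem itself --- existence, uniqueness, and the continuity/positive-definiteness of Fourier transforms of bounded measures on $\mathbb R^n$ --- which I would simply quote. Accordingly the main point to be careful about is not an obstacle so much as a foundational check: the identification $(E/V)'=V^\perp$ together with the bijectivity of $V\leftrightarrow V^\perp$, and the compatibility of the quasi-measure Fourier transform (\ref{G0}) with the classical one (\ref{G1}) fibrewise; both rest on local convexity and Hahn--Banach, and once they are in place the argument is purely formal.
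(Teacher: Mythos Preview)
The paper does not supply a proof of this theorem; it states the result as a known variant of Bochner's theorem (with references to Bourbaki and Gelfand--Vilenkin) and immediately passes to the Gaussian example. So there is nothing to compare against on the paper's side.

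Your argument is the standard one and is correct. The reduction to the finite-dimensional Bochner theorem via the projective system $\{E/V\}_{V\in N(E)}$ and the identification $(E/V)'\cong V^\perp\subset E'$ is exactly how this is proved in the references the paper cites. One small remark: you rightly flag that the bijection $V\leftrightarrow V^\perp$ and the identification $(E/V)'=V^\perp$ rest on local convexity and Hahn--Banach. In this paper that is harmless, since the appendix declares at the outset that ``topological vector spaces throughout are assumed to be locally convex,'' and moreover the paper's \emph{definition} of $N(E)$ already describes its members as joint kernels of finite subsets of $E'$, so the correspondence $V={}^\perp L\leftrightarrow L=V^\perp$ is essentially built in. With that convention, your compatibility check between the quasi-measure transform (\ref{G0}) and the fibrewise transform (\ref{G1}) on $E/V$ is exactly the right glue, and the three bullets (injectivity, range, surjectivity with consistency of $\{\mu_V\}$) go through as you wrote them.
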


For instance, let $M(y)$ be a seminorm on $E'$. Then a function
\mar{spr523}\beq
Z(y)=\exp\left[-\frac12 M(y)\right] \label{spr523}
\eeq
on $E'$ is positive-definite. By virtue of Theorem \ref{box},
there is a unique quasi-measure $\m_M$ on $E$ whose Fourier
transform is $Z(y)$ (\ref{spr523}). It is called the Gaussian
quasi-measure with a covariance form $M$.

\begin{example}
Let $E=\mathbb R^n$ be a finite-dimensional vector space,
coordinated by $(x^i)$, and let $M$ be a norm on the dual of $E$.
A Gaussian measure on $E$ with a covariance form $B$ is equivalent
to the Lebesgue measure on $E$, and reads
\mar{qm610}\beq
\m_M= \frac{\mathrm{det}[M]^{1/2}}{(2\pi)^{n/2}}
\exp\left[-\frac12 (M^{-1})_{ij}x^ix^j\right]d^nx. \label{qm610}
\eeq
\end{example}

\begin{example} \label{gaus} \mar{gaus}
Let $E$ be a Banach space and $E'$ its dual, provided with the
norm $\|.\|'$ (\ref{spr446}). A Gaussian quasi-measure on $E$ with
the covariance form $\|.\|'$ is called canonical. One can show
that this quasi-measure fails to be a measure, unless $E$ is
finite-dimensional. Let $T$ be a continuous operator in $E'$. Then
\mar{spr524}\beq
y\to \|Ty\| \label{spr524}
\eeq
is a seminorm on $E'$. A Gaussian quasi-measure on $E$ with the
covariance form (\ref{spr524}) is proved to be a measure iff $T$
is a Hilbert--Schmidt operator.
\end{example}

Let $E$ be a real nuclear space and $E'$ its dual, equipped with
the topology of uniform convergence. Let us recall that all
topologies of uniform convergence (including weak$^*$ and strong
topologies) on $E'$ coincide, and $E$ is reflexive. A
quasi-measure on $E'$ is a measure iff its Fourier transform on
$E$ (which is the dual of $E'$) is continuous. A variant of the
Bochner theorem for nuclear spaces states the following
\cite{gelf64,book05}.

\begin{theorem} \mar{spr525} \label{spr525}
The Fourier transform
\be
Z(x)=\int\exp[i\lng x,y\rng]\m(y)
\ee
provides a bijection of the set of measures on the dual $E'$ of a
real nuclear space $E$ to the set of continuous positive-definite
functions on $E$
\end{theorem}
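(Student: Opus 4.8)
The plan is to obtain this statement by specializing the general Bochner--Minlos-type Theorem~\ref{box} to the dual $E'$ and then cutting the resulting bijection down to genuine measures by means of the criterion recalled immediately before the statement. First I would apply Theorem~\ref{box} with the Hausdorff locally convex space $E'$ playing the role of the ``$E$'' there. Since $E$ is nuclear it is reflexive, so the (topological) dual of $E'$ is $E$ itself; hence Theorem~\ref{box} provides, via the Fourier transform $Z(x)=\int\exp[i\lng x,y\rng]\m(y)$, a bijection between the set of quasi-measures on $E'$ and the set of positive-definite functions $Z$ on $E$ whose restriction to every finite-dimensional subspace of $E$ is continuous.

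Next I would single out the measures among the quasi-measures. By the criterion stated just above, a quasi-measure on $E'$ is a (bounded) measure if and only if its Fourier transform is continuous on all of $E$. On the one hand this shows that the Fourier transform carries measures on $E'$ to continuous positive-definite functions on $E$. On the other hand, if $Z$ is any continuous positive-definite function on $E$, then its restriction to each finite-dimensional subspace is continuous --- the subspace topology on a finite-dimensional subspace of the Hausdorff locally convex space $E$ is the Euclidean one --- so by the bijection of the first step $Z$ is the Fourier transform of a unique quasi-measure $\m$ on $E'$; since this Fourier transform is continuous on $E$, the criterion forces $\m$ to be a measure. Injectivity being inherited from Theorem~\ref{box}, the Fourier transform restricts to a bijection from the measures on $E'$ onto the continuous positive-definite functions on $E$, which is the assertion.

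The only substantive point in this argument is the measure-versus-quasi-measure criterion itself: that nuclearity of $E$ promotes a quasi-measure on $E'$ whose Fourier transform is continuous into a countably additive (bounded) measure, i.e. that the associated cylinder set measure satisfies the tightness condition of the earlier discussion. This is the step where the real work sits; it exploits that the linking maps $T^n_m$ between the Hilbert spaces attached to the defining norms of $E$ are nuclear, together with the fact that bounded closed subsets of the nuclear space $E$ are compact, to produce the required compact set carrying almost all of the mass. Since this criterion is already available in the text preceding the statement, I would present the two-step reduction above in full and simply invoke the criterion at the one place where it is needed.
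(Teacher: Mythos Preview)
Your argument is correct and matches the paper's treatment. The paper does not supply a formal proof of Theorem~\ref{spr525}; it states the result with references to \cite{gelf64,book05} and, in the sentence immediately preceding the theorem, records the two facts you combine: that a nuclear space is reflexive (so the dual of $E'$ is $E$) and that a quasi-measure on $E'$ is a measure iff its Fourier transform on $E$ is continuous. Your reduction of the statement to Theorem~\ref{box} plus this criterion is precisely the implicit argument the paper has in mind, and your remark that the substantive content lies in the Minlos-type criterion itself is accurate.
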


\begin{remark} \label{spr526} \mar{spr526}
Let $E\subset\wt E\subset E'$ be a real rigged Hilbert space,
defined by a norm $\|.\|$ on $E$. Let $T$ be a nuclear operator in
$\wt E$ and $\|.\|_T$ the restriction of the seminorm
$y\to\|Ty\|$, $y\in \wt E$, (\ref{spr524}) on $\wt E$ to $E$. Then
the Gaussian measures $\m$ and $\m_T$ on $E'$ with the covariance
forms $\|.\|$ and $\|.\|_T$ are not equivalent. The Gaussian
measures $\m$ and $\m_T$ are equivalent if $T$ is a sum of the
identity and a nuclear operator. In particular, all Gaussian
measures on a finite-dimensional vector space are equivalent.
\end{remark}

\end{document}